 \newtheorem{thm}{Theorem}[section]
 \newtheorem{cor}[thm]{Corollary}
 \newtheorem{lem}[thm]{Lemma}
 \newtheorem{prop}[thm]{Proposition}
 \theoremstyle{definition}
 \theoremstyle{remark}
 \newtheorem{rem}[thm]{Remark}
 \newtheorem{assumption}[thm]{Assumption}
 \numberwithin{equation}{section}
\newcommand{\CC}{\mathbb{C}}
\newcommand{\NN}{\mathbb{N}}
\newcommand{\RR}{\mathbb{R}}
\newcommand{\ZZ}{\mathbb{Z}}
\newcommand{\supp}{\mathrm{supp}}
\newcommand{\Ran}{\mathrm{Ran}}
\newcommand{\tr}{\mathrm{Tr}}
\newcommand{\sgn}{\mathrm{sgn}}
\newcommand{\id}{\mathbbm{1}}
\newcommand{\ve}{\varepsilon}
\newcommand{\vp}{\varphi}
\newcommand{\vo}{\varpi}
\newcommand{\vk}{\varkappa}
\newcommand{\vt}{\vartheta}
\newcommand{\wt}[1]{\widetilde{#1}}
\newcommand{\SL}{\langle}                          
\newcommand{\SR}{\rangle}    
\newcommand{\SPn}[2]{\langle \,#1\,|\,#2\, \rangle} 
\newcommand{\SPb}[2]{\big\langle \,#1\,\big|\,#2\, \big\rangle} 
\newcommand{\SPB}[2]{\Big\langle \,#1\,\Big|\,#2\, \Big\rangle}
\newcommand{\ol}[1]{\overline{#1}} 
\newcommand{\mr}[1]{\mathring{#1}}
\newcommand{\wh}[1]{\widehat{#1}}
\newcommand{\nf}[2]{\nicefrac{#1}{#2}}
\newcommand{\bigO}{\mathcal{O}}
\newcommand{\Jy}{\langle y\rangle}
\newcommand{\cA}{\mathcal{A}}\newcommand{\cN}{\mathcal{N}}
\newcommand{\cB}{\mathcal{B}} 
\newcommand{\cC}{\mathcal{C}}
\newcommand{\cD}{\mathcal{D}} 
\newcommand{\cQ}{\mathcal{Q}}
\newcommand{\cR}{\mathcal{R}}
\newcommand{\cS}{\mathcal{S}}
\newcommand{\cH}{\mathcal{H}}\newcommand{\cT}{\mathcal{T}}
\newcommand{\cV}{\mathcal{V}}
\newcommand{\sA}{\mathscr{A}}
\newcommand{\sC}{\mathscr{C}}
\newcommand{\sF}{\mathscr{F}}
\newcommand{\sG}{\mathscr{G}}
\newcommand{\sH}{\mathscr{H}}
\newcommand{\sK}{\mathscr{K}}
\newcommand{\sL}{\mathscr{L}}\newcommand{\sX}{\mathscr{X}}
\newcommand{\fA}{\mathfrak{A}}
\newcommand{\fW}{\mathfrak{W}}
\renewcommand{\le}{\leqslant} 
\renewcommand{\ge}{\geqslant}
\renewcommand{\imath}{i}
\newcommand{\V}[1]{\mathbf{#1}}
\newcommand{\vsigma}{\boldsymbol{\sigma}}
\newcommand{\valpha}{\boldsymbol{\alpha}}
\newcommand{\veps}{\boldsymbol{\varepsilon}}
\newcommand{\gap}{\mathrm{gap}}
\newcommand{\const}{\mathfrak{c}}
\newcommand{\ren}{\mathrm{ren}}
\newcommand{\Spec}{\mathrm{Spec}}
\newcommand{\Pf}{\mathbf{p}_{\mathrm{f}}}
\newcommand{\Ham}{\cH}
\newcommand{\Hf}{H_{\mathrm{f}}}
\newcommand{\chHf}{\check{H}_{\mathrm{f}}}
\newcommand{\ee}{\mathfrak{e}}
\newcommand{\dom}{\cD}
\newcommand{\fdom}{\cQ}
\newcommand{\HR}{\mathscr{H}}
\newcommand{\HP}{\mathscr{K}}
\newcommand{\Fock}[2]{\mathscr{F}_{#1}^{#2}}
\newcommand{\LO}{\mathscr{L}}
\newcommand{\ad}{a^\dagger}
\newcommand{\Tr}{\mathrm{Tr}}
\newcommand{\HS}{\mathrm{HS}}
\newcommand{\pmax}{\mathfrak{p}}
\newcommand{\qmax}{\mathfrak{q}}
\renewcommand{\Im}{\mathrm{Im}\,}
\renewcommand{\Re}{\mathrm{Re}\,}
\newcommand{\D}{D}
\newcommand{\R}{R}
\newcommand{\Sgn}{{S}}
\newcommand{\HAM}{{H}}
\newcommand{\RES}{{\mathcal{R}}}
\begin{document}

\title[The mass shell in the semi-relativistic Pauli-Fierz model]
 {The mass shell in the semi-relativistic\\ Pauli-Fierz model}

\author[Martin K\"onenberg {\protect \and} Oliver Matte]{Martin K\"onenberg {\protect
\and} Oliver Matte}

\begin{abstract}
We consider the semi-relativistic
Pauli-Fierz model for a single free electron interacting
with the quantized radiation field.
Employing a variant of Pizzo's iterative analytic perturbation theory
we construct a sequence of ground state eigenprojections of
infra-red cutoff, dressing transformed fiber Hamiltonians
and prove its convergence, as the cutoff goes to zero. 
Its limit is the ground state eigenprojection of
a certain Hamiltonian unitarily equivalent
to a renormalized fiber Hamiltonian acting in a
coherent state representation space.
The ground state energy is
an exactly two-fold degenerate eigenvalue of the
renormalized Hamiltonian, while it is not
an eigenvalue of the original fiber Hamiltonian 
unless the total momentum is zero.
These results hold true, for total momenta
inside a ball about zero of arbitrary radius
$\pmax>0$, provided that the coupling constant is
sufficiently small depending on $\pmax$ and
the ultra-violet cutoff.
Along the way we prove 
twice continuous differentiability and strict
convexity of the ground state energy
as a function of the total momentum inside that ball.

\smallskip

\noindent
{\sc Keywords.} {Semi-relativistic Pauli-Fierz Hamiltonian,
mass shell,  (improper) ground states, iterative analytic perturbation theory.}
\end{abstract}

\maketitle
\setcounter{tocdepth}{1}
\tableofcontents
%
%
%

\section{Introduction and main results}\label{sec-intro}

\subsection{The general framework}

\noindent
The mathematically rigorous investigation of the infra-red (IR) problem 
in non-relativistic (NR) quantum electrodynamics (QED) 
has recently undergone some substantial progress. Notably,
infra-particle scattering states in the one-electron sector
have been constructed for the NR Pauli-Fierz model
in \cite{CFP2010} based on investigations of the corresponding
mass shell in \cite{CFP2009}. 
We recall that, by translation invariance, the NR Pauli-Fierz
Hamiltonian for a free electron 
interacting with the quantized radiation field
admits a fiber decomposition with respect to
the total momentum of the combined electron-photon system.
The mass shell is the ground state energy of the fiber
Hamiltonians considered as a function of the total momentum.
Another important recent result is the existence of the renormalized electron
mass. This has been established in \cite{FroehlichPizzo2010} where
the authors have been able to prove
twice continuous differentiability and
strict convexity of the mass shell in some ball about the origin.
Earlier works already provided {bounds} on the renormalized electron
mass \cite{BCFS2006,Chen2008}. The results of \cite{Chen2008} have been
used to discuss coherent 
IR representations in NR QED \cite{ChenFroehlich2007}.
All these results have been obtained at fixed ultra-violet cutoff
and for sufficiently small coupling constants.

While a general guideline for the mathematical treatment of the IR
problem and in particular of infra-particle scattering
has been settled long ago in a study of the
Nelson model \cite{Froehlich1973,Froehlich1974}, the recent progress in QED
is due to the development of new, sophisticated multi-scale techniques.
For instance, the analysis in \cite{BCFS2006,Chen2008} is based on the 
spectral renormalization group introduced by Bach, Fr\"ohlich, and Sigal.
A second method is the
{\em iterative analytic perturbation theory (IAPT)} 
developed mainly by Pizzo in his analysis of Nelson's model
\cite{Pizzo2003,Pizzo2005}. 
The latter method is employed in the papers \cite{CFP2009,FroehlichPizzo2010}
mentioned above and in
\cite{BFP2006,BFP2007,BFP2009} to provide expansions
of atomic ground state eigenvalues and eigenvectors
and of scattering amplitudes. Recently
the removal of the {ultra-violet}
cutoff in Nelson's model has been studied by means of the IAPT \cite{BDP2011}.
The general principles of the IAPT also serve as the starting point
of our own analysis.
An alternative procedure employing continuous flows 
to remove an artificial IR cutoff instead of the
discrete iteration steps used in the IAPT
can be found in \cite{BachKoenenberg2006}.
For a discussion of the physical background
and remarks on the historical development
of the analysis of the IR problem we refer the reader
to \cite{CFP2010,Froehlich1973}.

The objective of the present article is to establish
several of the afore-mentioned results in the {\em semi}-relativistic (SR)
Pauli-Fierz model, which is obtained by replacing
the NR kinetic energy term in the 
Pauli-Fierz operator by a square-root operator.
Its mathematical analysis has been
initiated in \cite{MiyaoSpohn2009},
where the bottom of the essential spectrum of the fiber Hamiltonians
is characterized. By addition of an
exterior Coulomb potential, $-\ee^2Z/|\V{x}|$ (with
$\ee^2Z\in[0,\nf{2}{\pi}]$), 
one may also define a semi-relativistic model for a hydrogen-like atom
coupled to the quantized radiation field.
Binding energies, exponential localization of low-lying
spectral subspaces, and the existence of ground states
of this atomic model have been studied in 
\cite{HiroshimaSasaki2010,KoenenbergMatte2011,KoenenbergMatte2012b,KMS2009a,MatteStockmeyer2009a}. A scalar square-root Hamiltonian appeared earlier
in the mathematical analysis of Rayleigh scattering \cite{FGS2001}. 
Notice that the vector potential is introduced via minimal coupling
in the SR Pauli-Fierz Hamiltonian.
Fiber Hamiltonians with a relativistic kinetic energy
for the matter particles and linearly coupled radiation fields
appear, e.g., in \cite{Froehlich1973,Moeller2005}.

A second purpose of our article is to
propose several new arguments or alterations of earlier ones
within the general frame of Pizzo's IAPT, in particular in the study
of the convergence of IR regularized ground state projections
and in the analysis of Hellmann-Feynman type formulas
for the derivatives of ground state energies.
(To mention some keywords for the experts: 
We do not employ contour integrals and avoid
repeated Neumann series expansions and
certain bounds relating expectations of operators
with expectations of their absolute values;
by a minor modification of the dressing transforms
we avoid the discussion of intermediate Hamiltonians.)
Although one might expect the analysis of square-root Hamiltonians to be
technically more involved we are able to establish essentially
all main results of \cite{ChenFroehlich2007,CFP2009,FroehlichPizzo2010}
in the semi-relativistic case.
We hope that some of our observations will be helpful in
forthcoming investigations including the NR case.

Another novelty achieved here is the study
of the {\em multiplicity} of the ground state eigenvalue of certain
renormalized fiber Hamiltonians in the presence of spin. 
(In the scalar Nelson model one may apply Perron-Frobenius
arguments to show non-degeneracy of ground states \cite{Froehlich1973}.)
As our corresponding argument is
essentially based on a certain relative form bound
required to get the IAPT started, it seems clear that
it also applies {\em mutatis mutandis} to the NR Pauli-Fierz 
and Nelson models.


\subsection{The model and main results}

\noindent
In this subsection we explain the model 
under investigation and
state our main results.
We also give a few comments on the proofs
and on the organization of this article.

The semi-relativistic Pauli-Fierz Hamiltonian
for a single free electron interacting with the quantized
radiation field is given by
\begin{equation}\label{def-SRPF-Ham}
\mathbb{H}_{\mathrm{sr}}:=
\sqrt{(\vsigma\cdot(-i\nabla_{\V{x}}\otimes\id+\ee\,\mathbb{A}))^2+\id}
+\id\otimes\Hf\,.
\end{equation}
It is acting in the tensor product 
$L^2(\RR^3_\V{x},\CC^2)\otimes\Fock{}{}
=\int_{\RR^3}^\oplus\CC^2\otimes\Fock{}{}\,d^3\V{x}$, 
where the bosonic Fock space $\Fock{}{}$
is modeled over the one-photon space $\HP:=L^2(\RR^3_{\V{k}}\times\ZZ_2)$;
see Appendix~\ref{app-Fock} for the definition of Fock spaces
and operators acting in them.
The vector
$\vsigma=(\sigma_1,\sigma_2,\sigma_3)$ 
contains the three Pauli spin matrices,
$\Hf$ is the radiation field energy,
and the parameter $\ee>0$ is eventually assumed to be small. 
For a single photon state $f\in\HP$, let $\ad(f)$ and $a(f)$ 
denote the standard
bosonic creation and annihilation operators acting in $\Fock{}{}$ and 
satisfying the following canonical commutation relations (CCR)
on some suitable dense domain,
\begin{equation}\label{CCR}
[a^\sharp(f),a^\sharp(g)]=0\,,\quad
[a(f),\ad(g)]=\SPn{f}{g}\,\id\,,\quad f,g\in\HP.
\end{equation}
Here $a^\sharp$ is $\ad$ or $a$. Denote the field operators as
\begin{equation}\label{viona}
\vp(f):=2^{-\nf{1}{2}}(\ad(f)+a(f))\,,\qquad
\vo(f):=2^{-\nf{1}{2}}(i\ad(f)-i a(f))\,,
\end{equation}
and write $\vp(\V{f}):=(\vp(f_1),\vp(f_2),\vp(f_3))$,
for a vector of single photon states $\V{f}=(f_1,f_2,f_3)$.
(More precisely, \eqref{viona} defines essentially self-adjoint operators
whose closures are henceforth denoted again by the same symbols.)
In this notation
the quantized vector potential is given as 
$\mathbb{A}:=\int_{\RR^3}^\oplus\id_{\CC^2}\otimes\mathbb{A}(\V{x})\,d^3\V{x}$
with
\begin{align}\label{def-AA}
\mathbb{A}(\V{x}):=\vp(e^{-i\V{k}\cdot\V{x}}\V{G})\,,
\qquad\V{G}(\V{k},\lambda):=
(2\pi)^{-\nf{3}{2}}\,\id_{|\V{k}|<\kappa}\,
|\V{k}|^{-\nf{1}{2}}\veps_\lambda(\V{k})\,.
\end{align}
Together with $\V{k}/|\V{k}|$ the two polarization vectors
$\veps_0(\V{k})$ and $\veps_1(\V{k})$ appearing here 
form an orthonormal
basis of $\RR^3$, for almost every $\V{k}$.
The number $\kappa>0$ is an ultra-violet cut-off parameter.
As $\kappa\to\infty$, the values of $\ee>0$
covered by our main results will go to zero.

As the operator given by \eqref{def-SRPF-Ham} is invariant
under space translations of the total electron-photon system
it is unitarily equivalent to a direct integral
\begin{align}\label{def-Ham(P)}
\mathbb{H}_{\mathrm{sr}}\cong
\int_{\RR^3}^\oplus\Ham(\V{P})\,d^3\V{P}\,,
\quad\Ham(\V{P}):=\sqrt{(\vsigma\cdot(\V{P}-\Pf+\ee\V{A}))^2+\id}+\Hf\,.
\end{align}
The vector $\V{P}\in\RR^3$ is interpreted as the total momentum
of the combined electron-photon system.
Moreover, $\V{A}:=\vp(\V{G})$, and $\Pf$
is the photon momentum operator.
The present article deals with the spectral analysis
of the fiber Hamiltonians $\Ham(\V{P})$ acting in $\CC^2\otimes\Fock{}{}$, whose
mathematically precise definition 
is discussed in Section~\ref{sec-rb}. 
(For a precise definition of $\mathbb{H}_{\mathrm{sr}}$ we refer to
\cite{HiroshimaSasaki2010,KMS2010,MatteStockmeyer2009a}; in the
present paper we study only the fiber Hamiltonians explicitly.)
In
Section~\ref{sec-rb} it turns out that $\Ham(\V{P})$ is self-adjoint
on the domain of $\Hf$, for all $\ee,\kappa>0$.
(This improves on a result in \cite{MiyaoSpohn2009}
where $\ee$ and/or $\kappa$ are assumed to be small.)
Our aim is to analyze the infimum of its spectrum,
$$
E(\V{P}):=\inf\Spec[\Ham(\V{P})]\,,
$$
as a function of $\V{P}\in\RR^3$, and to address the question whether
it be a two-fold degenerate eigenvalue or not.  
Notice that the shape of $E(\V{P})$ is not fixed by relativity
because of the ultra-violet cutoff.
We collect our first main results in the following theorem.
It applies to total momenta contained in balls
\begin{equation}\label{def-Bp}
\cB_\pmax:=\{\V{P}\in\RR^3:\,|\V{P}|<\pmax\}\,,
\quad 0<\pmax<\infty\,,
\end{equation}
of bounded but arbitrary large radius. 
The possibility to choose $\pmax$ large is due to the
semi-relativistic nature of our model.
In the non-relativistic Pauli-Fierz model Part~(1) of the next theorem
would be false, if $\pmax$ were too large.

\begin{thm}\label{thm-intro1}
For all $\kappa,\pmax>0$, we find $\ee_0>0$ such that
the following holds, for all $\ee\in(0,\ee_0]$:

\smallskip

\noindent(1) $E$ is twice continuously differentiable
and strictly convex on $\cB_\pmax$ and it attains its
unique global minimum at zero, 
$E(\V{0})=\inf\{E(\V{P}):\,\V{P}\in\RR^3\}$.

\smallskip

\noindent(2) $E(\V{0})$ is a two-fold degenerate eigenvalue
of $\Ham(\V{0})$. The expectation value of the photon number operator
in any corresponding eigenstate is finite.  

\smallskip

\noindent(3) If $\V{P}\in\cB_\pmax\setminus\{\V{0}\}$,
then $E(\V{P})$ is not an eigenvalue of $\Ham(\V{P})$.
\end{thm}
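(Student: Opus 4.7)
The plan is to carry out Pizzo's iterative analytic perturbation theory (IAPT) adapted to the square-root kinetic term. I would introduce a geometric sequence of infra-red cutoffs $\sigma_n := \kappa \rho^n$, $\rho \in (0,1)$, and the truncated form factors $\V{G}_n := \id_{|\V{k}| \geq \sigma_n}\V{G}$, yielding regularized fiber Hamiltonians $\Ham_n(\V{P})$. For each $n$, the photon-soft tail has been removed, so $\Ham_n(\V{P})$ has a rank-two ground-state eigenvalue $E_n(\V{P})$ separated from the rest of the spectrum by a gap of order $\sigma_n$, together with an associated ground-state projection $P_n(\V{P})$. The core task is to transport these projections through $n \to \infty$ by dressing (Bogoliubov) transforms $W_n(\V{P}) := e^{\ee(a(\V{f}_n(\V{P})) - \ad(\V{f}_n(\V{P})))}$ that neutralize the soft-photon cloud produced when the cutoff is relaxed, with $\V{f}_n(\V{P})$ roughly of the form $\V{G}/(|\V{k}| + \V{k}\cdot\nabla E_n(\V{P}))$ restricted to the shell $\{|\V{k}| \leq \sigma_n\}$.

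For Part~(1), I would use analyticity of $\V{P} \mapsto \Ham_n(\V{P})$ together with the gap property to write Hellmann-Feynman-type formulas for $\nabla E_n(\V{P})$ and $D^2 E_n(\V{P})$, then prove $n$-uniform bounds so that $E_n \to E$ in $C^2(\cB_\pmax)$. Strict convexity reduces to showing that $D^2 E_n(\V{P})$ is a small perturbation of $D^2\sqrt{|\V{P}|^2+1}$, which is strictly positive-definite on $\cB_\pmax$. Rotational invariance forces $\nabla E(\V{0}) = \V{0}$, and convexity together with the uniform closeness $E(\V{P}) = \sqrt{|\V{P}|^2+1} + O(\ee^2)$ on $\cB_\pmax$ pins the unique global minimum of $E$ at $\V{P} = \V{0}$.

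For Part~(2), the dressed projections $\wt P_n(\V{P}) := W_n(\V{P}) P_n(\V{P}) W_n(\V{P})^*$ should form a Cauchy sequence in operator norm, with successive differences bounded by a geometric series in $\rho$ obtained from second-order perturbation theory applied inside the shell $\sigma_{n+1} \leq |\V{k}| \leq \sigma_n$. At $\V{P} = \V{0}$ the critical-point identity $\nabla E(\V{0}) = \V{0}$ renders $\V{f}_n(\V{0})$ square-integrable with uniformly bounded norm, so $W_n(\V{0})$ is unitary on $\Fock{}{}$, the limiting projection lies in $\Fock{}{}$, and the rank equals two by norm continuity. Finiteness of $\langle N \rangle$ in any such ground state follows because $\V{f}_\infty(\V{0})$ also lies in the one-particle energy space $L^2(\HP, (1+|\V{k}|^{-1}) d^3\V{k})$, and a refined pull-through analysis at the critical point—where strict convexity yields $\Ham(-\V{k}) - E(\V{0}) + |\V{k}| \geq c|\V{k}|$ together with the transversality of $\veps_\lambda(\V{k})$—supplies the missing infra-red factor to make $\int \|a_\lambda(\V{k})\Psi\|^2 d^3\V{k}$ converge.

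For Part~(3), I would argue by contradiction: assume $\Psi \in \Fock{}{}$ with $\Ham(\V{P})\Psi = E(\V{P})\Psi$ and $\V{P} \in \cB_\pmax \setminus \{\V{0}\}$. Strict convexity from Part~(1) gives $\nabla E(\V{P}) \neq \V{0}$, while $E(\V{P}) \leq \sqrt{|\V{P}|^2+1} + O(\ee^2)$ ensures $|\nabla E(\V{P})| < 1$. After a preliminary $N^{\alpha}$-regularity argument for eigenvectors, the pull-through formula and the expansion $E(\V{P}-\V{k}) = E(\V{P}) - \nabla E(\V{P})\cdot \V{k} + O(|\V{k}|^2)$ produce $\|a_\lambda(\V{k})\Psi\| \sim |\V{k}|^{-3/2}$ in a cone around the direction $\nabla E(\V{P})$, so $\langle \Psi, N\Psi \rangle = \sum_\lambda \int \|a_\lambda(\V{k})\Psi\|^2 d^3\V{k}$ diverges logarithmically, contradicting $\Psi \in \cD(N^{1/2})$. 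The hard part will be the IAPT convergence step of Part~(2): the square-root kinetic term does not commute nicely with Bogoliubov transforms, so the telescoping estimates between consecutive $\wt \Ham_n(\V{P})$ require working with relative form bounds rather than operator bounds, and each insertion of a photon field operator must be paired with a fractional power of the free resolvent uniformly in $n$ and $\V{P} \in \cB_\pmax$.
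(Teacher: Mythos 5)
Your overall strategy (IAPT with geometrically decreasing infra-red cutoffs, dressing transforms, Hellmann-Feynman formulas, a pull-through bound for non-existence) is exactly the route the paper takes, so the high-level plan is sound. However, three points in the proposal are genuinely wrong or incomplete as stated.

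\emph{The coherent factor.} You take $\V{f}_n(\V{P})$ ``roughly of the form $\V{G}/(|\V{k}| + \V{k}\cdot\nabla E_n(\V{P}))$''. The correct form (paper's \eqref{def-ff}, taken from \cite{ChenFroehlich2007}) has the \emph{projection} $\V{G}\cdot\nabla E_j(\V{P})$ in the numerator and $|\V{k}|-\V{k}\cdot\nabla E_j(\V{P})$ in the denominator — physically the classical soft-photon amplitude $\propto \veps_\lambda(\V{k})\cdot\nabla E_j/\bigl(|\V{k}|(1-\hat{\V{k}}\cdot\nabla E_j)\bigr)$. This is not a cosmetic point. First, your form factor does \emph{not} vanish at $\V{P}=\V{0}$, so your claim that ``$\nabla E(\V{0})=\V{0}$ renders $\V{f}_n(\V{0})$ square-integrable'' is false as written: $\V{G}/|\V{k}|\sim|\V{k}|^{-3/2}$ is not in $L^2$ near zero. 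With the correct factor one has $f_j(\V{0})=0$ identically, so the dressing is trivial at $\V{P}=\V{0}$ and the limit projection lives in the original Fock space. Secondly, the numerator projection produces the identity $\V{F}_j^{j+1}\cdot\nabla E_j=\omega f_j$ (the paper's \eqref{rebecca100}), whose cancellation reduces the Lemma~\ref{neu1} error from $O(\ee\rho_j)$ to $O(\ee\rho_j^2)$; without it the telescoping estimates you envisage do not converge and the whole iteration collapses.

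\emph{Absence of a ground state (Part~3).} You want to deduce $\langle\Psi,N\Psi\rangle=\infty$ and therefore need a ``preliminary $N^\alpha$-regularity argument for eigenvectors''. This is precisely the step one cannot take for granted in an IR-singular model: the soft-photon/commutator bounds that normally yield $\Psi\in\dom(N^{1/2})$ rely on a spectral gap that is absent at $\V{P}\neq\V{0}$, and a priori the hypothetical eigenvector need not be in any $\dom(N^\alpha)$. The paper circumvents this (Theorem~\ref{thm-absence}, following Hasler--Herbst) by testing the pull-through identity against vectors $\eta\in\dom(N^{1/2})$ of its own choice and against the IR-singular coherent factor $g_r$, obtaining $\|g_r\|\|(N+1)^{1/2}\eta\|\ge \ee\const_1\ln(\kappa/r)\langle\eta,\phi'\rangle - O(\ee)$ with $\|g_r\|\sim\sqrt{\ln(\kappa/r)}$, and letting $r\searrow 0$ forces a contradiction without ever assuming regularity of $\Psi$. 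You should replace your regularity step with such a test-function argument, or you will be stuck.

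\emph{Exact multiplicity.} ``The rank equals two by norm continuity'' establishes that the limiting projection has rank two (four on the doubled spinor space), hence $E(\V0)$ has multiplicity \emph{at least} two. It does \emph{not} exclude that the ground-state eigenspace is strictly larger than the range of the limit projection. The paper closes this gap in Proposition~\ref{prop-horst}/Theorem~\ref{mainthm}: using the form bound $\const_1(\wt\HAM_\infty-E_\infty+\ee\rho_j)\ge\wt\HAM_j^\infty-E_j$ and a resolvent trick one shows $\liminf_j\|\wt\Pi_j^\infty\phi\|>0$ for every normalized ground state $\phi$, which forces $\wt\Pi_\infty=\id_{\{E_\infty\}}(\wt\HAM_\infty)$. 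You need an analogue of this saturation argument; norm continuity of the projections alone does not give equality.
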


\begin{proof}
The assertions are contained in
Theorems~\ref{thm-ren-mass} and~\ref{mainthm},
Corollary~\ref{cor-N}, and Theorem~\ref{thm-absence}; see also \eqref{block}
and the paragraph preceding it.
\end{proof}

\begin{rem} Here we comment on the three items of the above theorem:

\smallskip

\noindent
(1): It is possible to show that the curvature of the
mass shell is strictly smaller than the one of the mass shell
with $\ee=0$, i.e. of $\sqrt{\V{P}^2+1}$. In other words,
the renormalized electron mass is strictly larger than its bare mass $1$.
We shall prove this assertion in a separate paper \cite{KoenenbergMatte2012b}.

As explained below we approximate $E$ by a sequence, $\{E_j\}_j$,
of ground state energies of IR cutoff operators to prove (1).
Estimates on the convergence rates for the
derivatives of $E_j$ are stated in Theorem~\ref{thm-ren-mass}.

The analog of (1) for the NR Pauli-Fierz model has been proven
in \cite{FroehlichPizzo2010}.

\smallskip

\noindent(2): It has already been observed  in \cite{MiyaoSpohn2009} that
every (speculative) eigenvalue of $\Ham(\V{P})$ is {\em at least}
two-fold degenerated. In fact, this follows from Kramer's degeneracy theorem
as there is an anti-linear involution commuting with $\Ham(\V{P})$.
Part~(2) of Theorem~\ref{thm-intro1} is actually a special case
of Theorem~\ref{thm-intro2}(1) below.

In the NR case, the existence of ground state eigenvectors at $\V{P}=\V{0}$
is proven in \cite{CFP2009}. It also follows from Chen's spectral
renormalization group analysis \cite{Chen2008}.

\smallskip

\noindent(3): The asserted phenomenon is called
{\em absence of a proper mass shell}; 
following \cite{Schroer1963}, one also says the system exhibits
an {\em infra-particle behavior}.
 
In the NR Pauli-Fierz model it is known
that, {\em if} $E$ is differentiable at some arbitrary $\V{P}\in\RR^3$,
then for $E(\V{P})$ to be an eigenvalue it is necessary that
$\nabla E(\V{P})=\V{0}$ \cite{HaslerHerbst2008}.
In our case 
$\nabla E\not=\V{0}$
on $\cB_\pmax\setminus\{\V{0}\}$ and $\nabla E(\V{0})=\V{0}$
by Theorem~\ref{thm-intro1}(1).
The exact NR analog of (3) follows from \cite{ChenFroehlich2007}.
\hfill$\Diamond$
\end{rem}

\smallskip

\noindent
The IAPT applied in this paper
comprises an inductive spectral analysis of a certain
sequence of Hamiltonians whose interaction term is cut off
in the infra-red. Thus,
to obtain our results we introduce a sequence of IR cutoff parameters\footnote{One could choose a slightly larger value than
$\nf{1}{2}$ in the definition of $\rho_j$; smaller choices
would restrict the range of allowed values for $\ee$.},  
\begin{align}\label{def-rhoj}
\rho_j:=\kappa\,(\nf{1}{2})^j,\quad j\in\NN_0=\{0,1,2,\ldots\;\}\,,\quad
\rho_\infty:=0\,,
\end{align}
and approximate $\Ham(\V{P})$ by IR cutoff Hamiltonians,
$$
\Ham_j^\infty(\V{P}):=\sqrt{(\vsigma\cdot(\V{P}-\Pf+\ee\smash{\V{A}_j}))^2+\id}
+\Hf\,,\quad\V{A}_j:=\vp(\id_{\rho_j\le|\V{k}|}\V{G})\,.
$$
Hence, the set of non-zero photon momenta is split
into a sequence of annuli,
\begin{equation}\label{def-Akj}
\dot{\RR}^3_{\V{k}}=\cA_0\cup\bigcup_{j\in\NN_0}\cA_j^{j+1}
,\quad
\cA_k^j:=\{\rho_j\le|\V{k}|<\rho_k\},\quad
\cA_j:=\{\rho_j\le|\V{k}|\},
\end{equation}
with corresponding one-photon Hilbert spaces,
\begin{equation}\label{def-HPj}
\HP_k^j:=L^2(\cA_k^j\times\ZZ_2)\,,\qquad
\HP_j:=L^2(\cA_j\times\ZZ_2)\,.
\end{equation}
It is crucial for the whole analysis that the orthogonal
splittings
$$
\HP=\HP_j\oplus\HP_j^\infty,\qquad\HP_{j+1}=\HP_{j}\oplus\HP_j^{j+1},
$$
give rise to the following isomorphisms of Fock spaces,
\begin{equation}\label{def-Fockkj}
\Fock{}{}=\Fock{j}{}\otimes\Fock{j}{\infty},
\qquad\Fock{j+1}{}=\Fock{j}{}\otimes\Fock{j}{j+1}.
\end{equation}
Here $\Fock{k}{j}$ is the bosonic Fock space modeled over
$\HP_k^j$ and $\Fock{j}{}$ the one over $\HP_j$; see Appendix~\ref{app-Fock}.

In the following discussion we assume that $\pmax>0$ is fixed, $\V{P}\in\cB_\pmax$,
and $\ee>0$ is sufficiently small.
As it turns out, then, in Section~\ref{sec-gap}, suitable versions of
$\Ham_j^\infty(\V{P})$ acting only in 
$\CC^2\otimes\Fock{j}{}$ (resp. $\CC^2\otimes\Fock{j+1}{}$)
-- below denoted by $\Ham_j(\V{P})$ (resp. $\Ham_j^{j+1}(\V{P})$) -- 
have a strictly positive spectral gap above their twice degenerate
ground state eigenvalue
$$
E_j(\V{P}):=\inf\Spec[\Ham_j^\infty(\V{P})]=\inf\Spec[\Ham_j(\V{P})]=
\inf\Spec[\Ham_j^{j+1}(\V{P})]\,.
$$
Moreover, $E_j(\V{P})\to E(\V{P})$ and $E_0(\V{P})=\sqrt{\V{P}^2+1}$.
Furthermore, $E_j$ is real analytic on $\cB_\pmax$, so that its derivatives
can be represented by Hellmann-Feynman (HF) type formulas; 
see Section~\ref{sec-P-dep}.
In Section~\ref{sec-C2}
the latter formulas are used to show that 
$\sum_j(E_j-E_{j+1})$ is absolutely convergent w.r.t. the norm on
$C^2_b(\cB_\pmax)$, so that $E\in C^2_b(\cB_\pmax)$.
To estimate $E_j-E_{j+1}$, one employs HF formulas
involving $\Ham_j^{j+1}(\V{P})$ for the derivatives of $E_j$
and represents derivatives of $E_{j+1}$ by means of a (partially)
{\em dressing transformed} version of $\Ham_{j+1}(\V{P})$,
namely
\begin{equation}\label{def-U-intro}
\check{\Ham}_{j+1}(\V{P}):=U_j(\V{P})\,\Ham_{j+1}(\V{P})\,U_j^*(\V{P})\,,
\quad U_j(\V{P}):=e^{-i\ee\vo(f_j(\V{P}))},
\end{equation}
where the coherent factor is given by the formula
determined in \cite{ChenFroehlich2007},
\begin{equation}\label{def-ff}
f_j(\V{P};\V{k},\lambda):={\id_{\rho_{j+1}\le|\V{k}|<\rho_j}}\,
\frac{\V{G}(\V{k},\lambda)\cdot\nabla E_j(\V{P})}{
|\V{k}|-\V{k}\cdot\nabla E_j(\V{P})}\,,\quad j\in\NN_0\,.
\end{equation}
Roughly speaking, an application of $U_j^*(\V{P})$ to ground state
eigenvectors of $\Ham_j^{j+1}(\V{P})$, which look like
$\Psi_j\otimes\Omega_j^{j+1}$ ($\Omega_j^{j+1}=$ vacuum in $\Fock{j}{j+1}$),
turns them into better test functions for $\Ham_{j+1}(\V{P})$
by dressing $\Omega_j^{j+1}$ in a cloud of soft photons.
The soft photon cloud is due to the movement of the total
system at momentum $\V{P}$ and, accordingly,
$U_j(\V{0})=\id$. We remark that the formula for $f_j$
is essentially determined by the IR bound \eqref{rosa0} below.

The analysis of $E_j-E_{j+1}$ by means of HF formulas also
furnishes convergence of the ground state eigenprojections
of the Hamiltonians
\begin{equation}
\wt{\Ham}_j^\infty(\V{P}):=
W_j(\V{P})\,\Ham_j^\infty(\V{P})\,W_j(\V{P})^*,\qquad
W_j(\V{P}):=\prod_{\ell=0}^{j-1}U_\ell(\V{P})\,.
\end{equation}
The operators $\wt{\Ham}_j^\infty(\V{P})$
are self-adjoint on the domain of $\Hf$ and converge in the norm
resolvent sense to some limit
Hamiltonian denoted by 
$$
\wt{\Ham}_\infty(\V{P}):=\underset{j\to\infty}{\textrm{norm-res-lim}}\,
\wt{\Ham}_j^\infty(\V{P})\,.
$$
For $\V{P}\not=\V{0}$,
this limit Hamiltonian is, however, not a unitary transform of 
our original Hamiltonian $\Ham(\V{P})$, although the latter
is the norm resolvent limit of $\{\Ham_j^\infty(\V{P})\}_j$. In fact,
if $\V{P}\in\cB_\pmax$ is non-zero, then
$\lim_{j\to\infty}\nabla E_j(\V{P})=\nabla E(\V{P})\not=\V{0}$,
and, hence, the sum $\sum_0^\infty f_j$
is {\em not} square-integrable.
As a consequence we cannot define a reasonable limit
of the unitaries $W_j$ in $\cB(\CC^2\otimes\Fock{}{})$.
However, we may define a unitary operator,
$$
W_\infty^*(\V{P}):\,\CC^2\otimes\Fock{}{}\longrightarrow\HR^\ren_{\V{P}}
:=\CC^2\otimes\Fock{0}{}\otimes\prod\otimes_{j\in\NN_0}^{\Omega^\ren_{\V{P}}}
\Fock{j}{j+1}\,,
$$
with values in some incomplete direct product
space (in the sense of \cite{vNeumann1939})
associated with 
the coherent state 
$$
\Omega^\ren_{\V{P}}:=\eta\otimes e^{i\ee\vo(f_0(\V{P}))}\Omega_0^1
\otimes e^{i\ee\vo(f_1(\V{P}))}\Omega_1^2 
\otimes e^{i\ee\vo(f_2(\V{P}))}\Omega_2^3
\otimes\ldots\in\HR^\ren_{\V{P}},
$$
where $\eta$ is a normalized vector
in the vacuum sector of $\CC^2\otimes\Fock{0}{}$.
Recall that
$\HR^\ren_{\V{P}}$ is a (topologically complete) subspace
of the complete direct product
space 
$\wh{\HR}:=\CC^2\otimes\Fock{0}{}\otimes\prod\otimes_{j\in\NN_0}\Fock{j}{j+1}$ 
\cite{vNeumann1939}.
(If we changed $\Omega^\ren_{\V{P}}$ slightly by replacing $E_j$
in \eqref{def-ff} by $E$, then it would still determine the 
same space $\HR^\ren_{\V{P}}$; see Remark~\ref{rem-Omega-ren}.)
We may then define a renormalized Hamiltonian,
\begin{equation}\label{def-Ham-ren}
\Ham^\ren(\V{P}):=W_\infty^*(\V{P})\,\wt{\Ham}_\infty(\V{P})\,W_\infty(\V{P})\,,
\qquad W_\infty(\V{P}):=W_\infty^*(\V{P})^*,
\end{equation}
which is an analog of the operators introduced in
\cite[\textsection 2.3]{Froehlich1973} for the Nelson model.
It can be used to describe {\em coherent IR representations} of
the $C^*$-algebra $\fA$, where
\begin{align}\label{def-fA}
\fA:=\ol{\fA^\circ}^{\|\cdot\|_{\cB(\CC^2\otimes\Fock{}{})}},
\qquad\fA_j:=\id\otimes\cB(\Fock{j}{j+1})\otimes\id\,,\;j\in\NN_0\,,
\end{align}
and $\fA^\circ$ is the $*$-sub-algebra of $\cB(\CC^2\otimes\Fock{}{})$
generated by all 
$\fA_j$, $j\in\NN_0$, and by $\fA_{-1}:=\cB(\CC^2\otimes\Fock{0}{})\otimes\id$.
Let $\fW_j$ denote the Weyl algebra generated by the 
operators $W(v):=e^{i\vo(\Im v_j)}e^{i\vp(\Re v_j)}$ with $v\in\sK_j^{j+1}$,
interpreted as a sub-algebra of $\fA_j$.
Let $\fW_{-1}\subset\fA_{-1}$ be the $C^*$-algebra generated
by all $M\otimes W(u)$ with $M\in\cB(\CC^2)$ and $u\in\sK_0$.
Then we may also consider
the $C^*$-algebra $\fW:=\ol{\fW^\circ}\subset\fA$, where
$\fW^\circ$ is the $*$-algebra generated by all
$\fW_j$, $j\ge-1$.

There is a natural isometric embedding $\pi:\fA\to\cB(\wh{\HR})$
such that every $\pi(A)$, $A\in\fA$, is reduced by $\HR^\ren_{\V{P}}$
\cite{vNeumann1939};
see Section~\ref{sec-CIRR} for more details.
Denoting the restriction of $\pi(A)$ to $\HR^\ren_{\V{P}}$
by $\pi_{\V{P}}(A)$ we obtain a representation
$$
\pi_{\V{P}}:\,\fA\longrightarrow\cB(\HR^\ren_{\V{P}})\,.
$$
To state our main results on these objects 
we need yet some more notation: $\id_M(T)$ is the
spectral projector associated with some self-adjoint operator, $T$,
and some Borel set, $M\subset\RR$; $\tr$ denotes the trace.

\begin{thm}\label{thm-intro2}
For all $\kappa,\pmax>0$, we find $\const,\ee_0>0$ such that the
following holds, for all $\V{P}\in\cB_\pmax$ and $\ee\in(0,\ee_0]$:

\smallskip

\noindent(1)
$E(\V{P})$ is a two-fold degenerate eigenvalue of  
$\wt{\Ham}_\infty(\V{P})$ and ${\Ham}^\ren({\V{P}})$, and
$$
\big\|\id_{\{E(\V{P})\}}(\wt{\Ham}_\infty(\V{P}))
-\id_{\{E_j(\V{P})\}}(\wt{\Ham}_j^\infty(\V{P}))\big\|
\le\const\,\ee\,(1+\const\,\ee)^j\rho_j
\xrightarrow{\;j\to\infty\;}0\,.
$$
(2) There is a unique linear functional,
$\omega_{\V{P}}\in\fA^*$, such that
$$
\omega_{\V{P}}(A)=\lim_{j\to\infty}
\tr\big[\id_{\{E_j(\V{P})\}}({\Ham}_j^\infty(\V{P}))\,A\big]\,,
\quad A\in\fA^\circ.
$$
(In particular, the limit exists.)
We have
$$
\omega_{\V{P}}(A)
=\tr\big[\id_{\{E(\V{P})\}}(\Ham^\ren(\V{P}))\,\pi_{\V{P}}(A)\big]\,,
\quad A\in\fA\,.
$$
(3) If $\V{Q}\in\cB_\pmax\setminus\{\V{P}\}$, then the
representations $\pi_{\V{P}}$ and $\pi_{\V{Q}}$ 
of $\fW$ are disjoint and, in particular,
$\omega_{\V{P}}$ is not a $\pi_{\V{Q}}$-normal functional
on $\fW$.
\end{thm}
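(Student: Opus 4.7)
The plan for part~(1) is to combine the unitary equivalence $\wt{\Ham}_j^\infty(\V{P})=W_j(\V{P})\,\Ham_j^\infty(\V{P})\,W_j(\V{P})^*$---which immediately turns the two-fold degenerate ground state eigenvalue $E_j(\V{P})$ of $\Ham_j^\infty(\V{P})$ produced in Section~\ref{sec-gap} into a two-fold degenerate eigenvalue of $\wt{\Ham}_j^\infty(\V{P})$---with the norm-resolvent convergence $\wt{\Ham}_j^\infty(\V{P})\to\wt{\Ham}_\infty(\V{P})$ already at our disposal. Since the IAPT analysis of Section~\ref{sec-gap} provides a uniform spectral gap above $E_j(\V{P})$ and $E_j(\V{P})\to E(\V{P})$, a Riesz-projector argument extracts from this a rank-two spectral projector of $\wt{\Ham}_\infty(\V{P})$ at $E(\V{P})$; the quantitative rate $\const\,\ee\,(1+\const\,\ee)^j\rho_j$ is obtained by telescoping the differences $\id_{\{E_j\}}(\wt{\Ham}_j^\infty)-\id_{\{E_{j+1}\}}(\wt{\Ham}_{j+1}^\infty)$ and inserting the Hellmann--Feynman type estimates already furnished by the IAPT. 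Conjugation by $W_\infty^*(\V{P})$ and~\eqref{def-Ham-ren} transfer everything to $\Ham^\ren(\V{P})$.

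For part~(2), the key observation will be that any $A\in\fA^\circ$ acts non-trivially on only finitely many factors $\Fock{\ell}{\ell+1}$, say those with $\ell\le N$ for some finite $N$, so that $U_\ell(\V{P})\in\fW_\ell\subset\fA_\ell$ commutes with $A$ whenever $\ell>N$. Hence $W_j(\V{P})\,A\,W_j^*(\V{P})=W_{N+1}(\V{P})\,A\,W_{N+1}^*(\V{P})$ for all $j\ge N+1$, and using $\Ham_j^\infty=W_j^*\wt{\Ham}_j^\infty W_j$ together with cyclicity of the trace one rewrites
\[
\tr\bigl[\id_{\{E_j(\V{P})\}}(\Ham_j^\infty(\V{P}))\,A\bigr]=\tr\bigl[\id_{\{E_j(\V{P})\}}(\wt{\Ham}_j^\infty(\V{P}))\,W_{N+1}(\V{P})\,A\,W_{N+1}(\V{P})^*\bigr]
\]
for $j>N$. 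Part~(1) then yields convergence as $j\to\infty$. Unravelling how $W_\infty^*(\V{P})$ intertwines the partially dressed operator $W_{N+1}\,A\,W_{N+1}^*$ with $\pi_{\V{P}}(A)$ on the incomplete tensor product $\HR^\ren_{\V{P}}$---the point being that, by construction, all remaining tail unitaries $U_\ell$ with $\ell>N$ are absorbed into the reference vector $\Omega^\ren_{\V{P}}$---identifies the limit with $\tr[\id_{\{E(\V{P})\}}(\Ham^\ren(\V{P}))\,\pi_{\V{P}}(A)]$. Uniqueness of $\omega_{\V{P}}$ is immediate from the density of $\fA^\circ$ in $\fA$ together with the uniform bound $|\omega_{\V{P}}(A)|\le 2\|A\|$.

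The heart of part~(3) is an infrared divergence estimate. By Theorem~\ref{thm-intro1}(1), $E$ is strictly convex on $\cB_\pmax$, so $\nabla E(\V{P})\ne\nabla E(\V{Q})$ whenever $\V{P}\ne\V{Q}$. A rescaling $\V{k}=\rho_j\V{s}$ in~\eqref{def-ff} shows that $\|f_j(\V{P})-f_j(\V{Q})\|_{\HP}^2$ converges, as $j\to\infty$, to a strictly positive constant depending only on $\nabla E(\V{P})$ and $\nabla E(\V{Q})$; in particular, $\sum_{j\ge0}\|f_j(\V{P})-f_j(\V{Q})\|^2=\infty$. Combined with the coherent state overlap
\[
\SPn{e^{i\ee\vo(f_j(\V{P}))}\Omega_j^{j+1}}{e^{i\ee\vo(f_j(\V{Q}))}\Omega_j^{j+1}}=e^{i\alpha_j}\,e^{-\ee^2\|f_j(\V{P})-f_j(\V{Q})\|^2/4}\,,
\]
this puts the reference vectors $\Omega^\ren_{\V{P}}$ and $\Omega^\ren_{\V{Q}}$ in weakly inequivalent $C_0$-sequences in the sense of von Neumann, so that $\HR^\ren_{\V{P}}$ and $\HR^\ren_{\V{Q}}$ are orthogonal subspaces of $\wh{\HR}$ and the restrictions of $\pi_{\V{P}}$ and $\pi_{\V{Q}}$ to the Weyl algebra $\fW$ are mutually disjoint. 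Since $\omega_{\V{P}}$ arises from a $\pi_{\V{P}}$-vector state by~(2), it cannot be $\pi_{\V{Q}}$-normal on $\fW$.

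The technically delicate step will be the clean identification in~(2) of the limiting trace with $\tr[\id_{\{E(\V{P})\}}(\Ham^\ren)\,\pi_{\V{P}}(A)]$: one has to verify carefully that the ``tail'' part of $W_\infty^*(\V{P})$ beyond scale $N$ implements exactly the representation map $A\mapsto\pi_{\V{P}}(A)$ on the relevant subspace of $\wh{\HR}$. The infrared divergence in~(3) is conceptually the essence of the statement but technically a short calculation once the scaling is set up, and the projector convergence in~(1) is essentially a textbook consequence of the IAPT output.
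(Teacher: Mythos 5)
Your outline for parts~(2) and~(3) tracks the paper's own proof quite closely. For~(2), the key localization observation you isolate---that $A\in\fA^\circ$ lives on finitely many factors so that $W_jAW_j^*=W_{N+1}AW_{N+1}^*$ for $j>N$, letting you pass through cyclicity of the trace and Part~(1)---is precisely what the paper does in Section~\ref{sec-CIRR} (there written as $\alpha_{\V{P}}(\pi_{\V{P}}(A))=W_\ell AW_\ell^*$, $\ell$ suitable). For~(3), your rescaling argument showing $\sum_j\ee^2\|f_j(\V{P})-f_j(\V{Q})\|^2=\infty$ once $\nabla E(\V{P})\ne\nabla E(\V{Q})$, followed by von Neumann's weak-inequivalence criterion, is exactly the route taken in Corollary~\ref{cor-disjoint}.

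There is, however, a genuine gap in your argument for Part~(1). You invoke ``a uniform spectral gap above $E_j(\V{P})$'' supplied by the IAPT, and then propose a Riesz-projector argument that closes a contour around $E(\V{P})$ and uses norm-resolvent convergence. But the whole difficulty of the infrared problem is that no such uniform gap exists: Theorem~\ref{Thm:Gap} gives $\gap_j,\gap_j^{j+1}\ge q\,\rho_j/2$, and $\rho_j\to 0$. So for $\V{P}\ne\V{0}$ there is no fixed isolating radius, and the textbook Riesz-contour argument breaks down precisely at the point where you need it to ``extract a rank-two spectral projector at $E(\V{P})$.'' Norm-resolvent convergence alone does not even tell you that $E(\V{P})$ is an eigenvalue, let alone identify its multiplicity.

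The paper's proof avoids the Riesz contour altogether. It first shows the projections $\wt{\Pi}_j^\infty=W_j\Pi_jW_j^*\otimes P_{\Omega_j^\infty}$ converge in norm by the Hilbert--Schmidt telescoping estimate $\|\check{\Pi}_{j+1}-\Pi_j^{j+1}\|_{\HS}\le\const\,\ee(1+\const\ee)^j\rho_j$ from Corollary~\ref{Cor-Karl}---which also delivers the quantitative rate in~(1)---and then checks directly, via $\wt{\Pi}_\infty=(\wt{\HAM}_\infty-E_\infty+1)^{-1}\wt{\Pi}_\infty$ and the norm-resolvent convergence of the $\wt{\HAM}_j^\infty$, that $\Ran(\wt{\Pi}_\infty)$ consists of ground-state eigenvectors. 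But the crucial missing ingredient---that $\Ran(\wt{\Pi}_\infty)$ is the \emph{whole} eigenspace, which is needed both for existence with the correct multiplicity and for the stated rate to be an estimate on the actual spectral projections---requires a separate, non-contour argument: Proposition~\ref{prop-horst}. There one exploits the relative form bound $\wt{\HAM}_j^\infty-E_j\le\const_1(\wt{\HAM}_\infty-E_\infty+\ee\rho_j)$ (see \eqref{adam-infty}) together with a scaled-resolvent inflation $F_j(t)$, to show $\liminf_j\|\wt{\Pi}_j^\infty\phi\|\ge1-\sqrt{\const_2}>0$ for any candidate ground state $\phi$, which rules out ground states orthogonal to $\Ran(\wt{\Pi}_\infty)$. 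Without such a step, your Part~(1) proof does not close.
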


\begin{proof}
(1) follows from \eqref{def-Ham-ren}, Theorem~\ref{mainthm}, 
and the explanation below \eqref{flann1}; see also \eqref{block}.
Part~(2) follows from the discussion in Section~\ref{sec-CIRR}.
With Theorem~\ref{thm-intro1} and Part~(2) at hand (3) follows from
arguments in \cite{Froehlich1973,KMcKW1966,Streit1967};
for the reader's convenience we sketch its proof
in Corollary~\ref{cor-disjoint}. The crucial point 
is that $\nabla E(\V{P})\not=\nabla E(\V{Q})$,
if $\V{P}\not=\V{Q}$, by the strict convexity of $E$ on $\cB_\pmax$.
Recall that disjointness of $\pi_{\V{P}}$ and $\pi_{\V{Q}}$ 
means that no $\pi_{\V{P}}$-normal positive functional on $\fW$
is $\pi_{\V{Q}}$-normal
and vice versa; by (2) $\omega_{\V{P}}$ is $\pi_{\V{P}}$-normal. 
\end{proof}

\begin{rem}
The Hilbert space $\CC^2\otimes\Fock{}{}$ is identified in a canonical way
with $\HR^\ren_{\V{0}}$. Choosing $\V{P}\not=\V{0}=\V{Q}$ in
Part~(3) then shows that $\omega_{\V{P}}$ cannot be represented
by a positive trace class operator acting on $\CC^2\otimes\Fock{}{}$.

In the NR case results similar to (2) and (3) can be found in
\cite{ChenFroehlich2007}. Compared to \cite{ChenFroehlich2007,Froehlich1973}
there are some simplifications in our proofs due to the fact that Pizzo's
recursive scheme establishes directly the existence of ground state projections
of $\wt{\Ham}_\infty(\V{P})$.
In \cite{ChenFroehlich2007,Froehlich1973} the analog of $\omega_{\V{P}}$
is defined via compactness arguments and analyzed 
by means of abstract results from the representation theory
of the CCR algebra. Statements similar to (1) for the NR model
have appeared in \cite{CFP2009,Pizzo2003}.
\hfill$\Diamond$ 
\end{rem}

\smallskip

\noindent{\bf Convention.}
The symbols $\const,\const',\ldots\;$
denote positive constants sometimes depending on $\kappa$
and upper bounds on $\ee$ or $|\V{P}|$, but on no other relevant parameters.
Their
values might change from one estimate to another.
$\dom(T)$ and $\fdom(T)$ denote the domain and form domain,
respectively, of a suitable operator $T$.


\section{Relative bounds and self-adjointness}\label{sec-rb}

\noindent
In Subsection~\ref{ssec-rb} below we derive some basic relative bounds
on various modified versions of the SR Pauli-Fierz fiber Hamiltonian
and study their self-adjointness and domains. It shall be very convenient
in the whole paper to work on a Hilbert space with {\em four}
spin degrees of freedom instead of only two.
For we may then employ fiber Dirac operators in the
computations which are linear in the vector potential
and whose absolute values can be represented by very handy formulas.
Similar ideas have been used earlier in 
\cite{KMS2009a,MatteStockmeyer2009a,MiyaoSpohn2009}.
The Dirac operators and some further necessary notation are introduced in the
following Subsection~\ref{ssec-4spinors}. 


\subsection{Operators on four-spinors}\label{ssec-4spinors}

\noindent
Recall the definition of the annuli $\cA_k^j$ and $\cA_j$
in \eqref{def-Akj} and of the corresponding Fock spaces $\Fock{k}{j}$
and $\Fock{j}{}$ explained below \eqref{def-Fockkj}; see also
Appendix~\ref{app-Fock} for their precise definition
as well as for the notation $d\Gamma$ used below.
We first fix the notation for the energy and momentum
of the photon field:
Writing $\omega(\V{k},\lambda):=|\V{k}|$ and 
$\V{m}(\V{k},\lambda)=\V{k}$, 
for $\V{k}\in\RR^3$, $\lambda\in\ZZ_2$, we define
\begin{align*}
\Hf^{(k,j)}&:=d\Gamma(\omega\!\!\upharpoonright_{\cA_k^j\times\ZZ_2})\,,
\qquad
\Hf^{(j)}:=d\Gamma(\omega\!\!\upharpoonright_{\cA_j\times\ZZ_2})\,,
\\
\Pf^{(k,j)}&:=d\Gamma(\V{m}\!\!\upharpoonright_{\cA_k^j\times\ZZ_2})\,,
\qquad
\,\Pf^{(j)}:=d\Gamma(\V{m}\!\!\upharpoonright_{\cA_j\times\ZZ_2})\,.
\end{align*}
At some points in our constructions there appear
unitarily equivalent sequences of IR cutoff
fiber Hamiltonians, whose norm resolvent limits are {\em not}
unitarily equivalent anymore. In the preparatory
Sections~\ref{sec-rb}--\ref{sec-gap} we wish to
treat these sequences in a unified fashion, whence we introduce
the following hypotheses on field operators:

\begin{assumption}\label{ariel}
For every $j\in\NN_0$, let
$(f_{j},\V{g}_{j})\in L^2(\cA_j^{j+1}\times\ZZ_2,\RR\times\RR^{3})$,
$(b_j,\V{c}_j)\in\RR\times\RR^3$,
and assume there exists $\const>0$ such that 
\begin{equation}\label{ariel2001}
\|\omega^{-\nf{1}{2}}(f_{j},\V{g}_{j})\|\le\const\,
\rho_j^{\nf{1}{2}},\quad
|(b_j,\V{c}_j)|\le\const\,\rho_j\,,\quad j\in\NN_0\,.
\end{equation}
For integers $0\le k<j\le\infty$ and $\ee>0$, set
\begin{align}\label{ariel2012}
\Phi_k^j&:=\!\sum_{k\le\ell<j}\!
\big(\vp(f_{\ell})+\ee\,b_\ell\big)\,,\quad
\V{A}_k^j:=\!\sum_{k\le\ell<j}\!
\big(\vp(\V{g}_{\ell})-\ee\,\V{c}_\ell\big)\,,
\end{align}
and abbreviate $\Phi_j:=\Phi_0^j$ and $\V{A}_j:=\V{A}_0^j$.
Finally, assume (for convenience) that
$\Hf^{(k,j)}\!+\ee\,\Phi_k^j\ge0$.
\hfill$\Diamond$
\end{assumption}

\smallskip

\noindent
The sums $\sum_{k\le\ell<j}$ in \eqref{ariel2012}, which
are infinite when $j=\infty$, are well-defined because
of \eqref{ariel2001} and define self-adjoint operators
whose domain contains the one of $(\Hf^{(j)})^{\nf{1}{2}}$;
see, e.g., \cite{ReedSimonII}.
Without further notice we shall often regard 
$\V{A}_k$ and $\V{A}_k^j$ as operators
in $\sF_j$ by identifying
$\V{A}_k\equiv\V{A}_k\otimes\id_{\sF_k^j}$
and $\V{A}_k^j\equiv\id_{\sF_k}\otimes\V{A}_k^j$.
The same convention is used for $\Phi_k^j$,
$\Hf^{(k)}$, $\Hf^{(k,j)}$, etc.

Next, we introduce the Hilbert spaces
\begin{equation}\label{def-HR-sC}
\HR_j:=\CC^4\otimes\sF_j\,,\quad
j\in\NN_0\cup\{\infty\}\,,\qquad \HR:=\HR_\infty\,.
\end{equation}
As usual we shall consider operators $L$ and $M$ acting in
$\sF_j$ and $\CC^4$, respectively,
also as operators in $\HR_j$ by identifying
$L\equiv\id_{\CC^4}\otimes L$ and $M\equiv M\otimes\id_{\sF_j}$. 
For instance this convention applies to the four Dirac matrices,
$$
\alpha_0:=\begin{pmatrix}\id&0\\0&-\id\end{pmatrix}
\in\LO(\CC^4)\,,
\qquad
\alpha_j:=\begin{pmatrix}0&\sigma_j\\\sigma_j&0\end{pmatrix}\in\LO(\CC^4)\,,
\;\;j=1,2,3\,.
$$
satisfying $\alpha_i=\alpha_i^*=\alpha_i^{-1}$
and the Clifford algebra relations
\begin{equation}\label{Clifford}
\{\alpha_i\,,\,\alpha_j\}=2\,\delta_{ij}\,\id\,,\quad i,j\in\{0,1,2,3\}\,.
\end{equation}
To show degeneracy of ground states 
we shall exploit the fact that the first and third
Pauli matrices are real while the second one is imaginary,
$$
\sigma_1=\begin{pmatrix}0&1\\1&0\end{pmatrix}\,,\quad
\sigma_2=\begin{pmatrix}0&-i\\i&0\end{pmatrix}\,,\quad
\sigma_3=\begin{pmatrix}1&0\\0&-1\end{pmatrix}\,.
$$
As a consequence of \eqref{Clifford} and the $C^*$-equality we obtain
\begin{equation}\label{C*}
\|\valpha\cdot\V{v}\|_{\LO(\CC^4)}=|\V{v}|\,,\quad\V{v}\in\RR^3,
\end{equation}
where 
$\valpha\cdot\V{v}:=\alpha_1v_1+\alpha_2v_2+\alpha_3v_3$.
It is a standard exercise using the CCR, \eqref{ariel2001}, and \eqref{C*}
to derive the following fundamental relative bounds,
\begin{equation}\label{akj-neu}
\big\|\Phi_k^j\,(\Hf^{(k,j)}\!+\rho_k)^{-\nf{1}{2}}\big\|
\le\const\,\rho_k^{\nf{1}{2}},
\quad
\big\|\valpha\cdot\V{A}_k^j\,(\Hf^{(k,j)}\!+\rho_k)^{-\nf{1}{2}}\big\|
\le\const\,\rho_k^{\nf{1}{2}},
\end{equation}
for all $k,j\in\NN_0\cup\{\infty\}$, $k<j$, and $\ee\in(0,\ee_0]$,
with $\const\equiv\const(\ee_0)$.

Let $\V{P}\in\RR^3$ in what follows.
We define the fiber Dirac operator,
$\D_k^j(\V{P})$, to be the {\em closure} of the
symmetric operator in $\HR_j$ given by
\begin{equation}\label{def-Dkj}
\dom(\Hf^{(j)})\ni\phi\,\longmapsto\,
\valpha\cdot(\V{P}-\Pf^{(j)}+\ee\,\V{A}_k)\,\phi+\alpha_0\,\phi
\end{equation}
and put $\D_j(\V{P}):=\D_j^j(\V{P})$.
Let 
$\sC_j\subset\HR_j$
denote the dense subspace of all 
$(\psi^{(n)})_{n=0}^\infty\in\HR_j=\bigoplus_{n=0}^\infty\CC^4\otimes[\sF_j]^{(n)}$
such that only finitely many $\psi^{(n)}$ are non-zero and each
$\psi^{(n)}$, $n\in\NN$, has a compact support.
According to Lemma~\ref{le-esa-wD} below
$\D_k^j(\V{P})$ is self-adjoint and $\sC_j$ is a core for $\D_k^j(\V{P})$.
Actually, it has already been remarked in \cite{MiyaoSpohn2009}
that the essential self-adjointness of $\D_k^j(\V{P})$ on 
$\sC_j$ follows from Nelson's commutator theorem with 
test operator $\Hf^{(j)}+1$. We give an alternative argument
in Lemma~\ref{le-esa-wD} which also shows that the
square, $\D_k^j(\V{P})^2$, is essentially self-adjoint on $\sC_j$.
A direct computation shows, however, that
\begin{equation}
\D_k^j(\V{P})^2=\cT_k^j(\V{P})\oplus\cT_k^j(\V{P})\quad
\textrm{on}\;\sC_j\,,
\end{equation}
where the direct sum refers to the splitting of the
spinor components $\CC^4=\CC^2\oplus\CC^2$, and where
$$
\cT_k^j(\V{P}):=\big(\vsigma\cdot(\V{P}-\Pf^{(j)}+\ee\,\V{A}_k)\big)^2+1\,.
$$
In particular,
\begin{equation}\label{SpecDirac}
\sigma(\D_k^j(\V{P}))\subset (-\infty,-1]\cup[1,\infty).
\end{equation}
Thus, using the essential self-adjointness of $\D_k^j(\V{P})^2$
it is easy to see that two 
possibilities to make sense out of the square root in \eqref{def-Ham(P)}
yield the same operator:
We may take the square root of the closure of $\cT_k^j(\V{P})$
defined by means of the spectral theorem or, equivalently, 
we may define the square root in \eqref{def-Ham(P)}
to be equal to the upper left (or lower right) 2\texttimes2 
block of the block-diagonal operator $|\D_k^j(\V{P})|$.
For technical reasons we find it convenient to work with
absolute values of the Dirac operator.

We finally define fiber Hamiltonians on four spinors by
\begin{equation}\label{def-HAMkj}
\HAM_k^j(\V{P}):=|\D_k^j(\V{P})|+\Hf^{(j)}+\ee\,\Phi_k\,,\quad
\HAM_j(\V{P}):=\HAM_j^j(\V{P})\,,
\end{equation}
on the domain $\dom(\HAM_k^j(\V{P})):=\dom(\Hf^{(j)})$.
As it turns out in the next subsection they are
self-adjoint on that domain.
According to the remarks above they are block diagonal,
\begin{equation}\label{block}
\HAM_k^j(\V{P})=\Ham_k^j(\V{P})\oplus\Ham_k^j(\V{P})\,,
\qquad\Ham_k^j(\V{P}):=\ol{\cT}_k^j(\V{P})^{\nf{1}{2}}+\Hf^{(j)}.
\end{equation}
Notice that $|\D_k^j(\V{P})|\le\HAM_k^j(\V{P})$ by the
last condition imposed in Assumption~\ref{ariel}.
Notice also that the photon momentum and energy operators
$\Pf^{(j)}$ and $\Hf^{(j)}$ contained in $\HAM_k^j(\V{P})$
act in the Fock space $\Fock{j}{}$ attached to the scale $j$
while the field operators $\V{A}_k$ and $\Phi_k$ in 
$\HAM_k^j(\V{P})$ act non-trivially only in 
$\Fock{k}{}$.

\smallskip

\noindent
{\bf Notational warning.} From Section~\ref{sec-DT} forth,
the symbols $\D_k^j(\V{P})$ and $\HAM_k^j(\V{P})$
will solely denote the operators obtained by choosing
$(f_j,\V{g}_j)=(0,\id_{\cA_j^{j+1}}\,\V{G})$ and $(b_j,\V{c}_j)=(0,\V{0})$. 
Then $\Ham(\V{P}):=\Ham_\infty^\infty(\V{P})$ is the physical operator
in \eqref{def-Ham(P)}.

\smallskip

\noindent
{\bf Convention.} Up to now we have always
kept the symbol $\V{P}$ in the notation.
Henceforth, we shall include it in the notation
only when new mathematical objects are defined
(to clarify their dependence on $\V{P}$).
Afterwards we then tacitly drop the
explicit reference to $\V{P}$ in most formulas
to reduce clutter.


\subsection{Basic relative bounds}\label{ssec-rb}

\noindent
We denote the resolvent 
of the fiber Dirac operator at $iy\in i\RR$ by
\begin{equation}\label{def-Rkj}
\R_k^j(iy)\equiv\R_k^j(\V{P},iy):=(\D_k^j(\V{P})-iy)^{-1},\qquad
\R_j(iy):=\R_j^j(iy)\,.
\end{equation}
As $0$ belongs to the resolvent set of $\D_k^j$ its sign function
can be represented as a strongly convergent principal value
(see, e.g., \cite[Page 359]{Kato}),
\begin{align}\label{for-sgnT}
\Sgn_k^j\,\psi\equiv
\Sgn_k^j(\V{P})\,\psi:=\D_k^j(\V{P})\,|\D_k^j(\V{P})|^{-1}\psi
=\lim_{\tau\to\infty}\int_{-\tau}^\tau
\R_k^j(\V{P},iy)\,\psi\,\frac{dy}{\pi}\,,
\end{align}
for all $\psi\in\HR_j$.
This yields a convenient representation of 
its absolute value,
\begin{align}\label{for-absvT}
|\D_k^j|\,\psi=\Sgn_k^j\,\D_k^j\,\psi=\lim_{\tau\to\infty}\int_{-\tau}^\tau
\big(\id+iy\,\R_k^j(iy)\big)\psi\,\frac{dy}{\pi}\,,
\quad \psi\in\dom(\D_k^j)\,.
\end{align}
The previous formula is very useful in combination with the bounds
\begin{align}\nonumber
\big|\SPb{&\phi}{(\R_j(iy)-\R_k^j(iy))\,
(\Hf^{(k,j)}\!+\rho_k)^{-\nf{1}{2}}\psi}\big|
\\\label{roswita}
&\le\const\,\ee\,\rho_k^{\nf{1}{2}}\|\R_j(-iy)\,\phi\|\,\|\R_k^j(iy)\,\psi\|
\le\const\,\ee\,\rho_k^{\nf{1}{2}}
(1+y^2)^{-1},
\end{align}
for all normalized $\phi,\psi\in\HR_j$; see 
Lemma~\ref{le-dominique} of the appendix.

Part~(i) of the next lemma improves on 
\cite[Proposition~1.2]{MiyaoSpohn2009}
where the same assertions have been derived,
for sufficiently small values of $\ee>0$.

\begin{lem}\label{le-lisa}
Let Assumption~\ref{ariel} be fulfilled. Then, for all
$\V{P}\in\RR^3$, $k,j\in\NN_0\cup\{\infty\}$, $k<j$,
$\ee,\delta>0$, and some constant $\const>0$
proportional to the one in \eqref{akj-neu}, the following holds:

\smallskip

\noindent
(i) $\HAM_j$ and $\HAM_k^j$ are self-adjoint on $\dom(\Hf^{(j)})$
and essentially self-adjoint on $\sC_j$. 

\smallskip

\noindent
(ii) For all $\psi\in\dom(\Hf^{(j)})$,
\begin{align}\nonumber
\|(\HAM_j-\HAM_k^j)\,\psi\|
&\le\const\,\ee\,\rho_k^{\nf{1}{2}}\,
\big\||\D_k^j|^{\nf{1}{4}}(\Hf^{(k,j)}\!+\rho_k)^{\nf{1}{2}}\,\psi\big\|
\\
&\le\delta\,\ee\,\rho_k^{\nf{1}{2}}\,\|\HAM_k^j\,\psi\|
+\const^4\,\ee\,\rho_k^{\nf{1}{2}}\,\|\psi\|/\delta^3,\label{susi}
\\\label{susiA}
\||\D_{\ell}^j|\,\psi\|,\,\|\Hf^{(j)}\,\psi\|&\le 
\const'(\ee_0)\,\|\HAM_k^j\,\psi\|\,,\qquad 0\le\ell\le j\,,\;\ee\le\ee_0\,.
\end{align}
\noindent(iii) $\HAM_k^\infty\to\HAM_\infty$ 
in the norm resolvent sense,
as $k\to\infty$.

\smallskip

\noindent(iv) 
The form domain of both $\HAM_j$ and $\HAM_k^j$
is $\fdom(\Hf^{(j)})$ and
the following form bounds hold true on $\fdom(\Hf^{(j)})$,
\begin{align}
\pm(\HAM_j-\HAM_k^j)
&\le\delta\,\ee\,\rho_k\,|\D_j|+
\const^2\,\ee\,
(\Hf^{(k,j)}\!+\rho_k)/\delta\,,\label{lisa1a}
\\
\pm(\HAM_j-\HAM_k^j)
&\le\delta\,\ee\,\rho_k\,|\D_k^j|+
\const^2\,\ee\,(1+(\delta\,\ee\,\rho_k)^2)\,
(\Hf^{(k,j)}\!+\rho_k)/\delta\,.\label{lisa1b}
\end{align}
\end{lem}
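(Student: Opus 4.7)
The core of the proof is the quantitative estimate (ii); parts (i), (iii), and (iv) will follow by standard bootstrapping. Write $\HAM_j-\HAM_k^j=(|\D_j|-|\D_k^j|)+\ee\Phi_k^j$, and dispatch the $\ee\Phi_k^j$ part directly via the first bound in \eqref{akj-neu}. For the Dirac piece, test against a vector and invoke \eqref{for-absvT},
\[
\langle\phi,(|\D_j|-|\D_k^j|)\psi\rangle=\lim_{\tau\to\infty}\int_{-\tau}^{\tau}iy\,\langle\phi,(\R_j(iy)-\R_k^j(iy))\psi\rangle\,\tfrac{dy}{\pi},
\]
and plug in the resolvent identity $\R_j-\R_k^j=-\ee\R_j(\valpha\cdot\V{A}_k^j)\R_k^j$ that underlies \eqref{roswita}.

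The technical heart of the argument is that $|y|(1+y^2)^{-1}$, obtained by naively combining the $iy$ prefactor with the $(1+y^2)^{-1}$ decay of \eqref{roswita}, fails to be integrable. The remedy is to trade a fractional power of $|y|$-decay for a fractional power of $|\D_k^j|$: since $|\D_k^j|\ge 1$ by \eqref{SpecDirac}, the spectral theorem gives $\||\D_k^j|^{1/4}\R_k^j(iy)\|\le\const(1+|y|)^{-3/4}$. Inserting $\id=|\D_k^j|^{-1/4}\cdot|\D_k^j|^{1/4}$ next to $\psi$, commuting $|\D_k^j|^{-1/4}$ through $\R_k^j$ and $(\Hf^{(k,j)}+\rho_k)^{-1/2}$ (both commute with $\D_k^j$, since $\Hf^{(k,j)}$ acts only on photon modes in $\cA_k^j$, which are disjoint from those carrying $\V{A}_k$), and dualising, yields the integrable profile and the sesquilinear estimate
\[
|\langle\phi,(|\D_j|-|\D_k^j|)\psi\rangle|\le\const\ee\rho_k^{1/2}\|\phi\|\,\bigl\||\D_k^j|^{1/4}(\Hf^{(k,j)}+\rho_k)^{1/2}\psi\bigr\|,
\]
which is the first inequality in \eqref{susi}. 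The second inequality and \eqref{susiA} follow by noting that $\HAM_k^j\ge|\D_k^j|\ge 1$ and $\HAM_k^j\ge\Hf^{(j)}$ (both because $\Hf^{(j)}+\ee\Phi_k\ge 0$ under Assumption~\ref{ariel}), using that $|\D_k^j|$ commutes with $\Hf^{(k,j)}$ to rewrite $\||\D_k^j|^{1/4}(\Hf^{(k,j)}+\rho_k)^{1/2}\psi\|^{2}=\langle\psi,|\D_k^j|^{1/2}(\Hf^{(k,j)}+\rho_k)\psi\rangle$, and applying successive Young-type inequalities to absorb the right-hand side into $(\delta\|\HAM_k^j\psi\|+\const/\delta^{3}\|\psi\|)^{2}$.

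With (ii) at hand, item (i) is Kato-Rellich with base $\HAM_0^j=|\D_0^j|+\Hf^{(j)}$: this is a sum of two commuting non-negative self-adjoint operators (both explicit functions of photon data) and is hence self-adjoint on $\dom(\Hf^{(j)})$, essentially self-adjoint on $\sC_j$; taking $k=0$, $\delta=1/(2\ee\rho_0^{1/2})$ in \eqref{susi} produces relative bound $\tfrac12<1$, and the same argument covers $\HAM_k^j$. Item (iii) is the second resolvent identity combined with \eqref{susi} at $j=\infty$, giving norm resolvent convergence at rate $O(\rho_k^{1/2})$. For (iv), I would rerun the same integral calculation in sesquilinear form with $\phi=\psi$, redistribute the $|y|$-factor by $iy\R_j=-\id+\D_j\R_j$ so that integrability is supplied by $\|\D_j\R_j(iy)\|,\|\D_k^j\R_k^j(iy)\|\le 1$, and finish by an asymmetric Young-type choice of whether to keep $|\D_j|$ or $|\D_k^j|$ on the right -- the two options yield \eqref{lisa1a} and \eqref{lisa1b}, respectively.

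The main obstacle is precisely the near-failure of integrability of the contour-integral representation of $|\D_j|-|\D_k^j|$: one must squeeze a fractional half-power of $|y|$-decay out of the strict spectral gap $|\D_k^j|\ge 1$, which is the origin of the otherwise puzzling factor $|\D_k^j|^{1/4}$ in \eqref{susi}. A secondary technical wrinkle is that $\Hf^{(k,j)}$ commutes with $\D_k^j$ but not with $\D_j$, forcing some care about the order in which factors are moved across the resolvent difference.
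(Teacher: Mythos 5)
Your plan for parts (i), (iii), and (iv) follows the route the paper takes: Kato--Rellich from the base operator $\HAM_0^j$ for (i), the second resolvent identity plus \eqref{susi} for (iii), and in (iv) you correctly identify the algebraic rearrangement $\id+iy\,\R(iy)=\D\,\R(iy)$ (equivalently, passing from the absolute-value representation \eqref{for-absvT} to the sign-function representation \eqref{for-sgnT} times $\D$) as the decisive move. But precisely this step is missing from your treatment of part (ii), and the substitute you propose there does not work.

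For (ii) you stay with the representation $|\D_j|-|\D_k^j|=\lim_\tau\int_{-\tau}^\tau iy(\R_j-\R_k^j)\,\frac{dy}{\pi}$ and claim to restore integrability by inserting $\id=|\D_k^j|^{-1/4}|\D_k^j|^{1/4}$ next to $\psi$ and commuting $|\D_k^j|^{-1/4}$ toward $\R_k^j$. After this commutation the relevant factor is $\R_k^j(iy)|\D_k^j|^{-1/4}$, not $|\D_k^j|^{1/4}\R_k^j(iy)$, and
\begin{equation*}
\big\|\R_k^j(iy)\,|\D_k^j|^{-1/4}\big\|
=\sup_{|z|\ge1}\frac{|z|^{-1/4}}{\sqrt{z^2+y^2}}
=\frac{1}{\sqrt{1+y^2}},
\end{equation*}
since the supremum sits at the bottom of the spectrum $|z|=1$; the negative power of $|\D_k^j|$ buys you nothing in $y$. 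So the integrand profile is still $|y|(1+y^2)^{-1}$, which is not integrable. Even if you arrange for $\||\D_k^j|^{1/4}\R_k^j(iy)\|\le\const(1+|y|)^{-3/4}$ to actually appear as a factor, the combined profile is $|y|\cdot(1+y^2)^{-1/2}\cdot(1+|y|)^{-3/4}=O(|y|^{-3/4})$, still non-integrable. In fact, with a surviving $iy$-prefactor and only the two resolvents available, inserting $|\D_k^j|^{\nu}$ onto $\psi$ yields $|y|\,(1+y^2)^{-1/2}(1+y^2)^{-(1-\nu)/2}=O(|y|^{\nu-1})$ at large $|y|$ for $\nu\ge0$: the trade goes in the wrong direction (positive powers of $|\D_k^j|$ cost $y$-decay, not the other way around) and can never cross the $|y|^{-1}$ threshold.

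What makes the paper's proof work is that the algebraic rearrangement is applied before estimating: one writes $|\D_j|-|\D_k^j|=\Sgn_j(\D_j-\D_k^j)+(\Sgn_j-\Sgn_k^j)\D_k^j$. The first summand equals $\ee\,\Sgn_j\,\valpha\cdot\V{A}_k^j$ and is bounded directly via \eqref{akj-neu} --- no integral involved. The second summand has an outer factor $\D_k^j$, and because the sign-function formula \eqref{for-sgnT} carries no $iy$-prefactor, the naive integrand already decays like $(1+y^2)^{-1}$; one then moves $\D_k^j$ through $\R_k^j$ (they commute) and splits it as $|\D_k^j|^{\nf{1}{4}}\cdot|\D_k^j|^{\nf{3}{4}}\Sgn_k^j$, so that $\||\D_k^j|^{\nf{3}{4}}\R_k^j(iy)\|\le(1+y^2)^{-\nf{1}{8}}$ produces an integrable profile $(1+y^2)^{-\nf{5}{8}}$ while $|\D_k^j|^{\nf{1}{4}}$ lands on $\psi$. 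The factor $|\D_k^j|^{\nf{1}{4}}$ in \eqref{susi} is therefore not a "fractional half-power of $|y|$-decay squeezed out of the spectral gap"; it is a remnant of an outer $\D_k^j$ that only becomes available after the $\Sgn_j\D_j-\Sgn_k^j\D_k^j$ decomposition. Once you rerun your calculation with this decomposition in hand, the rest of your outline goes through. As a secondary point, the inequality $\HAM_k^j\ge\Hf^{(j)}$ you invoke for \eqref{susiA} is not immediate from Assumption~\ref{ariel} (the positivity there involves $\Hf^{(k,j)}$, not all of $\Hf^{(j)}$); the paper instead routes through $\|\Hf^{(j)}\psi\|\le\|\HAM_0^j\psi\|$ and the operator-norm relative bound $\|\HAM_0^j\psi\|\le\const\|\HAM_k^j\psi\|$.
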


\begin{proof}
First, we derive the relative bounds of (ii) and (iv)
on the dense domain $\sC_j$:
Since $(\Phi_j,\V{A}_j)=(\Phi_k,\V{A}_k)+(\Phi_k^j,\V{A}_k^j)$
we may write, for $\phi,\psi\in\sC_j$, 
\begin{align*}
\SPn{\phi}{(\HAM_j-\HAM_k^j)\,\psi}
&=\ee\SPn{\Sgn_j\,\phi}{{\valpha}\cdot\V{A}_k^j\,\psi}
+\SPn{\phi}{(\Sgn_j-\Sgn_k^j)\,\D_k^j\,\psi}
+\ee\SPn{\phi}{\Phi_k^j\,\psi}
\\
&=\ee\SPn{{\valpha}\cdot\V{A}_k^j\,\phi}{\Sgn_k^j\,\psi}
+\SPn{\D_j\,\phi}{(\Sgn_j-\Sgn_k^j)\,\psi}
+\ee\SPn{\Phi_k^j\,\phi}{\psi}\,.
\end{align*}
In order to treat the difference of the sign functions 
let $(r,s)$ be either $(1,0)$ or $(0,1)$, and let
$\ve,\vk\in(0,1)$, $\ve+\vk=1$,
and $\phi,\psi\in\sC_j$.
Then a successive application of
\eqref{for-sgnT}, \eqref{roswita}, and 
$\|\,|\D_k^j|^{\nu}\R_k^j(\imath y)\|\le(1+y^2)^{-(1-\nu)/2}$, 
$\nu\in[0,1)$,
permits to get
\begin{align}\nonumber
\big|&\SPb{(\D_j)^r\,\phi}{(\Sgn_j-\Sgn_k^j)\,
(\Hf^{(k,j)}\!+\rho_k)^{-\nf{1}{2}}(\D_k^j)^{s}\psi}\big|
\\\nonumber
&=
\Big|\lim_{\tau\to\infty}\int_{-\tau}^\tau
\SPb{(\D_j)^r\,\phi}{(\R_j(iy)-\R_k^j(iy))\,
(\Hf^{(k,j)}\!+\rho_k)^{-\nf{1}{2}}(\D_k^j)^{s}\psi}\,\frac{dy}{\pi}\Big|
\\\nonumber
&\le\int_\RR\big\|\,|\D_j|^{r\vk}\R_j(-iy)\,|\D_j|^{r\ve}\phi\big\|
\,\const\,\ee\,\rho_k^{\nf{1}{2}}\,
\big\|\,|\D^j_k|^{s\vk}\R^j_k(iy)\,|\D_k^j|^{s\ve}\psi\big\|
\,\frac{dy}{\pi}
\\\label{sergio1} 
&\le\const(\ve)\,\ee\,\rho_k^{\nf{1}{2}}\,\big\|\,|\D_j|^{r\ve}\,\phi\big\|
\,\big\|\,|\D_k^j|^{s\ve}\,\psi\big\|\,.
\end{align}
Choosing $(r,s)=(0,1)$, $\ve:=1/4$,
and using \eqref{akj-neu}, $|\D^j_k|\ge1$, 
and the fact that $\D_k^j$ and $(\Hf^{(k,j)}+\rho_k)^{-\nf{1}{2}}$
commute on $\sC_j$,
we obtain 
\begin{align}\nonumber
\|(&\HAM_j-\HAM_k^j)\,\psi\|
=\sup_{\|\phi\|=1}\big|\SPb{\phi}{(\HAM_j-\HAM_k^j)\,\psi}\big|
\\\nonumber
&\le\ee\,\|\valpha\cdot\V{A}_k^j\,\psi\|
+\sup_{\|\phi\|=1}\big|\SPb{\phi}{(\Sgn_j-\Sgn_k^j)\,\D_k^j\,\psi}\big|
+\ee\,\|\Phi_k^j\,\psi\|
\\\nonumber
&\le\const\,\ee
\,\rho_k^{\nf{1}{2}}\,\big\|\,|\D^j_k|^{\nf{1}{4}}(\Hf^{(k,j)}\!+\rho_k)^{\nf{1}{2}}\psi\big\|
\\\nonumber
&\le\const\,\ee
\,\rho_k^{\nf{1}{2}}\,\big\|\,|\D^j_k|^{\nf{1}{2}}\psi\big\|^{\nf{1}{2}}
\,\big\|(\Hf^{(k,j)}\!+\rho_k)\,\psi\big\|^{\nf{1}{2}}
\\\label{sergio2}
&\le\const\,\ee
\,\rho_k^{\nf{1}{2}}\,\|(\HAM_k^j)^{\nf{1}{2}}\psi\|^{\nf{1}{2}}
\,\|\HAM_k^j\,\psi\|^{\nf{1}{2}}
\le\const\,\ee\,\rho_k^{\nf{1}{2}}\,\|\psi\|^{\nf{1}{4}}\,\|\HAM_k^j\,\psi\|^{\nf{3}{4}}.
\end{align} 
In the penultimate step we applied $|\D_k^j|\le\HAM_k^j$ 
to the left norm and used
that $|\D^j_k|+\Hf^{(k)}\!+\ee\,\Phi_k\ge0$
and $\Hf^{(k,j)}$ commute on $\sC_j$ and
$\Hf^{(k)}+\Hf^{(k,j)}=\Hf^{(j)}$ to bound the right one.
By Young's inequality
this implies~\eqref{susi}, for all $\psi\in\sC_j$. 
Choosing $(r,s)=(1,0)$, $\ve:=1/2$,
we also obtain \eqref{lisa1a},
which actually holds, for all $\delta>0$,
\begin{align*}
\big|\SPn{\phi}{(\HAM_j-\HAM_k^j)\,\phi}\big|
&\le\ee\|\valpha\cdot\V{A}_k^j\,\phi\|\,\|\phi\|
+\big|\SPn{\D_j\phi}{(\Sgn_j-\Sgn_k^j)\,\phi}\big|
+\ee\|\Phi_k^j\phi\|\,\|\phi\|
\\
&\le\const\,\ee\,\rho_k^{\nf{1}{2}}\,
\big\|\,|\D_j|^{\nf{1}{2}}\phi\big\|\,\|(\Hf^{(k,j)}\!+\rho_k)^{\nf{1}{2}}\,\phi\|\,,
\end{align*}
for every $\phi\in\sC_j$.
Setting $\Phi_k^j=0$ for the moment we see that
\eqref{lisa1a} with $1/\delta=2\ee\rho_k$ implies
$|\D_j|\le\const\,(|\D_k^j|+\Hf^{(k,j)})$,
from which we finally infer \eqref{lisa1b}
(for non-zero $\Phi_k^j$).

Let $0\le\ell\le j$ and $m:=\max\{k,\ell\}$. Choosing
$\V{A}_{m}^j=\V{0}$ and $\Phi_m^j=0$ (if $m<j$) for the moment we infer
$$
\|(\HAM_\ell^j-\HAM_k^j)\,\psi\|\le\ve\|\HAM_k^j\,\psi\|
+\const(\ve,\ee_0)\,\|\psi\|
\,,\;\;\ve>0\,,\quad
\|\HAM_\ell^j\,\psi\|\le\const'(\ee_0)\,\|\HAM_k^j\,\psi\|\,,
$$
from \eqref{susi}, which is already known to hold for $\psi\in\sC_j$.
Since $\HAM_0^j=((\V{P}-\Pf^{(j)})^2+\id)^{\nf{1}{2}}+\Hf^{(j)}$ is
obviously self-adjoint on $\dom(\Hf^{(j)})$ and essentially
self-adjoint on $\sC_j$ this proves Part~(i) by virtue
of the Kato-Rellich theorem. We also
conclude that \eqref{susi} extends to every $\psi\in\dom(\Hf^{(j)})$.
Since $\|\Hf^{(j)}\psi\|\le\|\HAM_0^j\,\psi\|$
and $\|\,|\D_\ell^j|\,\psi\|\le\const'(\|\HAM_\ell^j\,\psi\|+\|\Hf^{(j)}\psi\|)$
we further obtain \eqref{susiA}.

(iii): Choosing $j=\infty$ we readily infer from \eqref{susi} and the
second resolvent identity that
$\|(\HAM_k^\infty-i)^{-1}-(\HAM_\infty-i)^{-1}\|
\le\const\,\ee\,\rho_k^{\nf{1}{2}}\to0$,
as $k\to\infty$.
\end{proof}

\begin{rem}
We can use the arguments of the above proof to compare
Hamiltonians with the same scale parameters but for
different fibers. For instance, set $k=j$,
$\ee\,\V{A}_j^j=\V{h}$, and $\Phi_j^j=\Hf^{(j,j)}=0$.
In accordance with \eqref{akj-neu} we should then
replace $\ee\,\rho_k^{\nf{1}{2}}$ by $|\V{h}|$.
Making these changes
in \eqref{sergio1} and \eqref{sergio2} 
we immediately obtain
\begin{align}\label{sergio3}
\big\|(\HAM_k^j(\V{P}+\V{h})-\HAM_k^j(\V{P}))\psi\big\|
&\le\const|\V{h}|\,\big\||\D_k^j(\V{P})|^{\nf{1}{4}}\psi\big\|
\le\const|\V{h}|\,\|\HAM_k^j(\V{P})^{\nf{1}{4}}\psi\|,
\end{align}
for all $\V{P},\V{h}\in\RR$, $k,j\in\NN_0\cup\{\infty\}$,
$k\le j$, and $\psi\in\dom(\Hf^{(j)})$.
(To cover the cases $k<j$ simply choose suitable $\V{g}_k^j$.)
\hfill$\Diamond$
\end{rem}


\section{A priori bounds on the mass shell}\label{sec-P-dep}

\noindent
Let Assumption~\ref{ariel} be satisfied in the whole Section~\ref{sec-P-dep}.
From now on we use the following notation for ground state energies,
\begin{align*}
E_k^j(\V{P}):=\inf\Spec[\HAM_k^j(\V{P})]\,,\quad
E_j(\V{P}):=E_j^j(\V{P})\,,\quad E(\V{P}):=E_\infty(\V{P})\,.
\end{align*}
Moreover, we denote the spectral gap of $\HAM_k^{j}(\V{P})$ by
$$
\gap_k^j(\V{P}):=\inf\big\{\,\Spec\big[\HAM_k^j(\V{P})-E_k^j(\V{P})\big]
\setminus\{0\}\,\big\}\,,\qquad
\gap_j(\V{P}):=\gap_j^j(\V{P})\,.
$$
Here $\V{P}\in\RR^3$, $k,j\in\NN_0\cup\{\infty\}$, $k\le j$.
Notice that at this point we neither claim that $\gap_k^j$ be
strictly positive nor that $E_k^j$ be a simple eigenvalue. 
In fact, if $E_k^j$ is an
eigenvalue, then it will always be degenerate.
 
We now start to study the dependence of
$E_k^j$ on $\V{P}$. The main result of this section
is the lower Lipschitz type bound \eqref{eq3-neu2} 
obtained in Subsection~\ref{ssec-llb} below.
It serves as an
essential ingredient for the estimation of spectral gaps
in Pizzo's inductive scheme and,
in particular, similar bounds have been derived in the 
NR setting, e.g., 
in \cite{CFP2009,Pizzo2003}.
First, we collect, however,
some general results on $E_k^j$ and the Hamiltonians
in a series of remarks we shall repeatedly refer to in the
remaining part of the text.


\subsection{Derivatives of infra-red cut-off Hamiltonians
and their mass shells}

\begin{rem}\label{rem-rot-sym}
Let us choose
$(f_j,\V{g}_j)=(0,\V{G}_j^{j+1})$ and $(b_j,\V{c}_j)=(0,\V{0})$
in Assumption~\ref{ariel}, where
$\V{G}_j^{j+1}(\V{k},\lambda):=\id_{\cA_j^{j+1}}(\V{k})\,\V{G}(\V{k},\lambda)$,
$j\in\NN_0$, with
$\V{G}$ denoting the coupling function introduced in \eqref{def-AA}.  
In this case
it has already been noted in \cite{MiyaoSpohn2009}
that $E_k^j$ is rotationally symmetric,
i.e.
$E_k^j(\hat{U}\,\V{P})=E_k^j(\V{P})$, for all $\hat{U}\in\mathrm{SO}(3)$.
This follows from the fact that
the Hilbert space $\HR_j$ carries a unitary representation, $\pi$,
of $\mathrm{SU}(2)$ such that 
$\pi_U\,\HAM_k^j(\V{P})\,\pi_U^{-1}=\HAM_k^j(\hat{U}^{-1}\V{P})$,
where $\hat{U}\in\mathrm{SO}(3)$ corresponds to 
$U\in\mathrm{SU}(2)$ w.r.t. the universal
covering $\mathrm{SU}(2)\to\mathrm{SO}(3)$.
In particular, we may conclude that $\nabla E_k^j(\V{0})=\V{0}$, as soon as
we know that $E_k^j$ is differentiable at $\V{0}$.
\hfill$\Diamond$
\end{rem}

\begin{rem}\label{rem-typeA}
We now show that 
$\{\HAM_k^j(\V{P})\}_{\V{P}\in \RR^3}$ is an analytic family
of type A and record some frequently used formulas:
Representing the absolute value by means of \eqref{for-absvT}
and using a Neumann series expansion we deduce that
\begin{align}\nonumber
\HAM_k^j(\V{P}+\V{h})
&=\HAM_k^j(\V{P})+\underset{{\tau\to\infty}}{\textrm{s-lim}}\int_{-\tau}^\tau
\!\big(\R_k^j(\V{P}+\V{h},iy)-\R_k^j(\V{P},iy)\big)\frac{iy\,dy}{\pi}
\\\label{HAMComplex}
&=\sum_{\ell=0}^\infty\frac{1}{\ell!}\,\partial_{\V{h}}^\ell\HAM_k^j(\V{P})\,,
\quad\textrm{on}\;\dom(\Hf^{(j)})\,,\;|\V{h}|<1\,,
\end{align}
where we define (again dropping all $\V{P}$'s so that 
$\R_k^j(iy)\equiv\R_k^j(\V{P},iy)$)
\begin{align}\label{primus}
\partial_{\V{h}}\HAM_k^j\,\psi
&:=\lim_{\tau\to\infty}\int_{-\tau}^\tau\R_k^j(iy)
\,\valpha\cdot\V{h}\,\R_k^j(iy)\,\psi\,\frac{y\,dy}{i\pi}\,,
\quad\psi\in\dom(\Hf^{(j)})\,,
\\ \label{secundus}
\partial_{\V{h}}^\ell\HAM_k^j
&:= 
(-1)^{\ell+1}\ell!\int_\RR 
\R_k^j(iy)\,\{\valpha\cdot\V{h}\,\R_k^j(iy)\}^\ell
\,\frac{y\,dy}{i\pi}\in\cB(\HR_j)\,,\;\;\ell\ge2\,.
\end{align}
In fact, since $\|\valpha\cdot\V{h}\,\R_k^j(iy)\|\le|\V{h}|(1+y^2)^{-\nf{1}{2}}$
the integrals in \eqref{secundus} are absolutely convergent
and one easily verifies 
$\|\partial_{\V{h}}^\ell\HAM_k^j\|\le2\ell!|\V{h}|^\ell/\pi(\ell-1)$,
for $\ell\in\NN$, $\ell\ge2$. So, indeed, the part $\sum_{2}^\infty\!...$
of the series in \eqref{HAMComplex} converges in $\cB(\HR_j)$,
if $|\V{h}|<1$. It is then clear that the limit in \eqref{primus} exists.
Combining \eqref{sergio3} with \eqref{HAMComplex} we further
infer that the closure of the symmetric operator
$\partial_{\V{h}}\HAM_k^j$ -- 
henceforth again denoted by the same symbol --
is defined on a domain containing
the domain of $|\D_k^j|^{\nf{1}{4}}$ and
$\|\partial_{\V{h}}\HAM_k^j\,|\D_k^j|^{-\nf{1}{4}}\|
\le\const|\V{h}|+\bigO(|\V{h}|^2)$, thus
\begin{align}\label{sergio4}
\|\partial_{\V{h}}\HAM_k^j\,(\HAM_k^j)^{-\nf{1}{4}}\|\le
\|\partial_{\V{h}}\HAM_k^j\,|\D_k^j|^{-\nf{1}{4}}\|\le\const|\V{h}|\,.
\end{align}
The first order term in the series \eqref{HAMComplex} is
therefore an infinitesimal perturbation of the self-adjoint zeroth order term
$\HAM_k^j$ and, as all higher order terms are bounded, 
we conclude that the
series \eqref{HAMComplex} defines an extension of
$\{\HAM_k^j(\V{P})\}$ to an analytic family of type A
defined on $\{\V{P}\in\CC^3:\,|\Im\V{P}|<1\}$.
\hfill$\Diamond$
\end{rem}

\begin{rem}\label{rem-debora}
We note the following bound for later reference,
\begin{equation}\label{debora}
\big\|(\partial_{\V{h}}^\ell\HAM_j-\partial_{\V{h}}^\ell\HAM_k^j)(\Hf^{(k,j)}\!+\rho_k)^{-\nf{1}{2}}\psi\big\|
\le\const(\ell)\,\ee\,\rho_k^{\nf{1}{2}}\,|\V{h}|^\ell\,\|\psi\|\,,
\end{equation}
for $\psi\in\HR_j$, $\ell\in\NN$, $\ee\in(0,1]$.
It is easily derived as follows:
Represent the difference of the $\ell$-th derivatives by means
of \eqref{primus} or \eqref{secundus} and rearrange the integrants
in a telescopic sum of terms $\propto$
$(\R_j\,\valpha\cdot\V{h})^n
(\R_j-\R_k^j)\,(\valpha\cdot\V{h}\,\R_k^j)^{\ell-n}$, $n=0,\ldots,\ell$.
Then multiply the telescopic sum from the right with
$(\Hf^{(k,j)}\!+\rho_k)^{-\nf{1}{2}}$, observe that
$(\valpha\cdot\V{h}\,\R_k^j)^{\ell-n}$ and $(\Hf^{(k,j)}\!+\rho_k)^{-\nf{1}{2}}$
commute, and estimate all terms by $\|\R_a^b\|\le\Jy^{-\nf{1}{2}}$ and \eqref{roswita}.
Each summand in the telescopic sum is absolutely integrable.
(For $\ell=1$, start with $\psi\in\dom(\Hf^{(j)})$ and extend 
$(\partial_{\V{h}}\HAM_j-\partial_{\V{h}}\HAM_k^j)(\Hf^{(k,j)}\!+\rho_k)^{-\nf{1}{2}}$
to all of $\HR_j$ by continuity, preserving the same symbol for the extension.)
\hfill$\Diamond$
\end{rem}

\begin{rem}\label{rem-Hellmann-Feynman}
We next derive some Hellmann-Feynman type formulas for the derivatives of $E_k^j$.
To this end we assume that $\V{P}$ is contained in some
fixed open set, $\cV\subset\RR^3$, and that
$E_k^j$ is an isolated eigenvalue of constant, finite multiplicity
on $\cV$, i.e. it does not split when
$\V{P}$ varies in $\cV$. 
(For $j<\infty$, this assumption is verified in Theorem \ref{Thm:Gap} below 
where $\cV$ is a ball about the origin and
$\ee$ is assumed to be sufficiently small.)

Let $\Pi_k^j:=\id_{\{E_k^j\}}(\HAM_k^j)$ 
be the spectral projection onto the eigenspace corresponding to $E_k^j$.
By Remark~\ref{rem-typeA} and the above assumption
$E_k^j$ and $\Pi_k^j$ depend analytically on $\V{P}$ \cite{Kato}.
Then, by
differentiating  $\SPn{\HAM_k^j\, \phi}{\Pi_k^j\psi}=E_k^j\,\SPn{\phi}{\Pi_k^j\,\psi}$
we obtain the following Leibniz formula, 
for all $\phi\in \dom(\Hf^{(j)})$, $\psi\in \HR_j$,
$\mu\in\NN$, and $\V{h}\in \RR^3$,
\begin{equation}\label{michael}
\sum_{\nu=0}^\mu
{\mu \choose \nu}\SPn{\partial_{\V{h}}^\nu\HAM_k^j \,\phi}{\partial_{\V{h}}^{\mu-\nu}\Pi_k^j\,\psi}
=\sum_{\nu=0}^\mu{\mu \choose \nu} (\partial_{\V{h}}^\nu E_k^j)\, 
\SPn{ \phi}{\partial_{\V{h}}^{\mu-\nu}\Pi_k^j\,\psi}\,.
\end{equation}
In view of 
$\dom(\partial_{\V{h}}^\nu\HAM_k^j )\supset \dom(\Hf^{(j)})
=\dom(\HAM_k^j)\supset\Ran(\Pi_k^j)$ 
we infer from \eqref{michael} 
by induction on $\mu$ that 
$\partial_{\V{h}}^{\mu}\Pi_k^j\,:\HR_j\rightarrow\dom(\HAM_k^j)=\dom(\Hf^{(j)})$
and, in particular,
\begin{align}\label{eq1}
(&\partial_{\V{h}}\HAM_k^j-\partial_{\V{h}}E_k^j) \,\Pi_k^j
=-(\HAM_k^j-E_{k}^j)\,\partial_{\V{h}}\Pi_k^j\,,  
\\ \nonumber
(&\partial^2_{\V{h}}\HAM_k^j) \,\Pi_k^j
=-2(\partial_{\V{h}}\HAM_k^j -\partial_{\V{h}}E_k^j)\, \partial_{\V{h}}\Pi_k^j  
+ (\partial^2_{\V{h}}E_k^j)  \,\Pi_k^j
-(\HAM_k^j-E_{k}^j)\, \partial^2_{\V{h}}\Pi_k^j\,.
\end{align}
On the range of $(\Pi_k^j)^\perp=\id-\Pi_k^j$ the resolvent
$({\RES}_k^j)^\perp\in\cB((\Pi_k^j)^\perp\HR_j)$,
\begin{equation}\label{faysal}
({\RES}_k^j)^\perp
:=\big(\HAM_k^j\,(\Pi_k^j)^\perp-E_k^j\big)^{-1}(\Pi_k^j)^\perp,
\quad\|({\RES}_k^j)^\perp\|\le1\big/\gap_k^j\,,
\end{equation}
is well-defined and we deduce
from the first line in \eqref{eq1} that
\begin{align}\label{for-1Deriv}
(\Pi_k^j)^\perp\partial_{\V{h}}\Pi_k^j 
&= -({\RES}_k^j)^\perp (\partial_{\V{h}}\HAM_k^j)\Pi_k^j\,,\qquad 
(\partial_{\V{h}}E_k^j)\,\Pi_k^j=\Pi_k^j\,(\partial_{\V{h}}\HAM_k^j)\,\Pi_k^j\,.
\end{align}
Multiplying the second line in \eqref{eq1} by $\Pi_k^j$
and using \eqref{for-1Deriv} we further get
\begin{align}\label{for-2Deriv}
d\,\partial^2_{\V{h}}E_k^j
=&  \Tr\{\Pi_k^j(\partial^2_{\V{h}}\HAM_k^j) \,\Pi_k^j\}
-2\|(({\RES}_k^j)^\perp)^{\nf{1}{2}}\,(\partial_{\V{h}}\HAM_k^j)\, \Pi_k^j \|^2_{\HS}\,,
\end{align}
with $d:=\Tr\{\Pi_k^j\}$, $\Tr$ denoting the trace and $\|\cdot\|_{\HS}$
the Hilbert-Schmidt norm.
This formula will be used to prove the strict convexity of the mass shell.
\hfill$\Diamond$
\end{rem}


\subsection{Lower Lipschitz bound on the mass shell}
\label{ssec-llb}

\noindent
We are now heading towards a proof of the bound \eqref{eq3-neu2} below.
First, we consider  
$\HAM_0^j(\V{P})=((\V{P}-\Pf^{(j)})^2+1)^{\nf{1}{2}}+\Hf^{(j)}$.
Recall that
$\Pf^{(j)}$ and $\Hf^{(j)}$ act in the $n$-particle sectors
$[\sF_j]^{(n)}$, $n\in\NN$, 
simply by multiplication with $\V{k}_1+\dots+\V{k}_n$ and
$|\V{k}_1|+\dots+|\V{k}_n|$, respectively, and $\Pf^{(j)}\,\Omega_j=\V{0}$,
$\Hf^{(j)}\,\Omega_j=0$, where $\Omega_j$ is the vacuum vector 
in $\Fock{j}{}$, so that $[\Fock{j}{}]^{(0)}=\CC\,\Omega_j$.
In particular, it is an elementary exercise to derive the
following lemma. We present its proof only for the convenience of the reader.

\begin{lem}[{\bf Spectrum of $\boldsymbol{\HAM_0^j(\V{P})}$}]
\label{GapWithoutA}
For all $\V{P}\in\RR^3$ and $j\in\NN_0\cup\{\infty\}$, the following holds:

\smallskip

\noindent(i)
$E_0^j(\V{P})=(\V{P}^2+1)^{\nf{1}{2}}$ is a four-fold degenerate eigenvalue of
$H_0^j(\V{P})$. The corresponding eigenspace is 
$\{v\otimes\Omega_j:\,v\in\CC^4\}$.

\smallskip

\noindent(ii) 
$\gap_0^j(\V{P})\ge\min\{\rho_j,1\}/(2(|\V{P}|^2+1))$.
\end{lem}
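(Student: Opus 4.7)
My approach exploits the fact that at scale index $0$ there is no vector-potential coupling, so $\HAM_0^j(\V{P}) = \sqrt{(\V{P}-\Pf^{(j)})^2+1}+\Hf^{(j)}$ is a function of commuting, jointly diagonalizable operators. Decomposing $\HR_j = \bigoplus_{n=0}^{\infty} \CC^4 \otimes [\Fock{j}{}]^{(n)}$ into $n$-photon sectors, I observe that $\HAM_0^j(\V{P})$ reduces each summand and acts on the $n$-th one as multiplication by the symmetric function
\[
\Phi_n(\V{k}_1,\dots,\V{k}_n) := \sqrt{(\V{P}-\V{K})^2+1} + t,\qquad \V{K}:=\sum_{i=1}^n \V{k}_i,\; t:=\sum_{i=1}^n |\V{k}_i|,
\]
with each $\V{k}_i$ ranging in $\cA_j$. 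The spectrum of $\HAM_0^j(\V{P})$ is therefore the closure of the union over $n$ of the essential ranges of $\Phi_n$.

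In the vacuum sector ($n=0$), $\Phi_0 \equiv \sqrt{\V{P}^2+1}$, so the restriction of $\HAM_0^j(\V{P})$ to $\CC^4 \otimes \CC\Omega_j$ is $\sqrt{\V{P}^2+1}\cdot\id$, which both exhibits the four-dimensional eigenspace claimed in (i) and gives the upper bound $E_0^j(\V{P}) \le \sqrt{\V{P}^2+1}$. The only nontrivial step is then a uniform lower bound on $\Phi_n$ for $n\ge 1$, which simultaneously yields the reverse inequality in (i) and the spectral gap in (ii).

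For that lower bound I would invoke convexity of $g(s):=\sqrt{s^2+1}$ on $[0,\infty)$. The tangent inequality at $s=|\V{P}|$ combined with $|\V{P}-\V{K}| \ge |\V{P}|-|\V{K}|$ and $g'(|\V{P}|) = |\V{P}|/\sqrt{\V{P}^2+1} \in [0,1)$ gives $g(|\V{P}-\V{K}|) \ge g(|\V{P}|) - g'(|\V{P}|)\,|\V{K}|$; using $t \ge |\V{K}|$ this yields
\[
\Phi_n \ge \sqrt{\V{P}^2+1} + t\bigl(1 - g'(|\V{P}|)\bigr).
\]
The elementary identity $\sqrt{\V{P}^2+1}-|\V{P}| = 1/(\sqrt{\V{P}^2+1}+|\V{P}|)$ then yields $1 - g'(|\V{P}|) \ge 1/(2(\V{P}^2+1))$. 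For $j<\infty$, each $|\V{k}_i|\ge\rho_j$ forces $t\ge\rho_j$, and hence
\[
\Phi_n \ge \sqrt{\V{P}^2+1} + \frac{\rho_j}{2(\V{P}^2+1)},
\]
which is actually even stronger than the bound claimed in (ii); the case $j=\infty$ is vacuous since $\rho_\infty = 0$. There is no real obstacle --- the crux is the decision to apply the one-dimensional tangent inequality to the scalar $s = |\V{P}-\V{K}|$ rather than attack the vector-valued map $\V{K}\mapsto\sqrt{(\V{P}-\V{K})^2+1}$ directly, which permits the clean cancellation $t - g'(|\V{P}|)|\V{K}| \ge t(1-g'(|\V{P}|))$ responsible for the uniformity in $n$.
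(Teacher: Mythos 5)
Your proof is correct and follows the same overall reduction as the paper: diagonalize $\HAM_0^j(\V{P})$ over $n$-photon sectors, identify the vacuum sector with the $4$-fold eigenvalue $\sqrt{\V{P}^2+1}$, and bound the multiplication function $\Phi_n$ from below on the sectors with $n\ge1$. Where you differ is in the estimation of $\inf\Phi_n$. The paper lower-bounds $\sum_i|\V{k}_i|$ by $\max\{\rho_j,|\V{K}|\}$, reduces to a one-variable function $g(r)=\sqrt{(|\V{P}|-r)^2+1}-\sqrt{\V{P}^2+1}+\max\{\rho_j,r\}$ of $r=|\V{K}|$, and then compares the candidate minima $g(\rho_j)$, $g(0)$, $g(|\V{P}|)$ via a case distinction, one branch of which furnishes the factor $\min\{\rho_j,1\}$ through the auxiliary inequality $1-\sqrt{t^2+1}+t\ge\min\{t,1\}/2$. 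You instead keep $t=\sum_i|\V{k}_i|$ and $|\V{K}|$ as separate quantities satisfying $t\ge|\V{K}|$ and $t\ge\rho_j$, apply the supporting-line inequality for the convex scalar function $s\mapsto\sqrt{s^2+1}$ at $s=|\V{P}|$, and absorb the linear error term $-g'(|\V{P}|)|\V{K}|$ into $(1-g'(|\V{P}|))\,t$. This single computation eliminates the case analysis and in fact yields the slightly sharper bound $\gap_0^j(\V{P})\ge\rho_j/(2(\V{P}^2+1))$ without the $\min\{\rho_j,1\}$ cap, which of course implies the estimate stated in the lemma. Both approaches rely on identical structural facts; yours is a bit cleaner in the scalar optimization.
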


\begin{proof}
(i): It is obvious that $(\V{P}^2+1)^{\nf{1}{2}}$ is an eigenvalue of 
$H_0^j(\V{P})$
and that $\CC^4\otimes\CC\,\Omega_j$ belongs to the corresponding eigenspace.
It follows from the arguments below that the latter is actually the whole
eigenspace (even in the case $j=\infty$)
and that $(\V{P}^2+1)^{\nf{1}{2}}$ is the 
infimum of the spectrum.

To prove (ii) we observe that
the spectrum of $\HAM_0^j(\V{P})$ restricted
to the invariant subspace $\{\CC^4\otimes[\sF_j]^{(0)}\}^\bot$ 
is given by the closure of
\begin{equation}\label{SpecJgleichNull}
\bigcup_{n=1}^\infty \,
\Big\{\Big( \Big[\V{P}-\sum_{i=1}^n\,\V{k}_i\Big]^2+1\Big)^{\nf{1}{2}}
+ \sum_{i=1}^n |\V{k}_i|\,:\,\V{k}_1,\ldots,\V{k}_n\in \mathcal{A}_j\Big\}\,.
\end{equation}
On account of
$\big|\sum_{i=1}^n \V{k}_i\big|
\le\sum_{i=1}^n|\V{k}_i|$,
for $\V{k}_1,\ldots,\V{k}_n\in \cA_j$,
this implies
\begin{align*}
\inf\Spec\big[H_0^j(\V{P})\!\!\upharpoonright
_{\{\CC^4\otimes[\sF_j]^{(0)}\}^\bot}\!\!\big]
&\ge
\inf_{\V{k}\in\RR^3}\big(\big([\V{P}-\V{k}]^2+1\big)^{\nf{1}{2}}+
\max\{\rho_j,|\V{k}|\}\big)
\\ \nonumber
&=
\inf_{r\ge 0}\big(\big([|\V{P}|-r]^2+1\big)^{\nf{1}{2}}
+\max\{\rho_j,r\}\big)\,.
\end{align*}
Writing $g(r):=([|\V{P}|-r]^2+1)^{\nf{1}{2}}
-(\V{P}^2+1)^{\nf{1}{2}}+\max\{\rho_j,r\}$, $r\in[0,\infty)$,
we find $g'>0$ on $[\rho_j,\infty)$,
thus $\min_{r\ge\rho_j} g(r)=g(\rho_j)$.
The possible minima of $g$ on $[0,\rho_j]$ are 
$g(\rho_j)$, $g(0)=\rho_j$, and $g(|\V{P}|)$, if $|\V{P}|\le\rho_j$.
Since $\RR\ni t\mapsto(t^2+1)^{\nf{1}{2}}$
is convex Taylor's formula yields, for all $\V{P}\in\RR^3$,
$$
g(\rho_j)\ge \rho_j-\frac{|\V{P}|}{(\V{P}^2+1)^{\nf{1}{2}}}\,\rho_j=
\frac{\rho_j}{(\V{P}^2+1)^{\nf{1}{2}}((\V{P}^2+1)^{\nf{1}{2}}+|\V{P}|)}
>\frac{\rho_j}{2(\V{P}^2+1)}\,.
$$
In the case $|\V{P}|\le\rho_j$ we apply the inequality
\begin{equation}\label{ineq-lea}
f(t):=1-(t^2+1)^{\nf{1}{2}}+t\ge\min\{t,1\}/2\,,\quad t\ge0\,,
\end{equation}
to get
$g(|\V{P}|)=1-(\V{P}^2+1)^{\nf{1}{2}}+\rho_j\ge\min\{\rho_j,1\}/2$.
(At least for $t\in[0,1]$, \eqref{ineq-lea} is clear by Taylor's formula;
for $t>1$, then, use $f(t)=tf(1/t)$.)
\end{proof}

\smallskip

\noindent
Note that the next lemma is valid even when $E_j$ is
in the essential spectrum of $\HAM_j$ and when no ground state exists.
In particular, it holds for $E=E_\infty$.

\begin{lem}\label{EnergyLemma}
(1) There exists $\const >0$ such that, for all
$\ee\in(0,1]$  and $\V{P}\in \RR^3$,
\begin{equation}\label{eq2-neu2}
(1-\const\,\ee)\,E_0(\V{P})\le E_j(\V{P})\le(1+\const\,\ee)\,E_0(\V{P})\,.
\end{equation}
(2)
There is some $\const >0$ such that,
for all $\ee\in(0,1]$,  $j\in \NN\cup\{\infty\}$,
normalized ${\V{u}}\in \RR^3$,  
and $\V{P}\in \RR^3$ for which
$\partial_{{\V{u}}}E_j(\V{P}):=\tfrac{d}{dt}E_j(\V{P}+ t\,{\V{u}})|_{t=0}$ 
exists,
\begin{equation}\label{eq1-neu2}
|\partial_{{\V{u}}} E_j(\V{P})|
\le 1-\frac{1}{2(1+\const\,\ee)\,E_0(\V{P})^2} +\const\,\ee\, E_0(\V{P})\,.
\end{equation}
(3) For all $\pmax>0$, we find $\ee_0>0$ and $\qmax\in(0,1)$ such that 
\begin{equation}\label{eq3-neu2}
E_j(\V{P}+\V{h})-E_j(\V{P})\ge -(1-\qmax)\,|\V{h}|,
\end{equation}
for all $\ee\in(0,\ee_0]$,
$\V{h}\in \RR^3$, $\V{P}\in\ol{\cB}_\pmax$, and $j\in\NN_0\cup\{\infty\}$.
\end{lem}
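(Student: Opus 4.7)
The plan is to establish the three parts in order, combining the relative bounds of Lemma~\ref{le-lisa} with the explicit spectral structure of $\HAM_0^j$ from Lemma~\ref{GapWithoutA}. For Part~(1) I would compare $\HAM_j(\V{P})$ directly with $\HAM_0^j(\V{P})$, whose ground state is $v\otimes\Omega_j$ (for normalized $v\in\CC^4$) and whose ground state energy equals $E_0(\V{P})=\sqrt{\V{P}^2+1}$. The upper bound follows from the min-max principle applied to this trial vector: since $\Hf^{(0,j)}\Omega_j=0$, the norm estimate \eqref{susi} with $k=0$ yields $\|(\HAM_j-\HAM_0^j)(v\otimes\Omega_j)\|\le\const\,\ee\,\rho_0^{\nf{1}{2}}E_0(\V{P})^{\nf{1}{4}}\le\const'\ee\,E_0(\V{P})$ upon using $E_0\ge 1$. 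The lower bound comes from the form inequality \eqref{lisa1b} with $k=0$ and a fixed $\delta$: combined with $|\D_0^j|,\Hf^{(0,j)}\le\HAM_0^j$ it gives $\HAM_j\ge(1-\const\,\ee)\HAM_0^j-\const\,\ee\,\rho_0$ as quadratic forms, and $E_0\ge 1$ absorbs the additive constant.

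Part~(2) is the technical heart. The starting point is the exact operator identity
\[
\D_j(\V{P}+t\V{u})^2 \;=\; \D_j(\V{P})^2 + 2t\,\V{u}\cdot Y + t^2,\qquad Y:=\V{P}-\Pf^{(j)}+\ee\V{A}_j,
\]
which follows from the Clifford relations \eqref{Clifford} together with $\{\valpha\cdot\V{u},\alpha_0\}=0$. At a point $\V{P}$ where $\partial_{\V{u}}E_j$ exists I would, for each small $t>0$, choose a normalized $\psi_t$ satisfying $\SPn{\psi_t}{\HAM_j(\V{P})\psi_t}\le E_j(\V{P})+t^2$; the variational principle, the Taylor expansion \eqref{HAMComplex}, and the relative bound \eqref{sergio3} then reduce the estimate to a uniform bound on $\SPn{\psi_t}{\partial_{\V{u}}|\D_j(\V{P})|\psi_t}$ as $t\to 0^+$. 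Formal differentiation of the displayed identity yields the Sylvester-type relation $\{\partial_{\V{u}}|\D_j|,|\D_j|\}=2\V{u}\cdot Y$, so $\Re\SPn{|\D_j|\psi_t}{\partial_{\V{u}}|\D_j|\psi_t}=\SPn{\psi_t}{\V{u}\cdot Y\psi_t}$. Since $\D_j^2=Y^2+1$ up to a spin-magnetic correction of size $\const\,\ee$ coming from $[Y_i,Y_j]\neq 0$, Cauchy-Schwarz gives $|\SPn{\psi_t}{\V{u}\cdot Y\psi_t}|\le(\SPn{\psi_t}{\D_j^2\psi_t}-1)^{\nf{1}{2}}+\const\,\ee$. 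Writing $M_t:=\SPn{\psi_t}{\D_j^2\psi_t}^{\nf{1}{2}}$ and combining with the scalar bound $\sqrt{1-1/M_t^2}\le 1-1/(2M_t^2)$, one obtains $|\partial_{\V{u}}E_j(\V{P})|\le 1-1/(2M_t^2)+\const\,\ee$; the proof is completed by the estimate $M_t\le(1+\const\,\ee)E_0(\V{P})$.

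For Part~(3) I would patch two regimes. When $|\V{h}|\le\pmax$ the segment $t\mapsto\V{P}+t\V{h}$, $t\in[0,1]$, stays in $\cB_{2\pmax}$, so Part~(2) bounds the directional derivative on the line (wherever it exists) by $1-(2(1+\const\,\ee)(4\pmax^2+1))^{-1}+\const\,\ee\,\sqrt{4\pmax^2+1}$; for $\ee_0$ small this is $\le 1-\qmax$ for some $\qmax\in(0,1)$ depending only on $\pmax$. Integration along the line, justified by the local Lipschitz continuity of $E_j$ inherited from \eqref{sergio3}, then yields the claim. For $|\V{h}|>\pmax$ I would use Part~(1) directly: the inequalities $E_j(\V{P}+\V{h})\ge(1-\const\,\ee)E_0(\V{P}+\V{h})-\const\,\ee$ and $E_j(\V{P})\le(1+\const\,\ee)\sqrt{\pmax^2+1}+\const\,\ee$, combined with $E_0(\V{P}+\V{h})\ge|\V{h}|-\pmax$, yield $E_j(\V{P}+\V{h})-E_j(\V{P})\ge(1-\const\,\ee)|\V{h}|-C(\pmax,\ee_0)$, which exceeds $-(1-\qmax)|\V{h}|$ once $|\V{h}|$ is above a threshold depending only on $\pmax$, $\ee_0$, and $\qmax$.

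The main obstacle is the bound $M_t\le(1+\const\,\ee)E_0(\V{P})$ in Part~(2): since $A\mapsto A^2$ fails to be operator monotone on the positive cone, the inequality $|\D_j|\le\HAM_j$ does not immediately give $\D_j^2\le\HAM_j^2$. One has to extract the estimate from the explicit structure $\D_j^2=Y^2+1+\const\,\ee\cdot(\text{spin correction})$ together with standard form bounds on $Y^2$ in terms of $(\Hf^{(j)}+1)^2$, combined with the a priori control $\SPn{\psi_t}{\HAM_j(\V{P})\psi_t}\le(1+\const\,\ee)E_0(\V{P})+t^2$ provided by Part~(1).
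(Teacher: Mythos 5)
Parts~(1) and~(3) are essentially sound. For~(1) your trial-state argument with $v\otimes\Omega_j$ for the upper bound and the form inequality for the lower bound gives the same conclusion as the paper's cleaner two-sided form bound $(1+\const\,\ee)^{-1}\HAM_0^j\le\HAM_j\le(1+\const\,\ee)\HAM_0^j$ extracted from \eqref{lisa1a}--\eqref{lisa1b}. For~(3) your patching scheme and the appeal to local Lipschitz continuity of $E_j$, together with the fundamental theorem of calculus wherever the directional derivative exists, is the same strategy as the paper, which instead obtains the needed regularity by the Fr\"ohlich-style argument: $E_j$ minus a fixed quadratic is an infimum of concave functions, hence concave, hence a.e. differentiable with one-sided derivatives and a valid Newton--Leibniz formula.

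The trouble is in Part~(2), and since~(3) feeds on~(2), the gap propagates. Your reduction step ends with the need to bound $\SPn{\psi_t}{\partial_{\V{u}}|\D_j(\V{P})|\,\psi_t}$ uniformly, and this is indeed what is needed. But the anticommutator identity you deduce from $\D_j(\V{P}+t\V{u})^2=\D_j(\V{P})^2+2t\,\V{u}\cdot Y+t^2$ only controls the \emph{sesquilinear} quantity $\Re\SPb{|\D_j|\psi_t}{\partial_{\V{u}}|\D_j|\psi_t}=\SPn{\psi_t}{\V{u}\cdot Y\,\psi_t}$, which is not the same as the diagonal quantity $\SPn{\psi_t}{\partial_{\V{u}}|\D_j|\,\psi_t}$. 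You appear to pass from one to the other by dividing by $M_t:=\SPn{\psi_t}{\D_j^2\psi_t}^{\nf{1}{2}}$ as if $|\D_j|\psi_t\approx M_t\,\psi_t$. But $\psi_t$ is an approximate minimizer of $\HAM_j=|\D_j|+\Hf^{(j)}$, not an approximate eigenvector of $|\D_j|$; the two statements are genuinely different, since $B:=\partial_{\V{u}}|\D_j|$ is an indefinite, non-commuting perturbation and $\{|\D_j|,B\}=2\V{u}\cdot Y$ does not yield $\SPn{\psi}{B\psi}\le\|\psi\|\,\|Y\psi\|/\SPn{\psi}{|\D_j|\psi}$ in general. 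A correct version of your idea would need the Daletskii integral formula $\partial_{\V{u}}|\D_j|\propto\int_0^\infty(\D_j^2+s^2)^{-1}(\V{u}\cdot Y)(\D_j^2+s^2)^{-1}s^2\,ds$, with Cauchy--Schwarz applied inside the integral and a careful accounting of the spin correction and $Y^2\le\D_j^2-1+\bigO(\ee)$; this is not what you wrote and the intermediate bookkeeping is delicate.

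The paper avoids this difficulty entirely by differentiating $\HAM_0^j$, for which the operator under the square root is a \emph{commuting} multiplication operator $\V{v}^2+1$, $\V{v}=\V{P}+\V{h}-\Pf^{(j)}$: there $\partial_{\V{h}}\HAM_0^j(\V{P}+\V{h})=\V{h}\cdot\V{v}\,(\V{v}^2+1)^{-\nf{1}{2}}$ is exact, and the pointwise scalar inequality $1-r(r^2+1)^{-\nf{1}{2}}\ge\tfrac{1}{2}(r^2+1)^{-1}$ gives the operator lower bound $\partial_{\V{h}}\HAM_0^j\ge-|\V{h}|+\tfrac{|\V{h}|}{2}(\HAM_0^j)^{-2}$ with no non-commutativity headaches. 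The full Hamiltonian is then reached by two controlled comparisons: $(\HAM_0^j)^{-2}$ versus $\HAM_j^{-2}$ via a second-resolvent manipulation and Lemma~\ref{le-lisa}(ii), and $\partial_{\V{h}}\HAM_j$ versus $\partial_{\V{h}}\HAM_0^j$ via \eqref{debora}. The resulting form inequality is passed to infima and the one-sided derivative is squeezed. If you want to preserve your explicit $\D_j^2$ computation, I would suggest either rewriting Part~(2) along the lines of the paper, or replacing the anticommutator step by the Daletskii-integral argument and justifying the Cauchy--Schwarz and spin-correction estimates under the integral.
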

%
%
%
\begin{proof}
(1): The form bound \eqref{lisa1a} implies $\HAM_0^j\le(1+\const\,\ee)\,\HAM_j$
and \eqref{lisa1b} implies $\HAM_j\le(1+\const\,\ee)\,\HAM_0^j$.
Hence, Part~(1)
is a consequence of the variational
principle and $E_0^j=E_0$.

(2):
Let $\phi\in\dom(\Hf^{(j)})$ and $j\in\NN\cup\{\infty\}$. 
On account of Remark~\ref{rem-typeA},
\begin{align*}
\HAM_j(\V{P}+\V{h})\,\phi-\HAM_j(\V{P})\,\phi
=&\partial_{\V{h}}\HAM_j(\V{P}+\V{h})\,\phi+|\V{h}|^2\bigO(1)\,\phi\,,
\quad|\V{h}|\to0\,,
\end{align*}
where the norm of $\bigO(1)\in\cB(\HR_j)$ can be bounded
uniformly in $\V{P}\in\RR^3$ and in $j$.
Note that 
$\HAM_0^j(\V{P}+\V{h})=(\V{v}^2+1)^{\nf{1}{2}}+\Hf^{(j)}$ 
with $\V{v}:=\V{P}+\V{h}-\Pf^{(j)}$.
Since the components of $\V{v}$ and $\Pf^{(j)}$ commute strongly we obtain 
\begin{align*}
\SPn{\phi}{\partial_{\V{h}}\HAM_0^j(\V{P}+\V{h})\,\phi}
&=\SPn{\phi}{\V{h}\cdot\V{v}\,(\V{v}^2+1)^{-\nf{1}{2}}\,\phi}
\\
&\ge
-|\V{h}|\,\|\phi\|^2+ |\V{h}|\, \SPn{\phi}{(1-|\V{v}|\,(\V{v}^2+1)^{-\nf{1}{2}})\,\phi}
\\
&\ge-|\V{h}|\,\|\phi\|^2+|\V{h}|\, \SPn{\phi}{(\V{v}^2+1)^{-1}\,\phi}/2\,.
\end{align*}
By the trivial bound 
$(\V{v}^2+1)^{-1}\ge\HAM_0^j(\V{P}+\V{h})^{-2}$
(between multiplication operators) this yields
$\partial_{\V{h}}\HAM_0^j(\V{P}+\V{h})
\ge -|\V{h}|+|\V{h}|\,\HAM_0^j(\V{P}+\V{h})^{-2}/2$.
Next, 
\begin{equation*}
(\HAM_0^j)^{-2}-(\HAM_j)^{-2}=
 ((\HAM_0^j)^{-1}-\HAM_j^{-1})\,(\HAM_0^j)^{-1}+
\HAM_j^{-1}\, ((\HAM_0^j)^{-1}-\HAM_j^{-1})\,.
\end{equation*}
Using Lemma~\ref{le-lisa}(ii) and $\|(\HAM_0^j)^{-1}\|,\, \|\HAM_j^{-1}\| \le 1$ 
we find some $\const>0$ such that
\begin{align}
\|(\HAM_0^j)^{-2}-(\HAM_j)^{-2}\|\le2\,\|(\HAM_j-\HAM_0^j)(\HAM_0^j)^{-1}\|
\le 2\,\const\,\ee\,,
\end{align}
at every value of the total momentum.
We end up with
\begin{equation}
\partial_{\V{h}}\HAM_0^j(\V{P}+\V{h})\ge
- |\V{h}|+ (|\V{h}|/2)\,\HAM_j(\V{P}+\V{h})^{-2}-\const\,\ee\,|\V{h}|\,.
\end{equation}
By virtue of \eqref{debora},
$
\pm\{ \partial_{\V{h}}\HAM_j
-\partial_{\V{h}}\HAM_0^j\} \le \const\,\ee\,|\V{h}| \,
(\Hf^{(j)}+1)\le \const'\ee\,|\V{h}|\, \HAM_j
$, whence
\begin{align*}
\HAM_j&(\V{P}+\V{h})-E_j(\V{P})
\ge \HAM_j(\V{P}+\V{h})-\HAM_j(\V{P})
\\
&\ge - |\V{h}|+ (|\V{h}|/2)\,\HAM_j(\V{P}+\V{h})^{-2}-\const\,\ee\,|\V{h}|
- \const'\ee\,|\V{h}|\,\HAM_j(\V{P}+\V{h})-\bigO(|\V{h}|^2)\,.
\end{align*}
In particular, we obtain by means of the spectral calculus 
\begin{align}\label{eq2}
E_j(\V{P}+\V{h})-E_j(\V{P})
\ge&- |\V{h}|+ |\V{h}|\,E_j(\V{P}+\V{h})^{-2}/2-\const\,\ee\,|\V{h}|
\\ \nonumber
& - \const'\ee\,|\V{h}|\, E_j(\V{P}+\V{h})-\bigO(|\V{h}|^2)\,.
\end{align}
Now, let $\V{P}\in \RR^3$, let ${\V{u}} \in \RR^3$ be normalized, 
and assume $\partial_{{\V{u}}}E_j(\V{P})$ exists.
Replacing $\V{h}$ by $\pm\,s\,{\V{u}}$, for small $s>0$, in \eqref{eq2},
dividing by $\pm s$, and passing to the limit $s\searrow 0$, 
we obtain an estimate for $|\partial_{{\V{u}}}\, E_j(\V{P})|$
in terms of $E_j(\V{P})$.
Employing \eqref{eq2-neu2}, then, we arrive at \eqref{eq1-neu2}.

(3): In the following we borrow an argument from \cite{Froehlich1974},
namely the way to prove the concavity of $\Delta$. 

We fix $\V{P}\in\ol{\cB}_\pmax$ and some normalized 
$\V{u}\in\RR^3$, and define 
$$
e(t):=E_j(\V{P}+t\,{\V{u}})\,,\qquad 
h_\phi(t):=\SPn{\phi}{H_j(\V{P}+t\,{\V{u}})\,\phi}\,,
$$ 
for all normalized $\phi\in\dom(\Hf^{(j)})$ and all $t\in\RR$. 
Thanks to Remark~\ref{rem-typeA} we know that
$|h''_\phi(t)|=|\SPn{\phi}{\partial_{{\V{u}}}^2\HAM_j(\V{P}+t\,{\V{u}})\,\phi}|
\le \const_1$. Consequently,
the function defined by 
$\Delta_\phi(t):= -\const_1\,t^2+h_\phi(t)$, $t\in\RR$, 
is concave.
Thus, $\Delta(t):=\inf\{\Delta_\phi(t):\,\phi\in\dom(\Hf^{(j)}),\,\|\phi\|=1\}$, 
$t\in\RR$, 
defines a concave function, as well.
By a general theorem on concave functions, we know that the left derivative
$\Delta'_-(t)$ and right derivative $\Delta'_+(t)$ exist, and that 
$\Delta'_\pm$ are both 
decreasing, and coincide outside a countable set. $\Delta$ is differentiable
at every point $t\in\RR$ where $\Delta'_-(t)=\Delta'_+(t)$.
Moreover,
$\Delta(t)-\Delta(s)= \int_s^t \Delta'_\pm(r)\,dr$.
Since $e(t)=\Delta(t)+\const_1\,t^2$, the function $e$ also has left and right
derivatives $e_\pm'$ on $\RR$ which coincide almost everywhere, and
\begin{equation}\label{jinhuan}
e(t)-e(s)=\int_s^t e_\pm'(r)\,dr\,,\quad s,t\in\RR\,.
\end{equation}
Given $\V{h}\not=0$ we may choose $\V{u}:=\V{h}/|\V{h}|$ and,
inserting $t=|\V{h}|$ and $s:=0$ into \eqref{jinhuan}, we obtain
\begin{equation}\label{eq10}
E_j(\V{P}+\V{h})-E_j(\V{P})
=e(|\V{h}|)-e(0)
= \int_{0}^{|\V{h}|}e'(r)\,dr\,.
\end{equation}
Now, in order to prove \eqref{eq3-neu2} we may assume that
$E_j(\V{P}+\V{h})-E_j(\V{P})<-|\V{h}|/2$, which together with
\eqref{eq2-neu2}, $E_j\ge1$, and $|\V{P}|\le\pmax$ implies
$|\V{h}|\le2\,(1+\const\,\ee)\,E_0(\V{P})-2
\le2\,\pmax+1$, if $\ee>0$ is
sufficiently small, say $\ee\in(0,\ee_0]$. Thus,
$E_0(\V{Q})\le\const\,(\pmax+1)$,
for every $\V{Q}$
on the line segment from $\V{P}$ to $\V{P}+\V{h}$
we integrate over in \eqref{eq10} and for $\ee\in(0,\ee_0]$.
Thanks to \eqref{eq1-neu2} we may conclude that
$|\partial_{\V{u}}E_j(\V{Q})|
\le1+\const'\ee\,(\pmax+1)-1/4\const^2(\pmax+1)^2$,
for all $\V{Q}$ on the same segment, where
$\partial_{\V{u}}E_j(\V{Q})$ exists, 
and for small $\ee$.
Combining this with \eqref{eq10} and decreasing the
value of $\ee_0$, if necessary, we see that the assertion of (3)
holds true.
\end{proof}


\section{Spectral gaps}\label{sec-gap}

\noindent
In this section we derive lower bounds on
the gap above the first (degenerate) eigenvalues
of $\HAM_j(\V{P})$ and $\HAM_j^{j+1}(\V{P})$. 
This is done by induction on $j=0,1,2,\ldots\;$.
More precisely, as in \cite{Pizzo2003} we successively
estimate the gaps of 
$$
\HAM_0^{1}(\V{P}),\HAM_1(\V{P}),\HAM_1^{2}(\V{P}),
\ldots,\HAM_j(\V{P}),\HAM_j^{j+1}(\V{P}),\HAM_{j+1}(\V{P}),\ldots\;,
$$
assuming that $|\V{P}|\le\pmax$ and $\ee>0$ is sufficiently small.
The crucial step is taken in Lemma~\ref{le-gap1}
where the key ingredient \eqref{eq3-neu2} is applied
in order to estimate the gap of $\HAM_j^{j+1}(\V{P})$
in terms of the gap of $\HAM_j(\V{P})$.
After a technical intermediate step (Lemma~\ref{le-adam})
the final results of this section are stated in Theorem~\ref{Thm:Gap} below.
In the proof of Theorem~\ref{Thm:Gap} we perform the induction
and estimate the gap of $\HAM_{j+1}(\V{P})$
in terms of the gap of $\HAM_j^{j+1}(\V{P})$.
Since our model includes spin we have to give an additional
argument ensuring that the
ground state eigenvalue always stays four-fold degenerated.
This can, however, be done by an essentially well-known
\cite{MiyaoSpohn2009} application of Kramer's degeneracy theorem.
In \cite{MiyaoSpohn2009} the authors also show 
the existence of a spectral gap, 
for strictly positive photon masses and small $\ee>0$,
when a pre-factor $\gamma\in(0,1)$ is
introduced in front of the square-root in \eqref{def-Ham(P)}
(motivated by requirements of adiabatic
perturbation theory).
A bound on the spectral gap of 
IR cutoff fiber Hamiltonians with $\gamma=1$
has not yet been derived.

We denote the projection onto the vacuum sector
in $\sF_k^j$ by
$P_{\Omega_k^j}$.

\begin{lem}\label{le-gap1}
Let Assumption~\ref{ariel} be satisfied and
let $k,j\in\NN_0\cup\{\infty\}$, $k<j$.
Assume there exist
$\pmax,\ee_0>0$ such that,
for all $\V{P}\in\ol{\cB}_\pmax$ and $\ee\in(0,\ee_0]$,
$E_k(\V{P})$ is a 4-fold degenerate eigenvalue of $\HAM_k(\V{P})$.
Let $\Pi_k:=\id_{\{E_k\}}(\HAM_k)$
denote the spectral projection
onto the corresponding eigenspace.
Then the following holds,
at every $\V{P}\in\ol{\cB}_\pmax$ and for all $\ee\in(0,\ee_0]$:

\smallskip

\noindent(1) $E_k$ is also an
eigenvalue of $\HAM_k^j$, and the range of 
$\Pi_k\otimes P_{\Omega_k^j}$
belongs to the
corresponding eigenspace of $\HAM_k^j$.

\smallskip

\noindent 
(2) With $\qmax$ as in Lemma~\ref{EnergyLemma}(3) we have,
for all $\rho\ge 0$ and $\lambda>0$,
\begin{align}\label{gap1}
\gap_{k}^{j}&\ge
\min\big\{\gap_k,\,\qmax\,\rho_{j}\big\}\,,\qquad 
E_k^j=E_k\,,
\\\label{adam1}
\Hf^{(k,j)}+\rho&\le\max\{\qmax^{-1},\lambda^{-1}\}\,
(\HAM_k^j-E_k+\lambda\,\rho)\,.
\end{align}
(3)
If $j<\infty$ and $\gap_k>0$,  
then $\gap_{k}^{j}>0$ and 
$E_k$ is also a 4-fold degenerate eigenvalue of $\HAM_k^j$.
\end{lem}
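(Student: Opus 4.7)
The plan is to exploit the tensor factorization $\sF_j = \sF_k \otimes \sF_k^j$, giving $\HR_j = \HR_k \otimes \sF_k^j$. The operators $\V{A}_k, \Phi_k, \Hf^{(k)}, \Pf^{(k)}$ act only on the first tensor factor, whereas $(\Pf^{(k,j)}, \Hf^{(k,j)})$ is a commuting family of self-adjoint operators on $\sF_k^j$ that commutes with every operator lifted from $\HR_k$. The joint spectral theorem then yields a direct-integral representation
\begin{equation*}
\HAM_k^j(\V{P}) \cong \int^{\oplus}\bigl(\HAM_k(\V{P}-\V{q}) + t\bigr)\,dE(\V{q},t),
\end{equation*}
whose spectral measure $E$ is concentrated on $\{(\V{q},t):|\V{q}|\le t\}$ (by $|\sum\V{k}_i|\le\sum|\V{k}_i|$ on every $n$-photon sector of $\sF_k^j$) and splits as a point mass at $(\V{0},0)$ on $\CC\,\Omega_k^j$ plus a part supported in $\{t\ge\rho_j\}$ (since each photon in $\cA_k^j$ carries energy at least $\rho_j$). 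Restricting to the $(\V{0},0)$-fiber $\HR_k\otimes\CC\,\Omega_k^j\cong\HR_k$, $\HAM_k^j$ reduces to $\HAM_k(\V{P})$, so the range of $\Pi_k\otimes P_{\Omega_k^j}$ sits in the $E_k$-eigenspace of $\HAM_k^j$; this gives part~(1).

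For part~(2) I would estimate fiberwise: $\HAM_k(\V{P}-\V{q})+t\ge E_k(\V{P}-\V{q})+t$ by the variational principle, and the lower Lipschitz bound \eqref{eq3-neu2} together with $|\V{q}|\le t$ yields
\begin{equation*}
E_k(\V{P}-\V{q})+t \ge E_k(\V{P})-(1-\qmax)|\V{q}|+t \ge E_k(\V{P})+\qmax\,t.
\end{equation*}
Integrating delivers the operator inequality $\HAM_k^j\ge E_k(\V{P})+\qmax\,\Hf^{(k,j)}$ on $\HR_j$; adding $\lambda\rho$ to both sides and multiplying by $\max\{\qmax^{-1},\lambda^{-1}\}$ produces the form bound \eqref{adam1}. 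For \eqref{gap1} I would split the orthogonal complement of $\Ran(\Pi_k\otimes P_{\Omega_k^j})$ in $\HR_j$ into the two reducing subspaces $\Pi_k^\perp\HR_k\otimes\CC\,\Omega_k^j$, on which $\HAM_k^j=\HAM_k\ge E_k+\gap_k$ by hypothesis, and $\HR_k\otimes(\CC\,\Omega_k^j)^\perp$, on which the just-derived inequality together with $\Hf^{(k,j)}\ge\rho_j$ gives $\HAM_k^j\ge E_k+\qmax\rho_j$; the minimum of the two bounds is \eqref{gap1}, and the same argument yields $E_k^j=E_k$.

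For part~(3), $j<\infty$ forces $\rho_j>0$, hence $\gap_k^j\ge\min\{\gap_k,\qmax\rho_j\}>0$. The strict inequality $\HAM_k^j\ge E_k+\qmax\rho_j>E_k$ on $\HR_k\otimes(\CC\,\Omega_k^j)^\perp$ forces the full $E_k$-eigenspace of $\HAM_k^j$ to lie inside $\HR_k\otimes\CC\,\Omega_k^j$, where $\HAM_k^j=\HAM_k$ has the four-dimensional $E_k$-eigenspace $\Pi_k\HR_k$. Hence the $E_k$-eigenspace of $\HAM_k^j$ coincides with $\Ran(\Pi_k\otimes P_{\Omega_k^j})$ and is four-dimensional.

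The main technical hurdle is the rigorous implementation of the direct-integral picture, namely interpreting $\HAM_k(\V{P}-\V{q})$ as a measurable family of self-adjoint operators and locating the joint spectrum of $(\Pf^{(k,j)},\Hf^{(k,j)})$. Since $\HAM_k$ depends analytically on its momentum parameter by Remark~\ref{rem-typeA} and everything from $\HR_k$ commutes with bounded measurable functions of $(\Pf^{(k,j)},\Hf^{(k,j)})$, this is essentially standard; alternatively, one may work on the $n$-photon sectors $[\sF_k^j]^{(n)}$ directly and verify the operator inequality $\HAM_k^j\ge E_k+\qmax\,\Hf^{(k,j)}$ by quadratic forms, reducing on each sector to a pointwise application of \eqref{eq3-neu2} with momentum shift $\V{k}_1+\cdots+\V{k}_n$.
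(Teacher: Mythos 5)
Your proposal is correct and follows essentially the same route as the paper's proof: a fiber decomposition of $\HAM_k^j$ over the photon momentum and energy in $\sF_k^j$ (you phrase it via the joint spectral measure of $(\Pf^{(k,j)},\Hf^{(k,j)})$, while the paper writes out the $n$-photon direct integrals \eqref{li2000}, but you note that equivalence yourself at the end), a pointwise application of the lower Lipschitz bound \eqref{eq3-neu2} to get $\HAM_k^j-E_k\ge\qmax\,\Hf^{(k,j)}$, and the split into vacuum and non-vacuum sectors to deduce \eqref{gap1}, \eqref{adam1}, and Part~(3).
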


\begin{proof} (1):
Recall that $\HAM_k^j=|\D_k^j|+\Hf^{(k)}+\ee\,\Phi_0^k+\Hf^{(k,j)}$,
where $\Hf^{(k,j)}\,\Omega_k^j=0$,
$|\D_k^j|\,(\id\otimes P_{\Omega_k^j})=|\D_k|\otimes P_{\Omega_k^j}$,
and $\HAM_k=|\D_k|+\Hf^{(k)}+\ee\Phi_0^k$.
Therefore,
$$
(\HAM_k^j-E_k)\,\Pi_k\otimes P_{\Omega_k^j}
=\{(\HAM_k-E_k)\,\Pi_k\}\otimes P_{\Omega_k^j}=0\,.
$$
(2):
The following direct integral representation holds,
for all $n\in\NN$,
\begin{align}\label{li2000}
\HAM_k^j&(\V{P})\!\!\upharpoonright_{\sH_k\otimes[\sF_k^j]^{(n)}}
\\\nonumber
&=\bigoplus_{\lambda_1,...,\lambda_n\in\ZZ_2}\int_{[\cA_k^j]^n}^\oplus
\Big\{\HAM_k\Big(\V{P}-\sum_{\ell=1}^n\V{k}_\ell\Big)
+\sum_{\ell=1}^n|\V{k}_\ell|\Big\}\,d^{3n}(\V{k}_1,\ldots,\V{k}_n)\,. 
\end{align}
By virtue of \eqref{eq3-neu2} and the triangle inequality we obtain
\begin{align*}
\HAM_k\Big(\V{P}-\sum_{\ell=1}^n\V{k}_\ell\Big)
\ge E_k\Big(\V{P}-\sum_{\ell=1}^n\V{k}_\ell\Big)
\ge E_k(\V{P})-(1-\qmax)\sum_{\ell=1}^n|\V{k}_\ell|\,,
\end{align*}
for all $\V{k}_1,\ldots,\V{k}_n\in\cA_k^j$.
Dropping again the argument $(\V{P})$, we further conclude
from the remarks in the first part of this proof that
\begin{align*}
(\HAM_k^j&-E_k)\,(\id\otimes P_{\Omega_k^j})
=(\HAM_k-E_k)\otimes P_{\Omega_k^j}
\ge \gap_k\,\Pi_k^\bot\otimes P_{\Omega_k^j}\,,
\end{align*}
where $\Pi_k^\bot:=\id-\Pi_k$.
Altogether we obtain
\begin{align}\nonumber
\HAM_k^j-E_k
&\ge\,\qmax\,\Hf^{(k,j)}+\gap_k\,\Pi_k^\bot\otimes P_{\Omega_k^j}
\\
&\ge{\qmax}\,\rho_{j}\,P_{\Omega_k^j}^\bot\label{susi2}
+\gap_k\,\Pi_k^\bot\otimes P_{\Omega_k^j}\ge0\,,
\end{align}
which yields \eqref{gap1}.
Combining the first estimate in \eqref{susi2}
with the obvious bound
$\Hf^{(k,j)}+\rho\le\max\{\qmax^{-1},\lambda^{-1}\}
\,(\qmax\,\Hf^{(k,j)}+\lambda\,\rho)$
we arrive at \eqref{adam1}.

(3): The assertion follows from Part~(1) and \eqref{susi2}. 
\end{proof}

\smallskip

\noindent
In what follows we shall use the notation
\begin{equation*}
\mu_5^{j}(\V{P}):=\!\!
\sup_{L\in\sG_4(\HR_{j})}\!\!\inf\big\{
\SPn{\phi}{(\HAM_{j}(\V{P})-E_{j}(\V{P}))\,\phi}:
\phi\in\fdom(\Hf^{(j)}),\|\phi\|=1,\phi\bot L
\big\},
\end{equation*}
where $\sG_4(\HR_{j})$ is the set of four-dimensional subspaces
of $\HR_{j}$. According to the minimax principle $\mu_5^{j}(\V{P})$
is the fifth eigenvalue of $\HAM_{j}(\V{P})-E_{j}(\V{P})$,
counting from below including multiplicities, 
or the lower bottom of the essential spectrum
of $\HAM_{j}(\V{P})-E_{j}(\V{P})$.
This notation shall be useful in a situation where we do not
already know whether $\mu_5^{j}(\V{P})$ be equal to $\gap_j(\V{P})$.

\begin{lem}\label{le-adam}
Let $k,j\in\NN_0\cup\{\infty\}$, $k<j$,
and let all assumptions of Lemma~\ref{le-gap1} be satisfied.
Then the following form bounds hold true on $\fdom(\Hf^{(j)})$,
at every $\V{P}\in\ol{\cB}_\pmax$ and for all $\ee\in(0,\ee_0]$ and $\lambda>0$,
with constants depending only on $\pmax$ and $\ee_0$,
\begin{align}\label{adam00}
|\D_j|&\le\max\{1,\const/\lambda\}
(\HAM_j-E_j+\lambda)\,,
\\\label{adam2}
\pm(\HAM_{j}-\HAM_k^{j})
&\le\const'\ee\,\max\{1,1/\lambda\}\,
(\HAM_k^{j}-E_k+\lambda\,\rho_k)\,,
\\\label{adam3}
\pm(\HAM_{j}-\HAM_k^{j})
&\le\const''\ee\,
(\HAM_{j}-E_k+\rho_k)\,.
\end{align}
Furthermore, 
\begin{align}\label{EnergyDiff}
|E_{j}-E_k|&\le \const\,\ee\,\rho_k\,,
\end{align}
and, if $j<\infty$ and $\gap_k>0$, then
\begin{align}\label{mu3}
\mu_5^{j}&\ge 
(1-\const\,\ee)\,\gap_k^{j}-\const\,\ee\,\rho_k\,.
\end{align}
\end{lem}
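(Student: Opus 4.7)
The plan is to prove the five assertions in the stated order, since each is a building block for the next. The main tool is the form bound \eqref{lisa1b} from Lemma~\ref{le-lisa}(iv), combined with the lower bound \eqref{adam1} and the four-fold degeneracy of $E_k$ supplied by Lemma~\ref{le-gap1}.

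\textbf{Bound on $|\D_j|$.} For \eqref{adam00} I would observe that the last condition in Assumption~\ref{ariel} together with $\Hf^{(j)}\ge 0$ gives $|\D_j|\le\HAM_j$, and that Lemma~\ref{EnergyLemma}(1) provides $E_j\le (1+\const\ee)E_0(\V{P})\le\const(\pmax)$ on $\ol{\cB}_\pmax$. Writing $\HAM_j=(\HAM_j-E_j+\lambda)+(E_j-\lambda)$ and distinguishing the cases $\lambda\ge E_j$ and $\lambda<E_j$ yields $|\D_j|\le\max\{1,E_j/\lambda\}(\HAM_j-E_j+\lambda)$, which gives \eqref{adam00}. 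The same argument applied to $\HAM_k^j$ (with $E_k^j=E_k$ by \eqref{gap1}) gives $|\D_k^j|\le\max\{1,\const/(\lambda\rho_k)\}(\HAM_k^j-E_k+\lambda\rho_k)$, which is what we need below.

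\textbf{Two-sided form comparison.} For \eqref{adam2} I would insert into \eqref{lisa1b} the just-derived bound on $|\D_k^j|$ and the bound \eqref{adam1} on $\Hf^{(k,j)}+\rho_k$. Choosing $\delta=1$ in \eqref{lisa1b} and using $\rho_k\le\kappa$, $\ee\le\ee_0$ to bound stray factors, both resulting terms become of the form $\const\ee\max\{1,1/\lambda\}(\HAM_k^j-E_k+\lambda\rho_k)$, which is \eqref{adam2}. For \eqref{adam3} I would set $\lambda=1$ in \eqref{adam2}, giving $\pm(\HAM_j-\HAM_k^j)\le\const\ee(\HAM_k^j-E_k+\rho_k)$, and then eliminate $\HAM_k^j$ in favor of $\HAM_j$: the lower half of \eqref{adam2} yields $(1-\const\ee)(\HAM_k^j-E_k+\rho_k)\le(\HAM_j-E_k+\rho_k)$, so for $\ee$ small,
\[
\pm(\HAM_j-\HAM_k^j)\le\frac{\const\ee}{1-\const\ee}(\HAM_j-E_k+\rho_k)\le\const''\ee(\HAM_j-E_k+\rho_k).
\]

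\textbf{Energy difference.} For \eqref{EnergyDiff} I would test \eqref{adam2} (at $\lambda=1$) in the ground state(s) of either $\HAM_k^j$ or $\HAM_j$. Using $E_k^j=E_k$ from \eqref{gap1}, the upper bound on $\HAM_j$ tested in a ground state of $\HAM_k^j$ gives $E_j\le E_k+\const\ee\rho_k$; the lower bound tested in a ground state of $\HAM_j$ (or its analog via a minimizing sequence if $E_j$ is not attained) gives $E_k\le E_j+\const\ee\rho_k$.

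\textbf{Min-max bound on $\mu_5^j$.} The most delicate step is \eqref{mu3}, but all ingredients are now in place. Since $\gap_k>0$ and $j<\infty$, Lemma~\ref{le-gap1}(3) tells us that $E_k$ is a four-fold degenerate eigenvalue of $\HAM_k^j$ with spectral gap $\gap_k^j>0$; let $L\in\sG_4(\HR_j)$ be the corresponding four-dimensional eigenspace. On $L^\perp\cap\fdom(\Hf^{(j)})$ we have the form bound $\HAM_k^j-E_k\ge\gap_k^j$, so the lower half of \eqref{adam2} (at $\lambda=1$) gives, using also $|E_j-E_k|\le\const\ee\rho_k$ from \eqref{EnergyDiff},
\[
\HAM_j-E_j\ge(1-\const\ee)(\HAM_k^j-E_k)-\const\ee\rho_k
\ge(1-\const\ee)\gap_k^j-\const\ee\rho_k\quad\text{on }L^\perp.
\]
The min-max characterization of $\mu_5^j$ then yields \eqref{mu3}. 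The only point requiring care is to confirm that $L\subset\fdom(\Hf^{(j)})$, which follows because $L\subset\dom(\HAM_k^j)=\dom(\Hf^{(j)})$ by Lemma~\ref{le-lisa}(i).

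The main obstacle, if any, is the careful bookkeeping in passing from \eqref{lisa1b} to \eqref{adam2}: one must track that all spurious $\rho_k$- and $\lambda$-dependent factors are absorbed into $\const\ee\,\max\{1,1/\lambda\}$ using only the admissible parameters $\pmax$ and $\ee_0$.
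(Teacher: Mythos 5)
Your proposal is correct and follows essentially the same route as the paper: both derive \eqref{adam2} by inserting \eqref{adam1} and a bound on $|\D_k^j|$ into \eqref{lisa1b}, derive \eqref{adam3} and \eqref{EnergyDiff} by elementary form manipulations and the variational principle, and obtain \eqref{mu3} via the min-max principle restricting to the orthogonal complement of the four-dimensional ground eigenspace $\ker(\HAM_k-E_k)\otimes[\sF_k^j]^{(0)}$ of $\HAM_k^j$. The only cosmetic difference is at the $|\D_k^j|$ step: the paper adds a nonnegative term to complete $\rho_k|\D_k^j|$ to $\rho_k(\HAM_k^j-E_k+C)$ and then absorbs $\rho_k\le\kappa$, whereas you invoke a direct analogue of \eqref{adam00}; these are equivalent.
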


\begin{proof}
To start with we observe that \eqref{eq2-neu2} 
implies the upper bound,
$E_j(\V{P})\le
(1+\const\,\ee_0)\,\max_{|\V{Q}|\le\pmax}E_0(\V{Q})=:C$,
which is uniform in $j\in\NN\cup\{\infty\}$, 
$\ee\in(0,\ee_0]$, and $|\V{P}|\le\pmax$.
From now on we again drop the reference to $\V{P}$ in the notation.
Then
$|\D_j|\le\HAM_j-E_j+C$ 
(recall $\Hf^{(j)}+\ee\,\Phi_0^j\ge0$)
which implies \eqref{adam00}.
Combining \eqref{lisa1b} and  \eqref{adam1} we further obtain
\begin{equation}\label{adam7}
\pm(\HAM_{j}-\HAM_k^{j}) 
\le\const\,\ee\,\rho_k\,|\D_k^{j}|
+\const\,\ee\,(\HAM_k^{j}-E_k+\rho_k)
\end{equation}
on $\fdom(\Hf^{(j)})$.
Adding the term $\const\,\ee\,\rho_k\,(\Hf^{(j)}+\Phi_0^k-E_k+C)\ge0$
to the RHS of \eqref{adam7} 
and absorbing one factor $\rho_k\le\kappa$ in the constant
we arrive at \eqref{adam2}.
If, say, $\const'\ee\le1/2$, then 
\eqref{adam3} follows from \eqref{adam2} and some trivial
manipulations.
From \eqref{adam2} we further deduce that
$E_{j}-E_k\le 
\HAM_{j}-E_k\le (1+\const\,\ee)(\HAM_k^{j}-E_k)+\const\,\ee\,\rho_k$.
By Lemma~\ref{le-gap1} we have $E_k=\inf\Spec[\HAM_k^{j}]$
and, hence, the variational principle yields
$E_{j}-E_k\le\const\,\ee\,\rho_k$. 
In the same way we infer the bound $E_k-E_{j}\le\const'\,\ee\,\rho_k$
from \eqref{adam3}. Altogether this proves \eqref{EnergyDiff}.
Writing
$\cN_{j}:=\ker(\HAM_k-E_k)\otimes[\sF_k^{j}]^{(0)}$ and
$\cS_{j}:=\{\psi\in\fdom(\Hf^{(j)}):\,\|\psi\|=1\}$
and employing \eqref{adam2} and \eqref{EnergyDiff} 
we finally obtain
\begin{align*}
\mu_5^{j}&\ge\,\inf_{\phi\in\cN_{j}^\bot\cap\cS_{j}}\,
\SPn{\phi}{(\HAM_{j}-E_{j})\,\phi}
\\ \nonumber
&\ge\,
(1-\const\,\ee)
\inf_{\phi\in\cN_{j}^\bot\cap\cS_{j}}\,
\SPn{\phi}{(\HAM_k^{j}-E_k)\,\phi}+E_k-E_{j}
-\const\,\ee\,\rho_k
\\ \nonumber
&\ge\,(1-\const\,\ee)\,\gap_k^{j}- \const'\ee\,\rho_k\,.
\qedhere
\end{align*}
\end{proof}

\smallskip

\noindent
In the next theorem we combine the previous lemmata
in an induction argument to derive a bound
on the spectral gaps of $\HAM_j$ and $\HAM_j^{j+1}$, $j\in\NN$.
The argument based on Kramer's degeneracy theorem
which is used to show that the ground state
energies of both operators are four-fold degenerate eigenvalues
is essentially well-known; see \cite{MiyaoSpohn2009}.
Let
\begin{equation}\label{Def-theta}
\vartheta:=\begin{pmatrix}\sigma_2 & 0 \\ 0&-\sigma_2\end{pmatrix}\,\cC,\quad  
X_1:= \begin{pmatrix}\id & 0 \\ 0&0\end{pmatrix},\quad
X_2:= \begin{pmatrix}0 & 0 \\ \id&0\end{pmatrix},
\end{equation}
where 
$\sigma_2$ is the second Pauli matrix and $\cC$ denotes complex conjugation.

\begin{thm}[{\bf Spectral gaps of $\boldsymbol{\HAM_j}$
and $\boldsymbol{\HAM_j^{j+1}}$}]\label{Thm:Gap}
Let Assumption~\ref{ariel} be satisfied.
Then there exist $\ee_0>0$ and $q\in(0,\qmax)$ such that,
at every $\V{P}\in\ol{\cB}_\pmax $ and for all $\ee\in(0,\ee_0]$ and $j\in\NN_0$,
the following holds:

\smallskip

\noindent(1)
$E_j=\inf\Spec[\HAM_j]=\inf\Spec[\HAM_j^{j+1}]$ 
is a four-fold 
degenerate eigenvalue of both $\HAM_j$ and $\HAM_j^{j+1}$ and
\begin{equation}\label{gabi}
\gap_j\ge q\,\rho_j/2\quad
\textrm{and}\quad\gap_j^{j+1}\ge q\,\rho_j/2\,.
\end{equation}
\noindent(2)
The operators $\vartheta$, $X_1$, and $X_2$ commute with $\HAM_j$
and $\HAM_j^{j+1}$. If $\psi^{(1)}_j$ is a ground state
eigenvector of $\HAM_j$ 
satisfying $X_1\psi_j^{(1)}=\psi_j^{(1)}$ (which always exists), then four
mutually orthogonal ground state eigenvectors are given by $\psi_j^{(1)}$ and
\begin{equation}
\psi_j^{(2)}:= X_2\,\psi_j^{(1)},\quad \psi_j^{(3)}:= 
\vartheta\,\psi_j^{(1)},\quad \psi_j^{(4)}:= X_2\,\vartheta\,\psi_j^{(1)}.
\end{equation}
(3) Any vector $\psi_j\in\HR_j$ 
is a ground state eigenvector of $\HAM_j$,
if and only if 
$\psi_j\otimes\Omega_j^{j+1}$ 
is a ground state eigenvector of 
$\HAM_j^{j+1}$.
\end{thm}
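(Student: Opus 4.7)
I would argue by induction on $j\in\NN_0$, carrying at each level the hypothesis $P(j)$: \emph{$E_j(\V{P})$ is a four-fold degenerate eigenvalue of $\HAM_j(\V{P})$ with $\gap_j(\V{P})\ge q\rho_j/2$ on $\ol{\cB}_\pmax$.} The base case $P(0)$ reduces to Lemma~\ref{GapWithoutA}: since the coupling function $\V{G}$ is supported in $\{|\V{k}|<\kappa=\rho_0\}$, we have $\V{A}_0=0$, whence $\HAM_0(\V{P})=\sqrt{(\V{P}-\Pf^{(0)})^2+1}+\Hf^{(0)}$. That lemma supplies the four-fold degenerate eigenvalue $E_0(\V{P})=(\V{P}^2+1)^{1/2}$ with gap at least $\min\{\rho_0,1\}/(2(\pmax^2+1))$, which dominates $q\rho_0/2$ once $q\in(0,\qmax)$ is chosen small depending on $\pmax$.

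Part~(2) is a structural input used at every level. A direct calculation gives $\vartheta^2=-\id$ (from $\sigma_2\bar\sigma_2=-\id$); writing $\vartheta=M\cC$ with $M=\text{diag}(\sigma_2,-\sigma_2)$, the identity $M^{-1}\alpha_\mu M=\bar\alpha_\mu$ holds for every Dirac matrix thanks to the anticommutations $\sigma_2\bar\sigma_i=-\sigma_i\sigma_2$ and the reality pattern of $\sigma_1,\sigma_3$ versus $\sigma_2$. Choosing the functions $f_j,\V{g}_j$ in Assumption~\ref{ariel} and the polarization vectors $\veps_\lambda$ real-valued, the natural complex conjugation $\cC$ on the momentum-space Fock representation commutes with $\Hf^{(j)}$, $\Pf^{(j)}$, $\vp(f_j)$ and $\vp(\V{g}_j)$, so $\vartheta$ commutes with $\D_k^{j}$ and hence with $\HAM_k^j$; the commutation of $X_1,X_2$ with $\HAM_k^j$ is read off the block form~\eqref{block}. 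For any non-zero ground state $\phi$ of $\HAM_k^j$, at least one of the components $X_1\phi,(\id-X_1)\phi$ is non-zero, so block-diagonality produces a normalized ground state $\psi_j^{(1)}\in\Ran(X_1)$. The Kramer identity $\langle\vartheta\psi,\psi\rangle=0$ (from anti-unitarity combined with $\vartheta^2=-\id$), together with the facts that $X_2$ maps $\Ran(X_1)$ isometrically onto $\Ran(\id-X_1)$ and $\vartheta$ preserves this decomposition, shows that $\psi_j^{(1)},X_2\psi_j^{(1)},\vartheta\psi_j^{(1)},X_2\vartheta\psi_j^{(1)}$ are pairwise orthogonal eigenvectors. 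This construction yields a lower bound of four on any eigenvalue's multiplicity whenever an eigenvector exists.

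Assuming $P(j)$, Lemma~\ref{le-gap1} applied with $k=j$ at scale $j+1$ yields $E_j^{j+1}=E_j$, the bound $\gap_j^{j+1}\ge\min\{\gap_j,\qmax\rho_{j+1}\}\ge q\rho_{j+1}$ (once $q\le\qmax$), together with the form inequality
\begin{equation*}
\HAM_j^{j+1}-E_j\,\ge\,\qmax\rho_{j+1}\,P_{\Omega_j^{j+1}}^\perp\,+\,\gap_j\,\Pi_j^\perp\otimes P_{\Omega_j^{j+1}}
\end{equation*}
proved therein. Any eigenvector of $\HAM_j^{j+1}$ at $E_j$ is annihilated by both projectors on the right-hand side, hence lies in $\Ran(\Pi_j)\otimes\CC\Omega_j^{j+1}$, which simultaneously gives the four-fold degeneracy at scale $(j,j+1)$ and both directions of Part~(3).

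\emph{The main obstacle} is the passage from $\HAM_j^{j+1}$ to $\HAM_{j+1}$. Estimate~\eqref{mu3} of Lemma~\ref{le-adam} delivers
\begin{equation*}
\mu_5^{j+1}\,\ge\,(1-\const\ee)q\rho_{j+1}-\const\ee\rho_j\,\ge\,q\rho_{j+1}/2
\end{equation*}
once $\ee_0$ is small enough, uniformly in $j$. A minimax argument shows that $E_{j+1}$ cannot lie in the essential spectrum of $\HAM_{j+1}$: if it did, a Weyl sequence $\{\phi_n\}$ for $E_{j+1}$ would satisfy $(\HAM_{j+1}-E_{j+1})^{1/2}\phi_n\to 0$ in norm, while projection onto the orthogonal complement of an almost-maximizing four-dimensional subspace $L\subset\fdom(\HAM_{j+1})$ would force $\|\pi_{L^\perp}\phi_n\|\to 1$ (as $\phi_n\to 0$ weakly so $\pi_L\phi_n\to 0$ strongly) and simultaneously $\|(\HAM_{j+1}-E_{j+1})^{1/2}\pi_{L^\perp}\phi_n\|^2\ge(\mu_5^{j+1}/2)\|\pi_{L^\perp}\phi_n\|^2$, contradicting the vanishing of the left-hand side. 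Consequently, $E_{j+1}$ is an isolated eigenvalue of multiplicity at most four, hence of multiplicity exactly four by the Kramer construction of Part~(2), and $\gap_{j+1}=\mu_5^{j+1}\ge q\rho_{j+1}/2$. This establishes $P(j+1)$ and closes the induction.
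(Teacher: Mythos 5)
Your proposal is correct and follows essentially the same route as the paper: the same three-step induction $\HAM_j\to\HAM_j^{j+1}\to\HAM_{j+1}$ anchored in Lemma~\ref{GapWithoutA}, Lemma~\ref{le-gap1}, and the bound \eqref{mu3}, with Kramer degeneracy supplying the exact multiplicity. The two minor presentational variants — spelling out a Weyl-sequence argument where the paper simply reads off from the minimax principle that $\mu_5^{j+1}>0$ places $E_{j+1}$ below the essential spectrum, and deducing $[\vartheta,\HAM_j]=0$ directly from the reality of the coupling functions rather than via the resolvent and principal-value representation of $|\D_j|$ — do not constitute a different proof.
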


\begin{proof}
(1)\&(3):
It suffices to prove the following assertions by induction on $j\in\NN_0$,
\begin{align*}
\sA_j\;:\Leftrightarrow\;
E_j\;\textrm{is a 4-fold eigenvalue of}\;
\HAM_j\;\textrm{and}\;\gap_j\ge q\,\rho_j/2\,.
\end{align*}
In fact, if
$j\in\NN_0$ and the assertion $\sA_j$ holds true, then Lemma~\ref{le-gap1}
implies $\gap_j^{j+1}\ge q\,\min\{\rho_j/2,\,\rho_{j+1}\}
= q\,\rho_j/2$ and all the remaining statements of (1) and (3).

Assertion $\sA_0$ follows, however, from Lemma~\ref{GapWithoutA},
if $|\V{P}|\le\pmax $, provided that $q$ is sufficiently
small.

Now, assume that $\sA_j$ holds true, for some $j\in\NN_0$. 
Then the bound \eqref{mu3} is available as well as
our conclusion $\gap_j ^{j+1}\ge{q}\,\rho_j/2$.
In combination they give
\begin{align*}
\mu_5^{j+1}
&\ge (1-\const\,\ee)\,q\,\rho_j/2
-\const\,\ee\,\rho_j
=q\,\rho_{j+1}\,(1-\const'\ee)
\ge q\,\rho_{j+1}/2,
\end{align*}
for small $\ee>0$.
In particular, there is some non-vanishing $\phi_{j+1}\in\dom(\HAM_{j+1})$ with 
$\HAM_{j+1}\,\phi_{j+1}=E_{j+1}\,\phi_{j+1}$.
Below we apply Kramer's degeneracy theorem
in order to show that $E_{j+1}$ is an at least four-fold degenerate
eigenvalue of $\HAM_{j+1}$. After that it also follows
that $\gap_{j+1}=\mu_5^{j+1}\ge q\rho_{j+1}/2$.

(2):
Since $\sigma_2=\sigma_2^*$ has
purely imaginary entries we have $\vartheta^2=-1$ and
\begin{equation}\label{eq:3}
\SPn{\vp}{\vartheta\,\psi}=-\SPn{\psi}{\vartheta\,\vp }\,,
\qquad \vp,\psi\in\sH_{j}\,.
\end{equation}
Obviously, $\vartheta$ leaves $\dom(\HAM_{j})=\dom(\Hf^{(j)})$
invariant and using $\{\sigma_k,\sigma_\ell\}=2\delta_{k\ell}\,\id_2$
and the fact that $\alpha_0$, $\alpha_1$, and $\alpha_3$ have real entries
it is straightforward to check that
$\vartheta\,\D_j=\D_j\,\vartheta$ on $\dom(\Hf^{(j)})$.
Since the fiber Dirac operator $\D_j$ is essentially
self-adjoint on $\dom(\Hf^{(j)})$ we deduce that
it commutes with $\vt$ on
$\vt\,\dom(\D_j)=\dom(\D_j)$ and
its resolvent satisfies $\vartheta\,\R_j(iy)=\R_j(-iy)\,\vartheta$
on $\HR_j$. Using, e.g., \eqref{for-absvT}
(and a substitution $y\to -y$)
we infer that $\vartheta\,|\D_j|=|\D_j|\,\vartheta$
on $\dom(\D_j)$. Altogether it follows that
$\vartheta\,\HAM_{j}=\HAM_{j}\,\vartheta$ on $\dom(\HAM_{j})$
and the same argument can be applied to $\HAM_j^{j+1}$.
In view of \eqref{block}
it is clear that $X_1$ and $X_2$ commute with the Hamiltonians.

So, let $\psi_j^{(1)}$ be as in the statement of Part~(2).
Then $\psi_j^{(3)}=\vartheta\,\psi_j^{(1)}\in\dom(\HAM_{j})$ and 
$\HAM_{j}\,\psi_j^{(3)}=E_{j}\,\psi_j^{(3)}$. 
Upon choosing $\vp:=\psi:=\psi_j^{(1)}$ in \eqref{eq:3}
we further see that $\psi_j^{(3)}\bot\psi_j^{(1)}$
and the assertion of Part~(2) becomes obvious.
\end{proof}


\section{The dressing transformation}\label{sec-DT}

\noindent
{\bf Notation for Sections \ref{sec-DT}--\ref{sec-CIRR} 
and Appendix~\ref{app-a(k)}.} 
From now on we reserve the symbols $D_k^j$, $H_k^j$,  and $R_k^j$
to denote the operators defined by 
\eqref{def-Dkj}, \eqref{def-HAMkj}, and \eqref{def-Rkj},
respectively, 
{\em for the special choices 
$(f_j,\V{g}_j)=(0,\V{G}_j^{j+1})$ and $(b_j,\V{c}_j)=(0,\V{0})$
in Assumption~\ref{ariel}}. Here
$\V{G}$ is the physical, ultra-violet cutoff
coupling function introduced in \eqref{def-AA} and 
$$
\V{G}_j(\V{k},\lambda):=\id_{\cA_j}(\V{k})\,\V{G}(\V{k},\lambda)\,,
\quad
\V{G}_j^{j+1}(\V{k},\lambda):=\id_{\cA_j^{j+1}}(\V{k})\,\V{G}(\V{k},\lambda)
\,,\quad j\in\NN_0\,.
$$
In fact, this choice is allowed in view of
\begin{equation}\label{Bounds0}
\|\omega^{\nu}\V{G}_j^{j+1}\|\le\const\,\rho_j^{1+\nu},
\quad\nu\in\{-\nf{1}{2},0,\nf{1}{2},1\}\,,\;j\in\NN_0\,.
\end{equation}
More explicitly, from now on we fix the notation
\begin{align*}
\D_k^j&:=\;\textrm{closure of}\;
\big\{\big(\valpha\cdot(\V{P}-\Pf^{(j)}+\ee\vp(\V{G}_k))
+\alpha_0\big)\!\!\upharpoonright_{\dom(\Hf^{(j)})}\big\}\,,
\\
\HAM_k^j&:=|\D_k^j|+\Hf^{(j)},\qquad\R_k^j(iy):=(\D_k^j-iy)^{-1}.
\end{align*}
The symbols $E_j$ and $\gap_j$ will always denote the ground state energy and the 
spectral gap, respectively, of $\HAM_j=\HAM_j^j$.

If we work with other choices of $(f_j,\V{g}_j)$ and $(b_j,\V{c}_j)$,
then we shall always indicate this by putting accents on top of the
symbols $D$, $H$, and $R$.

Pick some $\pmax>0$ and let $0<q<\qmax<1$ denote the
parameters appearing in Lemma~\ref{EnergyLemma}(3) and Theorem~\ref{Thm:Gap}.
Let also $j\in\NN_0$ 
and recall the definition \eqref{def-rhoj} of $\rho_j$.
We always assume that 
\begin{equation}\label{peter}
|\V{P}|\le\pmax\,.
\end{equation}
According to \eqref{EnergyDiff} and \eqref{peter}
we may choose $\ee>0$ so small that
\begin{equation}\label{diff-E}
|E_{j+1}-E_j|\le \const\,\ee\,\rho_j\le q\,\rho_j/8\,,
\end{equation}
and Theorem~\ref{Thm:Gap}(1) and \eqref{peter} imply
\begin{equation}\label{gap-j+1}
\gap_j\ge q\,\rho_j/2\,,\quad
\gap_{j+1}\ge q\,\rho_{j+1}/2= q\,\rho_j/4\,.
\end{equation}
We define spectral projections associated with the ground state energy,
\begin{align}\label{def-Pijj+1}
\Pi_j(\V{P})&:= \id_{\{E_j(\V{P})\}}(\HAM_j(\V{P}))\,,
\qquad
\Pi_j^{j+1}(\V{P}):= \id_{\{E_j(\V{P})\}}(\HAM_j^{j+1}(\V{P}))\,.
\end{align}
For later reference we note that Theorem~\ref{Thm:Gap}(3) 
and \eqref{peter} imply
\begin{equation}\label{for-Pijj+1}
\Pi_j^{j+1}(\V{P})=
\Pi_j(\V{P})\otimes P_{\Omega_j^{j+1}}
\,,\qquad P_{\Omega_j^{j+1}}:=
|\Omega_j^{j+1}\rangle\langle\Omega_j^{j+1}|\,,
\end{equation}
where ${\Omega_j^{j+1}}=(1,0,0,\ldots\:)$ is the vacuum vector in $\sF_j^{j+1}$.
Notice that $E_j$ is analytic on $\cB_\pmax$ 
by the discussion in Section~\ref{sec-P-dep}
and by Theorem~\ref{Thm:Gap}(1).

From now on $f_j$ will always denote the coherent factor
defined in \eqref{def-ff}.
By now we actually know that
$f_{j}$ is well-defined by \eqref{def-ff}, as \eqref{eq3-neu2} and \eqref{peter} imply
$|\nabla E_j|\le 1-\qmax<1$, for every small $\ee$.


\subsection{Definition and basic properties of dressing transformed operators}
\label{ssec-IAPT}

\noindent
Define $U_{j}$ by \eqref{def-U-intro},
i.e. $U_{j}(\V{P})=e^{-i\ee\vo(f_{j}(\V{P}))}$.
Obviously, $U_{j}$ acts non-trivially
only in the second tensor factor of $\HR_{j+1}=\HR_j\otimes\sF_j^{j+1}$. 
Standard arguments imply that $U_{j}$ maps $\dom([\Hf^{(j+1)}]^\nu)$ into 
itself, for every $\nu\ge1/2$; see Lemma~\ref{le-will} of the appendix.
Therefore, we may take strong
derivatives at $t=0$ of the following Weyl relations on
the domain of $[\Hf^{(j+1)}]^{\nf{1}{2}}$,
\begin{align*}
e^{-i\ee\vo(f_{j})}e^{it\vp(h)}e^{i\ee\vo(f_{j})}
&=e^{-it\ee\SPn{f_{j}}{h}}e^{it\vp(h)},
\\
e^{-i\ee\vo(f_{j})}e^{it\vo(h)}e^{i\ee\vo(f_{j})}
&=e^{it\vo(h)},
\end{align*} 
where $t\in\RR$ and 
$h\in L^2(\cA_j^{j+1}\times\ZZ_2)$ is real-valued.
Using also the relations \eqref{viona}
we thus obtain
\begin{align*}
U_{j}\,a^\sharp(h)\,U_{j}^*
&=a^\sharp(h)-\ee\,\SPn{h}{f_{j}}^\sharp\quad
\textrm{on}\;\dom([\Hf^{(j,j+1)}]^{\nf{1}{2}})\,.
\end{align*}
As a preparation for the succeeding subsections 
we next compute explicit representations of the
transformed operators
\begin{align}\label{def-checkH}
\check{\HAM}_{j+1}
&:=U_{j}\HAM_{j+1}U_{j}^*,
\quad\check{\D}_{j+1}:=U_{j}\D_{j+1} U_{j}^*,
\quad \chHf^{(j+1)}:=U_j\Hf^{(j+1)}U_j^*,
\end{align}
for all $j\in\NN_0$.
The action of $U_j$ can be expressed
by means of the quantities
\begin{equation*}
\V{F}_j^{j+1}:=f_{j}\,\V{k}+\V{G}_j^{j+1},\quad
\V{c}_j:=\SPn{f_{j}}{\V{F}_j^{j+1}+\V{G}_j^{j+1}}\,,
\quad b_j:=\SPn{f_{j}}{\omega\,f_{j}}\,,
\end{equation*}
which all depend on $\V{P}$ through $f_j\equiv f_j(\V{P})$.
They satisfy the bounds
\begin{align}\label{BoundsI}
|b_j|,|\V{c}_j|&\le\const\,\rho_j,
\qquad\big\|\omega^{\nu}\,\V{F}_j^{j+1}\big\|
\le\const\,\rho_j^{1+\nu},\quad \nu\in\{-\nf{1}{2},0,\nf{1}{2},1\}\,,
\end{align}
which follow from elementary computations
and $|\nabla E_j|\le1-\qmax<1$.
The constant $\const$ in \eqref{BoundsI} neither depends
on $j$, $\ee\in(0,\ee_0]$, nor on $\V{P}\in\ol{\cB}_\pmax$.
(It does depend on $\ee_0$ and $\pmax$.)
In the proof of Lemma~\ref{neu1}
the following identity leads to an absolutely crucial
cancellation of terms which, by counting powers of $\rho_j$,
seem to destroy the iterative analysis at first sight,
\begin{align}\label{rebecca100}
\V{F}_j^{j+1}\cdot\nabla E_j=\frac{1}{(2\pi)^{\nf{3}{2}}}
\frac{\id_{\cA_j^{j+1}}}{|\V{k}|^{\nf{1}{2}}}
\frac{\veps\cdot\nabla E_j}{1-\mr{\V{k}}\cdot\nabla E_j}
=\omega\,f_{j}\,,\quad \mr{\V{k}}:= |\V{k}|^{-1}\,\V{k}\,. 
\end{align}
Writing 
$\Hf^{(j+1)}\psi=\Hf^{(j)}\psi
+\sum_{\ell}\ad(\omega^{\nf{1}{2}}e_\ell)\,a(\omega^{\nf{1}{2}}e_\ell)\,\psi$, 
for some orthonormal basis, $\{e_\ell\}$, of $\HP_j^{j+1}$,
and for $\psi\in\dom(\Hf^{(j+1)})$,
we deduce that
\begin{align}\label{Hf-transformed}
\chHf^{(j+1)}
&=\Hf^{(j+1)}-\ee\,\vp(\omega\,f_{j})+
\ee^2b_j\qquad\quad\;
\textrm{on}\;\dom(\Hf^{(j+1)})\,,
\\\label{D-transformed}
\check{\D}_{j+1}
&=\D_j^{j+1}+\valpha\cdot\{\ee\,\vp(\V{F}_j^{j+1})
-\ee^2\V{c}_j\}\quad
\textrm{on}\;\dom(\Hf^{(j+1)})\,.
\end{align}

\begin{lem}
For all $\pmax>0$, we find $\ee_0>0$ such that, for all
$\V{P}\in\ol{\cB}_\pmax$, $\ee\in(0,\ee_0]$, and $j\in\NN_0$,
the operator $\check{\HAM}_{j+1}$ is self-adjoint
on $\dom(\Hf^{(j+1)})$ and essentially self-adjoint on $\sC_{j+1}$.
Moreover, $\check{\HAM}_{j+1}=|\check{\D}_{j+1}|+\Hf^{(j+1)}-\ee\,\vp(\omega\,f_{j})+
\ee^2b_j$.
\end{lem}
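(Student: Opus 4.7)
The plan has three strands, which I would carry out as follows.

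\smallskip

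\noindent\emph{Self-adjointness on $\dom(\Hf^{(j+1)})$ and the explicit formula.}
First I would invoke Lemma~\ref{le-lisa}(i) (for the choice fixed in the Notational warning) to conclude that $\HAM_{j+1}=|\D_{j+1}|+\Hf^{(j+1)}$ is self-adjoint on $\dom(\Hf^{(j+1)})$. Since $U_j=e^{-i\ee\vo(f_j)}$ is a Weyl operator attached to a form factor $f_j$ with $\|\omega^{\pm1/2}f_j\|<\infty$, Lemma~\ref{le-will} of the appendix yields that $U_j$ restricts to a bijection of $\dom([\Hf^{(j+1)}]^\nu)$ onto itself, for $\nu\ge1/2$; in particular $U_j\,\dom(\Hf^{(j+1)})=\dom(\Hf^{(j+1)})$. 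Hence $\check{\HAM}_{j+1}=U_j\HAM_{j+1}U_j^*$ is self-adjoint on $\dom(\Hf^{(j+1)})$. On this common invariant domain I may split
\[
\check{\HAM}_{j+1}=U_j\,|\D_{j+1}|\,U_j^*+U_j\,\Hf^{(j+1)}\,U_j^*,
\]
the second summand being equal to $\Hf^{(j+1)}-\ee\,\vp(\omega f_j)+\ee^2 b_j$ by \eqref{Hf-transformed}. For the first summand I observe that $\check{\D}_{j+1}:=U_j\D_{j+1}U_j^*$ is unitarily equivalent to $\D_{j+1}$, hence self-adjoint with $\sigma(\check{\D}_{j+1})\subset(-\infty,-1]\cup[1,\infty)$; by the standard unitary intertwining of the spectral calculus one obtains $U_j|\D_{j+1}|U_j^*=|\check{\D}_{j+1}|$. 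Combining this with \eqref{D-transformed} yields the claimed representation.

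\smallskip

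\noindent\emph{Essential self-adjointness on $\sC_{j+1}$.}
This is the one point where unitarity alone is not enough, because $U_j$ does not preserve $\sC_{j+1}$ (the Weyl operator populates arbitrarily many modes). My strategy is instead to recognise $\check{\HAM}_{j+1}$ as a Hamiltonian of the general form covered by Assumption~\ref{ariel}, and then apply Lemma~\ref{le-lisa}(i) directly. Concretely, keeping $(f_\ell,\V{g}_\ell)=(0,\V{G}_\ell^{\ell+1})$, $(b_\ell,\V{c}_\ell)=(0,\V{0})$ for $0\le \ell<j$, I would replace the data at scale $\ell=j$ by
\[
(\tilde f_j,\tilde{\V{g}}_j):=(-\omega f_j,\V{F}_j^{j+1}),\qquad
(\tilde b_j,\tilde{\V{c}}_j):=(b_j,\V{c}_j).
\]
The requisite scaling estimates \eqref{ariel2001} follow from \eqref{BoundsI} together with $\|\omega^{1/2}f_j\|\le\const\rho_j^{3/2}$ (which uses $|\nabla E_j|\le1-\qmax<1$ and that $f_j$ is supported in $\cA_j^{j+1}$, hence is at most $\propto|\V{k}|^{-1/2}$ there). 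The positivity condition $\Hf^{(k,j+1)}+\ee\tilde\Phi_k^{j+1}\ge0$ is obtained, after possibly shrinking $\ee_0$, from a standard relative $\Hf^{(j,j+1)}$-bound on $\vp(\omega f_j)$; alternatively, I can simply shift the $b_j$-term by a constant, as this assumption is imposed ``for convenience'' only. Lemma~\ref{le-lisa}(i) then supplies essential self-adjointness on $\sC_{j+1}$ for the operator defined by the explicit formula.

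\smallskip

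\noindent\emph{Reconciling the two descriptions.}
Finally I have to identify the self-adjoint operator produced in the first strand with the one produced in the second. Both have $\dom(\Hf^{(j+1)})$ as operator domain, and on this domain they agree by construction (via \eqref{D-transformed} and \eqref{Hf-transformed}); two self-adjoint operators with a common core on which they coincide must be equal. The main technical obstacle, as I see it, is not any single estimate but the bookkeeping in this last step: one must be confident that the formal manipulations defining $\check{\D}_{j+1}$ from $\D_{j+1}$ via the Weyl relations are valid on all of $\dom(\Hf^{(j+1)})$ (not merely on $\sC_{j+1}$), so that the equality $U_j\D_{j+1}U_j^*=\D_j^{j+1}+\valpha\cdot\{\ee\vp(\V{F}_j^{j+1})-\ee^2\V{c}_j\}$ really holds as an identity between self-adjoint operators on that domain. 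This is ensured by invariance of $\dom(\Hf^{(j+1)})$ under $U_j$ (first strand) and the form bound $\|\valpha\cdot\vp(\V{F}_j^{j+1})\,(\Hf^{(j,j+1)}+\rho_j)^{-1/2}\|<\infty$ from \eqref{akj-neu}, which allows the strong limits defining the dressed operators to be taken termwise on $\dom(\Hf^{(j+1)})$.
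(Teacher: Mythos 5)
Your proposal follows the same route as the paper: recognise $\check{\HAM}_{j+1}$ as an instance of the general family covered by Assumption~\ref{ariel} (using \eqref{Hf-transformed} and \eqref{D-transformed} to read off the data at scale $j$) and invoke Lemma~\ref{le-lisa}(i); the parenthetical observation that self-adjointness on $\dom(\Hf^{(j+1)})$ follows from $\check{\HAM}_{j+1}=U_j\HAM_{j+1}U_j^*$ together with Lemma~\ref{le-will} is precisely what the paper records. Your choice $(\tilde f_j,\tilde{\V g}_j)=(-\omega f_j,\V F_j^{j+1})$, $(\tilde b_j,\tilde{\V c}_j)=(b_j,\V c_j)$ is the correct one; the third ``reconciliation'' strand is the kind of bookkeeping the paper takes for granted but is formally sound.

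One computation in the second strand is off, though harmlessly so. You claim $f_j\propto|\V k|^{-1/2}$ and hence $\|\omega^{1/2}f_j\|\le\const\,\rho_j^{3/2}$; in fact $\V G\propto|\V k|^{-1/2}$ and the denominator $|\V k|-\V k\cdot\nabla E_j=|\V k|(1-\mathring{\V k}\cdot\nabla E_j)\ge\qmax\,|\V k|$ contributes another factor of $|\V k|^{-1}$, so $f_j\propto|\V k|^{-3/2}$ and the true bound is $\|\omega^{1/2}f_j\|\le\const\,\rho_j^{1/2}$. That is exactly the exponent required by \eqref{ariel2001} for $\tilde f_j=-\omega f_j$, and it can be read off without any integration by combining the cancellation identity \eqref{rebecca100}, $\omega f_j=\V F_j^{j+1}\cdot\nabla E_j$, with the $\nu=-\nf{1}{2}$ case of \eqref{BoundsI}. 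So the step you were trying to fill is correct but your derivation of it is not. As for the positivity hypothesis in Assumption~\ref{ariel}: it is actually automatic here without shrinking $\ee_0$, since $b_j=\SPn{f_j}{\omega f_j}\ge0$ and $\Hf^{(j,j+1)}-\ee\,\vp(\omega f_j)+\tfrac{\ee^2}{2}b_j=e^{-i\ee\vo(f_j)}\Hf^{(j,j+1)}e^{i\ee\vo(f_j)}\ge0$, so the extra $\ee^2 b_j/2$ only helps — worth noting rather than leaving as a hand-waved relative bound.
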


\begin{proof}
We may apply Lemma~\ref{le-lisa},
if $f_j$, $\V{g}_j$, $b_j$, and $\V{c}_j$ are chosen appropriately in Assumption~\ref{ariel},
which is then satisfied in view of 
\eqref{Bounds0} and \eqref{BoundsI}.
(Self-adjointness on $\dom(\Hf^{(j+1)})$ follows also
from \eqref{def-checkH} and Lemma~\ref{le-will}.)
\end{proof}

\smallskip

\noindent
To study the multiplicity of the ground state eigenvalue
of the renormalized Hamiltonians we also define the 
following dressing transforms,
\begin{equation}\label{willi1}
W_k(\V{P}):= \prod_{\ell=0}^{k-1}U_\ell(\V{P})\,,
\qquad k\in\NN\,.
\end{equation}
(While an analog of $W_k$ used in \cite{CFP2009,Pizzo2003} is defined only
in terms of $E_{k-1}$, our choice of $W_k$ involves the functions
$E_0,\ldots,E_{k-1}$. We think that this simplifies the analysis
by avoiding the discussion of the ``intermediate Hamiltonians'' 
appearing in \cite{CFP2009,Pizzo2003}.)
Without danger of confusion we shall consider $W_k$ as a unitary
operator in the various spaces $\HR_j$ with $j\in\NN\cup\{\infty\}$,
$j\ge k$, by identifying $W_k\equiv W_k\otimes\id$.
We abbreviate
\begin{align*}
g_k(\V{P})&:=\sum_{\ell=0}^{k-1} f_{\ell}(\V{P})\,,
&\V{F}_k(\V{P}):=\sum_{\ell=0}^{k-1} \V{F}_{\ell}^{\ell+1}(\V{P})\,,\;\qquad
\\
\V{C}_k(\V{P})&:=\SPn{g_k(\V{P})}{\V{G}_k+\V{F}_k(\V{P})}\,,
&B_k(\V{P}):=\SPn{g_k(\V{P})}{\omega\,g_k(\V{P})}\,,
\end{align*}
where $k\in\NN$. 
In order to apply the results of
Section~\ref{sec-gap} we actually have to consider the momenta appearing
in the transformation $W_k$ and in the Hamiltonian as
separate parameters. 
For the functions introduced in Assumption~\ref{ariel} do not
depend on the total momentum.
So, let $k,j\in\NN_0\cup\{\infty\}$, $k\le j$, $|\V{P}|\le\pmax$,
and $\V{Q}\in\RR^3$.
Then we define
\begin{equation}
\wt{\D}_k^j(\V{Q},\V{P})
:=\valpha\cdot\{\V{Q}-\Pf^{(j)}+\ee\,\vp(\V{F}_k(\V{P}))-\ee^2\V{C}_k(\V{P})\}
+\alpha_0\,,
\end{equation}
a priori as an essentially self-adjoint operator on $\sC_j$,
and then by taking its closure; see Lemma~\ref{le-esa-wD}. Furthermore,
we define, on the domain $\dom(\Hf^{(j)})$,
\begin{align}\label{def-wtHinfty}
\wt{\HAM}_k^j(\V{Q},\V{P})&:=
|\wt{\D}_k^j(\V{Q},\V{P})|+\Hf^{(j)}-\ee\,\vp(\omega\,g_k(\V{P}))+\ee^2
B_k(\V{P})\,.
\end{align}
As usual we write
$\wt{\HAM}_j:=\wt{\HAM}_j^j$, $j\in\NN_0\cup\{\infty\}$.
By definition and \eqref{Hf-transformed}\&\eqref{D-transformed},
\begin{equation}\label{willi2}
\wt{\HAM}_k^j(\V{Q},\V{P}):=W_k(\V{P})\,\HAM_k^j(\V{Q})\,W_k^*(\V{P})
\,,\quad k\le j\,,\;k<\infty\,,
\end{equation}
and analogously for $\wt{\D}_k^j(\V{Q},\V{P})$, $k<\infty$.
Notice that the unitary transforms  $W_k$ do not have a limit
since $\sum_0^\infty f_j\notin L^2(\RR^3\times\ZZ_2)$, so that 
\eqref{willi1} does not make sense in the case $k=\infty$ and there is 
no analog of \eqref{willi2} with $k=\infty$.
Finally, we abbreviate
\begin{equation}\label{willi22}
\wt{\HAM}_k^j(\V{P}):=\wt{\HAM}_k^j(\V{P},\V{P})\,,\quad|\V{P}|\le\pmax\,.
\end{equation}

\begin{lem}\label{le-wtHAM}
Let $\pmax>0$ be arbitrary and $\ee_0>0$ be sufficiently small.
Then the following holds, for all $\V{P}\in\ol{\cB}_\pmax$, $\V{Q}\in\RR^3$,
$\ee\in(0,\ee_0]$, and $j\in\NN_0\cup\{\infty\}$:

\smallskip

\noindent
(1) 
The operator $\wt{H}_j^\infty(\V{Q},\V{P})$ is self-adjoint on $\dom(\Hf)$
and essentially self-adjoint on $\sC_\infty$.
Moreover, $\wt{H}_j^\infty(\V{Q},\V{P})\to\wt{H}_\infty(\V{Q},\V{P})$, 
$j\to\infty$,
in the norm resolvent sense.

\smallskip

\noindent
(2) 
$\inf\Spec[\wt{H}_j^\infty(\V{Q},\V{P})]=E_j(\V{Q})$.

\smallskip

\noindent
(3) We find $\const,\const'>0$,
depending only on $\pmax,\ee_0$, such that
\begin{align}\nonumber
\pm(\wt{\HAM}_{\infty}(\V{P})-\wt{\HAM}_j^{\infty}(\V{P}))
&\le\const\,\ee\,
\big(\wt{\HAM}_{\infty}(\V{P})-E_j(\V{P})+\rho_j\big)
\\\label{adam33}
&\le\const'\ee\,
\big(\wt{\HAM}_{\infty}(\V{P})-E_\infty(\V{P})+\rho_j\big)\,.
\end{align}
\end{lem}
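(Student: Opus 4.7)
The plan is to recognize that the operators under consideration fit exactly into the framework of Assumption~\ref{ariel}, with the $\V P$-dependent scale-wise coefficients
\begin{equation*}
 f_\ell:=-\omega\,f_\ell(\V P),\quad\V g_\ell:=\V F_\ell^{\ell+1}(\V P),\quad
 b_\ell:=\SPn{f_\ell(\V P)}{\omega f_\ell(\V P)},\quad
 \V c_\ell:=\SPn{f_\ell(\V P)}{\V G_\ell^{\ell+1}+\V F_\ell^{\ell+1}(\V P)}.
\end{equation*}
Using the pairwise disjointness of the annuli $\cA_\ell^{\ell+1}$ in which $f_\ell(\V P)$, $\V F_\ell^{\ell+1}(\V P)$, and $\V G_\ell^{\ell+1}$ are supported, one verifies by direct calculation that the Assumption~\ref{ariel}--based construction of Section~\ref{ssec-4spinors} with these coefficients yields precisely $\wt H_k^\infty(\V Q,\V P)$ (respectively $\wt H_\infty(\V Q,\V P)$ when $k=\infty$); the key telescoping identities are $\V F_k=\sum_{\ell<k}\V F_\ell^{\ell+1}$, $g_k=\sum_{\ell<k}f_\ell(\V P)$, $\V C_k=\sum_{\ell<k}\V c_\ell$, and $B_k=\sum_{\ell<k}b_\ell$. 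The smallness bounds~\eqref{ariel2001} follow from~\eqref{Bounds0} and~\eqref{BoundsI}, and the nonnegativity $\Hf^{(k,j)}+\ee\Phi_k^j\ge 0$ is automatic because the precise choice of $b_\ell$ makes $\Hf^{(k,j)}-\ee\vp(\omega g_k^j(\V P))+\ee^2 B_k^j(\V P)$ unitarily equivalent to $\Hf^{(k,j)}$ through a Weyl transformation.

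With this identification, part~(1) is Lemma~\ref{le-lisa}(i)\,\&\,(iii) applied to the dressed family. Part~(2) follows, for $\V Q\in\ol\cB_\pmax$, from Lemma~\ref{le-gap1}(2) applied in the dressed framework, noting that its hypothesis (that $E_k(\V Q)$ be a fourfold-degenerate eigenvalue of $\wt H_k(\V Q,\V P)$) is fulfilled by the unitary equivalence $\wt H_k(\V Q,\V P)=W_k(\V P)H_k(\V Q)W_k^*(\V P)$ combined with Theorem~\ref{Thm:Gap}(1); the extension to $\V Q\in\RR^3$ is obtained, for finite $j$, by the direct-integral/fiber argument used in the proof of Lemma~\ref{le-gap1}(2) (formula~\eqref{li2000} combined with~\eqref{eq3-neu2}), and for $j=\infty$ by the norm-resolvent convergence established in~(1).

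For part~(3), I invoke Lemma~\ref{le-adam} in the dressed framework. Its prerequisites (the hypothesis of Lemma~\ref{le-gap1}) hold by the same unitary-equivalence argument as above. Estimate~\eqref{adam3} of Lemma~\ref{le-adam}, with the roles $k\leftrightarrow j$ and $j\leftrightarrow\infty$, directly yields the first inequality in~\eqref{adam33}. Its companion~\eqref{EnergyDiff} gives $|E_j(\V P)-E_\infty(\V P)|\le\const\,\ee\,\rho_j$; substituting $E_\infty$ for $E_j$ on the right-hand side of the first inequality and absorbing the resulting $\const\,\ee\,\rho_j$ into the constant then yields the second inequality.

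The main obstacle is purely notational: the careful verification of the coefficient identification, requiring one to expand $\V F_k$, $\V C_k$, $B_k$, $g_k$ as scale-wise sums and to exploit the disjointness of annulus supports to collapse double sums to diagonal ones. Once this bookkeeping is done, the proof amounts to a transparent rereading of three already-proven lemmata in a formally identical context.
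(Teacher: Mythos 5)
Your proposal is correct and matches the paper's proof essentially step for step: identify the dressed family as an instance of Assumption~\ref{ariel} (with your Weyl-conjugation observation nicely accounting for the required positivity $\Hf^{(k,j)}+\ee\,\Phi_k^j\ge0$), read off~(1) from Lemma~\ref{le-lisa}(i) and (iii), obtain~(2) from the unitary equivalence \eqref{willi2} together with the norm-resolvent convergence from~(1), and get~(3) from \eqref{adam3} and \eqref{EnergyDiff} under the replacements $k\to j$, $j\to\infty$. One small imprecision, which the paper's one-line proof of~(2) also glosses over: for $\V{Q}\in\RR^3\setminus\ol{\cB}_\pmax$ the bound \eqref{eq3-neu2} you cite is not available (it is proved only for base points in $\ol{\cB}_\pmax$), so the identity $\inf\Spec[H_j^\infty(\V{Q})]=E_j(\V{Q})$ needs a separate argument combining the direct-integral formula \eqref{li2000} with a Lipschitz/subadditivity bound on $E_j$ valid at all momenta rather than the $\qmax$-improved version; in the sequel only $\V{Q}=\V{P}\in\ol{\cB}_\pmax$ is actually used, so this does not affect the paper downstream.
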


\begin{proof}
Upon choosing 
$f_j$, $\V{g}_j$, $b_j$, and $\V{c}_j$ appropriately 
we are in the situation of Assumption~\ref{ariel}.
In particular,
(1) follows directly from Lemma~\ref{le-lisa}.

(2): If $j\in\NN_0$, then $\wt{H}_j^\infty$ is unitarily
equivalent to ${H}_j^\infty$ (by \eqref{willi2}) and the statement is clear.
Since
$\wt{H}_k^\infty\to\wt{H}_\infty$, $k\to\infty$,
in the norm resolvent sense we have
$\inf\Spec[\wt{H}_k^\infty]\to\inf\Spec[\wt{H}_\infty]$.
This proves the assertion for $j=\infty$, since we also know that 
$E_k\to E_\infty$.

(3): The first inequality in \eqref{adam33} is a special
case of \eqref{adam3}. The second one follows by applying
\eqref{EnergyDiff}.
\end{proof}

\begin{rem}
We close this subsection by explaining why the following
bound to be used later on
is actually a special case of \eqref{debora},
 \begin{align}\label{debora2}
\big\|\big(U_j\partial_{\V{h}}^\ell\HAM_{j+1}U_j^*
-\partial_{\V{h}}^\ell\HAM_j^{j+1}\big)\,\psi\big\| 
&\le\const\,\ee\,\rho_j^{\nf{1}{2}}\,|\V{h}|^\ell\|(\Hf^{(j,j+1)}+\rho_j)^{\nf{1}{2}}\psi\|\,,
\end{align}
on $\ol{\cB}_\pmax$ and 
for all $\ell=1,2$, 
$\V{h}\in\RR^3$, $\ee\in(0,\ee_0]$, 
$j\in\NN_0$, $\psi\in \dom(\Hf^{(j+1)})$,
and some $\const,\ee_0>0$.
In fact, setting 
\begin{equation}\label{def-checkR}
\check{\R}_{j+1}(iy):=U_j\,\R_{j+1}(iy)\,U_j^*=(\check{D}_{j+1}-iy)^{-1},
\quad y\in\RR\,,\;j\in\NN_0\,,
\end{equation}
and using \eqref{primus} and \eqref{secundus} we observe that
$$
U_j\partial_{\V{h}}^\ell\HAM_{j+1}U_j^*\,\psi
=(-1)^{\ell+1}\lim_{\tau\to\infty}\int_{-\tau}^\tau
\check{\R}_{j+1}(iy)\,\{\valpha\cdot\V{h}\,\check{\R}_{j+1}(iy)\}^\ell\,\psi
\,\frac{y\,dy}{i\pi}\,.
$$
On the RHS we have the derivatives of $\check{H}_{j+1}$ w.r.t.
the total momentum {\em when the total momentum $\V{P}$ in the transformation
$U_j(\V{P})$ is kept fixed}. Thus, we are precisely in the situation
of Remark~\ref{rem-debora}.
\hfill$\Diamond$
\end{rem}


\subsection{Comparison of transformed and non-transformed operators}
\label{ssec-comp}

\noindent
In the following important lemma we compare the Hamiltonians
$\check{\HAM}_{j+1}$ and $\HAM_j^{j+1}$ on the
range of the projection $\Pi_j^{j+1}$
given by \eqref{def-Pijj+1} and \eqref{for-Pijj+1}. 
It will be applied in the proof
of Lemma~\ref{le-L} below.
Thanks to \eqref{rebecca100}
the error term $A_j$ in the next lemma is
of order $\ee\,\rho^2_j$. If it were of order
$\ee\,\rho_j$ only, then the proof of Lemma~\ref{le-L} did not work.
The term $\ee^2\vk_j$ in the next lemma, with 
$$
\vk_j:= \V{c}_j\cdot\nabla E_j-\SPn{f_j}{\omega\,f_j}
=\SPn{f_j}{\nabla E_j\cdot\V{G}_j^{j+1}}\ge0\,,\quad
\vk_j=\bigO(\rho_j)\,,
$$
does not cause any harm to Lemma~\ref{le-L} since it is non-negative.
(Even if it were negative one could exploit that it is of second order in $\ee$.)
Besides \eqref{def-checkR} we use the following notation
for resolvents of the Dirac operator,
$$
\R_j(iy)=(\D_j-iy)^{-1},\qquad\R_j^{j+1}(iy)=(\D_j^{j+1}-iy)^{-1}.
$$


\begin{lem}\label{neu1}
For all $\pmax>0$, there exist $\const,\ee_0>0$ such that,
for all $\ee\in(0,\ee_0]$ and $j\in\NN_0$, we find
some rank four operator $A_j\in\cB(\HR_{j+1})$ with 
$\|A_j\|\le \const\, \ee\,\rho^2_j$ and
\begin{align*}
\big(\check{\HAM}_{j+1}&-\HAM_j^{j+1}\!+\ee^2\vk_j\big)\,\Pi_j^{j+1}
=\ee(\Pi_j^\perp \nabla H_j \Pi_j)\otimes
\{\ad(\V{F}_j^{j+1})-\ee\,\V{c}_j\}P_{\Omega_j^{j+1}}\!+A_j.
\end{align*}
\end{lem}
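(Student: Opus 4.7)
The plan is to rewrite the LHS so that its action on $\Ran(\Pi_j^{j+1})$ becomes explicit, then use the algebraic identity \eqref{rebecca100} together with Hellmann--Feynman (\eqref{for-1Deriv}) to extract the stated main term. By \eqref{Hf-transformed} and the definition $\vk_j=\V{c}_j\cdot\nabla E_j-b_j$, the LHS equals
\begin{align*}
\big\{\big(|\check{\D}_{j+1}|-|\D_j^{j+1}|\big)-\ee\,\vp(\omega f_j)+\ee^2(\V{c}_j\cdot\nabla E_j)\big\}\,\Pi_j^{j+1}.
\end{align*}
Using the integral representation \eqref{for-absvT} combined with the second resolvent identity applied to the perturbation $\valpha\cdot\V{V}_j$, where $\V{V}_j:=\ee\,\vp(\V{F}_j^{j+1})-\ee^2\V{c}_j$, I split $|\check{\D}_{j+1}|-|\D_j^{j+1}|=T_j+\cE_j$ with
\begin{align*}
T_j:=-\int_\RR iy\,\R_j^{j+1}(iy)\,(\valpha\cdot\V{V}_j)\,\R_j^{j+1}(iy)\,\frac{dy}{\pi},
\end{align*}
and $\cE_j$ quadratic in $\V{V}_j$. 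Using $\|\V{V}_j\Omega_j^{j+1}\|\le\const\,\ee\,\rho_j^{\nf{3}{2}}$ and the relative bounds in $\Fock{j}{j+1}$ coming from Lemma~\ref{le-lisa}, one shows $\|\cE_j\Pi_j^{j+1}\|\le\const\,\ee^2\rho_j^2$.

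Next I evaluate $T_j\Pi_j^{j+1}$ on $\phi\otimes\Omega_j^{j+1}$ with $\phi\in\Ran(\Pi_j)$. The decisive observation is that $\D_j^{j+1}$ restricted to vectors of the form $\psi\otimes\Omega_j^{j+1}$ reduces to $\D_j$, so $\R_j^{j+1}(iy)(\phi\otimes\Omega_j^{j+1})=\R_j(iy)\phi\otimes\Omega_j^{j+1}$. Acting with $\valpha\cdot\V{V}_j$ on $\Omega_j^{j+1}$ produces a zero-photon contribution (from $-\ee^2\V{c}_j$) and a one-photon contribution (from $\ee\,\vp(\V{F}_j^{j+1})$). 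The zero-photon piece, after the outer $\R_j^{j+1}(iy)$ again reduces to $\R_j(iy)$ and the $y$-integral is performed via \eqref{primus}, becomes $-\ee^2(\V{c}_j\cdot\nabla\HAM_j)\phi\otimes\Omega_j^{j+1}$. For the one-photon piece the outer $\R_j^{j+1}(iy)$ acts on the sector with a single photon of momentum $\V{k}$ as $\R_j(\V{P}-\V{k},iy)$, and the identity
\begin{align*}
\R_j(\V{P}-\V{k},iy)=\R_j(\V{P},iy)+\R_j(\V{P}-\V{k},iy)(\valpha\cdot\V{k})\R_j(\V{P},iy)
\end{align*}
splits this contribution into the main term $\ee\sum_l(\partial_l\HAM_j)\phi\otimes\ad(F_j^{j+1,l})\Omega_j^{j+1}$, obtained once more by \eqref{primus}, plus an $\bigO(|\V{k}|)$-remainder.

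The term $-\ee\,\vp(\omega f_j)\,\Pi_j^{j+1}$ yields a one-photon vector whose wavefunction is, by the key identity \eqref{rebecca100} $\omega f_j=\V{F}_j^{j+1}\cdot\nabla E_j$, equal to $-\ee\sum_l F_j^{j+1,l}(\partial_l E_j)\phi$. Added to the one-photon main piece above, the $\Pi_j$-diagonal parts cancel via \eqref{for-1Deriv}, which gives $\Pi_j(\partial_l\HAM_j)\Pi_j=(\partial_l E_j)\Pi_j$, leaving $\ee\sum_l(\Pi_j^\perp\partial_l\HAM_j)\phi\otimes\ad(F_j^{j+1,l})\Omega_j^{j+1}$. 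Analogously, in the zero-photon sector, $-\ee^2(\V{c}_j\cdot\nabla\HAM_j)\phi$ and $+\ee^2(\V{c}_j\cdot\nabla E_j)\phi$ combine by the same Hellmann--Feynman argument to $-\ee^2(\V{c}_j\cdot\Pi_j^\perp\nabla\HAM_j\Pi_j)\phi\otimes\Omega_j^{j+1}$. Assembling these pieces reproduces the stated main term. The error $A_j$ then consists of $\cE_j\Pi_j^{j+1}$ and the one-photon $\bigO(|\V{k}|)$-remainder; using Cauchy--Schwarz in $\HR_j\otimes[\Fock{j}{j+1}]^{(1)}$ the latter is bounded in norm by $\const\,\ee\,\|\omega\V{F}_j^{j+1}\|\le\const'\ee\,\rho_j^2$ thanks to \eqref{BoundsI}. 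Both pieces factor through $\Pi_j^{j+1}$, hence $A_j$ has rank at most four.

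The principal obstacle is securing the improved bound $\|A_j\|\le\const\,\ee\,\rho_j^2$. Without the algebraic cancellation furnished by \eqref{rebecca100}, the $\Pi_j$-diagonal contribution to $T_j\Pi_j^{j+1}$ would survive at the worse size $\ee\,\|\omega f_j\|\sim\ee\rho_j$, which would be insufficient for the iterative scheme of the subsequent sections. The sharpening from $\ee\rho_j$ to $\ee\rho_j^2$ rests simultaneously on this cancellation and on the weighted bound $\|\omega\V{F}_j^{j+1}\|\le\const\,\rho_j^2$ provided by \eqref{BoundsI}.
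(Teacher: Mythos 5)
Your proof follows essentially the same route as the paper: expand $|\check{\D}_{j+1}|-|\D_j^{j+1}|$ through the integral representation \eqref{for-absvT} and the second resolvent identity, reduce the $(j+1)$-scale resolvents acting on $\vt\otimes\Omega_j^{j+1}$ to $\D_j$-resolvents, pick out the leading $\nabla\HAM_j$-term via \eqref{primus}, and use \eqref{rebecca100} together with \eqref{for-1Deriv} to cancel the $\Pi_j$-diagonal contributions and land on $\ee(\Pi_j^\perp\nabla H_j\Pi_j)\otimes\{\ad(\V{F}_j^{j+1})-\ee\V{c}_j\}P_{\Omega_j^{j+1}}$, with the error then bounded by $\|\omega\V{F}_j^{j+1}\|=\bigO(\rho_j^2)$ from \eqref{BoundsI}. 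One small slip: the intermediate claim $\|\V{V}_j\Omega_j^{j+1}\|\le\const\,\ee\,\rho_j^{\nf{3}{2}}$ should read $\const\,\ee\,\rho_j$ (since $\|\V{F}_j^{j+1}\|\le\const\,\rho_j$); the needed conclusion $\|\cE_j\Pi_j^{j+1}\|\le\const\,\ee^2\rho_j^2$ is still correct, as each of the two factors of $\valpha\cdot\V{V}_j$ contributes a factor $\ee\rho_j$ on the at-most-one-photon sectors $\sX_0\oplus\sX_1$.
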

%
%
\begin{proof} 
Let $\vt\in\dom(\Hf^{(j)})$.
Abbreviating 
$\Delta_j(y):=\check{\R}_{j+1}(\imath y)-\R_j^{j+1}(\imath y)$
we have
\begin{align}
I&:=\{|\check{\D}_{j+1}|-|\D_j^{j+1}|\}\,\vt\otimes \Omega_j^{j+1}
=\label{DiffAbsD}
\lim_{\tau\to \infty} 
\int_{-\tau}^\tau \!\Delta_j(y)(\vt\otimes\Omega_j^{j+1})\, 
\frac{\imath y\,dy}{\pi}\,.
\end{align}
A brief computation using \eqref{D-transformed} 
(and taking Lemma~\ref{le-dominique} into account) yields
\begin{align}\label{lucia0a}
(\check{D}_{j+1}-iy)\,\Delta_j(y)
&=\valpha\cdot\{\ee^2\V{c}_j-\ee\,\vp(\V{F}_j^{j+1})\}\,
\R_j^{j+1}(iy)\;\;\textrm{on}\;\dom(\Hf^{(j+1)}),
\\\label{lucia0b}
\Delta_j(y)(\vt\otimes\Omega_j^{j+1})
&=\check{\R}_{j+1}(\imath y)\,
\valpha\,(\R_j(iy)\,\vt)\otimes\{\ee^2\V{c}_j\,\Omega_j^{j+1}-\ee\,
|\V{F}_j^{j+1}\SR\}.
\end{align}
We recall that $|\V{c}_j|,\|\V{F}_j^{j+1}\|\le\const\,\rho_j$, 
$\|\R_j^k(iy)\|\le(1+y^2)^{-\nf{1}{2}}$, and observe that
the resolvent $\R_j^{j+1}(iy)$ leaves the subspaces
$\sX_n:=\HR_j\otimes[\sF_j^{j+1}]^{(n)}$, $n=0,1,\ldots\,$, invariant.
Moreover, the restriction of the field operator
$\vp(\V{F}_j^{j+1})$ is bounded as a map from
$\sX_0\oplus\sX_1$ to $\sX_0\oplus\sX_1\oplus\sX_2$,
with norm $\le\const\,\|\V{F}_j^{j+1}\|\le\const'\rho_j$.
Taking all these remarks into account we infer from \eqref{lucia0a} that
\begin{align}\label{lucia1}
\|\Delta_j(y)\!\!\upharpoonright_{\sX_0\oplus\sX_1}\!\!\|
\le\const\,\ee\,\rho_j\,(1+y^2)^{-1}\,.
\end{align}
Rearranging \eqref{lucia0b} using 
$\R_j^{j+1}(\psi\otimes\Omega_j^{j+1})=(\R_j\psi)\otimes\Omega_j^{j+1}$
we further obtain
\begin{align}\label{lucia2}
\Delta_j(y)(\vt\otimes\Omega_j^{j+1})
&=\big(\R_j(iy)\,\valpha\,\R_j(iy)\,\vt\big)
\otimes\{\ee^2\V{c}_j\,\Omega_j^{j+1}\}
-V+W\,,
\\\nonumber
V&:=\R_j^{j+1}(iy)\,\valpha\,(\R_j(iy)\,\vt)\otimes|\ee\,\V{F}_j^{j+1}\SR\,,
\\\nonumber
W&:=
\Delta_j(y)\,\valpha\,(\R_j(iy)\,\vt)\otimes\{\ee^2\V{c}_j\,\Omega_j^{j+1}-\ee\,
|\V{F}_j^{j+1}\SR\}\,.
\end{align}
We represent $V\in\HR_j\otimes[\sF_j^{j+1}]^{(1)}$
as $V=\{V_\lambda(\V{k})\}\in L^2(\cA_j^{j+1}\times\ZZ_2,\HR_j)$
with
\begin{align*}
V_\lambda(\V{k})=\ee\,\V{F}_j^{j+1}(\V{k},\lambda)
\,\R_j(\V{P}-\V{k},iy)\,\valpha\,\R_j(\V{P},iy)\,\vt\,,
\quad\textrm{a.e.}\;\V{k}\in\cA_j^{j+1},\;\lambda\in\ZZ_2\,.
\end{align*}
Since $|\V{k}|\le\rho_j$, for $\V{k}\in\cA_j^{j+1}$,
the resolvent identity implies
$\R_j(\V{P}-\V{k},iy)=\R_j(\V{P},iy)+\bigO(\rho_j\,(1+y^2)^{-1})$,
and together with $\|\V{F}_j^{j+1}\|\le\const\,\rho_j$ this gives
$$
V=\big(\R_j(iy)\,\valpha\,\R_j(iy)\,\vt\big)\otimes|\ee\,\V{F}_j^{j+1}\SR
+\bigO\big(\ee\,\rho_j^2\,(1+y^2)^{-\nf{3}{2}}\,\|\vt\|\big)\,.
$$
The inequality \eqref{lucia1} applied to the third 
member on the RHS of \eqref{lucia2} yields
\begin{align*}
\|W\|&\le\const\,\ee\,\rho_j(1+y^2)^{-1}\|\R_j(iy)\vt\|
\{\ee^2|\V{c}_j|+\ee\|\V{F}_j^{j+1}\|\}
\le\const'\ee^2\rho_j^2(1+y^2)^{-\nf{3}{2}}\|\vt\|.
\end{align*}
Altogether, employing the formula \eqref{primus}
for the derivative of $\HAM_j$ and using $\int_\RR dy/(1+y^2)<\infty$,
we see that the term in \eqref{DiffAbsD} can be written as
\begin{align}\nonumber
I&=
(\nabla\HAM_j\,\vt)\otimes
\{\ee\,|\V{F}_j^{j+1}\SR-\ee^2\V{c}_j\,\Omega_j^{j+1}\}
+\bigO\big(\ee\,\rho_j^2\,\|\vt\|\big)\,.
\end{align}
But we have $(\check{\HAM}_{j+1}-\HAM_j^{j+1})(\vt\otimes\Omega_j^{j+1})=I+II$ with
$$
II:=(\chHf^{(j+1)}-\Hf^{(j+1)})(\vt\otimes\Omega_j^{j+1})
=\ee\vt\otimes\{\ee\|\omega^{\nf{1}{2}}f_j\|^2\Omega_j^{j+1}
-|\omega\,f_j\SR\}\,.
$$
Applying these formulas for every $\vt\in\Ran(\Pi_j)\subset\dom(\Hf^{(j)})$
we arrive at
\begin{align*}
(\check{\HAM}_{j+1}-\HAM_j^{j+1})\,\Pi_j^{j+1}
=&
\ee\,\nabla H_j\,\Pi_j\otimes |\V{F}_j^{j+1}\SR\SL\Omega_j^{j+1}|
-\ee\,\Pi_j\otimes |\omega f_j\SR\SL\Omega_j^{j+1}|
\\
&+\big(
\ee^2\,\SPn{f_j}{\omega\,f_j}- \ee^2\,\V{c}_j\cdot \nabla H_j\big)\,
\Pi_j^{j+1}+\bigO(\ee\,\rho_j^2)\,,
\end{align*} 
and we conclude by means of
\eqref{rebecca100} and 
$\nabla H_j\,\Pi_j =\nabla E_j\,\Pi_j+\Pi_j^\perp\nabla H_j\,\Pi_j$.
\end{proof}


\subsection{Comparison of resolvents}
\label{ssec-res-comp}

\noindent
In order to control the difference of various operators
attached to succeeding scales
in the IAPT  it is necessary
to prepare a number of resolvent comparison estimates.
This is done in the present subsection.
To start with we introduce some notation for various resolvents:

For all $\V{P}\in\ol{\cB}_\pmax$ and $j\in\NN_0$, 
we abbreviate (compare \eqref{faysal} and \eqref{gap-j+1})
\begin{align}\nonumber
\cR_j(\V{P},z)&:=\big(\HAM_j(\V{P})-E_j(\V{P})+z\big)^{-1},\qquad \Re z>0\,,
\\\nonumber
\cR_j^{j+1}(\V{P},z)&:=\big(\HAM_j^{j+1}(\V{P})-E_j(\V{P})+z\big)^{-1},\quad \Re z>0\,,
\\\nonumber
\cR_j^\perp(\V{P})&:=\big(\HAM_j(\V{P})\,\Pi_j^\perp(\V{P})-E_j(\V{P})\big)^{-1}
\,\Pi_j^\perp(\V{P})\,,
\\\label{def-wtR}
\wh{\cR}_j(\V{P},\V{k})&:=\big(\HAM_j(\V{P}-\V{k})-E_j(\V{P})+|\V{k}|\big)^{-1},
\quad\V{k}\in\RR^3.
\end{align}
We shall see in the proof of Lemma~\ref{rr-le}(1) that
the resolvent in \eqref{def-wtR} is actually well-defined.
Since the spectral gap of $\check{\HAM}_{j+1}$
is not smaller than $q\,\rho_j/4$
we further know that the restriction of $\check{\HAM}_{j+1}-E_{j+1}$
to the range of $\check{\Pi}_{j+1}^\bot:=\id-\check{\Pi}_{j+1}$, with
\begin{equation}\label{def-checkPi}
\check{\Pi}_{j+1}:=\id_{\{E_{j+1}\}}(\check{H}_{j+1})\,,
\end{equation}
is continuously invertible, for every $z\in\CC$, $\Re z>-q\,\rho_j/4$.
In fact, defining
\begin{align}\label{li1}
(\HAM_j^{j+1})^\bot\!&:=\HAM_j^{j+1}(\Pi_j^{j+1})^\bot\!,
\;\;\;(\RES_j^{j+1})^\bot:=((\HAM_j^{j+1})^\bot-E_j)^{-1}(\Pi_j^{j+1})^\bot\!,
\\\label{li2}
\check{\HAM}_{j+1}^\bot&:=\check{\HAM}_{j+1}\check{\Pi}_{j+1}^\bot,
 \quad\quad\,\,\check{\RES}_{j+1}^\bot(z)
:=(\check{\HAM}_{j+1}^\bot-E_{j+1}+z)^{-1}
\,\check{\Pi}_{j+1}^\bot\,,
\end{align}
on $\ol{\cB}_\pmax$ and for $j\in\NN_0$ and $z$ as above,
we infer from \eqref{diff-E} and \eqref{gap-j+1} that
$$
\|(\RES_j^{j+1})^\bot\|=(\gap_j^{j+1})^{-1}\le\const\,\rho_j^{-1},\quad
\|\check{\RES}_{j+1}^\bot(z)\|\le(q\rho_j/4+\Re z)^{-1}.
$$

\begin{lem}\label{rr-le}
Pick $\pmax>0$ and let $\delta_j$ be some real-valued function
of small $\ee>0$ with
$\delta_j=\bigO(\ee\,\rho_j)$. 
Then we find $\const,\ee_0>0$ such that the following
holds, for all $\V{P}\in\ol{\cB}_\pmax$, $\ee\in(0,\ee_0]$, and $j\in\NN_0$:

\smallskip

\noindent(1)
For all $\psi\in \HR_j$ and $\V{k}\in \cA_{j+1}$, 
\begin{equation}\label{ResRepl1}
\big\|\wh{\RES}_j(\V{P},\V{k})\,\Pi_j^\bot(\V{P})\,\psi\big\|
\le \const\,\|{\RES}_j^\perp(\V{P})\,\psi\|\,.
\end{equation}
(2)
For all $\psi\in\HR_j\otimes\{\CC\,\Omega_j^{j+1}\oplus L^2(\cA_j^{j+1}\times\ZZ_2)\}
\subset\HR_{j+1}$, $s\in\{\nf{1}{2},1\}$,  
we have the bounds
\begin{align}\nonumber
\big\|\check{\RES}_{j+1}^\bot(\delta_j)^s\, (\Pi_j^{j+1})^\perp \psi
&-\check{\Pi}_{j+1}^\bot((\RES_j^{j+1})^\bot)^s\,\psi\big\|
\\
&
\le\,\const\,\ee\,\rho_j^{1-s}\big(\|\psi\|+
\|(\RES_j^{j+1})^\perp\,\psi\|\big)\,,\label{uwe3}
\\\label{BoundRes}
\big\|\check{\RES}_{j+1}^\bot(\delta_j)\,({\Pi}_j^{j+1})^\perp\,\psi\big\|
&\le(1+\const\,\ee)\,\|(\RES_j^{j+1})^\bot\,\psi\|
+\const\,\ee\,\|\psi\|\,.
\end{align}
\end{lem}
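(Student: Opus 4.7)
Both parts rest on second-resolvent-type identities, but they are set up differently.

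\emph{Part (1).} For $\psi\in\HR_j$, put $\phi:=\RES_j^\perp(\V{P})\psi$; then $\phi\in\dom(\Hf^{(j)})\cap\Ran\Pi_j^\perp(\V{P})$ and $(\HAM_j(\V{P})-E_j(\V{P}))\phi=\Pi_j^\perp(\V{P})\psi$. The algebraic splitting
\[
\HAM_j(\V{P})-E_j(\V{P})=\wh{\RES}_j(\V{P},\V{k})^{-1}+\Delta(\V{P},\V{k}),\qquad
\Delta(\V{P},\V{k}):=\HAM_j(\V{P})-\HAM_j(\V{P}-\V{k})-|\V{k}|,
\]
then yields
\[
\wh{\RES}_j(\V{P},\V{k})\,\Pi_j^\perp(\V{P})\,\psi=\phi+\wh{\RES}_j(\V{P},\V{k})\,\Delta(\V{P},\V{k})\,\phi,
\]
so it suffices to bound the remainder by $\const\|\phi\|$. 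By Lemma~\ref{EnergyLemma}(3) one has $\HAM_j(\V{P}-\V{k})\ge E_j(\V{P}-\V{k})\ge E_j(\V{P})-(1-\qmax)|\V{k}|$, hence $\|\wh{\RES}_j(\V{P},\V{k})\|\le(\qmax|\V{k}|)^{-1}$, which handles the $|\V{k}|$-contribution. The operator-difference piece is treated by taking adjoints (self-adjointness) and applying the relative bound \eqref{sergio3} with $\V{h}=\V{k}$, together with the fact that $\HAM_j(\V{P}-\V{k})^{1/4}$ commutes with $\wh{\RES}_j(\V{P},\V{k})$. Spectral calculus on $\HAM_j(\V{P}-\V{k})$ gives
\[
\sup_{\mu\ge E_j(\V{P}-\V{k})}\frac{\mu^{1/4}}{\mu-E_j(\V{P})+|\V{k}|}\le\frac{\const}{|\V{k}|},
\]
as the only critical point of the ratio lies in $\mu\le 0$ so the supremum is attained at the left endpoint. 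The two factors of $|\V{k}|$ cancel and produce a uniform constant.

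\emph{Part (2).} Theorem~\ref{Thm:Gap}(1) combined with \eqref{diff-E}--\eqref{gap-j+1} shows that both $A:=\check{\HAM}_{j+1}-E_{j+1}+\delta_j$ on $\Ran\check{\Pi}_{j+1}^\perp$ and $B:=\HAM_j^{j+1}-E_j$ on $\Ran(\Pi_j^{j+1})^\perp$ are continuously invertible with $\|A^{-1}\|,\|B^{-1}\|\lesssim\rho_j^{-1}$. Setting $\phi_2:=(\RES_j^{j+1})^\perp\psi$, applying $A$ to $\check{\Pi}_{j+1}^\perp\phi_2$ and using $\HAM_j^{j+1}\phi_2=E_j\phi_2+(\Pi_j^{j+1})^\perp\psi$ give the identity
\[
\check{\RES}_{j+1}^\perp(\delta_j)\,(\Pi_j^{j+1})^\perp\psi-\check{\Pi}_{j+1}^\perp\phi_2
=-\check{\RES}_{j+1}^\perp(\delta_j)\Big[(\check{\HAM}_{j+1}-\HAM_j^{j+1})+(E_j-E_{j+1}+\delta_j)\Big]\phi_2.
\]
The scalar shift is $\bigO(\ee\rho_j)$ by \eqref{EnergyDiff} and the hypothesis on $\delta_j$, contributing $\bigO(\ee)\|\phi_2\|$ once combined with $\|\check{\RES}_{j+1}^\perp(\delta_j)\|\lesssim\rho_j^{-1}$. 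The operator piece $(\check{\HAM}_{j+1}-\HAM_j^{j+1})\phi_2$ is split into its action on $\Ran\Pi_j^{j+1}$ and on $\Ran(\Pi_j^{j+1})^\perp$: on the former, Lemma~\ref{neu1} together with $|\V{c}_j|,\|\V{F}_j^{j+1}\|=\bigO(\rho_j)$ yields a norm $\bigO(\ee\rho_j)$ (the cancellation \eqref{rebecca100} is essential here to remove the otherwise dangerous $\ee\vp(\omega f_j)$ term); on the latter, the form bound \eqref{adam3} transported through $U_j$ gives $\pm(\check{\HAM}_{j+1}-\HAM_j^{j+1})\le\const\ee(\check{\HAM}_{j+1}-E_j+\rho_j)$, which combines with $\|\check{\RES}_{j+1}^\perp(\delta_j)^{1/2}\|\lesssim\rho_j^{-1/2}$ to contribute $\bigO(\ee)\|\phi_2\|$. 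The restriction that $\psi$ carries at most one photon in $\cA_j^{j+1}$ is crucial because the $\ad(\V{F}_j^{j+1})$ term from Lemma~\ref{neu1} produces at most two photons in that shell, keeping us in a sector where the resolvents are uniformly controlled. This delivers \eqref{uwe3} for $s=1$; adding the trivial bound $\|\check{\Pi}_{j+1}^\perp\phi_2\|\le\|\phi_2\|$ yields \eqref{BoundRes}. The case $s=1/2$ is obtained by integrating the $s$-shifted resolvent estimate against the Balakrishnan representation $T^{-1/2}=\int_0^\infty (T+\sigma)^{-1}\,d\sigma/(\pi\sigma^{1/2})$; the integrand is of order $\ee\rho_j/(\gap_j^{j+1}+\sigma)^2$, which integrates to $\bigO(\ee\rho_j^{-1/2})$, matching the stated $\rho_j^{1-s}\cdot\rho_j^{-1}$ scaling once one pairs with $\|(\RES_j^{j+1})^\perp\psi\|\lesssim\rho_j^{-1}\|\psi\|$.

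\emph{Main obstacle.} The most delicate point is the non-commutativity of the two ground-state projections $\check{\Pi}_{j+1}$ and $\Pi_j^{j+1}$: by analytic perturbation theory with gap $\ge q\rho_j/4$ and perturbation of size $\bigO(\ee\rho_j)$ they differ by $\bigO(\ee)$ in norm, but they project onto distinct rays, so threading $\check{\Pi}_{j+1}^\perp$ on the left and $(\Pi_j^{j+1})^\perp$ on the right through the resolvent identity requires care in tracking the $\check{\Pi}_{j+1}\Pi_j^{j+1}$ overlap terms. This bookkeeping is most demanding in the fractional-power case $s=1/2$, where the Balakrishnan integral has to be estimated uniformly in the shift parameter $\sigma$ while simultaneously matching $\check{\RES}_{j+1}^\perp(\delta_j)^{1/2}$ and $((\RES_j^{j+1})^\perp)^{1/2}$ through the non-commuting projections.
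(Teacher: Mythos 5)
Part~(1) is correct and is essentially the paper's argument with a slight algebraic rearrangement: rather than applying the second resolvent identity between $\wh{\RES}_j(\V{P},\V{k})$ and $\RES_j(\V{P},|\V{k}|)$, you invert the splitting $\HAM_j(\V{P})-E_j(\V{P})=\wh{\RES}_j(\V{P},\V{k})^{-1}+\Delta(\V{P},\V{k})$, but the estimates you then invoke (\eqref{sergio3} for the operator difference, and the spectral bound $\|\HAM_j(\V{P}-\V{k})^{\nf{1}{4}}\wh{\RES}_j(\V{P},\V{k})\|\lesssim|\V{k}|^{-1}$, made precise by a self--adjointness/adjoint argument) are exactly the paper's ingredients, so nothing is lost or gained. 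A minor caveat: the claim that the only critical point of $\mu\mapsto\mu^{\nf{1}{4}}/(\mu-E_j(\V{P})+|\V{k}|)$ lies at $\mu\le 0$ is only automatic when $E_j(\V{P})\ge|\V{k}|$; for $\kappa$ large one still gets the needed monotonicity on the relevant spectral range, but this deserves a line.

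Part~(2) has a genuine gap. Your resolvent identity (apply $A=\check{\HAM}_{j+1}-E_{j+1}+\delta_j$ to $\check{\Pi}_{j+1}^\perp\phi_2$ with $\phi_2=(\RES_j^{j+1})^\perp\psi$) is correct and equivalent to the paper's, but the estimate of the error $\check{\RES}_{j+1}^\perp(\delta_j)(\check{\HAM}_{j+1}-\HAM_j^{j+1})\phi_2$ does not close. First, the proposed split of $(\check{\HAM}_{j+1}-\HAM_j^{j+1})\phi_2$ into "action on $\Ran\Pi_j^{j+1}$" and "action on $\Ran(\Pi_j^{j+1})^\perp$" and the appeal to Lemma~\ref{neu1} are vacuous: $\phi_2$ already lies in $\Ran(\Pi_j^{j+1})^\perp$, so $\Pi_j^{j+1}\phi_2=0$ and the rank--four operator from Lemma~\ref{neu1} never acts on it. Second, and this is the substantive issue, the form bound \eqref{adam3} is quantitatively too weak here. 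With $T:=\check{\HAM}_{j+1}-\HAM_j^{j+1}$ and $S:=\check{\HAM}_{j+1}-E_j+\rho_j$, the bound $\pm T\le\const\,\ee\,S$ gives $\|S^{-\nf{1}{2}}T\phi_2\|\le\const\,\ee\,\|S^{\nf{1}{2}}\phi_2\|$; inserting $\check{\RES}_{j+1}^\perp(\delta_j)=\check{\RES}_{j+1}^\perp(\delta_j)S^{\nf{1}{2}}\cdot S^{-\nf{1}{2}}$ and using $\|\check{\RES}_{j+1}^\perp(\delta_j)S^{\nf{1}{2}}\|\lesssim\rho_j^{-\nf{1}{2}}$ leaves you with $\ee\,\rho_j^{-\nf{1}{2}}\|S^{\nf{1}{2}}\phi_2\|$, and since $\|S^{\nf{1}{2}}\phi_2\|^2\approx\SPn{\phi_2}{\psi}+\rho_j\|\phi_2\|^2$ one only obtains $\ee\,\rho_j^{-\nf{1}{2}}\big(\|\phi_2\|+\|\psi\|\big)$ -- off by a divergent factor $\rho_j^{-\nf{1}{2}}$ from the assertion. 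What actually saves the day, and what the paper uses, is the operator--norm relative bound (the first inequality in \eqref{susi}): $\|T\varphi\|\le\const\,\ee\,\rho_j^{\nf{1}{2}}\||\D_j^{j+1}|^{\nf{1}{4}}(\Hf^{(j,j+1)}+\rho_j)^{\nf{1}{2}}\varphi\|$. The factor $(\Hf^{(j,j+1)}+\rho_j)^{\nf{1}{2}}$, evaluated on the at--most--one--photon subspace $\sX_j$ (which $(\RES_j^{j+1})^\perp$ leaves invariant), contributes an extra $(2\rho_j)^{\nf{1}{2}}$, so $\|T\phi_2\|\lesssim\ee\,\rho_j\,(\|\phi_2\|+\|\psi\|)$ and the $\rho_j^{-1}$ from $\check{\RES}_{j+1}^\perp(\delta_j)$ cancels cleanly. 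The form bound \eqref{adam3} simply does not carry the $(\Hf^{(j,j+1)}+\rho_j)^{\nf{1}{2}}$ factor through which the photon--number restriction enters, and so cannot yield the required estimate. Finally, your Balakrishnan claim for $s=\nf{1}{2}$ has the wrong integrand and the wrong conclusion: the correct input is $N_t^{(1)}\lesssim\ee\,\rho_j/(\rho_j+t)$ (only one resolvent acquires the shift), so $\int_0^\infty t^{-\nf{1}{2}}\,\rho_j/(\rho_j+t)\,dt=\pi\,\rho_j^{\nf{1}{2}}$, giving $\bigO(\ee\,\rho_j^{\nf{1}{2}})$ as asserted in \eqref{uwe3}, not $\bigO(\ee\,\rho_j^{-\nf{1}{2}})$ as you write.
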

%
%
\begin{proof}
(1): On account of \eqref{diff-E} and $|\V{k}|\ge\rho_{j+1}=\rho_j/2$ we have 
\begin{equation}\label{rene}
r_j(\V{P},\V{k}):=E_j(\V{P}-\V{k})-E_j(\V{P})+|\V{k}|\ge\rho_j/4\,.
\end{equation}
Hence, $E_j(\V{P})-|\V{k}|$ belongs to the resolvent set of both
$\HAM_j(\V{P}-\V{k})$ and $\HAM_j(\V{P})$ and \eqref{sergio3} 
together with the second resolvent identity implies
\begin{align*}
\big\|\big(\wh{\RES}_j(\V{P},\V{k})-\RES_j(\V{P},|\V{k}|)\big)\,\Psi\big\|
&\le\const|\V{k}|\,\big\|\HAM_j(\V{P}-\V{k})^{\nf{1}{4}}\wh{\RES}_j(\V{P},\V{k})\big\|\,
\big\|\RES_j(\V{P},|\V{k}|)\,\Psi\big\|\,,
\end{align*}
for all $\Psi\in\HR_j$.
Here $|\V{k}|\le\rho_j$ and, by the spectral calculus,
the first norm on the RHS is not greater than
$E_j(\V{P}-\V{k})^{\nf{1}{4}}\big/r_j(\V{P},\V{k})
\le\const(\pmax)/\rho_j$.
Choosing $\Psi:=\Pi_j^\perp\psi$ 
we obtain \eqref{ResRepl1} since, of course,
$\|\RES_j(\V{P},|\V{k}|)\,\Pi_j^\perp\psi\|\le\|\RES_j^\perp(\V{P})\,\psi\|$.

(2): We may apply Lemma~\ref{le-lisa} with
$\V{A}_j^{j+1}=\vp(\V{F}_j^{j+1})-\ee\V{c}_j$
and $\Phi_0^j=0$, 
$\Phi_j^{j+1}=\vp(\omega\,f_j)+\ee\SPn{f_j}{\omega\,f_j}$.
Indeed, let $t\ge0$, $\ve>0$, and $z:=t+i\ve$. Then 
$\delta_j=\bigO(\ee\,\rho_j)$, $|E_j-E_{j+1}|=\bigO(\ee\,\rho_j)$,
 and the first inequality in \eqref{susi} imply
\begin{align*}
N_z^{(1)}&:=\big\|\check{\Pi}_{j+1}^\perp\big(\check{\RES}_{j+1}(z+\delta_j)-
\RES_j^{j+1}(z)\big)\,(\Pi_j^{j+1})^\perp\psi\big\|
\\
&\le\const\,\ee\,\rho_j^{\nf{1}{2}}\,
\|\check{\Pi}_{j+1}^\perp\check{\RES}_{j+1}(z+\delta_j)\|
\\
&\quad\cdot
\big\|\,|\D_j^{j+1}|^{\nf{1}{2}}(\Hf^{(j,j+1)}\!+\rho_j)^{\nf{1}{2}}(\Pi_j^{j+1})^\perp
\RES_j^{j+1}(z)\,\psi\big\|\,.
\end{align*}
Here,  \eqref{diff-E} and \eqref{gap-j+1} permit to get 
\begin{align*}
\|\check{\RES}_{j+1}(z+\delta_j)\,\check{\Pi}_{j+1}^\perp\|
&\le (\gap_{j+1}+\delta_j+t)^{-1}\le \const\, (\rho_j+t)^{-1},
\end{align*}
uniformly in $\ve>0$ and small $\ee>0$.
Moreover, $(\Hf^{(j,j+1)}\!+\rho_j)^{\nf{1}{2}}$ commutes strongly with
$(\Pi_j^{j+1})^\perp$ and $\RES_j^{j+1}(z)$.
Since also 
$|\D_j^{j+1}|\le(\HAM_j^{j+1}-E_j)+E_j$
and $E_j\le\const(\pmax)$ on $\ol{\cB}_\pmax$ we deduce that
\begin{align*}
\theta&:=
\lim_{\ve\searrow0}
\big\|\,|\D_j^{j+1}|^{\nf{1}{2}}(\Hf^{(j,j+1)}\!+\rho_j)^{\nf{1}{2}}(\Pi_j^{j+1})^\perp
\RES_j^{j+1}(z)\,\psi\big\|
\\
&\le
\big\|(\Hf^{(j,j+1)}\!+\rho_j)^{\nf{1}{2}}
\big((\RES_j^{j+1})^\perp\big)^{\nf{1}{2}}\,\psi\big\|
+\const\,\big\|(\Hf^{(j,j+1)}\!+\rho_j)^{\nf{1}{2}}
(\RES_j^{j+1})^\perp\,\psi\big\|\,.
\end{align*}
Next, we use that $\psi$ and, hence, $((\RES_j^{j+1})^\perp)^s\,\psi$,
belong to 
$\sX_j:=\HR_j\otimes\{\CC\,\Omega_j^{j+1}\oplus L^2(\cA_j^{j+1}\times\ZZ_2)\}$
and that $\eta\in\sX_j$
entails
$\|(\Hf^{(j,j+1)}\!+\rho_j)^{\nf{1}{2}}\eta\|\le(2\rho_j)^{\nf{1}{2}}\|\eta\|$.
Taking this observation into account we obtain
$$
\theta\le\const\,\rho_j^{\nf{1}{2}}\big(\|((\RES_j^{j+1})^\perp)^{\nf{1}{2}}\psi\|+
\|(\RES_j^{j+1})^\perp\,\psi\|\big)
\le\const'\rho_j^{\nf{1}{2}}\big(\|\psi\|+
\|(\RES_j^{j+1})^\perp\,\psi\|\big)\,.
$$
Altogether we arrive at
\begin{align*}
\lim_{\ve\searrow0}N_{t+i\ve}^{(1)}\le
{\const\,\ee\,\rho_j}{(\rho_j+t)^{-1}}
\big(\|\psi\|+
\|(\RES_j^{j+1})^\perp\,\psi\|\big)\,,
\end{align*}
for every $\psi\in\sX_j$. For $t=0$, this is \eqref{uwe3} with $s=1$,
which immediately implies \eqref{BoundRes}.

Let $N_0^{(\nf{1}{2})}$ denote the LHS of \eqref{uwe3} with $s=1/2$.
We have
$\check{\RES}_{j+1}^\bot(\delta_j)^{\nf{1}{2}} 
= \int_0^\infty\check{\RES}_{j+1}^\bot(t+\delta_j)\,t^{-\nf{1}{2}}dt/\pi$
and an analogous representation of the 
square root of $(\RES_j^{j+1})^\perp$. Therefore,
\begin{align*}
N_0^{(\nf{1}{2})}&\le\int_0^\infty \!N_t^{(1)}\,\frac{dt}{\pi t^{\nf{1}{2}}}
\le\const\,\ee\int_0^\infty\!\frac{\rho_j}{\rho_j+t}\,\frac{dt}{\pi t^{\nf{1}{2}}}
\big(\|\psi\|+
\|(\RES_j^{j+1})^\perp\,\psi\|\big)\,,
\end{align*}
which is \eqref{uwe3} with $s=1/2$, as the integral on the RHS equals $\rho_j^{\nf{1}{2}}$.
\end{proof}


\section{Regularity of the ground state energy}\label{sec-C2}

\noindent
The main objective of this section is to show that $E_\infty$
is twice continuously differentiable and strictly convex
on $\cB_\pmax$, at least for small $\ee>0$. The starting point
of our analysis are the Hellmann-Feynman type formulas
\eqref{for-1Deriv} and \eqref{for-2Deriv} for the first and
second derivatives of the approximating functions $E_j$, $j\in\NN$.
To make use of these formulas
we first have to control the quantities
\begin{equation}\label{def-K}
K_j^{(s)}:=\|
({\RES}_{j}^\bot)^s\,{\Pi}_{j}^\bot\,
\nabla{\HAM}_{j}\,{\Pi}_{j}\|_{\HS}\,,\qquad j\in\NN_0,\;s\in \{ \nf{1}{2}\,,1\}\,.
\end{equation}
Here and henceforth we set
$\|\V{T}\|_*^2:=\|T_1\|_*^2+\|T_2\|_*^2+\|T_3\|_*^2$, for a triple
of operators $\V{T}=(T_1,T_2,T_3)$ and a suitable norm $\|\cdot\|_*$.
The expression in \eqref{def-K} is well-defined since
$\Ran({\Pi}_{j})\subset\dom(\HAM_j)=\dom(\Hf^{(j)})
\subset\dom(\partial_{\mu}\HAM_j)$ and $\tr[{\Pi}_{j}]=4$.
From \eqref{sergio4} we immediately obtain the a priori bounds
\begin{equation}\label{sergio7}
\|{\Pi}_{j}^\bot\,
\nabla{\HAM}_{j}\,{\Pi}_{j}\|_{\HS}\le2\,\|\nabla{\HAM}_{j}\,{\Pi}_{j}\|
\le\const\,,
\qquad
K_j^{(s)}\le\const\,\rho_j^{-s},
\end{equation}
uniformly on $\ol{\cB}_\pmax$.
If we seek for a better bound on $K_j^{(s)}$, showing that it actually
stays bounded, as $j\to\infty$, 
we must not estimate the effect of the
resolvent in \eqref{def-K} trivially, as we just did, by its norm which is
of order $\rho_j^{-s}$. It is a guiding theme in
Pizzo's iterative analytic perturbation theory
that expressions like $K_j^{(s)}$ are
estimated inductively, i.e. by relating $K_j^{(s)}$
to its precessor $K_{j-1}^{(s)}$.
In fact, the estimation of $K_j^{(s)}$ is enmeshed in
the analysis of the ground state
projections of $\wt{\HAM}_j^\infty$ and vice versa.
The convergence of the latter is shown in the
next section as an immediate consequence of the results
obtained in the present one.

Let $\kappa,\pmax>0$ be fixed in the whole section.
The first step is to bound the difference of the projections
defined in \eqref{def-Pijj+1} and \eqref{def-checkPi} in terms
of $K_{j}^{(1)}$:

\begin{lem}\label{le-L}
We find $\const,\ee_0>0$ such that, for all 
$\V{P}\in\ol{\cB}_\pmax$, $\ee\in(0,\ee_0]$, and $j\in\NN_0$,
\begin{align*}
\|\check{\Pi}_{j+1}-\Pi_j^{j+1}\|_{\HS},\,
\|\check{\Pi}_{j+1}^\bot \Pi_j^{j+1}\|_{\HS},\,
\|(\Pi_j^{j+1})^\bot \check{\Pi}_{j+1}\|_{\HS}
\le\const\,\ee\,\rho_j\,(K_{j}^{(1)}+1)\,.
\end{align*}
\end{lem}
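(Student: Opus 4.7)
The plan is to write $X:=\check\Pi_{j+1}^\perp\Pi_j^{j+1}$ via the standard two-projection identity. Since $(\check{H}_{j+1}-E_j)\Pi_j^{j+1}=(\check{H}_{j+1}-H_j^{j+1})\Pi_j^{j+1}$ and, because $X=\check\Pi_{j+1}^\perp X$, $(\check{H}_{j+1}-E_j)X=(\check{H}_{j+1}^\perp - E_{j+1}\check\Pi_{j+1}^\perp+\delta_j\check\Pi_{j+1}^\perp)X$ with $\delta_j:=E_{j+1}-E_j=\bigO(\ee\rho_j)$ by \eqref{diff-E}, one obtains $X=\check{\RES}_{j+1}^\perp(\delta_j)(\check{H}_{j+1}-H_j^{j+1})\Pi_j^{j+1}$. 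Substituting Lemma~\ref{neu1} and writing $T_2:=\ee(\Pi_j^\perp\nabla H_j\Pi_j)\otimes\ad(\V{F}_j^{j+1})P_{\Omega_j^{j+1}}$, $T_3:=-\ee^2(\Pi_j^\perp\nabla H_j\Pi_j)\otimes\V{c}_j P_{\Omega_j^{j+1}}$, I would use that $\check{\RES}_{j+1}^\perp(\delta_j)\Pi_j^{j+1}=\check{\RES}_{j+1}^\perp(\delta_j)X$ to reach the self-referential identity
\begin{equation*}
(\id+\ee^2\vk_j\check{\RES}_{j+1}^\perp(\delta_j))X=\check{\RES}_{j+1}^\perp(\delta_j)(T_2+T_3+A_j).
\end{equation*}
The operator $\ee^2\vk_j\check{\RES}_{j+1}^\perp(\delta_j)$ has norm $\bigO(\ee^2)$, so it is invertible by Neumann series for small $\ee$, and the whole task reduces to estimating the Hilbert--Schmidt norm of the right-hand side.

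Since $T_2,T_3$ map into $\Ran(\Pi_j^\perp)\otimes\sF_j^{j+1}$, which is orthogonal to $\Ran(\Pi_j^{j+1})$, I would split
\begin{equation*}
T_2+T_3+A_j=(\Pi_j^{j+1})^\perp(T_2+T_3+A_j)+\Pi_j^{j+1}A_j.
\end{equation*}
For $\Pi_j^{j+1}A_j$ the naive bound $\|\check{\RES}_{j+1}^\perp(\delta_j)\|\le4/(q\rho_j)$ combined with $\|\Pi_j^{j+1}A_j\|_{\HS}\le 2\|A_j\|\le\const\ee\rho_j^2$ already yields $\const\ee\rho_j$. For the $(\Pi_j^{j+1})^\perp$-piece I would invoke \eqref{BoundRes} from Lemma~\ref{rr-le}(2), which legitimately applies because the range sits in $\HR_j\otimes\{\CC\Omega_j^{j+1}\oplus L^2(\cA_j^{j+1}\times\ZZ_2)\}$; this reduces everything to bounding $\|(\RES_j^{j+1})^\perp(T_2+T_3+A_j)\|_{\HS}$ and the crude HS-norm, the latter being $\le\const\ee\rho_j$ in view of \eqref{BoundsI} and \eqref{sergio7}.

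The core estimate is then $\|(\RES_j^{j+1})^\perp T_2\|_{\HS}\le \const\ee\rho_j K_j^{(1)}$, which is where $K_j^{(1)}$ enters the game. Since $T_2(\phi\otimes\Omega_j^{j+1})$ is a one-particle state, I would use the direct integral representation of $H_j^{j+1}$ on $\HR_j\otimes[\sF_j^{j+1}]^{(1)}$ (cf.\ \eqref{li2000}), under which $(\RES_j^{j+1})^\perp$ acts fiberwise as $\wh{\RES}_j(\V{P},\V{k})$. Then \eqref{ResRepl1} allows the replacement $\|\wh{\RES}_j(\V{P},\V{k})\Pi_j^\perp\partial_\mu H_j\phi\|\le\const\,\|\RES_j^\perp\partial_\mu H_j\phi\|$; evaluating the HS norm as an integral over $\cA_j^{j+1}\times\ZZ_2$ and using $\|\V{F}_j^{j+1}\|\le\const\rho_j$ yields
\begin{equation*}
\|(\RES_j^{j+1})^\perp T_2\|_{\HS}^2\le\const\,\ee^2\|\V{F}_j^{j+1}\|^2(K_j^{(1)})^2\le\const\,\ee^2\rho_j^2(K_j^{(1)})^2.
\end{equation*}
The analogous bound for $(\RES_j^{j+1})^\perp T_3$ comes cheaply since its range lies in the vacuum sector, where $(\RES_j^{j+1})^\perp$ collapses to $\RES_j^\perp$; and $\|(\RES_j^{j+1})^\perp A_j\|_{\HS}\le(\const/\rho_j)\cdot\const\ee\rho_j^2=\const\ee\rho_j$.

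The remaining two HS-norm inequalities are purely algebraic: since $\check\Pi_{j+1}$ and $\Pi_j^{j+1}$ are both $4$-dimensional orthogonal projections, cyclicity of the trace gives $\|(\Pi_j^{j+1})^\perp\check\Pi_{j+1}\|_{\HS}=\|\check\Pi_{j+1}^\perp\Pi_j^{j+1}\|_{\HS}$ and hence $\|\check\Pi_{j+1}-\Pi_j^{j+1}\|_{\HS}^2=2\|\check\Pi_{j+1}^\perp\Pi_j^{j+1}\|_{\HS}^2$. The main obstacle is ensuring that the part of $Y=(\check H_{j+1}-H_j^{j+1})\Pi_j^{j+1}$ of order $\rho_j$ (not $\rho_j^2$) carries a $\Pi_j^\perp$ factor on the left, so that $(\RES_j^{j+1})^\perp$ produces the factor $K_j^{(1)}$ rather than the naive $1/\rho_j$; this is exactly what the cancellation \eqref{rebecca100} built into Lemma~\ref{neu1} accomplishes, and without it the iterative scheme would not close.
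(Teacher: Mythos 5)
Your Step-1 argument -- the resolvent identity starting from $(\check\HAM_{j+1}-E_j)\Pi_j^{j+1}=(\check\HAM_{j+1}-\HAM_j^{j+1})\Pi_j^{j+1}$, the substitution of Lemma~\ref{neu1}, the use of \eqref{BoundRes} and \eqref{ResRepl1}, and the extraction of $K_j^{(1)}$ from the rank-four creation-operator term $T_2$ -- is essentially the paper's, up to the immaterial choice of inverting $\id+\ee^2\vk_j\,\check\RES_{j+1}^\bot(\delta_j)$ by Neumann series rather than folding the nonnegative $\vk_j$ directly into the resolvent, as the paper does. There is, however, one genuine (if easily fixed) gap. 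You invoke \eqref{BoundRes} for $\psi=(T_2+T_3+A_j)\xi$ with $\xi\in\Ran(\Pi_j^{j+1})$, asserting that these vectors lie in $\HR_j\otimes\{\CC\Omega_j^{j+1}\oplus L^2(\cA_j^{j+1}\times\ZZ_2)\}$. That holds for $T_2$ and $T_3$ by inspection, but not for $A_j$: tracing the proof of Lemma~\ref{neu1}, $A_j$ absorbs error terms (such as $W$) carrying the full transformed resolvent $\check\R_{j+1}(iy)$, which spreads over all photon number sectors of $\sF_j^{j+1}$, so the hypothesis of Lemma~\ref{rr-le}(2) is not verified for the $A_j$-part. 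The fix is exactly what the paper does: isolate $A_j$ from the start, estimate $\|\check\RES_{j+1}^\bot(\delta_j)\,A_j\|_{\HS}\le\const\,\rho_j^{-1}\cdot\const\,\ee\,\rho_j^2=\const\,\ee\,\rho_j$ by the crude resolvent bound alone, and apply \eqref{BoundRes} only to $\Theta_j:=T_2+T_3$, whose range does lie in the required subspace.

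Your Step~2 is genuinely simpler than the paper's. The paper deduces $\|(\Pi_j^{j+1})^\bot\check\Pi_{j+1}\|_{\HS}\le4\,\|\check\Pi_{j+1}^\bot\Pi_j^{j+1}\|_{\HS}$ by constructing an adapted orthonormal basis of $\Ran(\check\Pi_{j+1})$ with the aid of the symmetries $X_1,X_2,\vt$ from Theorem~\ref{Thm:Gap}(2). You instead observe that cyclicity of the trace gives $\tr[\check\Pi_{j+1}\Pi_j^{j+1}\check\Pi_{j+1}]=\tr[\Pi_j^{j+1}\check\Pi_{j+1}\Pi_j^{j+1}]$, and hence, since both projections have the same finite rank $4$ by Theorem~\ref{Thm:Gap}(1), the exact identity $\|(\Pi_j^{j+1})^\bot\check\Pi_{j+1}\|_{\HS}=\|\check\Pi_{j+1}^\bot\Pi_j^{j+1}\|_{\HS}$ and $\|\check\Pi_{j+1}-\Pi_j^{j+1}\|_{\HS}^2=2\,\|\check\Pi_{j+1}^\bot\Pi_j^{j+1}\|_{\HS}^2$. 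This is shorter, yields a better constant, and avoids the Kramers-symmetry input in this step entirely.
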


\begin{proof}
\emph{Step 1.}
First, we consider $\|\check{\Pi}_{j+1}^\bot\Pi_j^{j+1}\|_{\HS}$.
Let $\vk_j\ge0$ be the constant appearing in Lemma~\ref{neu1}.
By \eqref{diff-E} and \eqref{gap-j+1}
we know that the restriction of $\check{\HAM}_{j+1}-E_j+\ee^2\vk_j$
to the range of $\check{\Pi}_{j+1}^\bot$ is
continuously invertible. 
In fact, \eqref{gap-j+1} implies
$
\|(\check{\HAM}_{j+1}^\bot-E_j+\ee^2\vk_j)^{-1}\check{\Pi}_{j+1}^\bot\|
\le\const\, \rho_j^{-1}
$, for small $\ee$, where we use the notation \eqref{li2}.
Since $\check{\HAM}_{j+1}$ and $\HAM_j^{j+1}$ have the same domain, which
contains the range of $\Pi_j^{j+1}$, this permits us to write
\begin{equation}\nonumber
\check{\Pi}_{j+1}^\bot\,\Pi_j^{j+1}
=
(\check{\HAM}_{j+1}^\bot-E_j+\ee^2\vk_j)^{-1}\,\check{\Pi}_{j+1}^\bot
 \,(\check{\HAM}_{j+1}-\HAM_j^{j+1}+\ee^2\vk_j)\,\Pi_j^{j+1}.
\end{equation}
Applying Lemma~\ref{neu1} we obtain
\begin{align*}
\check{\Pi}_{j+1}^\bot\Pi_j^{j+1}
=
(\check{\HAM}_{j+1}^\bot-E_j+\ee^2\vk_j)^{-1}\check{\Pi}_{j+1}^\bot
\,\Theta_j+\bigO(\ee\,\rho_j)\,,
\end{align*}
where the rank four operator
\begin{equation*}
\Theta_j:=
\ee\,(\Pi_j^\bot\,\nabla\HAM_j\,\Pi_j)\otimes\{\ad(\V{F}_j^{j+1})-\ee\,\V{c}_j\}\,P_{\Omega_j^{j+1}}
\end{equation*}
satisfies $\|\Theta_j\|_{\HS}\le\const\,\ee\,\rho_j$ 
according to \eqref{BoundsI} and \eqref{sergio7}.
Since we have
$\Ran({\Theta}_j)\subset 
\Ran(\Pi_j^{j+1})^\perp\cap\{\HR_j\otimes(\CC\,\Omega_j^{j+1}\oplus L^2(\cA_j^{j+1}\times\ZZ_2))\}$ 
we may apply \eqref{BoundRes} 
(with $\delta_j=E_{j+1}-E_j$ using $\vk_j\ge0$)
to deduce that
\begin{equation*}
\|\check{\Pi}_{j+1}^\bot\,\Pi_j^{j+1}\|_{\HS}\le (1+\const \,\ee)\,
\| ({\RES}_j^{j+1})^\perp\,\Theta_j\|_{\HS}+\const\,\ee\,\rho_j\,.
\end{equation*}
Next, we consider the action of $({\RES}_j^{j+1})^\perp$ on $\Theta_j$ more closely,
taking into account that
$(\Pi_j^{j+1})^\bot=\Pi_j^\bot\otimes P_{\Omega_j^{j+1}}+\id\otimes P_{\Omega_j^{j+1}}^\bot$ and, hence,
\begin{align*} 
({\RES}_j^{j+1})^\perp\!\!\upharpoonright_{ \HR_j\otimes \CC \Omega_j^{j+1}}&=\RES_j^\perp,
\\
({\RES}_j^{j+1})^\perp\!\!\upharpoonright_{ \HR_j\otimes L^2(\cA_j^{j+1}\times \ZZ_2)}
&=\bigoplus_{\lambda\in\ZZ_2}\int_{\cA_j^{j+1}}^\oplus
\big(\HAM_j(\V{P}-\V{k})-E_j(\V{P})+|\V{k}|\big)^{-1}d^3\V{k}\,;
\end{align*} 
recall \eqref{li2000} and \eqref{rene}.
The resolvents under the direct integral are just the ones in \eqref{def-wtR}.
For the member of $\Theta_j$ without creation operator we thus have
$$
({\RES}_j^{j+1})^\perp(\Pi_j^\bot\,\nabla\HAM_j\,\Pi_j\otimes P_{\Omega_j^{j+1}})
=(\RES_j^\bot\nabla\HAM_j\,\Pi_j)\otimes P_{\Omega_j^{j+1}}\,.
$$
Applying \eqref{ResRepl1} we further deduce that
\begin{equation*}
\big\|({\RES}_j^{j+1})^\perp(\Pi_j^\bot\,\nabla\HAM_j\,\Pi_j)\otimes
\ad(\V{F}_j^{j+1})\,P_{\Omega_j^{j+1}}\big\|_{\HS}
\le\const \,\|\V{F}_j^{j+1}\|\|\RES_j^\bot\nabla\HAM_j\,\Pi_j\|_{\HS}\,,
\end{equation*}
by computing the Hilbert-Schmidt norm on the LHS in a basis of the form
$\{e_i\otimes e_\ell'\}$, where $\{e_i\}$ and $\{e_\ell'\}\ni\Omega_j^{j+1}$ are
orthonormal bases of $\HR_j$ and $\sF_j^{j+1}$, respectively.
Using also the a priori bounds \eqref{sergio7},
$\|\Theta_j\|_{\HS}\le \const\,\ee\,\rho_j$, and $\|\V{F}_j^{j+1}\|,|\V{c}_j|\le\const\,\rho_j$, 
we obtain, for sufficiently small $\ee$,
\begin{equation}\label{neu-eq1}
\|\check{\Pi}_{j+1}^\bot\, \Pi_j^{j+1}\big\|_{\HS}
\le\const\,\ee\,\rho_j\,(K_{j}^{(1)}+1)\le 1/2\,.
\end{equation}
\emph{Step 2.} Next, we turn to $\|\check{\Pi}_{j+1}-\Pi_j^{j+1}\|_{\HS}$ 
and $\|(\Pi_j^{j+1})^\bot\,\check{\Pi}_{j+1}\|_{\HS}$.

Choose any orthonormal basis, $\{\chi_k\}_{k=1}^\infty$, of $\HR_{j+1}$
with $\check{\Pi}_{j+1}\,\chi_k=\chi_k$, for $k=1,2,3,4$, and
$\check{\Pi}_{j+1}^\bot\,\chi_k=\chi_k$, for $k\ge 5$. Then
\begin{align}\label{eq-neu7}
\|\check{\Pi}_{j+1}-\Pi_j^{j+1}\|^2_{\HS}
&=\sum_{k=1}^\infty\|(\check{\Pi}_{j+1}-\Pi_j^{j+1})\,\chi_k\|^2\\ \nonumber
&=\sum_{k=1}^4\|(\Pi_j^{j+1})^\bot\,\check{\Pi}_{j+1}\,\chi_k\|^2
+\sum_{k=5}^\infty\|\Pi_j^{j+1}\,\check{\Pi}_{j+1}^\bot\,\chi_k\|^2.
\end{align}
The last expression on the right side of \eqref{eq-neu7} is just
\begin{equation}\label{neu-eq2}
\sum_{k=5}^\infty\|\Pi_j^{j+1}\,\check{\Pi}_{j+1}^\bot\,\chi_k\|^2=
\|\Pi_j^{j+1}\check{\Pi}_{j+1}^\bot\|_{\HS}^2=\|\check{\Pi}_{j+1}^\bot\Pi_j^{j+1}\|_{\HS}^2\,.
\end{equation}
Thanks to Theorem \ref{Thm:Gap}(1) we may pick a normalized 
$\phi_1 \in \Ran(\Pi_j^{j+1})$ satisfying $X_1\,\phi_1=\phi_1$.
By \eqref{neu-eq1} we have $\|\check{\Pi}_{j+1}\, \phi_1\|\ge 1/2$. 
Thus, we may define $\chi_1 := \check{\Pi}_{j+1} \phi_1 /\|\check{\Pi}_{j+1} \phi_1\|$. 
Clearly, $X_1\,\chi_1=\chi_1$. Since $X_1$, $X_2$, and $\vt$ commute with $U_j$ we 
may choose
$\chi_2:=X_2\,\chi_1$, $\chi_3:= \vt\,\chi_1$, and $\chi_4=X_2\,\vt\,\chi_1$;
see Theorem \ref{Thm:Gap}(2).
Moreover, %
\begin{align} \nonumber
\|(\Pi_j^{j+1})^\bot&\,\check{\Pi}_{j+1}\|^2_{\HS}=
\sum_{k=1}^4\|(\Pi_j^{j+1})^\bot\,\check{\Pi}_{j+1}\,\chi_k\|^2
=4\,\|(\Pi_j^{j+1})^\bot\,\check{\Pi}_{j+1}\,\chi_1\|^2
\\ \label{neu-eq3}
&\le 16 \,\|(\Pi_j^{j+1})^\bot\check{\Pi}_{j+1}\,\Pi_j^{j+1}\phi_1\|^2
\le 16 \,\|\check{\Pi}_{j+1}^\bot\,\Pi_j^{j+1}\|^2_{\HS}.
\end{align}
Combining \eqref{neu-eq1}, \eqref{eq-neu7}, \eqref{neu-eq2}, and \eqref{neu-eq3} we conclude the proof.
\end{proof}
%
%
\begin{lem}\label{le-rek}
There exist $\const,\ee_0>0$ such that, for all $\V{P}\in\ol{\cB}_\pmax$,
$\ee\in(0,\ee_0]$, and $j\in\NN_0$,
\begin{equation}\label{for-rek}
K_{j+1}^{(1)} \le(1+\const\,\ee)\,K_{j}^{(1)}+\const\,\ee\,,
\quad  
|K_{j+1}^{(\nf{1}{2})}-K_{j}^{(\nf{1}{2})}|\le \const \,\ee\, \rho_j^{\nf{1}{2}}(K_j^{(1)}+1).
\end{equation}
\end{lem}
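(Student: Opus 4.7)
The plan is to express both $K_{j+1}^{(s)}$ and $K_j^{(s)}$ in a form adapted to the scale $j{+}1$ and then compare them. On the one hand, unitary invariance of the Hilbert--Schmidt norm together with $U_j\Pi_{j+1}^\bot U_j^*=\check{\Pi}_{j+1}^\bot$ and $U_j(\RES_{j+1}^\bot)^s U_j^*=(\check{\RES}_{j+1}^\bot)^s$ gives
\begin{equation*}
K_{j+1}^{(s)}=\big\|(\check{\RES}_{j+1}^\bot)^s\,\check{\Pi}_{j+1}^\bot\,T_{j+1}\,\check{\Pi}_{j+1}\big\|_{\HS}\,,\qquad T_{j+1,\mu}:=U_j\,\partial_\mu\HAM_{j+1}\,U_j^*\,.
\end{equation*}
On the other hand, Theorem~\ref{Thm:Gap} gives $\Pi_j^{j+1}=\Pi_j\otimes P_{\Omega_j^{j+1}}$ (cf.\ \eqref{for-Pijj+1}), and since $\D_j^{j+1}$ acts as $\D_j\otimes\id$ on $\HR_j\otimes\CC\,\Omega_j^{j+1}$ a short calculation yields
\begin{equation*}
K_j^{(s)}=\big\|((\RES_j^{j+1})^\bot)^s\,(\Pi_j^{j+1})^\bot\,\nabla\HAM_j^{j+1}\,\Pi_j^{j+1}\big\|_{\HS}\,.
\end{equation*}

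I would then write the HS-difference of the two operators inside these norms as a telescopic sum of three pieces and bound each one: (i) $(\check{\RES}_{j+1}^\bot)^s\check{\Pi}_{j+1}^\bot T_{j+1}(\check{\Pi}_{j+1}-\Pi_j^{j+1})$, (ii) $(\check{\RES}_{j+1}^\bot)^s\check{\Pi}_{j+1}^\bot(T_{j+1}-\nabla\HAM_j^{j+1})\Pi_j^{j+1}$, and (iii) $[(\check{\RES}_{j+1}^\bot)^s-((\RES_j^{j+1})^\bot)^s]\,\nabla\HAM_j^{j+1}\,\Pi_j^{j+1}$. Part~(i) follows from Lemma~\ref{le-L} together with $\|(\check{\RES}_{j+1}^\bot)^s\|_{\cB}\le\const\,\rho_j^{-s}$ and a uniform estimate $\|T_{j+1,\mu}\|_{\cB(\cF)}\le\const$ on the finite-dimensional subspace $\cF:=\Ran(\check{\Pi}_{j+1})+\Ran(\Pi_j^{j+1})$; this last bound follows from \eqref{sergio4} after undoing the dressing, together with $\|\check{\HAM}_{j+1}\Pi_j^{j+1}\|_{\cB}\le\const$, which is implicit in Lemma~\ref{neu1}. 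Part~(ii) is immediate from \eqref{debora2}: since $\Hf^{(j,j+1)}\Pi_j^{j+1}=0$, we get $\|(T_{j+1}-\nabla\HAM_j^{j+1})\Pi_j^{j+1}\|_{\HS}\le\const\,\ee\,\rho_j$, which multiplied by $\const\,\rho_j^{-s}$ produces $\bigO(\ee\,\rho_j^{1-s})$.

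The delicate piece is Part~(iii). I would split $\nabla\HAM_j^{j+1}\Pi_j^{j+1}=\nabla E_j\,\Pi_j^{j+1}+Y$, where $Y_\mu:=\Pi_j^\bot\partial_\mu\HAM_j\,\Pi_j\otimes P_{\Omega_j^{j+1}}$ is the off-diagonal part (using Hellmann--Feynman \eqref{for-1Deriv}). The diagonal contribution is annihilated by $(\RES_j^{j+1})^\bot\Pi_j^{j+1}=0$ on the $j^{j+1}$-side, while on the $\check{}$-side it is bounded using $\|\check{\Pi}_{j+1}^\bot\Pi_j^{j+1}\|_{\HS}\le\const\,\ee\,\rho_j(K_j^{(1)}+1)$ (Lemma~\ref{le-L}) times $\|(\check{\RES}_{j+1}^\bot)^s\|_{\cB}\le\const\,\rho_j^{-s}$. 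For the off-diagonal piece, the range of $Y_\mu$ lies in the subspace $\HR_j\otimes\{\CC\,\Omega_j^{j+1}\oplus L^2(\cA_j^{j+1}\times\ZZ_2)\}$ on which Lemma~\ref{rr-le}(2) applies; I would invoke the lemma with $\delta_j=0$ column-wise on $Y_\mu\chi$, after inserting the identity $((\RES_j^{j+1})^\bot)^s=\check{\Pi}_{j+1}^\bot((\RES_j^{j+1})^\bot)^s+\check{\Pi}_{j+1}((\RES_j^{j+1})^\bot)^s$ and handling the residual $\check{\Pi}_{j+1}$-piece through $\|\check{\Pi}_{j+1}(\Pi_j^{j+1})^\bot\|_{\HS}$ via Lemma~\ref{le-L} once more.

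Summing the three parts and the three $\mu$ components gives $|K_{j+1}^{(s)}-K_j^{(s)}|\le\const\,\ee\,\rho_j^{1-s}(K_j^{(1)}+1)$, which specializes to both claimed estimates; in the $s=\nf{1}{2}$ case the a priori bound $K_j^{(1/2)}\le\const\,\rho_j^{-1/2}$ from \eqref{sergio7} is invoked only to absorb lower-order cross terms. The main obstacle throughout is the bookkeeping of projection mismatches in Part~(iii): the asymmetry between $\check{\Pi}_{j+1}^\bot$ and $(\Pi_j^{j+1})^\bot$ forces careful coordination of Lemmas~\ref{le-L} and~\ref{rr-le}(2) so that no negative power of $\rho_j$ survives, while everything else is a routine application of the tools developed in Sections~\ref{sec-DT} and~\ref{sec-C2}.
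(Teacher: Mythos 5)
Your proposal is correct and follows essentially the same route as the paper's proof: telescope the difference of $K_{j+1}^{(s)}$ and $K_j^{(s)}$ (both expressed on $\HR_{j+1}$) and bound the pieces via Lemma~\ref{le-L}, \eqref{debora2}, Lemma~\ref{rr-le}(2), and the Hellmann--Feynman identity \eqref{for-1Deriv}. The only cosmetic divergence is that your three-term decomposition lumps together what the paper sorts into five intermediate quantities $a_s,\dots,e_s$, and that you bound $\|T_{j+1,\mu}\|$ directly on the finite-dimensional span of the two projection ranges, whereas the paper passes to the adjoint $\|\nabla\HAM_{j+1}(\RES_{j+1}^\bot)^s\|$ and invokes \eqref{lara1a} via spectral calculus; both yield the same $\const\,\ee\,\rho_j^{1-s}(K_j^{(1)}+1)$ bound.
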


\begin{proof}
Let $s\in\{\nf{1}{2},1\}$. 
We derive a bound on the difference between
the two numbers
\begin{align*}
a_s&:=\big\|(\check{\RES}_{j+1}^\bot)^{s}\,
(U_j\,\nabla\HAM_{j+1}\,U_j^*)\,\check{\Pi}_{j+1}\big\|_{\HS}\,,
\\
a_s'&:=\big\|\{(\RES_j^{j+1})^\bot\}^{s}\,
\nabla\HAM_j^{j+1}\,\Pi_j^{j+1}\big\|_{\HS}\,,
\end{align*}
with
\begin{equation}\label{wdh-chR}
\check{\RES}_{j+1}^\bot
:=\big(\check{H}_{j+1}\,\check{\Pi}_{j+1}^\bot-E_{j+1}\big)^{-1}\,
\check{\Pi}_{j+1}^\bot=U_j\,\RES_{j+1}^\bot\,U_j^*\,.
\end{equation}
Notice that $a_s=K^{(s)}_{j+1}$ by unitary invariance of the Hilbert-Schmidt norm,
and $a_s'=K^{(s)}_j$ because 
$\Pi_j^{j+1}=\Pi_j\otimes P_{\Omega_j^{j+1}}$
and $(\RES_j^{j+1})^\bot$ and $\nabla\HAM_j^{j+1}$
reduce to $\RES_j^\bot$ and $\nabla\HAM_j$, respectively,
on $\HR_j\otimes\CC\,\Omega_{j}^{j+1}$.
To compare these two numbers we successively replace each operator in $a_s$
by a corresponding one associated with the preceding scale $j$.
More precisely, we estimate
\begin{align}\label{nicole}
|a_s-a_s'|&\le
|a_s-b_s|+|b_s-c_s|+|c_s-d_s|+|d_s-e_s|+|e_s-a_s'|\,,
\end{align}
with
\begin{align*}
b_s&:=\big\|(\check{\RES}_{j+1}^\bot)^{s}\,
(U_j\,\nabla\HAM_{j+1}\,U_j^*)\,\Pi_j^{j+1}\big\|_{\HS}\,,
\\
c_s&:=\big\|(\check{\RES}_{j+1}^\bot)^{s}\,
\nabla\HAM_j^{j+1}\,\Pi_j^{j+1}\big\|_{\HS}\,,
\\
d_s&:=\big\|(\check{\RES}_{j+1}^\bot)^{s}\,(\Pi_j^{j+1})^\bot\,\nabla\HAM_j^{j+1}\,\Pi_j^{j+1}\big\|_{\HS}\,,
\\
e_s&:=\big\|\check{\Pi}^\perp_{j+1}\,\{({\RES}_j^{j+1})^\perp\}^s\,\nabla\HAM_j^{j+1}\,\Pi_j^{j+1}\big\|_{\HS}\,.
\end{align*}
Each of the following five steps deals with one of the absolute values
on the RHS of \eqref{nicole}.

\smallskip

\noindent
{\em Step 1.}
First, we replace the projection
$\check{\Pi}_{j+1}$ in $a_s$ by $\Pi_j^{j+1}$,
\begin{align}
\label{lara1}
|a_s-b_s|
&\le 
\big\|U_j\,(\nabla\HAM_{j+1}(\RES_{j+1}^\bot)^s)\,U_j^*\big\|\, 
\big\|\Pi_{j}^{j+1}-\check{\Pi}_{j+1}\big\|_{\HS}.
\end{align}
From \eqref{sergio4} and the spectral calculus we deduce that 
\begin{align}
\big\|
\nabla\HAM_{j+1}\,(\RES_{j+1}^\bot)^s\,\big\|
&\le\|\nabla\HAM_{j+1}\,\HAM_{j+1}^{-\nf{1}{2}}\|\,
\Big\|\frac{\HAM_{j+1}^{\nf{1}{2}}\,\Pi_{j+1}^\bot}{(\HAM_{j+1}-E_{j+1})^s}\,
\Big\|\le\const\,\rho_j^{-s}.\label{lara1a}
\end{align}
By Lemma \ref{le-L} we obtain
$|a_s-b_s|\le\const\,\ee\,\rho_j^{1-s}\,(K_{j}^{(1)}+1)$.

\smallskip

\noindent
{\em Step 2.} Next, we replace the velocity 
$U_j\,\nabla\HAM_{j+1}\,U_j^*$
in $b_s$ by $\nabla\HAM_j^{j+1}$.
This is just a direct application of \eqref{debora2} which 
together with 
$\|({\check{\cR}}_{j+1}^\bot)^s\|\le\const\,\rho_j^{-s}$ and  
$(\Hf^{(j,j+1)}+\rho_j)^{\nf{1}{2}}\,\Pi_j^{j+1}=\rho_j^{\nf{1}{2}}\,\Pi_j^{j+1}$ 
implies
\begin{equation*}
\big\|(\check{\cR}_{j+1}^\bot)^s\,
\big(U_j\nabla\HAM_{j+1}U_j^*-\nabla\HAM_j^{j+1}\big)
\,\Pi_{j}^{j+1}\big\|_{\HS} 
\le\const\,\ee\,\rho_j^{1-s}.
\end{equation*}
By definition of $b_s$ and $c_s$ we thus have
$|b_s-c_s|\le\const\,\ee\,\rho_s^{1-s}$.

\smallskip

\noindent
{\em Step 3.}
By \eqref{for-1Deriv}, i.e. 
$\Pi_j^{j+1}\nabla\HAM_{j}^{j+1}\,\Pi_j^{j+1}
=\nabla E_{j}\,\Pi_j^{j+1}$, 
and by Lemma \ref{le-L} we obtain 
\begin{align}\nonumber
|c_s-d_s|
&\le
|\nabla E_j|\,
\|(\check{\RES}_{j+1}^\bot)^s\|\,
\big\|\check{\Pi}_{j+1}^\bot\,\Pi_j^{j+1}\big\|_{\HS}
\le
\const\,\ee\,\rho_j^{1-s}\,\big(K_j^{(1)}+1\big)\,.
\end{align}
{\em Step 4.} We employ \eqref{uwe3} 
with $\delta_j:=0$
to deduce that 
$|d_s-e_s|\le \const\,\ee\,\rho_j^{1-s}(a'_1+ 1)$,
where $a'_1=K_j^{(1)}$.

\smallskip

\noindent
{\em Step 5.} By Lemma \ref{le-L} and the fact that the Hilbert-Schmidt
norm dominates the operator norm,
\begin{equation}
|e_s-a_s'|\le \|\check{\Pi}_{j+1}\,(\Pi_j^{j+1})^\perp\|_{\HS}\,a_s'\le 
\const\,\ee\,\rho_j\,(K_j^{(1)}+1)\,a_s'\,.
\end{equation}
Collecting the results of the above steps
and using $a_s=K_{j+1}^{(s)}$, $a'_s=K_{j}^{(s)}$, and 
$a_s'\le \const\,\rho_j^{-s}$ (by \eqref{sergio7}),
we arrive at 
$|K_{j+1}^{(s)}-K_{j}^{(s)}|\le \const\,\ee\,\rho_j^{1-s}(K_{j}^{(1)}+1)$,
which implies \eqref{for-rek}.
\end{proof}

%
%
\begin{cor}\label{Cor-Karl}
There exist $\const,\ee_0>0$ such that, for all $\V{P}\in\ol{\cB}_\pmax$,
$\ee\in(0,\ee_0]$, and $j\in\NN_0$, we have
\begin{align*}
K_{j}^{(1)}&\le(1+\const\,\ee)^{j}-1\,,\quad 
K_{j}^{(\nf{1}{2})}\le\const\,\ee\,
\big(1- (\nf{1}{2})^{\nf{j}{2}}(1+\const\,\ee)^j\big)\,,
\end{align*}
as well as
\begin{align*}
\|\check{\Pi}_{j+1}-\Pi_j^{j+1}\|_{\HS},\,
\| \check{\Pi}_{j+1}^\bot \Pi_j^{j+1}\|_{\HS},\,
\| (\Pi_j^{j+1})^\bot \check{\Pi}_{j+1}\|_{\HS}
\le\const\,\ee\,(1+\const\,\ee)^j\rho_j\,.
\end{align*}
\end{cor}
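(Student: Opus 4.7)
The plan is a clean induction on $j$, feeding the recursive estimates of Lemma~\ref{le-rek} into the projection bound of Lemma~\ref{le-L}. The starting point is the base case $K_0^{(1)}=K_0^{(\nf{1}{2})}=0$. Indeed, with the convention from the notational warning and $k=0$, we have $\D_0=\valpha\cdot(\V{P}-\Pf^{(0)})+\alpha_0$, so $|\D_0|=((\V{P}-\Pf^{(0)})^2+1)^{\nf{1}{2}}$ acts as a scalar on $\CC^4$ and hence $\HAM_0=((\V{P}-\Pf^{(0)})^2+1)^{\nf{1}{2}}+\Hf^{(0)}$ commutes with $\id_{\CC^4}\otimes P_{\Omega_0}=\Pi_0$. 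The formula \eqref{primus} then shows that $\partial_\mu\HAM_0$ maps $v\otimes\Omega_0\mapsto(P_\mu/(\V{P}^2+1)^{\nf{1}{2}})v\otimes\Omega_0\in\Ran(\Pi_0)$, so $\Pi_0^\bot\nabla\HAM_0\,\Pi_0=0$.

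For the bound on $K_j^{(1)}$, iterating the first inequality of \eqref{for-rek} yields the standard linear recurrence estimate
\[
K_j^{(1)}\le\sum_{k=0}^{j-1}\const\,\ee\,(1+\const\,\ee)^k
=\frac{\const\,\ee}{\const\,\ee}\big((1+\const\,\ee)^j-1\big),
\]
which (after adjusting the constant inside the base $1+\const\,\ee$ to absorb the ratio of the two constants in Lemma~\ref{le-rek}) gives $K_j^{(1)}\le(1+\const\,\ee)^j-1$.

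For $K_j^{(\nf{1}{2})}$, plug the just-obtained bound into the second estimate of \eqref{for-rek} and use $\rho_j^{\nf{1}{2}}=\kappa^{\nf{1}{2}}(\nf{1}{2})^{\nf{j}{2}}$ to get $|K_{j+1}^{(\nf{1}{2})}-K_j^{(\nf{1}{2})}|\le\const\,\ee\,(\nf{1}{2})^{\nf{j}{2}}(1+\const\,\ee)^j$. Telescoping from $K_0^{(\nf{1}{2})}=0$ and summing the geometric series with ratio $r:=(1+\const\,\ee)/\sqrt{2}$, which for $\ee$ small is bounded away from $1$, produces
\[
K_j^{(\nf{1}{2})}\le\const\,\ee\,\frac{1-r^j}{1-r}
\le\const\,\ee\,\big(1-(\nf{1}{2})^{\nf{j}{2}}(1+\const\,\ee)^j\big),
\]
since $(1-r)^{-1}$ is uniformly bounded for small $\ee$.

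Finally, the projection estimates follow by substituting $K_j^{(1)}+1\le(1+\const\,\ee)^j$ directly into Lemma~\ref{le-L}. No serious obstacle arises; the only point requiring care is bookkeeping the various constants so that they can be absorbed into a single $1+\const\,\ee$ in the base of the exponential, which is permissible since all constants depend only on $\pmax$ and $\ee_0$ (and $\kappa$) and the resulting bounds are monotonic in that base.
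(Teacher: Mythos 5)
Your proof is correct and follows essentially the same route as the paper's: establish $K_0^{(s)}=0$ from the fact that for $j=0$ the coupling function vanishes so $\HAM_0$ (and hence $\nabla\HAM_0$) commutes with $\Pi_0$, then iterate the two recursions of Lemma~\ref{le-rek}, and finally substitute the resulting bound on $K_j^{(1)}$ into Lemma~\ref{le-L}. The only difference is that you fill in the arithmetic of the geometric sums explicitly (including the harmless absorption of constant ratios into the base $1+\const\,\ee$), which the paper's very terse proof leaves implicit.
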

\begin{proof} First, we prove the bound on $K^{(1)}_j$ by means of \eqref{for-rek} using $K^{(s)}_0=0$,
which follows from the fact that $\partial_{\V{h}}\HAM_0$ and $\Pi_0$ are merely
multiplication operators which commute with each other.
After that we use again \eqref{for-rek} to obtain the recursion formula
$K^{(\nf{1}{2})}_{j+1}\le K^{(\nf{1}{2})}_{j}+\const\,\ee\,(1+\const\,\ee)^j\rho_j^{\nf{1}{2}}$,
which together with $K^{(\nf{1}{2})}_{0}=0$ yields the second asserted bound.
The remainder of the proof
follows from Lemma~\ref{le-L} and the bound on $K^{(1)}_j$
we have just obtained. 
\end{proof}
%
%

\begin{thm}\label{thm-ren-mass}
There exist $\const,\ee_0>0$ such that the following assertions hold,
for all $\ee\in(0,\ee_0]$: 
We have
$E_\infty=\lim_{j\to \infty }E_j$ in the norm of $C_b^2(\cB_{\pmax})$
and $\|E_\infty-E_0\|_{C_b^2(\cB_{\pmax})}\le \const\,\ee$. More precisely,
we have the following estimates,
\begin{align}\label{ConvergenceRate1}
|E_{j}-E_\infty|&\le\const\, \ee\,\rho_j\,,
\\ \label{ConvergenceRate2}
 |\partial_{\V{h}} (E_{j}- E_\infty)|&\le\const\, \ee \, (1+\const\,\ee)^j\rho_j\,,
\\ \label{ConvergenceRate3}
 |\partial_{\V{h}}^2 (E_{j}- E_\infty)|&\le
\const\, \ee \,(1+\const\,\ee)^j\rho_j^{\nf{1}{2}},
\end{align}
uniformly on $\mathcal{B}_{\pmax}$ 
and for all $\V{h}\in\RR^3$, $|\V{h}|=1$, and $j\in\NN_0$.
In particular, $E_\infty $ is strictly convex on $\mathcal{B}_{\pmax}$
and attains its global minimum, $\inf_{\V{P}\in \RR^3} E_\infty$, at $\V{P}=\V{0}$.
\end{thm}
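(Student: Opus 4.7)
The plan is to establish the quantitative convergence bounds \eqref{ConvergenceRate1}--\eqref{ConvergenceRate3} and then deduce the qualitative statements about strict convexity and the location of the global minimum. The entire argument rests on controlling the telescoping sums $E_\infty - E_J = \sum_{j\ge J}(E_{j+1}-E_j)$ and its first two derivatives by means of the Hellmann-Feynman formulas \eqref{for-1Deriv}--\eqref{for-2Deriv} combined with the projection and resolvent estimates from Corollary~\ref{Cor-Karl} and Section~\ref{ssec-res-comp}.

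The zeroth-order bound \eqref{ConvergenceRate1} follows immediately from $|E_{j+1}-E_j|\le\const\,\ee\,\rho_j$ (a consequence of \eqref{EnergyDiff} applied to the pair of scales $(j,j+1)$, or directly from \eqref{EnergyDiff} with $k=j$, $j\to\infty$) together with $\sum_{\ell\ge j}\rho_\ell=2\rho_j$. For the first-order bound \eqref{ConvergenceRate2}, I would express $4\,\partial_{\V{h}}E_j=\Tr\{\Pi_j(\partial_{\V{h}}\HAM_j)\Pi_j\}=\Tr\{\Pi_j^{j+1}(\partial_{\V{h}}\HAM_j^{j+1})\Pi_j^{j+1}\}$ (since both sides agree on $\HR_j\otimes\CC\Omega_j^{j+1}$) and, using unitary invariance of the trace under $U_j$, rewrite $4\,\partial_{\V{h}}E_{j+1}=\Tr\{\check\Pi_{j+1}(U_j\partial_{\V{h}}\HAM_{j+1}U_j^*)\check\Pi_{j+1}\}$. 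Subtracting and inserting $\Pi_j^{j+1}$ in the middle, the difference splits into a projection-error part controlled by $\|\check\Pi_{j+1}-\Pi_j^{j+1}\|_{\HS}\le\const\,\ee(1+\const\,\ee)^j\rho_j$ from Corollary~\ref{Cor-Karl} (multiplied by the uniform bound \eqref{sergio4} on $\partial_{\V{h}}\HAM$) and a velocity-error part bounded via \eqref{debora2} applied to $\Pi_j^{j+1}$, which gives $\le\const\,\ee\,\rho_j$ since $(\Hf^{(j,j+1)}+\rho_j)^{\nf{1}{2}}\,\Pi_j^{j+1}=\rho_j^{\nf{1}{2}}\,\Pi_j^{j+1}$. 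Summing $(1+\const\,\ee)^j\rho_j$ from $j=J$ telescopes to $\const\,\ee(1+\const\,\ee)^J\rho_J$, yielding \eqref{ConvergenceRate2} after tuning $\ee_0$ so that $(1+\const\,\ee)/2<1$.

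The second-order bound \eqref{ConvergenceRate3} is the main technical obstacle, since the quadratic term $\|({\RES}_j^\bot)^{1/2}(\partial_{\V{h}}\HAM_j)\Pi_j\|_{\HS}^2$ in \eqref{for-2Deriv} squares a quantity of type $K_j^{(1/2)}$. The first trace term $\Tr\{\Pi_j(\partial_{\V{h}}^2\HAM_j)\Pi_j\}$ is handled exactly as in Step~2, using in addition that $\partial_{\V{h}}^2\HAM_j$ is bounded (see \eqref{secundus}) and that $U_j\partial_{\V{h}}^2\HAM_{j+1}U_j^*-\partial_{\V{h}}^2\HAM_j^{j+1}$ satisfies \eqref{debora2} with $\ell=2$. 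For the Hilbert-Schmidt square, I would factor $a_{j+1}^2-a_j^2=(a_{j+1}-a_j)(a_{j+1}+a_j)$ where $a_j:=\|({\RES}_j^\bot)^{1/2}(\partial_{\V{h}}\HAM_j)\Pi_j\|_{\HS}$; the sum $a_{j+1}+a_j\le\const$ by Corollary~\ref{Cor-Karl}, while $|a_{j+1}-a_j|$ is essentially $|K_{j+1}^{(1/2)}-K_j^{(1/2)}|$, bounded by $\const\,\ee(1+\const\,\ee)^j\rho_j^{1/2}$ thanks to the second estimate in \eqref{for-rek} combined with the $K_j^{(1)}$-bound of Corollary~\ref{Cor-Karl}. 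The resulting sum $\sum_{j\ge J}(1+\const\,\ee)^j\rho_j^{1/2}$ is finite for $\ee<(\sqrt{2}-1)/\const$, which pins down the narrowest constraint on $\ee_0$.

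Uniformity of \eqref{ConvergenceRate1}--\eqref{ConvergenceRate3} in $\V{P}\in\cB_\pmax$ renders $\{E_j\}$ Cauchy in $C_b^2(\cB_\pmax)$; its limit coincides with the pointwise limit $E_\infty$, and the telescoping sum with $j=0$ yields $\|E_\infty-E_0\|_{C_b^2(\cB_\pmax)}\le\const\,\ee$. Since $E_0(\V{P})=\sqrt{\V{P}^2+1}$ has Hessian bounded below by $(\pmax^2+1)^{-3/2}\,\id$ on $\cB_\pmax$, small-$\ee$ perturbation in $C^2$-norm preserves strict convexity of $E_\infty$ there. Finally, rotational symmetry (Remark~\ref{rem-rot-sym}) together with $C^1$-convergence gives $\nabla E_\infty(\V{0})=\V{0}$, so strict convexity yields that $\V{0}$ is the unique minimizer of $E_\infty$ on $\cB_\pmax$. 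To promote this to a global minimum over $\RR^3$, I would, for the given $\pmax$, apply the above results on the enlarged ball $\cB_{\pmax'}$ with $\pmax':=\max\{\pmax,R\}$ for some $R$ chosen so that $(1-\const\,\ee)\sqrt{R^2+1}>1+\const\,\ee\ge E_\infty(\V{0})$ (possible by \eqref{eq2-neu2} for small $\ee$); then $E_\infty(\V{P})>E_\infty(\V{0})$ for $|\V{P}|\ge\pmax'$ by \eqref{eq2-neu2}, while strict convexity on $\cB_{\pmax'}$ handles the interior.
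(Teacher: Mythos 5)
Your proposal follows essentially the same route as the paper's proof: the Hellmann–Feynman formulas \eqref{for-1Deriv}--\eqref{for-2Deriv} transferred to the common Hilbert space $\HR_{j+1}$ via conjugation by $U_j$, the three-way split into projection-error and velocity-error pieces controlled by Corollary~\ref{Cor-Karl} and \eqref{debora2}, the factorization $a_{j+1}^2-a_j^2=(a_{j+1}-a_j)(a_{j+1}+a_j)$ together with the second estimate of \eqref{for-rek} for the Hilbert--Schmidt square, geometric summation with ratio $(1+\const\,\ee)(\nf{1}{2})^{\nf{1}{2}}<1$, and the $C^2$-closeness to $E_0$ for strict convexity. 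Your enlarged-ball argument for the global minimum is slightly more elaborate than needed — the paper's direct use of Lemma~\ref{EnergyLemma}(1) already suffices because for any fixed $\pmax>0$ one can take $\ee$ so small that $(1-\const\,\ee)\sqrt{\pmax^2+1}>1+\const\,\ee$ — but both are correct.
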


\begin{proof} 
For sufficiently small $\ee$, we may infer \eqref{ConvergenceRate1}
from \eqref{diff-E}; compare \eqref{sigrid2} below.
Recall the formulas in \eqref{for-1Deriv} and \eqref{for-2Deriv} 
for the first and second derivative of $E_k^j=E_k$.
When we represent $\partial_{\V{h}}^\nu E_{j+1}$, $\nu=1,2$, by means of these
formulas we actually 
replace all involved operators by unitarily equivalent ones
using the fact that
the trace and the Hilbert-Schmidt norm are invariant under conjugation with $U_j$.
%
For instance, $4\,\partial_{\V{h}}E_{j+1}$ is equal to $a_1$ with
$$
a_\nu:=
\tr\big[\check{\Pi}_{j+1}\,U_j\,\{\partial_{\V{h}}^\nu{\HAM}_{j+1}\}\,U_j^*\,\check{\Pi}_{j+1}\big]\,,
\quad\nu=1,2\,.
$$
In order to deal with operators defined on the same Hilbert space
we represent $\partial_{\V{h}}^\nu E_j$, $\nu=1,2$, in terms of
the Hamiltonian $\HAM_j^{j+1}$. 
In particular, we write $4\,\partial_{\V{h}}E_j=a_1'$ with
$$
a'_\nu:=\tr\big[{\Pi}_j^{j+1}\,\{\partial_{\V{h}}^\nu{\HAM}_j^{j+1}\}\,{\Pi}_j^{j+1}\big]\,,
\quad\nu=1,2\,.
$$
We plan to estimate
$|a_\nu-a_\nu'|\le|a_\nu-b_\nu|+|b_\nu-c_\nu|+|c_\nu-a_\nu'|$ with
\begin{align*}
b_\nu&:= \tr\big[\check{\Pi}_{j+1}\,U_j\,\{\partial_{\V{h}}^\nu {\HAM}_{j+1}\}\,U_j^*\,\Pi_j^{j+1}\big]\,,
\\
c_\nu&:= \tr\big[\check{\Pi}_{j+1}\,\{\partial_{\V{h}}^\nu{\HAM}_j^{j+1}\}\,\Pi_j^{j+1}\big]\,,
\end{align*}
for $\nu=1,2$.
Here, we have
$$
|a_\nu-b_\nu|\le\|\check{\Pi}_{j+1}\,U_j\{\partial_{\V{h}}^\nu{\HAM}_{j+1}\}U_j^*\|_{\HS}\,
\|\Pi_j^{j+1}-\check{\Pi}_{j+1}\|_{\HS}\,,
$$
and because of
$\|\check{\Pi}_{j+1}\,U_j\{\partial_{\V{h}}^\nu{\HAM}_{j+1}\}\,U_j^*\|^2_{\HS}
\le4\,\|\{\partial_{\V{h}}^\nu{\HAM}_{j+1}\,\}\,\Pi_{j+1}\|^2$
it follows from \eqref{sergio7} and Corollary~\ref{Cor-Karl}
that $|a_\nu-b_\nu|\le\const \,\ee\,(1+\const \,\ee)^j\rho_j$, $\nu=1,2$.
By \eqref{debora2} we further have
\begin{align*}
|b_\nu-c_\nu|\le& 
\big\|\big(U_j\{\partial_{\V{h}}^\nu{\HAM}_{j+1}\}U_j^*
-\partial_{\V{h}}^\nu{\HAM}_j^{j+1}\big)(\Hf^{(j,j+1)}\!+\rho_j)^{-1/2}\big\|\\
&\cdot \|(\Hf^{(j,j+1)}\!+\rho_j)^{1/2}\Pi_j^{j+1}\|_{\HS}
\cdot\|\check{\Pi}_{j+1}\|_{\HS}\le \const\,\ee\,\rho_j\,,
\end{align*}
since $(\Hf^{(j,j+1)}\!+\rho_j)^{1/2}\,\Pi_j^{j+1}=\rho_j^{1/2}\,\Pi_j^{j+1}$ 
by Theorem \ref{Thm:Gap}(3).
Again by \eqref{sergio7} and Corollary~\ref{Cor-Karl} we finally know that
\begin{align*}
|c_\nu-a'_\nu|\le& \|\{\partial_{\V{h}}^\nu{\HAM}_j^{j+1}\}\,\Pi_j^{j+1}\|_{\HS}\,
\|\Pi_j^{j+1}-\check{\Pi}_{j+1}\|_{\HS}
\le  \const \,\ee\,(1+\const \,\ee)^j\rho_j\,.
\end{align*}
Altogether this yields
\begin{equation}\label{sigrid1}
4\,|\partial_{\V{h}}(E_j-E_{j+1})|\stackrel{\textrm{if}\,\nu=1}{=}
|a_\nu-a_\nu'|\le\const\,\ee\,(1+\const \,\ee)^j\rho_j\,,
\end{equation}
which implies \eqref{ConvergenceRate2}, if $\ee$ is sufficiently small; 
compare \eqref{sigrid2} below.

Next, we turn to the second member in the formula \eqref{for-2Deriv} 
for the second derivatives.
We have
\begin{align}\label{sigrid1a}
4\,\partial_{\V{h}}^2(E_j-E_{j+1})&=a_2'-a_2+2\,\triangle\,,
\end{align}
with
$$
\triangle:=\|(\check{\RES}_{j+1}^\bot)^{\nf{1}{2}}
U_j\{\partial_{\V{h}}\HAM_{j+1}\}U_j^*\check{\Pi}_{j+1}\|_{\HS}^2
-\|((\RES_j^{j+1})^\bot)^{\nf{1}{2}}
\{\partial_{\V{h}}\HAM_j^{j+1}\}\Pi_j^{j+1}\|_{\HS}^2.
$$
Employing Lemma \ref{le-rek} and Corollary \ref{Cor-Karl} we infer that
\begin{align}\label{sigrid1b}
|\triangle|
&\le (K_{j+1}^{(\nf{1}{2})}+K_{j}^{(\nf{1}{2})})\,|K_{j+1}^{(\nf{1}{2})}-K_{j}^{(\nf{1}{2})}|
\le \const \,\ee^2 \,(1+\const\,\ee)^j\rho_j^{\nf{1}{2}}.
\end{align}
Combining \eqref{sigrid1}--\eqref{sigrid1b} 
and using $\rho_k=\rho_j\,(\nf{1}{2})^{k-j}$, $k\ge j\ge0$,
we get
\begin{align}\label{sigrid2}
\sum_{k=j}^\infty |\partial_{\V{h}}^2 (E_{k+1}- E_k)|
&\le\const\,\ee\sum_{k=j}^\infty(1+\const\,\ee)^k\rho_k^{\nf{1}{2}}
\le\frac{\const\,\ee}{1-b}(1+\const\,\ee)^j\rho_j^{\nf{1}{2}},
\end{align}
uniformly on $\mathcal{B}_{\pmax}$,
provided that $\ee>0$ is sufficiently small with
$b:=(1+\const\,\ee)\,(\nf{1}{2})^{\nf{1}{2}}<1$.
By the Weierstra{\ss} test this implies \eqref{ConvergenceRate3}.
Since $E_j$ and $E_\infty$ are rotationally symmetric this
also implies convergence in $C^2(\cB_\pmax)$.

To discuss the convexity of $E_\infty$ we
recall that $E_0(\V{P})= (\V{P}^2+1)^{\nf{1}{2}}$.
Since $\inf_{\mathcal{B}_{\pmax}}\partial_{\V{P}}^2E_0>0$,
we obtain $\inf_{\mathcal{B}_{\pmax}}\partial_{\V{P}}^2E_\infty>0$
from \eqref{ConvergenceRate3}, provided that $\ee$ is small enough. So
$E_\infty$ is strictly convex on $\mathcal{B}_{\pmax}$. By
rotational symmetry, $\nabla E_\infty(\V{0})=\V{0}$,
thus $E_\infty$ attains its unique minimum in $\mathcal{B}_{\pmax}$ 
at $\V{0}$. Thanks to Lemma \ref{EnergyLemma}(1) we know, however, that 
for small $\ee$, the global infimum of $E_\infty$ is located
in $\mathcal{B}_{\pmax}$, i.e. at $\V{P}=\V{0}$.
\end{proof}


\section{Existence and multiplicity of ground states}
\label{sec-ex}

\noindent
Let $P_{\Omega_j^\infty}$ be the projection onto the vacuum sector in $\sF_j^\infty$.
On $\HR$ we define 
\begin{equation*}
\wt{\Pi}_j^\infty(\V{P}) 
:= \wt{\Pi}_j(\V{P})\otimes P_{\Omega_j^\infty}\,,
\quad\textrm{with}\quad
\wt{\Pi}_j(\V{P}):= W_j(\V{P})\,\Pi_j(\V{P})\,W_j^*(\V{P})\,.
\end{equation*}
Here $j\in\NN_0$ and the unitaries $W_j$ are
given by \eqref{willi1}. Then
\begin{equation*}
\wt{\Pi}_{j+1}^\infty-\wt{\Pi}_j^\infty= 
W_{j-1} (\check{\Pi}_{j+1}-\Pi_j^{j+1})W_{j-1}^*\otimes P_{\Omega_{j+1}^\infty}
\,,\quad j\in\NN_0\,.
\end{equation*}
Hence, by Corollary~\ref{Cor-Karl} we may define the rank four projection
\begin{equation}\label{flann1}
\wt{\Pi}_\infty(\V{P}):= \lim_{j\to\infty}\wt{\Pi}_{j}^\infty(\V{P})\,,
\end{equation}
if $\ee$ is sufficiently small.
The goal of this section is to show that
$\wt{\Pi}_\infty$ is the ground state eigenprojection
of the operator $\wt{\HAM}_\infty$ introduced in
\eqref{def-wtHinfty}--\eqref{willi22} and discussed
in Lemma~\ref{le-wtHAM}. This will prove the first assertion 
in Theorem~\ref{thm-intro2}(1).
From \eqref{block} and Corollary~\ref{Cor-Karl} we then also
infer that the rate of convergence asserted in 
Theorem~\ref{thm-intro2}(1) is correct.

In order to show that the range of $\wt{\Pi}_\infty$
is {\em the whole} eigenspace of $\wt{\HAM}_\infty$
corresponding to $E_\infty$ we need
the next proposition. Its proof is based on
the following consequence of
\eqref{EnergyDiff} and \eqref{adam33},
\begin{equation}\label{adam-infty}
\const_1\,(\wt{\HAM}_\infty-E_\infty+\ee\,\rho_j)\ge \wt{H}_j^\infty-E_j\,.
\end{equation}

\begin{prop}\label{prop-horst}
Let $\V{P}\in\cB_\pmax$ and suppose that
$\phi$ is a normalized ground state eigenvector
of $\wt{\HAM}_\infty(\V{P})$. If $\ee>0$ is sufficiently small
depending only on $\pmax$, then 
\begin{equation}\label{stefanie1}
\liminf_{j\to\infty}\|\wt{\Pi}_j^\infty\,\phi\|>0\,.
\end{equation}
\end{prop}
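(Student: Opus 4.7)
The plan is to combine the form inequality \eqref{adam-infty}, which compares $\wt{\HAM}_j^\infty$ to $\wt{\HAM}_\infty$ above their respective infima, with the gap-type lower bound \eqref{susi2} extracted from the proof of Lemma~\ref{le-gap1}(2), together with the elementary fact that the soft-photon vacuum projection $P_{\Omega_j^\infty}$ converges strongly to the identity as $j\to\infty$.

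First, since $\wt{\HAM}_\infty(\V{P})\,\phi=E(\V{P})\,\phi$ and $\|\phi\|=1$, evaluating \eqref{adam-infty} in the state $\phi$ yields
\[
\SPn{\phi}{\big(\wt{\HAM}_j^\infty-E_j\big)\,\phi}\le\const_1\,\ee\,\rho_j\,.
\]
Next, I would conjugate \eqref{susi2} (applied with $k=j$ and upper index $\infty$) by the unitary $W_j(\V{P})$, which acts trivially on $\sF_j^\infty$ and therefore commutes with both $\Hf^{(j,\infty)}$ and $P_{\Omega_j^\infty}$. This produces the operator inequality
\[
\wt{\HAM}_j^\infty-E_j\,\ge\,\qmax\,\Hf^{(j,\infty)}+\gap_j\,\wt{\Pi}_j^\bot\otimes P_{\Omega_j^\infty}\,\ge\,\gap_j\,\wt{\Pi}_j^\bot\otimes P_{\Omega_j^\infty}\,.
\]
Taking expectations in $\phi$ and invoking $\gap_j\ge q\rho_j/2$ from Theorem~\ref{Thm:Gap}(1), the two preceding displays combine into the \emph{uniform} bound $\|\wt{\Pi}_j^\bot\otimes P_{\Omega_j^\infty}\,\phi\|^2\le 2\const_1\,\ee/q$.

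To finish, I would observe that for any fixed $\phi\in\HR$ one has $\id\otimes P_{\Omega_j^\infty}\,\phi\to\phi$ in norm as $j\to\infty$. Indeed, on each $n$-particle sector of $\sF$ the complement of $P_{\Omega_j^\infty}$ corresponds to the restriction to configurations with at least one photon momentum in $\cA_j^\infty=\{|\V{k}|<\rho_j\}$; as $\cA_j^\infty$ shrinks to a Lebesgue-null set, dominated convergence and a diagonal summation over $n$ give $\|(\id\otimes P_{\Omega_j^\infty})\,\phi\|\to 1$. Since $\wt{\Pi}_j^\infty=\wt{\Pi}_j\otimes P_{\Omega_j^\infty}$ and since $\wt{\Pi}_j\otimes P_{\Omega_j^\infty}$ and $\wt{\Pi}_j^\bot\otimes P_{\Omega_j^\infty}$ are mutually orthogonal projections summing to $\id\otimes P_{\Omega_j^\infty}$, one has
\[
\|\wt{\Pi}_j^\infty\,\phi\|^2=\|(\id\otimes P_{\Omega_j^\infty})\,\phi\|^2-\|\wt{\Pi}_j^\bot\otimes P_{\Omega_j^\infty}\,\phi\|^2\,\ge\,\|(\id\otimes P_{\Omega_j^\infty})\,\phi\|^2-\frac{2\const_1}{q}\,\ee\,,
\]
and passing to the liminf yields $\liminf_{j\to\infty}\|\wt{\Pi}_j^\infty\,\phi\|^2\ge 1-2\const_1\ee/q$, which is strictly positive once $\ee_0$ is chosen small enough.

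The conceptual subtlety that at first threatens to derail the argument is that $\wt{\HAM}_j^\infty-E_j$ possesses no genuine spectral gap above $E_j$, since its spectrum accumulates down to $E_j$ through the soft-photon continuum. The key point is that a gap of order $\rho_j$ nevertheless survives \emph{on the range of $\wt{\Pi}_j^\bot\otimes P_{\Omega_j^\infty}$}, and this is precisely the subspace one needs to control in order to compare $\|\wt{\Pi}_j^\infty\phi\|^2$ to $\|(\id\otimes P_{\Omega_j^\infty})\phi\|^2$; no pull-through formula or infrared bound on $\|a(\V{k})\phi\|$ is required.
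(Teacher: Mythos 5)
Your argument is correct, and it is in essence the same comparison the paper makes — via \eqref{adam-infty}, the spectral gap $\gap_j\ge q\rho_j/2$ localised to the soft-vacuum sector (which is exactly what \eqref{susi2}, conjugated by $W_j$, records), and the strong convergence $\id\otimes P_{\Omega_j^\infty}\to\id$ — but your execution is more direct. The paper introduces the auxiliary bounded operators $F_j(t):=(1+t\const_1\ee\rho_j)^{\nf12}(1+t\const_1(\wt{\HAM}_\infty-E_\infty+\ee\rho_j))^{-\nf12}$, invokes operator monotonicity of the inverse to compare them with $(1+t(\wt{\HAM}_j^\infty-E_j))^{-\nf12}$, decomposes $1=\lim_{t\to\infty}\|F_j(t)\phi\|$ by the triangle inequality, and finally sends $t\to\infty$ to extract the numerical bound $\const_2=2\const_1\ee/q$. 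You achieve the same numerical bound in one step by evaluating the form inequality $\const_1(\wt{\HAM}_\infty-E_\infty+\ee\rho_j)\ge\wt{\HAM}_j^\infty-E_j\ge\gap_j\,\wt{\Pi}_j^\bot\otimes P_{\Omega_j^\infty}$ in the state $\phi\in\dom(\Hf)\subset\fdom(\Hf)$, and then you replace the paper's triangle-inequality decomposition by the sharper Pythagorean identity $\|(\id\otimes P_{\Omega_j^\infty})\phi\|^2=\|\wt{\Pi}_j^\infty\phi\|^2+\|\wt{\Pi}_j^\bot\otimes P_{\Omega_j^\infty}\phi\|^2$. This yields $\liminf_j\|\wt{\Pi}_j^\infty\phi\|^2\ge1-2\const_1\ee/q$ rather than the paper's $\liminf_j\|\wt{\Pi}_j^\infty\phi\|\ge1-\sqrt{\const_2}$, a marginally better constant, and avoids the $F_j(t)$ machinery entirely; the domain bookkeeping ($\phi$ lies in the common form domain $\fdom(\Hf)$ of all operators involved, and $W_j$ commutes with $\Hf^{(j,\infty)}$ and $P_{\Omega_j^\infty}$ since it acts only in $\sF_j$) is exactly as you say.
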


\begin{proof}
Defining 
$F_j(t):=(1+t\const_1\,\ee\,\rho_j)^{\nf{1}{2}}
(1+t\const_1(\wt{\HAM}_\infty-E_\infty+\ee\,\rho_j))^{-\nf{1}{2}}$, 
$t\ge0$, $j\in\NN$,
we observe that
\begin{align*}
1&=\lim_{t\to\infty}\|F_j(t)\,\phi\|
\le\lim_{t\to\infty}\|F_j(t)\,\id\otimes P_{\Omega_j^\infty}\,\phi\|
+\|\id\otimes P_{\Omega_j^\infty}^\bot\,\phi\|
\\
&\le\lim_{t\to\infty}\|F_j(t)\,\wt{\Pi}_j^\bot\otimes P_{\Omega_j^\infty}\,\phi\|
+\|\wt{\Pi}_j\otimes P_{\Omega_j^\infty}\,\phi\|
+\|\id\otimes P_{\Omega_j^\infty}^\bot\,\phi\|\,,
\end{align*}
for every $j\in\NN$. If $\ee>0$ is sufficiently small, then
\eqref{adam-infty} and the operator monotonicity of the inversion
$T\mapsto T^{-1}$ permit to get
\begin{align*}
\|F_j(t)\,\Pi_j^\bot\otimes P_{\Omega_j^\infty}\,\phi\|^2
&\le(1+t\const_1\,\ee\,\rho_j)\,
\big\|(1+t\,(\wt{\HAM}_j^\infty-E_j))^{-\nf{1}{2}}\,
\wt{\Pi}_j^\bot\otimes P_{\Omega_j^\infty}\,\phi\big\|^2
\\
&=(1+t\const_1\,\ee\,\rho_j)\,
\big\|\{(1+t\,(\wt{\HAM}_j-E_j))^{-\nf{1}{2}}
\wt{\Pi}_j^\bot\}\otimes P_{\Omega_j^\infty}\,\phi\big\|^2
\\
&\le\frac{1+t\,\const_1\,\ee\,\rho_j}{1+t\,\gap_j}
\le\frac{1+t\,\const_1\,\ee\,\rho_j}{1+t\,q\,\rho_j/2}
\:\xrightarrow{\;t\to\infty\;}\:{2\const_1\,\ee}/{q}\le\const_2<1\,,
\end{align*}
for all $j\in\NN$.
Since also $\lim_{j\to\infty}\|(\id\otimes P^\perp_{\Omega_j^{\infty}})\,\phi\|=0$
(by dominated convergence) this implies
$0<1-\sqrt{\const_2}
\le\liminf_{j\to\infty}\|\wt{\Pi}_j^\infty\,\phi\|$.
\end{proof}

\smallskip

\noindent
For later use we record the following observation.
In the case $\V{P}\not=\V{0}$ we shall use it to produce a contradiction
showing that $\phi$ as in the following statement cannot exist.

\begin{cor}\label{cor-horst}
Given $\pmax>0$ we find $\ee_0>0$ such that, if
$\V{P}\in\cB_\pmax$, $\ee\in(0,\ee_0]$, and {\em if}
$\phi$ is a normalized ground state eigenvector
of ${\HAM}_\infty(\V{P})$, then 
$\{{\Pi}_j\otimes P_{\Omega_j^\infty}\,\phi\}_{j\in\NN}$ contains
a subsequence with a non-zero
weak limit, $\phi'\not=0$.
\end{cor}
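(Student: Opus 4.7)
The strategy is to transport the argument of Proposition~\ref{prop-horst} to the non-dressed side and then to extract a non-zero weak limit by a Banach--Alaoglu argument. The only place where the dressing enters the proof of Proposition~\ref{prop-horst} is through the inequality \eqref{adam-infty}; its non-dressed analogue is already implicit in Section~\ref{sec-gap}, so the core quantitative estimate will carry over essentially verbatim.

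First, I would establish the form bound
\begin{equation*}
\const_1\,(\HAM_\infty - E_\infty + \ee\,\rho_j) \ge \HAM_j^\infty - E_j\,,
\qquad j\in\NN,
\end{equation*}
for some $\const_1$ depending only on $\pmax$ and $\ee_0$. Indeed, Theorem~\ref{Thm:Gap}(1) verifies the hypotheses of Lemma~\ref{le-gap1} in the non-dressed setting, so Lemma~\ref{le-adam} applies; specializing \eqref{adam3} with its $k=j$ and its $j=\infty$ yields $\pm(\HAM_\infty - \HAM_j^\infty) \le \const\,\ee\,(\HAM_\infty - E_j + \rho_j)$, and combining this with $|E_j - E_\infty| \le \const\,\ee\,\rho_j$ from \eqref{ConvergenceRate1} (or \eqref{EnergyDiff}) gives the claim.

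Now I would repeat verbatim the operator-monotonicity calculation in the proof of Proposition~\ref{prop-horst}, using the spectrally defined contraction
\begin{equation*}
F_j(t):=(1+t\,\const_1\,\ee\,\rho_j)^{\nf{1}{2}}\big(1+t\,\const_1\,(\HAM_\infty-E_\infty+\ee\,\rho_j)\big)^{-\nf{1}{2}},
\end{equation*}
but with $\HAM_\infty$, $\HAM_j^\infty$, and $\Pi_j$ in place of their dressed counterparts, and with the spectral gap bound $\gap_j\ge q\,\rho_j/2$ from Theorem~\ref{Thm:Gap}(1). Choosing $\ee_0>0$ so small that $\const_2:=2\,\const_1\,\ee/q<1$, one obtains
\begin{equation*}
\liminf_{j\to\infty}\big\|\Pi_j\otimes P_{\Omega_j^\infty}\,\phi\big\|\,\ge\,1-\sqrt{\const_2}\,>\,0\,.
\end{equation*}

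Finally, write $\psi_j:=(\Pi_j\otimes P_{\Omega_j^\infty})\,\phi$. Since $\{\psi_j\}_{j\in\NN}$ is bounded in $\HR$, Banach--Alaoglu yields a subsequence $\psi_{j_k}\rightharpoonup\phi'$, and a further passage to a subsequence ensures that $\ell:=\lim_k\|\psi_{j_k}\|$ exists with $\ell\ge 1-\sqrt{\const_2}>0$. Using that $\Pi_j\otimes P_{\Omega_j^\infty}$ is an orthogonal projection, $\SPn{\phi}{\psi_{j_k}}=\|\psi_{j_k}\|^2$, and weak convergence gives
\begin{equation*}
\SPn{\phi}{\phi'}=\lim_{k\to\infty}\SPn{\phi}{\psi_{j_k}}=\lim_{k\to\infty}\|\psi_{j_k}\|^2=\ell^{\,2}>0\,,
\end{equation*}
so $\phi'\ne 0$, as required. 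The only slightly delicate point is this last step: the norm lower bound alone does not exclude a weak limit zero in general, but here the explicit presence of $\phi$ on the right in $\psi_j=(\Pi_j\otimes P_{\Omega_j^\infty})\phi$ allows us to test weak convergence against $\phi$ itself, converting the norm bound into a non-vanishing inner product.
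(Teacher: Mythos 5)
Your proposal is correct and follows essentially the same route as the paper: the paper's proof of Corollary~\ref{cor-horst} simply observes that the bound \eqref{adam-infty} holds with the tildes dropped (which you verify in more detail via Lemma~\ref{le-adam}, \eqref{adam3}, and \eqref{EnergyDiff}), reruns the operator-monotonicity estimate from Proposition~\ref{prop-horst} to get $\liminf_j\|\Pi_j\otimes P_{\Omega_j^\infty}\phi\|\ge 1-\sqrt{\const_2}>0$, and then extracts a weakly convergent subsequence and tests against $\phi$ using $\SPn{\phi}{\psi_j}=\|\psi_j\|^2$ to conclude $\phi'\ne 0$. Your extra passage to a sub-subsequence to make the norms converge is unnecessary (testing against $\phi$ already suffices, since $\SPn{\phi}{\psi_{j_k}}$ converges and equals $\|\psi_{j_k}\|^2\ge\liminf_j\|\psi_j\|^2$) but harmless.
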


\begin{proof}
The bound \eqref{adam-infty} holds true also with $\wt{H}_\infty$
and $\wt{H}_j^\infty$ replaced
by $H_\infty$ and $H_j^\infty$, respectively.
Hence, by exactly the same proof as above (just drop the tildes)
we obtain 
$\liminf_{j\to\infty}\|{\Pi}_j\otimes P_{\Omega_j^\infty}\,\phi\|
\ge1-\sqrt{\const_2}$.

Now, the bounded sequence defined by
$\phi_j:={\Pi}_j\otimes P_{\Omega_j^\infty}\,\phi$, $j\in\NN$,
contains a weakly convergent subsequence, say 
$\phi_{\vk}'=\phi_{j_\vk}$, $\vk\in\NN$.
Denoting its weak limit by $\phi'$, we have 
$\SPn{\phi}{\phi'}=\lim_{\vk\to\infty}\SPn{\phi}{\phi_{\vk}'}\ge(1-\sqrt{{\const_2}})^2$,
thus $\phi'\not=0$.
\end{proof}

\begin{thm}[{\bf Ground states}]
\label{mainthm}
For every $\pmax>0$, there exists $\ee_0>0$ such that,
for all $\V{P}\in\cB_\pmax$, the ground state energy $E_\infty(\V{P})$
is an exactly four-fold degenerate eigenvalue of
$\wt{\HAM}_\infty(\V{P})$ and the corresponding eigenprojection
is given by $\wt{\Pi}_\infty(\V{P})$ (defined in \eqref{flann1}).
In particular, $E_\infty(\V{0})$ is a four-fold degenerate eigenvalue of
${\HAM}_\infty(\V{0})$.
\end{thm}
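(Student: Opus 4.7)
The plan is to decompose Theorem~\ref{mainthm} into three short steps built on: the norm limit \eqref{flann1}, the norm resolvent convergence $\wt{\HAM}_j^\infty\to\wt{\HAM}_\infty$ of Lemma~\ref{le-wtHAM}(1), the identity $\inf\Spec[\wt{\HAM}_j^\infty(\V{P})]=E_j(\V{P})$ for $j\in\NN_0\cup\{\infty\}$ from Lemma~\ref{le-wtHAM}(2), the norm convergence $\wt{\Pi}_j^\infty\to\wt{\Pi}_\infty$ provided by Corollary~\ref{Cor-Karl}, and the quantitative lower bound from Proposition~\ref{prop-horst}.

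\emph{Step 1: lower bound on the multiplicity.} By construction, for every $j\in\NN_0$ the rank four projection $\wt{\Pi}_j^\infty$ is the ground state eigenprojection of $\wt{\HAM}_j^\infty$ at $E_j$, so on all of $\HR$ we have the operator identity
\[
(\wt{\HAM}_j^\infty+i)^{-1}\wt{\Pi}_j^\infty=(E_j+i)^{-1}\wt{\Pi}_j^\infty.
\]
Letting $j\to\infty$ and using the three convergence statements listed above together with $E_j\to E_\infty$ from Theorem~\ref{thm-ren-mass}, we arrive at $(\wt{\HAM}_\infty+i)^{-1}\wt{\Pi}_\infty=(E_\infty+i)^{-1}\wt{\Pi}_\infty$. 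This shows $\Ran(\wt{\Pi}_\infty)\subset\dom(\wt{\HAM}_\infty)$ and $\wt{\HAM}_\infty\phi=E_\infty\phi$ for every $\phi\in\Ran(\wt{\Pi}_\infty)$. Since Lemma~\ref{le-wtHAM}(2) already identifies $E_\infty$ as the ground state energy of $\wt{\HAM}_\infty$, the eigenvalue $E_\infty$ has multiplicity at least four.

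\emph{Step 2: matching upper bound.} Suppose for contradiction that there existed a normalized $\phi\in\ker(\wt{\HAM}_\infty(\V{P})-E_\infty(\V{P}))$ with $\wt{\Pi}_\infty(\V{P})\phi=0$. The norm convergence of the projections would then yield $\|\wt{\Pi}_j^\infty\phi\|\to 0$, contradicting Proposition~\ref{prop-horst}. Consequently every ground state eigenvector of $\wt{\HAM}_\infty(\V{P})$ has a non-zero component in the four-dimensional subspace $\Ran(\wt{\Pi}_\infty(\V{P}))$, which forces the ground state eigenspace to have dimension at most four and to coincide with $\Ran(\wt{\Pi}_\infty(\V{P}))$.

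\emph{Step 3: the assertion for $\HAM_\infty(\V{0})$.} Rotational symmetry (Remark~\ref{rem-rot-sym}) gives $\nabla E_j(\V{0})=\V{0}$ for every $j\in\NN_0$, so the coherent factor $f_j(\V{0})$ in \eqref{def-ff} vanishes and $W_j(\V{0})=\id$. Hence $\wt{\HAM}_j^\infty(\V{0})=\HAM_j^\infty(\V{0})$ for every $j$, and passage to norm resolvent limits via Lemma~\ref{le-lisa}(iii) and Lemma~\ref{le-wtHAM}(1) yields $\wt{\HAM}_\infty(\V{0})=\HAM_\infty(\V{0})$. The four-fold degeneracy statement then follows from Steps~1--2. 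The main obstacle of the whole proof is Step~2: the entire Pizzo-type construction of Sections~\ref{sec-DT}--\ref{sec-ex} is arranged precisely so that Proposition~\ref{prop-horst}, which rules out spurious ground states of $\wt{\HAM}_\infty$, becomes available; that proposition in turn rests on the form bound \eqref{adam-infty} together with operator monotonicity of the inversion. The remaining assertions reduce to routine limits once the norm convergence of the rank four projections $\wt{\Pi}_j^\infty$ from Corollary~\ref{Cor-Karl} is in hand.
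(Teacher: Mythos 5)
Your proof is correct and follows essentially the same route as the paper's: Step~1 passes the eigenrelation $(\wt{\HAM}_j^\infty-E_j+1)^{-1}\wt{\Pi}_j^\infty=\wt{\Pi}_j^\infty$ (guaranteed by Lemma~\ref{le-gap1}(1), which only yields that $\Ran(\wt{\Pi}_j^\infty)$ \emph{lies in} the $E_j$-eigenspace of $\wt{\HAM}_j^\infty$, not that $\wt{\Pi}_j^\infty$ is the full eigenprojection, but that is all that is used) through the norm resolvent and norm limits, and Step~2 is word-for-word the paper's contradiction argument with Proposition~\ref{prop-horst}. Step~3 spells out the reduction $\wt{\HAM}_\infty(\V0)=\HAM_\infty(\V0)$ via $W_j(\V0)=\id$, which the paper leaves implicit inside the proof of this theorem.
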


\begin{proof}
According to Lemma~\ref{le-lisa}(iii)
$\wt{\HAM}_j^\infty\to\wt{\HAM}_\infty$ in norm 
resolvent sense and
$E_\infty=\lim_{j\to\infty}E_j=\inf\Spec[\wt{\HAM}_\infty]$. 
Together with Lemma~\ref{le-gap1}(1) and \eqref{flann1} this implies
$\wt{\Pi}_\infty=\lim_{j\to\infty}(\wt{\HAM}_j^\infty-E_j+1)^{-1}\wt{\Pi}_j^\infty
=(\wt{\HAM}_\infty-E_\infty+1)^{-1}\,\wt{\Pi}_\infty$, 
which shows that $\Ran(\wt{\Pi}_\infty)\subset\dom(\wt{\HAM}_\infty)$ and 
$\wt{\HAM}_\infty\,\wt{\Pi}_\infty=E_\infty\,\wt{\Pi}_\infty$.

Suppose $\phi$ is some normalized ground state eigenvector
of $\wt{\HAM}_\infty$ contained in the range of $\wt{\Pi}_\infty^\bot$.
By Proposition~\ref{prop-horst} and \eqref{flann1} we then
obtain the contradiction 
$0=\SPn{\phi}{\wt{\Pi}_\infty\,\phi}
=\lim_{j\to\infty}\|\wt{\Pi}_j^\infty\,\phi\|^2>0$.
Therefore, $\wt{\Pi}_\infty=\id_{\{E_\infty\}}(\wt{\HAM}_\infty)$.
\end{proof}


\section{Absence of ground states at  non-zero momenta}
\label{sec-non-ex}

\noindent
While the Hamiltonians $\wt{\HAM}_\infty(\V{P})$ possess ground state
eigenvectors, for small $\ee$, the original Hamiltonians ${\HAM}_\infty(\V{P})$
do not, unless $\V{P}$ is equal to zero.
The latter assertion is proved in the present section.
The starting point is the following bound implied by
Appendix~\ref{app-a(k)} where we again use the definition
\eqref{def-wtR} of the resolvent $\wh{\cR}_j(\V{P},\V{k})$.
We also recall the notation
$$
(a_\lambda(\V{k})\,\psi)^{(n)}(\V{k}_1,\lambda_1,\dots,\V{k}_n,\lambda_n)
=
(n+1)^{\nf{1}{2}}\psi^{(n+1)}(\V{k},\lambda,\V{k}_1,\lambda_1,\dots,\V{k}_n,\lambda_n)\,,
$$
almost everywhere, for $n\in\NN_0$,
$\psi=(\psi^{(n)})_{n=0}^\infty\in\Fock{j}{}$,
and $a_\lambda(\V{k})\,\Omega_j=0$.

\begin{lem}\label{le-norah}
Let $\V{P}\in\RR^3$, $j\in\NN_0\cup\{\infty\}$, $\ee>0$, and let 
$\phi_j(\V{P})$ be
a normalized ground state eigenvector of $\HAM_j(\V{P})$.
Then, for almost every $\V{k}$, 
\begin{align}
&\big\|a_\lambda(\V{k})\,\phi_j(\V{P})
+\ee\,\wh{\cR}_j(\V{P},\V{k})
{\V{G}_j(\V{k},\lambda)}\cdot\nabla\HAM_j(\V{P})\,\phi_j(\V{P})\big\|
\le\const\,\ee\,\frac{\id_{\rho_j\le|\V{k}|<\kappa}}{|\V{k}|^{\nf{1}{2}}}.
\label{norah0}
\end{align}
\end{lem}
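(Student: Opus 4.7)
The plan is a Nelson-type pull-through formula. For $\V{k}\in\cA_j$, the CCR give the formal identity
\begin{equation*}
a_\lambda(\V{k})\,\HAM_j(\V{P})\,\psi=\bigl(\HAM_j(\V{P}-\V{k})+|\V{k}|\bigr)a_\lambda(\V{k})\,\psi+\bigl[a_\lambda(\V{k}),|\D_j(\V{P})|\bigr]\,\psi,
\end{equation*}
where $[a_\lambda(\V{k}),\Hf^{(j)}]=|\V{k}|\,a_\lambda(\V{k})$ and $[a_\lambda(\V{k}),-\valpha\cdot\Pf^{(j)}]=-\valpha\cdot\V{k}\,a_\lambda(\V{k})$ have been used. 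To evaluate the remaining commutator, I plan to insert the principal-value representation \eqref{for-absvT} of $|\D_j(\V{P})|$ together with the one-step pull-through
\begin{equation*}
a_\lambda(\V{k})\,\R_j(\V{P},iy)=\R_j(\V{P}-\V{k},iy)\,a_\lambda(\V{k})-c\,\ee\,\R_j(\V{P}-\V{k},iy)\,\valpha\cdot\V{G}_j(\V{k},\lambda)\,\R_j(\V{P},iy),
\end{equation*}
with $c=2^{-\nf{1}{2}}$, which itself follows from $a_\lambda(\V{k})\,\D_j(\V{P})=\D_j(\V{P}-\V{k})\,a_\lambda(\V{k})+c\,\ee\,\valpha\cdot\V{G}_j(\V{k},\lambda)$. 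Exchanging $\R_j(\V{P}-\V{k},iy)$ for $\R_j(\V{P},iy)$ in the resulting integral via the second resolvent identity $\R_j(\V{P}-\V{k},iy)-\R_j(\V{P},iy)=\R_j(\V{P}-\V{k},iy)\,\valpha\cdot\V{k}\,\R_j(\V{P},iy)$ and comparing with \eqref{primus}, I expect to identify the leading contribution as $c\,\ee\,\V{G}_j(\V{k},\lambda)\cdot\nabla\HAM_j(\V{P})\,\psi$ plus a remainder $\mathcal{E}(\V{k},\lambda)\,\psi$ whose integrand carries the extra factor $\valpha\cdot\V{k}\,\R_j(\V{P},iy)$.

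Evaluating the pull-through on $\phi_j(\V{P})$ and using $\HAM_j(\V{P})\phi_j(\V{P})=E_j(\V{P})\phi_j(\V{P})$ yields
\begin{equation*}
\bigl(\HAM_j(\V{P}-\V{k})+|\V{k}|-E_j(\V{P})\bigr)a_\lambda(\V{k})\phi_j(\V{P})=-c\,\ee\,\V{G}_j(\V{k},\lambda)\cdot\nabla\HAM_j(\V{P})\phi_j(\V{P})-\mathcal{E}(\V{k},\lambda)\phi_j(\V{P}).
\end{equation*}
For $\rho_j\le|\V{k}|<\kappa$, Lemma~\ref{EnergyLemma}(3) provides the strict lower bound $\HAM_j(\V{P}-\V{k})+|\V{k}|-E_j(\V{P})\ge\qmax\,|\V{k}|$, so the operator on the left is invertible by $\wh{\cR}_j(\V{P},\V{k})$ with $\|\wh{\cR}_j(\V{P},\V{k})\|\le\const\,|\V{k}|^{-1}$. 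Inverting produces the identity of the lemma (with $c$ absorbed into the generic constant $\const$) up to $\wh{\cR}_j(\V{P},\V{k})\mathcal{E}(\V{k},\lambda)\phi_j(\V{P})$. For the latter I combine $|\V{G}_j(\V{k},\lambda)|\le\const\,|\V{k}|^{-\nf{1}{2}}$, the extra factor $|\V{k}|$ from the second-resolvent-identity expansion, the standard bound $\|\R_j(iy)\|\le(1+y^2)^{-\nf{1}{2}}$, and the convergence of $\int_{\RR}|y|(1+y^2)^{-\nf{3}{2}}\,dy$, giving $\|\mathcal{E}(\V{k},\lambda)\phi_j(\V{P})\|\le\const\,\ee\,|\V{k}|^{\nf{1}{2}}$ and hence $\|\wh{\cR}_j(\V{P},\V{k})\mathcal{E}(\V{k},\lambda)\phi_j(\V{P})\|\le\const\,\ee\,|\V{k}|^{-\nf{1}{2}}$, matching the right-hand side of the claim.

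The two remaining ranges are immediate. For $|\V{k}|<\rho_j$ the annihilation operator acts by zero on $\Fock{j}{}$ (whose photon modes are supported in $\cA_j$) and $\V{G}_j(\V{k},\lambda)=0$, so both sides of the inequality vanish. For $|\V{k}|\ge\kappa$, the ultra-violet cutoff gives $\V{G}_j(\V{k},\lambda)=0$, the pull-through collapses to $(\HAM_j(\V{P}-\V{k})+|\V{k}|-E_j(\V{P}))a_\lambda(\V{k})\phi_j(\V{P})=0$, and the strict positivity of this operator provided once more by Lemma~\ref{EnergyLemma}(3) forces $a_\lambda(\V{k})\phi_j(\V{P})=0$. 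The one genuinely delicate step will be the rigorous justification of the commutator manipulation inside the only conditionally convergent principal-value integral \eqref{for-absvT}—in particular the interchange of the unbounded operator $a_\lambda(\V{k})$ with the $\tau\to\infty$ limit—which I defer to Appendix~\ref{app-a(k)}; after that, everything reduces to elementary resolvent estimates of the kind already developed in Section~\ref{sec-rb}.
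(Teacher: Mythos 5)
Your proposal is correct and follows essentially the same route as the paper: a Nelson-type pull-through of $a_\lambda(\V{k})$ through $\HAM_j(\V{P})$, with the commutator $a_\lambda(\V{k})\,|\D_j(\V{P})|-|\D_j(\V{P}-\V{k})|\,a_\lambda(\V{k})$ evaluated via the principal-value representation \eqref{for-absvT}, the second resolvent identity to split off $\V{G}_j\cdot\nabla\HAM_j$ (via \eqref{primus}) from an absolutely convergent remainder of order $\ee|\V{k}|^{\nf{1}{2}}$, followed by inversion of $\HAM_j(\V{P}-\V{k})+|\V{k}|-E_j(\V{P})$ using the lower Lipschitz bound of Lemma~\ref{EnergyLemma}(3). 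This is exactly the representation formula \eqref{for-a(k)} established rigorously in Lemma~\ref{le-a(k)} via smearing, and the paper's own two-sentence proof of the lemma defers to that appendix just as you do for the delicate interchange of limits.
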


\begin{proof}
In Lemma~\ref{le-a(k)} below we derive a
representation formula
for $a_\lambda(\V{k})\,\phi_j$ which implies
the asserted estimate. (Use $|\V{k}|\wh{\cR}_j(\V{P},\V{k})=\bigO(1)$.)
\end{proof}

\smallskip

\noindent
Together with Corollary~\ref{Cor-Karl} this implies the
following analog of an estimate stated in 
\cite[Proposition~5.1]{ChenFroehlich2007}
(with an improved exponent on the RHS, in fact):

\begin{prop}\label{prop-rosa}
Let $\pmax>0$.
Then there exist $\const,\ee_0>0$ such that, for all
$\V{P}\in\cB_\pmax$, $\ee\in(0,\ee_0]$, $j\in\NN_0$, every 
normalized ground state eigenvector, $\phi_j(\V{P})$, of $\HAM_j(\V{P})$,
and almost every $\V{k}$, 
\begin{align}\label{rosa0}
\Big\|a_\lambda(\V{k})\,\phi_j(\V{P})
+\ee\,
\frac{{\V{G}_j(\V{k},\lambda)}\cdot\nabla E_j(\V{P})}{
|\V{k}|-\V{k}\cdot\nabla E_j(\V{P})}\,\phi_j(\V{P})\Big\|
\le\const\,\ee\,\frac{\id_{\rho_j\le|\V{k}|<\kappa}}{|\V{k}|^{\nf{1}{2}}}\,.
\end{align}
\end{prop}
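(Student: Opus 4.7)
The plan is to reduce the proposition to Lemma~\ref{le-norah}. That lemma already delivers the bound claimed in \eqref{rosa0} with the scalar factor $\V{G}_j(\V{k},\lambda)\cdot\nabla E_j(\V{P})/(|\V{k}|-\V{k}\cdot\nabla E_j(\V{P}))$ replaced by the operator-valued correction $\wh{\cR}_j(\V{P},\V{k})\V{G}_j(\V{k},\lambda)\cdot\nabla\HAM_j(\V{P})$. By the triangle inequality it therefore suffices to bound the difference of these two vectors, applied to $\phi_j(\V{P})$, by $\const\,\id_{\rho_j\le|\V{k}|<\kappa}/|\V{k}|^{\nf{1}{2}}$, uniformly in the relevant parameters; the prefactor $\ee$ on the right-hand side of \eqref{rosa0} is then inherited directly from Lemma~\ref{le-norah}.

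To carry out this comparison I would first split $\nabla\HAM_j\,\phi_j=\nabla E_j\,\phi_j+\Pi_j^\perp\nabla\HAM_j\,\phi_j$ via the Hellmann--Feynman identity \eqref{for-1Deriv}. The diagonal contribution is $(\V{G}_j\cdot\nabla E_j)\,\wh{\cR}_j(\V{P},\V{k})\phi_j$, which I plan to compare with $(\V{G}_j\cdot\nabla E_j)\phi_j/S$, where $S:=|\V{k}|-\V{k}\cdot\nabla E_j(\V{P})\ge q|\V{k}|$ thanks to Lemma~\ref{EnergyLemma}(3). The resolvent identity $T^{-1}\phi_j-S^{-1}\phi_j=S^{-1}T^{-1}(S-T)\phi_j$, with $T:=\HAM_j(\V{P}-\V{k})-E_j(\V{P})+|\V{k}|$, combined with a second-order Taylor expansion of $\HAM_j(\V{P}-\V{k})$ about $\V{P}$ (which is legitimate because $\{\HAM_j(\V{P})\}$ is an analytic family of type A, see Remark~\ref{rem-typeA}) and the eigenvalue equation $\HAM_j(\V{P})\phi_j=E_j(\V{P})\phi_j$, produces
\[
(S-T)\phi_j=\V{k}\cdot\Pi_j^\perp\nabla\HAM_j\,\phi_j+R_2(\V{k})\phi_j\,,\qquad \|R_2(\V{k})\|_{\cB(\HR_j)}\le\const\,|\V{k}|^2\,.
\]

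All three residuals -- the off-diagonal term $\wh{\cR}_j(\V{P},\V{k})(\V{G}_j\cdot\Pi_j^\perp\nabla\HAM_j\,\phi_j)$ from the Hellmann--Feynman splitting and both summands coming from $(S-T)\phi_j$ -- lie in $\Ran(\Pi_j^\perp)$, so I would estimate each of them by invoking \eqref{ResRepl1}, which trades $\wh{\cR}_j(\V{P},\V{k})$ on that range for the reduced resolvent $\cR_j^\perp(\V{P})$ at the price of a universal constant. Combining this with the pointwise estimate $|\V{G}_j(\V{k},\lambda)|\le\const\,\id_{\rho_j\le|\V{k}|<\kappa}/|\V{k}|^{\nf{1}{2}}$ read from \eqref{def-AA}, the scalar bound $|S|^{-1}\le 1/(q|\V{k}|)$, and the observation that $\|\cR_j^\perp\nabla\HAM_j\,\phi_j\|\le K_j^{(1)}$ (which follows directly from the definition \eqref{def-K} since $\phi_j$ is a unit vector in the four-dimensional range of $\Pi_j$), the extra factor of $|\V{k}|$ from $\V{k}\cdot\Pi_j^\perp\nabla\HAM_j\,\phi_j$ exactly absorbs one factor of $|S|^{-1}$, while the extra $|\V{k}|^2$ from $R_2(\V{k})$ absorbs both $|S|^{-1}$ and the bound $\|T^{-1}\|\le\const/|\V{k}|$. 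The hard part will be to carry out this book-keeping cleanly so that no stray $|\V{k}|^{-1}$ singularity survives in the small-$|\V{k}|$ regime: if all such cancellations are tracked correctly, the combined residual is bounded by $\const\,(K_j^{(1)}+1)\,\id_{\rho_j\le|\V{k}|<\kappa}/|\V{k}|^{\nf{1}{2}}$, at which point the control on $K_j^{(1)}$ supplied by Corollary~\ref{Cor-Karl}, together with a suitable choice of $\ee_0$, completes the argument.
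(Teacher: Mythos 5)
Your argument is correct and essentially reproduces the paper's proof: both reduce the statement to Lemma~\ref{le-norah}, split $\nabla\HAM_j\,\phi_j$ via the Hellmann--Feynman identity \eqref{for-1Deriv}, compare $\wh{\cR}_j(\V{P},\V{k})\phi_j$ with $(|\V{k}|-\V{k}\cdot\nabla E_j)^{-1}\phi_j$ by a resolvent identity combined with a second-order Taylor expansion of $\HAM_j(\V{P}-\V{k})$ and the eigenvalue equation, and control the off-diagonal pieces via \eqref{ResRepl1} and $K_j^{(1)}$. The only cosmetic difference is that the paper solves \eqref{rosa2} for $\wh{\cR}_j\phi_j$ rather than forming the difference with $S^{-1}\phi_j$ directly, and your remark that all three residuals lie in $\Ran(\Pi_j^\bot)$ is slightly off for the $R_2(\V{k})\phi_j$ piece (you correctly handle it via $\|\wh{\cR}_j\|\le\const/|\V{k}|$ rather than \eqref{ResRepl1} anyway); the cancellations and ingredients are otherwise identical.
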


\begin{proof}
On account of \eqref{ResRepl1} and Corollary~\ref{Cor-Karl} we have
\begin{align}\label{rosa1}
\|\wh{\cR}_j(\V{P},\V{k})\,\Pi_j^\bot\partial_{\V{h}}\HAM_j\,\phi_j\|
&\le\const\,\|\RES_j^\bot\,\partial_{\V{h}}\HAM_j\,\phi_j\|\le
\const\,K_j^{(1)}\le\const'.
\end{align}
(Here and below all derivatives are evaluated at $\V{P}$.)
On account of \eqref{norah0} and 
$|\V{G}_j(\V{k},\lambda)|\le\const\,
\id_{\rho_j\le|\V{k}|<\kappa}\,|\V{k}|^{-\nf{1}{2}}$
it thus remains to treat 
$\wh{\cR}_j(\V{P},\V{k})\,\Pi_j\,\nabla\HAM_j\,\phi_j
=\nabla E_j\,\wh{\cR}_j(\V{P},\V{k})\,\phi_j$.
For this we again write
$\RES_j(|\V{k}|)=(\HAM_j(\V{P})-E_j(\V{P})+|\V{k}|)^{-1}$.
Then the second resolvent formula implies
\begin{align}\nonumber
\wh{\cR}_j(\V{P},\V{k})\,\phi_j
&=\RES_j(|\V{k}|)\,\phi_j+\wh{\cR}_j(\V{P},\V{k})\,
\{\HAM_j(\V{P})-\HAM_j(\V{P}-\V{k})\}\,\RES_j(|\V{k}|)\,\phi_j
\\\label{rosa2}
&=|\V{k}|^{-1}\,\phi_j+|\V{k}|^{-1}\,\wh{\cR}_j(\V{P},\V{k})\,
\{{\V{k}}\cdot\nabla\HAM_j+\bigO(|\V{k}|^2)\}\,\phi_j\,.
\end{align}
Writing
$\nabla\HAM_j\,\phi_j=\nabla E_j\,\phi_j+\Pi_j^\bot\nabla\HAM_j\,\phi_j$
and solving \eqref{rosa2} for $\wh{\cR}_j(\V{P},\V{k})\,\phi_j$ 
yields
\begin{align*}
\wh{\cR}_j(\V{P},\V{k})\,\phi_j
=\frac{1}{|\V{k}|-{\V{k}}\cdot\nabla E_j}
\big(\phi_j+\wh{\cR}_j(\V{P},\V{k})\,
\{\Pi_j^\bot{\V{k}}\cdot\nabla\HAM_j+\bigO(|\V{k}|^2)\}\,\phi_j\big)\,.
\end{align*}
Here $\wh{\cR}_j(\V{P},\V{k})\,
\{\Pi_j^\bot{\V{k}}\cdot\nabla\HAM_j+\bigO(|\V{k}|^2)\}\,\phi_j=\bigO(|\V{k}|)$
by \eqref{rosa1}.
\end{proof}

\smallskip

\noindent
Recall that $\nabla E_j(\V{0})=\V{0}$, for all $j\in\NN_0$, so that
the coherent factor in the formula \eqref{rosa0} vanishes at $\V{P}=\V{0}$.
As an immediate corollary we observe that the expectation value
of the number operator, $N:=d\Gamma(1)$, in a state belonging to the range of
$\Pi_\infty(\V{0}):=\wt{\Pi}_\infty(\V{0})=\id_{\{E_\infty(\V{0})\}}(\HAM_\infty(\V{0}))$
is finite. Recall also $\wt{\Pi}_j^\infty(\V{0})\to\Pi_\infty(\V{0})$ in norm,
as $j\to\infty$.

\begin{cor}\label{cor-N}
There exist $\const,\ee_0>0$ such that
$\Ran(\Pi_\infty(\V{0}))\subset\dom(N^{\nf{1}{2}})$
and $\|N^{\nf{1}{2}}\,\Pi_\infty(\V{0})\|\le\const\,\ee$, 
for all $\ee\in(0,\ee_0]$.
\end{cor}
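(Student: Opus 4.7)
The plan is to specialize Proposition~\ref{prop-rosa} to $\V{P}=\V{0}$ and pass to the limit along the approximating ground state projections. First, I note the simplifications at the origin: by rotational symmetry (Remark~\ref{rem-rot-sym}) we have $\nabla E_j(\V{0})=\V{0}$ for every $j$, whence $f_j(\V{0})\equiv 0$ and $U_j(\V{0})=W_j(\V{0})=\id$. Consequently $\wt{\Pi}_j^\infty(\V{0})=\Pi_j(\V{0})\otimes P_{\Omega_j^\infty}$, and by the argument preceding Theorem~\ref{mainthm} (together with Theorem~\ref{Thm:Gap}(3) applied with $j+1$ replaced by $\infty$) this is exactly the ground state projection of ${H}_j^\infty(\V{0})$. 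Moreover $\wt{\Pi}_j^\infty(\V{0})\to\wt{\Pi}_\infty(\V{0})=\Pi_\infty(\V{0})$ in norm by Corollary~\ref{Cor-Karl} and \eqref{flann1}.

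Given a normalized $\phi\in\Ran(\Pi_\infty(\V{0}))$, I set $\phi_j:=\wt{\Pi}_j^\infty(\V{0})\,\phi$ and $\psi_j:=\phi_j/\|\phi_j\|$ for $j$ large enough that $\phi_j\ne 0$; by the above, $\psi_j\to\phi$ in norm and can be written as $\psi_j=\chi_j\otimes\Omega_j^\infty$ with $\chi_j\in\HR_j$ a normalized ground state of $H_j(\V{0})$.

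Next I estimate $\|N^{\nf{1}{2}}\psi_j\|$. Since $\psi_j$ contains the vacuum in $\sF_j^\infty$, one has $a_\lambda(\V{k})\,\psi_j=0$ for $|\V{k}|<\rho_j$, while for $|\V{k}|\ge\rho_j$ the annihilation operator factors as $a_\lambda(\V{k})\,\psi_j=(a_\lambda(\V{k})\,\chi_j)\otimes\Omega_j^\infty$ after the natural identification. Because $\nabla E_j(\V{0})=\V{0}$, the coherent factor in \eqref{rosa0} vanishes, so Proposition~\ref{prop-rosa} reduces to
\begin{equation*}
\|a_\lambda(\V{k})\,\chi_j\|\le\const\,\ee\,\frac{\id_{\rho_j\le|\V{k}|<\kappa}}{|\V{k}|^{\nf{1}{2}}}\,.
\end{equation*}
Integrating gives the uniform bound
\begin{equation*}
\|N^{\nf{1}{2}}\psi_j\|^2
=\sum_{\lambda\in\ZZ_2}\int_{\RR^3}\|a_\lambda(\V{k})\,\psi_j\|^2 d^3\V{k}
\le\const\,\ee^2\int_{|\V{k}|\le\kappa}\frac{d^3\V{k}}{|\V{k}|}\le\const'\,\ee^2\kappa^2.
\end{equation*}

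Finally I pass to the limit using the closedness of $N^{\nf{1}{2}}$: the sequence $\{N^{\nf{1}{2}}\psi_j\}$ is bounded, so along a subsequence it converges weakly to some $\xi\in\HR$ while $\psi_j\to\phi$ strongly; closedness (equivalently, weak closedness of the graph) then yields $\phi\in\dom(N^{\nf{1}{2}})$ with $N^{\nf{1}{2}}\phi=\xi$ and $\|N^{\nf{1}{2}}\phi\|\le\liminf_j\|N^{\nf{1}{2}}\psi_j\|\le\const\,\ee$. Since $\Pi_\infty(\V{0})$ has finite rank this gives the asserted operator bound. The main (and only) non-routine point is the clean identification $\wt{\Pi}_j^\infty(\V{0})=\Pi_j(\V{0})\otimes P_{\Omega_j^\infty}$, which makes the Proposition~\ref{prop-rosa} input directly applicable to the approximants $\psi_j$; everything else is integration and a standard closed-operator limit.
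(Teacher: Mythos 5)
Your proposal is correct and follows essentially the same line as the paper: both rest on specializing Proposition~\ref{prop-rosa} to $\V{P}=\V{0}$ (where $\nabla E_j(\V{0})=\V{0}$ kills the coherent factor), integrating $\|a_\lambda(\V{k})\cdot\|^2$ over the ball $\{|\V{k}|<\kappa\}$, and passing to the limit along $\wt{\Pi}_j^\infty(\V{0})\to\Pi_\infty(\V{0})$. The only cosmetic difference is the final limiting step — you normalize the approximants and invoke weak closedness of the graph of $N^{\nf{1}{2}}$, while the paper pairs against an arbitrary $\phi\in\dom(N^{\nf{1}{2}})$ and uses self-adjointness of $N^{\nf{1}{2}}$ — two standard, interchangeable routes to the same conclusion.
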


\begin{proof}
Let $\phi\in\dom(N^{\nf{1}{2}})$ and $\psi\in\HR$ both be normalized.
Then, by \eqref{rosa0},
\begin{align*}
|\SPn{N^{\nf{1}{2}}&\phi}{\Pi_\infty(\V{0})\,\psi}|^2\le
\big|\lim_{j\to\infty}\SPn{\phi}{N^{\nf{1}{2}}\,\wt{\Pi}_j^\infty(\V{0})\,\psi}\big|^2
\\
&\le\sup_{j\in\NN}\sum_{\lambda\in\ZZ_2}\int_{\RR^3}
\big\|a_\lambda(\V{k})\,\wt{\Pi}_j^\infty(\V{0})\,\psi\big\|^2d^3\V{k}
\le\const\,\ee\int_{|\V{k}|<\kappa}
\frac{d^3\V{k}}{|\V{k}|}\,,
\end{align*}
which implies the assertion, since $N^{\nf{1}{2}}$ is self-adjoint.
\end{proof}

\smallskip

\noindent
In an essentially traditional fashion we next infer the
absence of ground states of $\HAM_\infty(\V{P})$, $\V{P}\not=\V{0}$, 
from Proposition~\ref{prop-rosa}; compare, e.g.,
\cite{HaslerHerbst2008,Schroer1963}.

\begin{thm}\label{thm-absence}
Given $\pmax>0$ we find $\ee_0>0$ such that, for all
$\V{P}\in\cB_\pmax\setminus\{\V{0}\}$, and $\ee\in(0,\ee_0]$, 
the ground state energy
$E_\infty(\V{P})$ is not an eigenvalue of $\HAM_\infty(\V{P})$.
\end{thm}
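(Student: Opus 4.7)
My plan is to argue by contradiction using Corollary~\ref{cor-horst} together with the dressing structure of Section~\ref{sec-DT}. Suppose $\phi\in\HR$ is a normalized ground state eigenvector of $\HAM_\infty(\V{P})$ for some $\V{P}\in\cB_\pmax\setminus\{\V{0}\}$. Corollary~\ref{cor-horst} then provides a subsequence along which $\liminf_j\|\Pi_j(\V{P})\otimes P_{\Omega_j^\infty}\phi\|>0$, and I shall derive the desired contradiction by proving that in fact this norm tends to zero.

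The crucial identity is
\begin{equation*}
\Pi_j(\V{P})\otimes P_{\Omega_j^\infty}
=W_j(\V{P})^*\,\wt{\Pi}_j^\infty(\V{P})\,W_j(\V{P})\,,
\end{equation*}
which holds because the coherent factor $g_j(\V{P})=\sum_{\ell<j}f_\ell(\V{P})$ in $W_j(\V{P})=e^{-i\ee\vo(g_j(\V{P}))}$ is supported in $[\rho_j,\kappa)$, so that $W_j(\V{P})$ commutes with the vacuum projection $P_{\Omega_j^\infty}$ on the complementary low-frequency Fock factor. Writing $\xi_j:=W_j(\V{P})\phi$ (a unit vector), the identity yields $\|\Pi_j\otimes P_{\Omega_j^\infty}\phi\|^2=\SPn{\xi_j}{\wt{\Pi}_j^\infty(\V{P})\xi_j}$. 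By Corollary~\ref{Cor-Karl} and Theorem~\ref{mainthm} the projections $\wt{\Pi}_j^\infty\to\wt{\Pi}_\infty$ converge in operator norm to the rank-four (hence compact) projection $\wt{\Pi}_\infty$. Consequently, provided $\xi_j\rightharpoonup 0$ weakly, the compactness of $\wt{\Pi}_\infty$ yields $\wt{\Pi}_\infty\xi_j\to0$ in norm and, combined with the norm convergence of the projections, $\|\wt{\Pi}_j^\infty\xi_j\|\to 0$, which is exactly the contradiction sought.

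Thus the proof reduces to showing $W_j(\V{P})\phi\rightharpoonup 0$ in $\HR$. Since the $f_\ell(\V{P})$ live in the pairwise disjoint annuli $\cA_\ell^{\ell+1}$, one has $\|g_j\|_{\HP}^2=\sum_{\ell<j}\|f_\ell\|_{\HP}^2$, and a direct spherical computation yields
\begin{equation*}
\|f_\ell(\V{P})\|_{\HP}^2=\frac{\ln 2}{(2\pi)^{2}}\,|\nabla E_\ell(\V{P})|^2\int_0^\pi\!\frac{\sin^3\theta\,d\theta}{(1-|\nabla E_\ell(\V{P})|\cos\theta)^2}\,.
\end{equation*}
Because $\V{P}\ne\V{0}$, the rotational symmetry (Remark~\ref{rem-rot-sym}) combined with the strict convexity of $E$ on $\cB_\pmax$ (Theorem~\ref{thm-ren-mass}) forces $\nabla E(\V{P})\ne\V{0}$, while Lemma~\ref{EnergyLemma}(3) keeps $|\nabla E_\ell(\V{P})|\le 1-\qmax<1$, so that the angular integral stays finite and positive; together with $\nabla E_\ell\to\nabla E$ this forces $\|f_\ell\|\ge c>0$ for large $\ell$, and therefore $\|g_j\|_{\HP}\to\infty$. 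The weak convergence $W_j(\V{P})\phi\rightharpoonup 0$ then follows from the standard fact that Weyl displacements $e^{-i\vo(h_n)}$ converge to zero in the weak operator topology as $\|h_n\|_{\HP}\to\infty$, verified on finite-particle vectors via the Baker--Campbell--Hausdorff identity $e^{-i\vo(h)}=e^{-\|h\|^2/4}e^{\ad(h)/\sqrt 2}e^{-a(h)/\sqrt 2}$ (the Gaussian factor dominates the polynomial-in-$h$ matrix elements) and extended to all $\phi\in\HR$ by density. The step I expect to be most delicate is this last one: $W_j(\V{P})$ does not act as a standard Weyl operator on the whole tensor product $\HR=\CC^4\otimes\sF$, so the identification of the sub-Fock factor on which it acts non-trivially and the finite-particle approximation of $\phi$ in that factor must be carried out carefully, after which the exponential suppression of matrix elements is routine.
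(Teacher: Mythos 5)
Your proof plan is correct, but it takes a genuinely different route from the paper's. The paper's proof of Theorem~\ref{thm-absence} runs the classical infra\-red divergence argument: it applies the pointwise bound of Proposition~\ref{prop-rosa} on $a_\lambda(\V{k})\,\phi_j$ (itself resting on the virial-type identity of Lemma~\ref{le-a(k)} in Appendix~\ref{app-a(k)}), pairs $\phi_j$ against a carefully chosen test vector $\eta\in\dom(N^{\nf12})$ through $\ad_\lambda(g_r)$, and extracts a contradiction from the logarithmic divergence $\|g_r\|^2=\const\ln(\kappa/r)$ versus the linear growth $\|g_r\|$, with Corollary~\ref{cor-horst} supplying the non-vanishing pairing $\SPn{\eta}{\phi'}>0$. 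You instead use the algebraic identity $\Pi_j(\V{P})\otimes P_{\Omega_j^\infty}=W_j(\V{P})^*\,\wt{\Pi}_j^\infty(\V{P})\,W_j(\V{P})$ to rewrite $\|\Pi_j\otimes P_{\Omega_j^\infty}\phi\|^2$ as $\SPn{W_j\phi}{\wt{\Pi}_j^\infty\,W_j\phi}$, then exploit the norm convergence $\wt{\Pi}_j^\infty\to\wt{\Pi}_\infty$ (Corollary~\ref{Cor-Karl} and \eqref{flann1}) and the compactness of the rank-four limit together with the weak vanishing $W_j\phi\rightharpoonup0$, the latter from $\|g_j\|_{\HP}\to\infty$ because $\nabla E(\V{P})\ne\V{0}$. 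Your argument bypasses Proposition~\ref{prop-rosa} and the technical Lemma~\ref{le-a(k)} entirely and is more conceptual; the paper's argument is more self-contained at this point and has the side benefit of producing the IR bound \eqref{rosa0}, which is reused to establish $\Ran(\Pi_\infty(\V{0}))\subset\dom(N^{\nf12})$ in Corollary~\ref{cor-N}. One remark: the step you flag as ``most delicate'' is in fact not. Since $g_j=\sum_{\ell<j}f_\ell$ is simply an element of $\HP=L^2(\RR^3\times\ZZ_2)$ supported in $\{\rho_j\le|\V{k}|<\kappa\}$, the operator $W_j=\id_{\CC^4}\otimes e^{-\imath\ee\vo(g_j)}$ is a genuine Weyl operator on the full Fock space $\Fock{}{}$, and the finite-particle approximation you describe works directly on $\Fock{}{}$ without any sub-factor identification; together with unitarity of $W_j$ the density argument closes cleanly.
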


\begin{proof}
We write $H=H_\infty$, $E=E_\infty$.
Suppose that $\phi\in\dom(\HAM(\V{P}))$
is normalized and
$\HAM(\V{P})\,\phi=E(\V{P})\,\phi$. To get a contradiction we exploit that
$\nabla E(\V{P})\not=\V{0}$ which follows from the strict convexity of $E$
on $\cB_{\pmax}$ and $\nabla E(\V{0})=\V{0}$.

Let $\phi_j:=\Pi_j\otimes P_{\Omega_j^\infty}\,\phi$. 
Borrowing an idea from \cite{HaslerHerbst2008} we pick some
$\eta\in\dom(N^{\nf{1}{2}})$, $\|\eta\|=1$.
Furthermore, let
$g_r:=\id_{\{r\le|\V{k}|<\kappa\}}\V{G}\cdot\nabla E/(|\V{k}|-\V{k}\cdot\nabla E)$, 
for $r\in(0,\kappa)$.
Then $\|g_r\|^2=\const_1\,\ln(\kappa/r)$ with some
$\const_1\in(0,\infty)$ because $0<|\nabla E|<1$.
Finally, let 
$g^{(j)}_{\rho_j}:=\id_{\{\rho_j\le|\V{k}|<\kappa\}}\V{G}\cdot\nabla E_j/(|\V{k}|-\V{k}\cdot\nabla E_j)$. 
By virtue of \eqref{rosa0} we then have
\begin{align}\nonumber
\|g_r\|\,&\big\|(N+1)^{\nf{1}{2}}\eta\big\|
\\\nonumber
&\ge\big|\SPb{\ad_\lambda(g_r)\,\eta}{\phi_j}\big|
=\Big|\sum_{\lambda\in\ZZ_2}\int_{\RR^3}{g}_r(\V{k},\lambda)\,
\SPb{\eta}{a_\lambda(\V{k})\,\phi_j}\,d^3\V{k}\Big|
\\\nonumber
&\ge\ee\,|\SPn{g_r}{g^{(j)}_{\rho_j}}\SPn{\eta}{\phi_j}|
-\const\,\ee
\sum_{\lambda\in\ZZ_2}\int_{r\le|\V{k}|<\kappa}
\frac{|\veps_\lambda(\mr{\V{k}})\cdot\nabla E|}{1-\mr{\V{k}}\cdot\nabla E}
\frac{d^3\V{k}}{|\V{k}|^2}
\,,
\end{align}
for every $j\in\NN$. By Corollary~\ref{cor-horst}
$\{\phi_j\}_{j\in\NN}$ converges weakly to some non-zero vector $\phi'$
along some subsequence. We fix $\eta$ such that $\SPn{\eta}{\phi'}>0$.
Since also $\nabla E_j\to\nabla E$,  
thus $\SPn{g_r}{g^{(j)}_{\rho_j}}\to\|g_r\|^2$, $j\to\infty$, we arrive at
$\const_1^{\nf{1}{2}}\,\ln(\kappa/r)^{\nf{1}{2}}
\|(N+1)^{\nf{1}{2}}\eta\|\ge
\ee\,\const_1\,\ln(\kappa/r)\,\SPn{\eta}{\phi'}
-\const'\ee\,(\kappa-r)$.
For sufficiently small $r\in(0,\kappa)$,
this gives a contradiction.
\end{proof}


\section{Coherent infra-red representations}\label{sec-CIRR}

\noindent
In this final section we discuss certain representations
of the Weyl algebra associated with the functionals
$$
\sigma_{\V{P}}(A):=\tr[\wt{\Pi}_{\infty}(\V{P})\,A]\,,
\quad A\in\cB(\HR)\,,\;\V{P}\in\cB_\pmax\,.
$$
For the most part we proceed along the lines of \cite{Froehlich1973}.
We shall also recall various arguments from \cite{Streit1967}.
There is, however, one simplification:
Thanks to the existence of the limit projection $\wt{\Pi}_{\infty}$
guaranteed by the IAPT we may define $\sigma_{\V{P}}$ directly
as the corresponding trace functional.
In \cite{ChenFroehlich2007,Froehlich1973} the analog of
$\sigma_{\V{P}}$ is defined via compactness arguments
and some additional arguments are required in order to study
its GNS representation. We also point out that the proof of the final
result of this section, Corollary~\ref{cor-disjoint} below, is based
on the strict convexity of $E_\infty$ (or rather uniform strict convexity of $E_j$, 
$j\gg1$).

In the rest of this section we fix $\pmax>0$ and we always assume
that $\ee>0$ is sufficiently small, depending only on $\pmax$ and $\kappa$.

Recall that $\wt{\Pi}_{\infty}$ has been obtained as a limit
of dressing transformed ground state projections and that the
sequence of dressing transformations $W_j$
has no limit {\em in the Banach space $\cB(\HR)$}.
We can, however, give a mathematical meaning to a `limit
transform' of sorts as follows:

We consider the complete infinite direct product space
(CDPS) \cite{vNeumann1939}
$$
\wh{\HR}:=\HR_0\otimes\prod\otimes_{j\in\NN_0}\sF_j^{j+1}
=\HR_k\otimes\prod\otimes_{j>k}\sF_j^{j+1}.
$$
The Hilbert space
$\wh{\HR}$ is non-separable and contains the coherent states
$$
\Omega^{\ren}_{\V{P}}:=
\eta\otimes\prod\otimes_{j\in\NN_0}\Omega_{j,\V{P}}^{j+1}\,,
\quad\Omega_{j,\V{P}}^{j+1}
:=U_j^*(\V{P})\,\Omega_j^{j+1},\quad\V{P}\in\cB_\pmax\,.
$$
Here $\eta$ is some fixed normalized element of the vacuum sector in $\HR_0$.
Recall from \cite{vNeumann1939}
that two vectors $\Phi=\phi_0\otimes\phi_1\otimes\dots$
and $\Psi=\psi_0\otimes\psi_1\otimes\dots$ in $\wh{\HR}$
are called equivalent, $\Phi\approx\Psi$,
iff $\sum_{j=0}^\infty|\SPn{\phi_j}{\psi_j}-1|<\infty$.
Moreover, $\Phi$ is called a $C_0$-vector iff
$\sum_{j=0}^\infty\big|\|\phi_j\|-1\big|<\infty$ and,
if $\Phi$ and $\Psi$ are both $C_0$-vectors, they are called
weakly equivalent, $\Phi\underset{w}{\approx}\Psi$, 
iff $\sum_{j=0}^\infty\big||\SPn{\phi_j}{\psi_j}|-1\big|<\infty$.
The separable Hilbert subspace of $\wh{\HR}$ generated by the equivalence
class of some $C_0$-vector $\Phi$ is denoted
by $\HR_0\otimes\prod\otimes_{j\in\NN_0}^\Phi\sF_{j}^{j+1}$.
It is called an incomplete direct
product space (IDPS). Although we are now dealing with four-spinors
we denote the IDPS corresponding to the coherent states defined above 
as in the introduction by
$$
\HR_{\V{P}}^{\ren}:=
\HR_0\otimes\prod\otimes_{j\in\NN_0}^{\Omega^{\ren}_{\V{P}}}\sF_{j}^{j+1}.
$$
Now, we can, loosely speaking, `invert' the `limit dressing transformation' 
by regarding its `inverse' as a unitary map
from $\HR$ to $\HR^{\ren}_{\V{P}}$.
More precisely, denoting elements of $\HP_j^{j+1}$ by $h^{(j)}$ etc.,
we set
$$
b_{\V{P}}^\sharp(h^{(j)}):=U_j(\V{P})^* a^\sharp(h^{(j)})\,U_j(\V{P})
=a^\sharp(h^{(j)})+\ee\SPn{h^{(j)}}{f_j(\V{P})}^\sharp,
\quad j\in\NN_0\,,
$$
and define a unitary map 
$W_{\V{P}}^*:\sH=\HR_0\otimes\Fock{0}{\infty}\to\HR^{\ren}_{\V{P}}$ first by
the following formula and then by linear and isometric extension:
\begin{align*}
{W}_{\V{P}}^*\,
\zeta\otimes\prod_{j=0}^N\prod_{\vk=1}^{n_j}\ad(h^{(j)}_{\vk})\,\Omega
&:=\zeta\otimes\bigotimes_{j=0}^N
\Big(\prod_{\vk=1}^{n_j}b_{\V{P}}^\dagger(h^{(j)}_{\vk})
\,\Omega_{j,\V{P}}^{j+1}\Big)
\otimes\prod\otimes_{j>N}\Omega_{j,\V{P}}^{j+1}\,.
\end{align*}
Here $n_j$ may be equal to zero in which case $\prod_{\vk=1}^0\ad:=\id$, etc.   
The isometry is clear from the definition of the scalar product on $\wh{\HR}$. 
After isometric extension ${W}_{\V{P}}^*$ is
surjective because the set of vectors appearing on the RHS
of the previous definition is total in $\HR^{\ren}_{\V{P}}$, when
the $h^{(j)}_{\vk}$ are chosen from some orthonormal basis
of $\HP_j^{j+1}$, for every $j\in\NN_0$. In particular,
$W_{\V{0}}^*$ is just the natural identification
$\HR\equiv\HR^{\ren}_{\V{0}}$.

Let $\fA_j$, $\fA^\circ$, $\fA$, $\fW_j$, $\fW^\circ$, and $\fW$
be the $*$-algebras defined in
and below \eqref{def-fA} with $\CC^2$ replaced by $\CC^4$.
There is an isometry, $\pi:\,\fA^\circ\to\cB(\wh{\HR})$, with
\begin{align*}
\pi(A){\prod}\otimes_k\psi_k
=\prod\otimes_k\psi_k'\,,\qquad
\psi_j':=A'\psi_j\,,\quad\psi_k'=\psi_k\,,\;k\not=j\,,
\end{align*}
where $A=\id\otimes A'\otimes\id\in\fA_j$ and 
${\prod}\otimes_k\psi_k\in\wh{\HR}$.
By its isometry $\pi$ extends to all of $\fA$.
Then $\pi(\fW_j)''=\pi(\fA_j)''$ and, hence,
$\pi(\fA^\circ)$ and $\pi(\fW^\circ)$ generate the same von Neumann algebra
$$
B^\#:=\pi(\fA^\circ)''=\pi(\fW^\circ)''
\subsetneq\cB(\wh{\HR})\,,\qquad B^\#\supset\pi(\fA)\,.
$$
The spaces $\HR_{\V{P}}^\ren$ are reducing subspaces
for every element of $B^\#$
\cite[Theorem~X(I)]{vNeumann1939}.
We denote restriction to $\HR_{\V{P}}^\ren$ by a
subscript $\V{P}$ and set $\pi_{\V{P}}(A):=\pi(A)_{\V{P}}$, $A\in\fA$.
Then we have $\pi_{\V{P}}(\fW_j)''=\pi_{\V{P}}(\fA_j)''$ and, hence,
\begin{equation*}
\cB(\HR_{\V{P}}^\ren)=\big\{Q_{\V{P}}:\,Q\in B^\#\big\}
=\pi_{\V{P}}(\fA^\circ)''
=\pi_{\V{P}}(\fW^\circ)''.
\end{equation*}
The three identities above follow from
\cite[Theorem~X(II)]{vNeumann1939}, \cite[Part~II, Prop.~II.12]{GWT1969},
\cite[Lemma~2]{Streit1967}, respectively. In particular,
it follows (as in \cite{Streit1967}) that 
the representation $\pi_{\V{P}}:\fW\to\cB(\HR_{\V{P}}^\ren)$ is irreducible.

Next, we define functionals on $\cB(\HR^{\ren}_{\V{P}})$ by
$$
\omega_{\V{P}}^\ren:=\sigma_{\V{P}}\circ\alpha_{\V{P}}\,,
\qquad\alpha_{\V{P}}(T):=W_{\V{P}}\,T\,W_{\V{P}}^*\,,
\quad T\in\cB(\HR^{\ren}_{\V{P}})\,. 
$$
It is then a direct consequence of the various definitions recalled above
that, for every local observable $A\in\fA^\circ$, we have
$\alpha_{\V{P}}(\pi_{\V{P}}(A))=W_\ell\,A\,W_\ell^*$, with some
$\ell\in\NN$. Therefore,
\begin{align*}
\omega_{\V{P}}^\ren\big(\pi_{\V{P}}(A)\big)
&=\tr\big[\wt{\Pi}_\infty\,W_\ell\,A\,W_\ell^*\big]
=\lim_{j\to\infty}\tr\big[W_j\,\Pi_j^\infty\,W_j^*\,W_\ell\,A\,W_\ell^*\big]
\\
&=\lim_{j\to\infty}\tr[\Pi_j^\infty A]=:\omega_\V{P}(A)\,,\quad
A\in\fA^\circ\,,\;\;\ell\;\textrm{suitable}\,,
\end{align*}
because $W_\ell\,A\,W_\ell^*=W_j\,A\,W_j^*$, if $j\ge\ell$ and $\ell$ is large enough.
This shows that the limits in the second line
define a continuous linear functional on $\fA^\circ$
whose unique continuous extension to $\fA$ --
henceforth again denoted by the same symbol -- is given by
$\omega_{\V{P}}(A)=\omega_{\V{P}}^\ren(\pi_{\V{P}}(A))$, $A\in\fA$.

Given that $E$ is strictly convex the following corollary
is folkloric; see \cite{Froehlich1973,KMcKW1966,Streit1967}. 
We include its proof for convenience of the reader.

\begin{cor}\label{cor-disjoint}
Let $\V{P},\V{Q}\in\cB_\pmax$, $\V{P}\not=\V{Q}$.
Then $\Omega_{\V{P}}^{\ren}$ and $\Omega_{\V{Q}}^{\ren}$
are not weakly equivalent. Therefore,
$\HR_{\V{P}}^{{\ren}}\bot\,\HR_{\V{Q}}^{\ren}$ (as subspaces of $\wh{\HR}$),
and the representations $\pi_{\V{P}}$ and $\pi_{\V{Q}}$
of $\fW$ are unitarily inequivalent and, thus, 
disjoint as they are irreducible.
\end{cor}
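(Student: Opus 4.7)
The plan is to reduce all three assertions to the single $L^2$-summability question $\sum_j\|f_j(\V{P})-f_j(\V{Q})\|^2<\infty$, then invoke the strict convexity of $E_\infty$ from Theorem~\ref{thm-ren-mass} together with von Neumann's theory of incomplete direct product spaces.

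I would first compute the factor-wise overlaps. Since $f_j(\V{P})$ and $f_j(\V{Q})$ are real-valued, $[\vo(f_j(\V{P})),\vo(f_j(\V{Q}))]=i\,\Im\SPn{f_j(\V{P})}{f_j(\V{Q})}=0$, and a standard Weyl / Baker--Campbell--Hausdorff manipulation in $\Fock{j}{j+1}$ yields
\[
\SPn{\Omega_{j,\V{P}}^{j+1}}{\Omega_{j,\V{Q}}^{j+1}}
=\SPn{\Omega_j^{j+1}}{e^{i\ee\vo(f_j(\V{Q})-f_j(\V{P}))}\,\Omega_j^{j+1}}
=e^{-\ee^2\|f_j(\V{P})-f_j(\V{Q})\|^2/4}\,.
\]
By the definition of weak equivalence in $\wh{\HR}$, the coherent states $\Omega^{\ren}_{\V{P}}$ and $\Omega^{\ren}_{\V{Q}}$ are weakly equivalent iff $\sum_j(1-e^{-\ee^2\|f_j(\V{P})-f_j(\V{Q})\|^2/4})<\infty$, iff $\sum_j\|f_j(\V{P})-f_j(\V{Q})\|^2<\infty$. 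Substituting \eqref{def-ff} and passing to spherical coordinates, the dyadic support $\{\rho_{j+1}\le|\V{k}|<\rho_j\}$ together with the radial factor $|\V{k}|^{-1/2}$ of $\V{G}$ produces a radial integral equal to $\ln 2$ and leaves
\[
\|f_j(\V{P})-f_j(\V{Q})\|^2=(\ln 2)\,I(\nabla E_j(\V{P}),\nabla E_j(\V{Q}))\,,
\]
with $I(\V{u},\V{w})$ a continuous function on $\{\,|\V{u}|,|\V{w}|<1\,\}$ vanishing only on the diagonal $\V{u}=\V{w}$.

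Next I would extract divergence of this sum from the strict convexity of $E_\infty$. By Theorem~\ref{thm-ren-mass}, $E_\infty$ is strictly convex on $\cB_\pmax$, so $\nabla E_\infty$ is strictly monotone and $\nabla E_\infty(\V{P})\ne\nabla E_\infty(\V{Q})$; moreover $\nabla E_j\to\nabla E_\infty$ uniformly on $\cB_\pmax$. Continuity of $I$ then gives $I(\nabla E_j(\V{P}),\nabla E_j(\V{Q}))\ge c>0$ for all $j$ sufficiently large, so $\sum_j\|f_j(\V{P})-f_j(\V{Q})\|^2=\infty$, and $\Omega^{\ren}_{\V{P}}$ is not weakly equivalent to $\Omega^{\ren}_{\V{Q}}$. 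Von Neumann's theorem on IDPSs \cite{vNeumann1939} then gives $\HR_{\V{P}}^{\ren}\bot\HR_{\V{Q}}^{\ren}$, since IDPSs generated by $C_0$-vectors from distinct weak equivalence classes are orthogonal in $\wh\HR$.

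For the final assertion on $\pi_{\V{P}}$ and $\pi_{\V{Q}}$, the irreducibility noted in Section~\ref{sec-CIRR} reduces disjointness to unitary inequivalence. An intertwining unitary $V\colon\HR_{\V{P}}^{\ren}\to\HR_{\V{Q}}^{\ren}$ would realize the pure $\pi_{\V{P}}$-normal vector state of $\Omega^{\ren}_{\V{P}}$ also as the pure $\pi_{\V{Q}}$-normal vector state of $V\Omega^{\ren}_{\V{P}}\in\HR_{\V{Q}}^{\ren}$. On each local Weyl algebra $\fW_j$ this state equals $W(v)\mapsto e^{i\ee\SPn{f_j(\V{P})}{v}}e^{-\|v\|^2/4}$, i.e.\ the coherent character with displacement $\ee f_j(\V{P})$; matching these against the corresponding $\pi_{\V{Q}}$-characters with displacement $\ee f_j(\V{Q})$ leads, as in \cite{KMcKW1966,Streit1967,Froehlich1973}, to precisely the $L^2$-summability condition on $f_j(\V{P})-f_j(\V{Q})$ whose failure was just established, giving the contradiction. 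I expect this final step to be the main obstacle: the factor-wise overlap computation in the first two steps is routine, but the pure-state comparison on local subalgebras requires a careful density/approximation argument on $\fW^\circ$, which is where I would lean most heavily on the cited literature.
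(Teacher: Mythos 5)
Your proposal follows the same overall route as the paper: reduce to non-$L^2$-summability of $f_j(\V{P})-f_j(\V{Q})$ via the Gaussian overlap formula for the factors of the coherent states, invoke strict convexity of $E_\infty$ (hence separation of the gradients $\nabla E_j(\V{P})$, $\nabla E_j(\V{Q})$ for $j$ large) to force divergence, and then appeal to von Neumann's IDPS theory for orthogonality and representation inequivalence. The paper leaves the divergence as ``elementary to check'' whereas you spell out the dyadic radial integral --- that part is fine and just fleshes out a step the paper compresses. The one place you diverge, and where there is a real imprecision, is the final disjointness argument. You claim that the state induced by $V\Omega_{\V{P}}^{\ren}$ restricts on each $\fW_j$ to the coherent character with displacement $\ee f_j(\V{Q})$; this is not automatic, since $V\Omega_{\V{P}}^{\ren}$ is an arbitrary unit vector of $\HR_{\V{Q}}^{\ren}$ and its associated vector state need not be a product coherent state, so a density/approximation step is genuinely missing (as you yourself flag). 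The paper sidesteps this entirely: an intertwiner $T$ between $\pi_{\V{P}}$ and $\pi_{\V{Q}}$ on $\fW$ automatically intertwines the actions of the double commutant $B^\#=\pi(\fW^\circ)''$, and von Neumann's Theorem~X(II) supplies an operator $Q\in B^\#$ with $Q_{\V{P}}=\id$ and $Q_{\V{Q}}=0$ whenever the reference vectors are not weakly equivalent, which gives the contradiction in one line without comparing characters factor by factor. So: same strategy, and your step~1--2 are sound, but the clean closing move is to pass to $B^\#$ and use the discriminating element from \cite{vNeumann1939} rather than to try to match coherent characters on the local Weyl factors.
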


\begin{proof}
We have
$\Omega_{\V{P}}^{\ren}\underset{w}{\approx}\Omega_{\V{Q}}^{\ren}$
if and only if $\sum_{j=0}^\infty||\alpha_j|-1|<\infty$, where
$$
\alpha_j:=\SPn{U_j^*(\V{P})\,\Omega_j^{j+1}}{U_j^*(\V{Q})\,\Omega_j^{j+1}}
=\SPn{\Omega_j^{j+1}}{e^{i\ee\vo(h_{\V{P},\V{Q}}^{(j)})}\,\Omega_j^{j+1}}
=e^{-\frac{\ee^2}{2}\|h_{\V{P},\V{Q}}^{(j)}\|^2},
$$
with 
$h_{\V{P},\V{Q}}^{(j)}(\V{k},\lambda)
:=f_j(\V{P};\V{k},\lambda)-f_j(\V{Q};\V{k},\lambda)$.
By strict convexity of $E_\infty$ on $\cB_\pmax$ we have
$|\nabla E_j(\V{P})-\nabla E_j(\V{Q})|\ge c>0$, for all sufficiently
large $j\in\NN$. Hence, it is elementary to check that
$\sum_{j=0}^\infty\|h_{\V{P},\V{Q}}^{(j)}\|^2=\infty$.
Therefore,
$\sum_{j=0}^\infty||\alpha_j|-1|=\infty$, as we easily see.
Now, it is known that
two representations of the Weyl algebra in 
different IDPS are unitarily equivalent
if and only if the IDPS belong to weakly equivalent
reference vectors \cite{KMcKW1966,Streit1967}.
Let us
recall the brief argument for the ``only if'' part given
in \cite{Streit1967} in our special situation:
Assume there is a unitary
$T\in\cB(\HR^{\ren}_{\V{P}},\HR^{\ren}_{\V{Q}})$ such that
$$
\SPn{T\,\Psi}{\pi(W)\,T\,\Psi}=\SPn{T\,\Psi}{\pi_{\V{Q}}(W)\,T\,\Psi}
=\SPn{\Psi}{\pi_{\V{P}}(W)\,\Psi}
=\SPn{\Psi}{\pi(W)\,\Psi},
$$ 
for all $W\in\fW$ and $\Psi\in\HR^{\ren}_{\V{P}}$.
Then $\SPn{T\,\Psi}{Q\,T\,\Psi}=\SPn{\Psi}{Q\,\Psi}$, for all
$Q\in\pi(\fW^\circ)''=B^\#$ and $\Psi\in\HR^{\ren}_{\V{P}}$.
Since $\Omega^{\ren}_{\V{P}}\not\underset{w}{\approx}\Omega^{\ren}_{\V{Q}}$,
there exists, however, some
$Q\in B^\#$ which is reduced by both $\HR^{\ren}_{\V{P}}$
and $\HR^{\ren}_{\V{Q}}$ such that $Q_{\V{P}}=\id$ and $Q_{\V{Q}}=0$
\cite[Theorem~X(II)]{vNeumann1939}; a contradiction!
\end{proof}

\begin{rem}\label{rem-Omega-ren}
One might think it is more natural to determine the spaces
$\HR_{\V{P}}^\ren$ by slightly different coherent states, namely
$\wt{\Omega}^{\ren}_{\V{P}}:=
\eta\otimes\prod\otimes_{j\in\NN_0}\wt{\Omega}_{j,\V{P}}^{j+1}$
with
$\wt{\Omega}_{j,\V{P}}^{j+1}:=\wt{U}_j^*(\V{P})\,\Omega_j^{j+1}$,
where $\wt{U}_j^*(\V{P})$ is defined by \eqref{def-U-intro} and \eqref{def-ff}
with $E_j$ in \eqref{def-ff} replaced by $E=E_\infty$.
Using $|\nabla E_j-\nabla E|\le\const\,\ee\,(1+\const\,\ee)^j\rho_j$,
it is, however, straightforward to verify that
${\Omega}^{\ren}_{\V{P}}$ and $\wt{\Omega}^{\ren}_{\V{P}}$
are equivalent and, hence, give rise to the same IDPS.
\hfill$\Diamond$
\end{rem}



\begin{appendix}

\section{Some properties of the fiber Dirac operator}\label{app-esa-D}

\noindent
Below we prove two lemmata
on the Dirac operators defined as the closures
of the operators \eqref{def-Dkj} which have
been used several times in the main text.
We recall that the dense subspace $\sC_j\subset\HR_j$
has been defined below \eqref{def-Dkj}.
In the whole Appendix~\ref{app-esa-D} the vector potential
$\V{A}_k^j$ satisfies the conditions in Assumption~\ref{ariel}.
The values of $\ee,\kappa>0$ are arbitrary.

\begin{lem}\label{le-esa-wD}
$\D_k^j(\V{P})$ and $\D_k^j(\V{P})^2$
are essentially self-adjoint on $\sC_j$,
for all $\V{P}\in\RR^3$ and $k,j\in\NN_0\cup\{\infty\}$, $k\le j$.
\end{lem}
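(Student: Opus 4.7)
My approach would be to apply Nelson's commutator theorem (Reed--Simon Vol.~II, Thm.~X.37) twice: first with test operator $N := \Hf^{(j)} + \id$ in order to obtain essential self-adjointness of $\D_k^j(\V{P})$ on $\sC_j$, and then again with test operator $N^2$ in order to obtain the analogous statement for $\D_k^j(\V{P})^2$. The starting observation is that $N^s$ is a multiplication operator in the Fock representation for every $s>0$ and that $\sC_j$ is a core for each such $N^s$, by the standard Fock-space calculus.

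For the first application I would split $\D_k^j = (\valpha\cdot(\V{P}-\Pf^{(j)}) + \alpha_0) + \ee\,\valpha\cdot\V{A}_k^j$. The matrix-valued multiplication operator is controlled by $(|\V{P}|+1)\,\id + \Hf^{(j)}$, using $|\Pf^{(j)}|\le\Hf^{(j)}$ pointwise, and the field operator $\valpha\cdot\V{A}_k^j$ satisfies the standard bound $\|\valpha\cdot\vp(\V{g})\,\phi\|\le \sqrt{2}\,(\|\omega^{-\nf{1}{2}}\V{g}\|+\|\V{g}\|)\|N^{\nf{1}{2}}\phi\|$, applied with $\V{g}:=\sum_{k\le\ell<j}\V{g}_\ell$; the two relevant norms are finite because the $\V{g}_\ell$ have pairwise disjoint supports $\cA_\ell^{\ell+1}$ and the bounds \eqref{ariel2001} are summable (geometric in $\rho_\ell$). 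This yields $\|\D_k^j\phi\|\le c\,\|N\phi\|$ on $\sC_j$. For the commutator hypothesis one uses $[\Hf^{(j)},a^\sharp(h)] = \mp a^\sharp(\omega h)$, so that $i\,[\D_k^j,N] = \ee\,\valpha\cdot\vo(\omega\,\V{g})$, estimated again as a field operator with the $L^2$ form factor $\omega\,\V{g}$; this delivers the required bound $|\SPn{\D_k^j\phi}{N\phi}-\SPn{N\phi}{\D_k^j\phi}|\le d\,\|N^{\nf{1}{2}}\phi\|^2$ and Nelson's theorem furnishes essential self-adjointness of $\D_k^j$ on $\sC_j$.

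For the second application I would invoke the Clifford relations \eqref{Clifford} (which give $\alpha_0\,\valpha+\valpha\,\alpha_0=0$) to compute $\D_k^j{}^2 = (\valpha\cdot\V{v})^2+\id = \V{v}^2 + \sum_{1\le i<m\le 3}\alpha_i\alpha_m\,[v_i,v_m] + \id$, with $\V{v}:=\V{P}-\Pf^{(j)}+\ee\,\V{A}_k^j$. The commutators $[v_i,v_m]$ are c-numbers or linear field operators, coming from $[\Pf^{(j)}_i,\vp(\V{g})]$ and the CCR \eqref{CCR}, so that $\D_k^j{}^2$ is a polynomial of total degree two in $\Pf^{(j)}$ and $\V{A}_k^j$. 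Products of at most two field operators with form factors whose $\omega^\nu$-weighted norms, $\nu\in\{-\nf{1}{2},0,\nf{1}{2},1\}$, are finite, satisfy $\|\vp(h_1)\vp(h_2)\,\phi\|\le C\,\|N\phi\|$; combining these standard estimates gives $\|\D_k^j{}^2\,\phi\|\le c\,\|N^2\phi\|$ on $\sC_j$. For the commutator bound one writes $[\D_k^j{}^2,N^2] = [\D_k^j{}^2,N]\,N + N\,[\D_k^j{}^2,N]$, and $[\D_k^j{}^2,N]$ is again a degree-two polynomial in field operators with at least one form factor weighted by an extra power of $\omega$, and so is $N$-bounded by the same product estimate. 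This yields $|\SPn{\D_k^j{}^2\,\phi}{N^2\phi}-\SPn{N^2\phi}{\D_k^j{}^2\,\phi}|\le d\,\|N\phi\|^2$, whence Nelson's theorem applies.

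\textbf{Main obstacle.} The most delicate point is the estimate $\|\D_k^j{}^2\,\phi\|\le c\,\|N^2\phi\|$, and in particular the bookkeeping of the quadratic terms arising from $\V{A}_k^j\cdot\V{A}_k^j$ and the cross-term $\Pf^{(j)}\cdot\V{A}_k^j$, especially in the borderline cases $k=\infty$ or $j=\infty$ where $\V{A}_k^j$ is already defined by an infinite series. What makes this go through is that Assumption~\ref{ariel} together with the disjointness of the annuli $\cA_\ell^{\ell+1}$ and the elementary bound $|\V{k}|\le\rho_\ell$ on each annulus forces $\sum_\ell\|\omega^\nu\V{g}_\ell\|^2\le\const\sum_\ell\rho_\ell^{2+2\nu}$, $\nu\in\{-\nf{1}{2},0,\nf{1}{2},1\}$, to converge geometrically, so that the field operator built from $\sum_\ell\V{g}_\ell$ enjoys all the weighted bounds needed by the product estimate and its $N$-commutator version.
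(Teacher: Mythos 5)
Your approach succeeds for $\D_k^j(\V{P})$ but has a genuine gap for $\D_k^j(\V{P})^2$. The first application of Nelson's commutator theorem with test operator $N=\Hf^{(j)}+\id$ is correct and is in fact the argument of Miyao and Spohn that the paper explicitly mentions before declaring that they give a \emph{different} argument precisely because it also handles the square. For the square, the paper's proof is not a commutator-theorem argument at all: it exploits that the coupling functions have an ultra-violet cutoff at $\kappa$, so that applying $\D_k^j(\V{P})$ to a vector in $\sC_j$ supported in a fixed cube $([-R,R]^3\times\ZZ_2)^n$ with $R>\kappa$ produces a vector still supported in the same cube with particle number increased by at most one. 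Iterating yields $\|\D_k^j(\V{P})^{\nu\ell}\vp\|\le C^{\nu\ell}(m+\nu\ell)!/m!\,\|\vp\|$, whence every $\vp\in\sC_j$ is an analytic vector for $\D_k^j(\V{P})$ and a \emph{semi}-analytic vector for $\D_k^j(\V{P})^2$, and Theorems X.39/X.40 of Reed--Simon apply.

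The gap in your argument is in the claim that $[\D_k^j{}^2,N]$ ``is again a degree-two polynomial in field operators'' and hence $N$-bounded. It is not a polynomial in field operators alone: expanding $\V{v}^2$ with $\V{v}=\V{P}-\Pf^{(j)}+\ee\V{A}_k^j$ and commuting with $\Hf^{(j)}$ produces the cross term $\ee(\V{P}-\Pf^{(j)})\cdot[\Hf^{(j)},\V{A}_k^j]\propto\Pf^{(j)}\cdot\vo(\omega\V{g})$, which is $\Pf^{(j)}$ (not a field operator) times a relatively $N^{\nf{1}{2}}$-bounded field operator. Such a product is only relatively $N^{\nf{3}{2}}$-bounded; it is genuinely not $N$-bounded. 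This can be seen on symmetrized $n$-fold product states $\phi_n\sim u^{\otimes n}$ with $u$ concentrated at momentum $\ve\hat{\V{e}}$: there $\|\Pf^{(j)}\vo(\omega\V{g})\phi_n\|\sim n\ve\sqrt{n}$ while $\|N\phi_n\|\sim n\ve$, a mismatch of order $\sqrt{n}$. Consequently the verification you propose of the hypothesis $|\langle\D_k^j{}^2\phi,N^2\phi\rangle-\langle N^2\phi,\D_k^j{}^2\phi\rangle|\le d\,\|N\phi\|^2$ does not go through: that quantity equals $2\,|\Im\langle[N,\D_k^j{}^2]\phi,N\phi\rangle|$, and your route to bounding it requires exactly the relative $N$-bound on $[N,\D_k^j{}^2]$ that fails. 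Unless you can exhibit a cancellation in the imaginary part (which your write-up does not), the commutator-theorem route with test operator $N^2$ does not work for the square, and the paper's analytic-vector argument appears to be the appropriate substitute.
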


\begin{proof}
Assume that 
$\psi=(\psi^{(0)},\ldots,\psi^{(m)},0,0,\ldots\,)\in\sC_j$
and that 
$\psi^{(n)}\in\CC^4\otimes[\sF_j]^{(n)}\cong L^2_{\mathrm{sym}}((\RR^3\times\ZZ_2)^n,\CC^4)$ 
is supported in
$([-R,R]^{3}\times\ZZ_2)^n$, for every $n=1,\ldots,m$, and some
$m\in\NN$ and $R>\kappa$.
Let $N:=d\Gamma(1)$ denote the number operator.
A trivial estimate using
$\gamma_j:=\|\valpha\cdot\V{A}_j\,(N+1)^{-1/2}\|<\infty$
shows that
\begin{equation}\label{vera1}
\|\D_k^j(\V{P})\,\psi\|\le C(j,R,\V{P})(m+1)\,\|\psi\|\,,
\quad C(j,R,\V{P}):=|\V{P}|+R+\gamma_j\,.
\end{equation}
Furthermore, we observe that $(\D_k^j(\V{P})\,\psi)^{(n)}$
is still supported in $([-R,R]^{3}\times\ZZ_2)^n$, for every $n\in\NN$,
and $(\D_k^j(\V{P})\,\psi)^{(n)}=0$,
for $n=m+2,m+3,\ldots\;$. 
Applying \eqref{vera1} repeatedly we conclude,
for $\vp\in\sC_j$, such that
$\vp^{(n)}$ is supported in $([-R,R]^{3}\times\ZZ_2)^n$, for every $n\in\NN$,
and $\vp^{(n)}=0$, for $n=m+1,m+2,\ldots\;$,
$$
\big\|\D_k^j(\V{P})^{\nu\ell}\,\vp\big\|\le
C(j,R,\V{P})^{\nu\ell}\,\frac{(m+\nu\ell)!}{m!}\,\|\vp\|\,,\quad\nu\in\{1,2\}\,.
$$
Since $\frac{(m+\nu\ell)!}{(\nu\ell)!\,m!}={m+\nu\ell\choose m}<2^{m+\nu\ell}$
this implies
$$
\sum_{\ell=0}^\infty \frac{t^{\nu\ell}}{(\nu\ell)!}\,
\big\|\D_k^j(\V{P})^{\nu\ell}\,\vp\big\|
<\infty\,,\quad\nu\in\{1,2\}\,,
$$
for all $0<t<2^{-\nu}/C(j,R,\V{P})$.
Consequently, every $\vp\in\sC_j$
is an analytic vector for $\D_k^j(\V{P})$ and a
semi-analytic vector for $\D_k^j(\V{P})^2$, which
implies the assertion;
see \cite[Theorems~X.39 and~X.40]{ReedSimonII}.
\end{proof}

\smallskip

\noindent
The following lemma 
on the resolvent $\R_k^j(iy)\equiv\R_k^j(\V{P},iy)$
and the sign function $S_k^j\equiv S_k^j(\V{P})$
of the fiber Dirac operator defined in \eqref{def-Rkj}
and \eqref{for-sgnT}, respectively,
is a variant of \cite[Corollary~3.1 \& Lemma~3.3]{MatteStockmeyer2009a}.
We present its proof for the sake of completeness.

\begin{lem}\label{le-dominique}
For $\nu>0$, we find $\rho\in(0,\infty)$
and $Y_{\nu}(y),\Upsilon_\nu\in\sL(\HR_j)$
with
\begin{align}\label{zita111}
\R_k^j(iy)\,(\Hf^{(j)}+\rho)^{-\nu}&=
(\Hf^{(j)}+\rho)^{-\nu}\,\R_k^j(iy)\,Y_{\nu}(y)\,,\quad
\|{Y}_{\nu}(y)\|\le2\,,
\\\label{zita555}
S_k^j\,(\Hf^{(j)}+\rho)^{-\nu}&=
(\Hf^{(j)}+\rho)^{-\nu}\,(S_k^j+\Upsilon_{\nu})\,,\qquad
\|\Upsilon_{\nu}\|\le1\,,
\end{align}
for $y\in\RR$.
In particular,
$\R_k^j(iy)$ and $S_k^j$ map $\dom((\Hf^{(j)})^\nu)$ into itself.
Furthermore, the following resolvent identity
is valid, for all $\lambda>0$,
\begin{align*}
\big(\R_k^j(iy)-\R_j(iy)\big)
(\Hf^{(k,j)}\!+\lambda)^{-\nf{1}{2}}
&=\R_j(iy)\,\ee\,
\valpha\cdot\V{A}_k^j\,(\Hf^{(k,j)}\!+\lambda)^{-\nf{1}{2}}
\R_k^j(iy)\,,
\end{align*}
and the bound \eqref{roswita} holds true
(with the same constant as in \eqref{akj-neu}).
\end{lem}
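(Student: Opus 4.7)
The proof rests on two elementary observations. First, the ultra-violet cutoff built into $\V{G}$, together with the estimate \eqref{ariel2001}, makes the commutator
\begin{equation*}
C := [\D_k^j,\Hf^{(j)}] = i\,\ee\,\valpha\cdot\vo\big(\omega\,(\V{g}_0+\cdots+\V{g}_{k-1})\big)
\end{equation*}
into an $[\Hf^{(j)}]^{\nf{1}{2}}$-bounded operator with constant $\le\const\,\ee$ depending only on $\kappa$ and $\ee_0$ (in particular independent of $k,j$), by the standard field-operator bounds combined with the geometric convergence of $\sum_\ell\|\omega^\sigma\V{g}_\ell\|$ for $\sigma=\nf{1}{2},1$; note that the c-numbers in $\V{A}_k$ drop out of the commutator. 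Second, $\D_k^j$ commutes strongly with $\Hf^{(k,j)}$: the interaction $\ee\V{A}_k$ acts only in $\sF_k$, $\Pf^{(j)}=\Pf^{(k)}+\Pf^{(k,j)}$ splits additively, and $\Pf^{(k,j)}$ commutes with the multiplication operator $\Hf^{(k,j)}$. Consequently, $\R_k^j(iy)$ commutes strongly with $(\Hf^{(k,j)}+\lambda)^{-\nf{1}{2}}$, a fact which will be essential both for the resolvent identity and for \eqref{roswita}.

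Writing $N_\rho:=\Hf^{(j)}+\rho$, the commutator identity
\begin{equation*}
[N_\rho,\R_k^j(iy)] = \R_k^j(iy)\,C\,\R_k^j(iy)\,,
\end{equation*}
verified first on the core $\sC_j$ from Lemma~\ref{le-esa-wD}, rearranges to
\begin{equation*}
\R_k^j(iy)\,N_\rho^{-1} = N_\rho^{-1}\,\R_k^j(iy)\,\big(\id + C\,\R_k^j(iy)\,N_\rho^{-1}\big)\,.
\end{equation*}
Choosing $\rho$ large enough that the corrector $C\,\R_k^j(iy)\,N_\rho^{-1}$ has norm $\le 1$ uniformly in $y$ (which will follow from the estimate on $C$ in paragraph~1, $\|\R_k^j(iy)\|\le\Jy^{-1}$ from \eqref{SpecDirac}, and $\|(\Hf^{(j)}+1)^{\nf{1}{2}}N_\rho^{-1}\|\le\rho^{-\nf{1}{2}}$) yields $Y_1(y):=\id+C\R_k^j(iy)N_\rho^{-1}$ with $\|Y_1(y)\|\le 2$. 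For $\nu\in(0,1)$ I substitute the representation
\begin{equation*}
N_\rho^{-\nu}=\frac{\sin(\pi\nu)}{\pi}\int_0^\infty t^{-\nu}(N_\rho+t)^{-1}\,dt
\end{equation*}
into $\R_k^j(iy)N_\rho^{-\nu}$ and apply the $\nu=1$ commutation with $\rho\rightsquigarrow\rho+t$ (which only improves the bound), then factor $N_\rho^{-\nu}$ back out on the left. For $\nu\ge 1$ I iterate on the integer part.

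For the sign function, inserting the commutation identity into the principal-value representation \eqref{for-sgnT} gives
\begin{equation*}
S_k^j\,N_\rho^{-\nu}\,\psi = N_\rho^{-\nu}\,\big(S_k^j\,\psi + \Upsilon_\nu\,\psi\big)\,,
\quad
\Upsilon_\nu := \lim_{\tau\to\infty}\int_{-\tau}^\tau\R_k^j(iy)\big(Y_\nu(y)-\id\big)\,\frac{dy}{\pi}\,,
\end{equation*}
with integrand dominated by $\const\,\ee\,\rho^{-\nf{1}{2}}\Jy^{-2}$, so that the integral converges absolutely and $\|\Upsilon_\nu\|\le 1$ once $\rho$ is enlarged sufficiently. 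The resolvent identity is just the second resolvent formula applied to $\D_j-\D_k^j=\ee\,\valpha\cdot\V{A}_k^j$, combined with the strong commutation from paragraph~1 used to move $(\Hf^{(k,j)}+\lambda)^{-\nf{1}{2}}$ past $\R_k^j(iy)$. Finally, \eqref{roswita} follows by taking operator norms and inserting $\|\R_j(\pm iy)\|,\|\R_k^j(iy)\|\le\Jy^{-1}$ together with the field-operator bound $\|\valpha\cdot\V{A}_k^j\,(\Hf^{(k,j)}+\rho_k)^{-\nf{1}{2}}\|\le\const\,\rho_k^{\nf{1}{2}}$ from \eqref{akj-neu}.

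The main technical delicacy will be promoting the formal commutator manipulations from the core $\sC_j$ to bounded-operator identities on all of $\HR_j$; this requires a short bootstrap argument showing that the a priori densely defined corrector is in fact bounded with the stated norm. Once this is in place, the assertion that $\R_k^j(iy)$ and $S_k^j$ preserve $\dom([\Hf^{(j)}]^\nu)$ is immediate from the displayed identities, since the right-hand sides map $\HR_j$ into $\Ran(N_\rho^{-\nu})=\dom([\Hf^{(j)}]^\nu)$.
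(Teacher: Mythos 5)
Your overall strategy (commute the resolvent past the photon energy, package the commutator and invert a correction of norm $\le 1$, then treat fractional $\nu$ by the Balakrishnan integral) is the same idea as in the paper, and your handling of the resolvent identity, of \eqref{roswita}, and of \eqref{zita555} via \eqref{for-sgnT} is fine. The gap is in the corrector for \eqref{zita111}.

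You set $C=[\D_k^j,\Hf^{(j)}]$ and arrive at the corrector $C\,\R_k^j(iy)\,N_\rho^{-1}$, proposing to bound it via $\|C\,(\Hf^{(j)}+1)^{-\nf{1}{2}}\|$, $\|\R_k^j(iy)\|\le\Jy^{-1}$, and $\|(\Hf^{(j)}+1)^{\nf{1}{2}}N_\rho^{-1}\|\le\rho^{-\nf{1}{2}}$. But these three factors cannot be combined unless $(\Hf^{(j)}+1)^{\nf{1}{2}}$ is commuted past $\R_k^j(iy)$ --- which is precisely an instance of the statement being proved. The underlying identity $[N_\rho,\R_k^j]=\R_k^j\,C\,\R_k^j$ already presupposes that $\R_k^j(iy)$ maps into $\dom(N_\rho)$, and the pointwise bound $\|C\,\R_k^j(iy)\phi\|$ needs $\R_k^j(iy)\phi\in\dom((\Hf^{(j)})^{\nf{1}{2}})$. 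Neither is a priori available: since $\Hf^{(j)}$ is not controlled by $\Pf^{(j)}$ (photon momenta can cancel while their energies add), $\dom(|\D_k^j|)$ is strictly larger than $\fdom(\Hf^{(j)})$, so $\R_k^j(iy)$ does not obviously map $\HR_j$ into $\dom(C)$. The "short bootstrap" you allude to at the end is therefore not a routine domain chase; it is the whole content of the lemma.

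The paper evades this circularity by placing the key commutator on the side of the \emph{negative} power: it invokes \cite[Lemma~3.2]{Matte2009}\&\cite[Lemma~3.1]{MatteStockmeyer2009a} for the bound $\|[\valpha\cdot\V{A}_k,\Theta^{-\nu}]\Theta^\nu\|\le C(\nu,\kappa)\,\rho^{-\nf{1}{2}}$, so that the correction appearing after the algebraic rearrangement, $\R_k^j(iy)\,T_\nu$, is a product of two \emph{independently} bounded operators, and the identity $\R_k^j(iy)\Theta^{-\nu}=(\id-\R_k^j(iy) T_\nu)\Theta^{-\nu}\R_k^j(iy)$ is obtained purely from the density of $(\D_k^j-iy)\,\sC_j$, with no regularity assumed of $\R_k^j$. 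A repaired version of your argument exists and is close to this: instead of $[N_\rho,\R_k^j]$, compute $[N_\rho^{-1},\D_k^j]=N_\rho^{-1}\,C\,N_\rho^{-1}$ on $\sC_j$ (manifestly bounded, since $CN_\rho^{-\nf{1}{2}}$ and $N_\rho^{-\nf{1}{2}}$ are), deduce $\R_k^j\,N_\rho^{-1}=N_\rho^{-1}\R_k^j+\R_k^j(N_\rho^{-1}C N_\rho^{-1})\R_k^j$ by density, and invert the bounded corrector $C N_\rho^{-1}\R_k^j(iy)$ --- the same two factors you had, but in an order where the field-operator bound applies directly. As written, though, your operator ordering makes the key estimate fail.
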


\begin{proof}
We put $\Theta:=\Hf^{(j)}+\rho$ and recall from
\cite[Lemma~3.2]{Matte2009}\&\cite[Lemma~3.1]{MatteStockmeyer2009a} 
that, for every $\nu\in\RR$,
$[\valpha\cdot\V{A}_k^j,\Theta^{-\nu}]\,\Theta^\nu$,
defined a priori on $\sC_j$,
extends to a bounded operator on $\sH_j$,
henceforth denoted by $T_\nu$, and
$\|T_\nu\|\le C(\nu,\kappa)/\rho^{\nf{1}{2}}$, for all $\nu>0$.
We choose $\rho$ so large that $\|T_\nu\|\le1/2$.
For $\vp\in\sC_j$,
\begin{align*}
\R_k^j(iy)\,\Theta^{-\nu}(\D_k^j-iy)\,\vp
&=
\R_k^j(\imath y)\,(\D_k^j-\imath y)\,\Theta^{-\nu}\vp
+\R_k^j(\imath y)\,[\Theta^{-\nu},\valpha\cdot\V{A}_k^j]\,\vp
\\
&=
\big(\id-\R_k^j(\imath y)\,T_\nu\big)
\Theta^{-\nu}\R_k^j(\imath y)\,(\D_k^j-\imath y)\,\vp\,.
\end{align*}
Since $\D_k^j$ is essentially self-adjoint
on $\sC_j$ we know that
$(\D_k^j-\imath y)\,\sC_j$ is dense in $\sH_j$,
whence
$$
\R_k^j(\imath y)\,\Theta^{-\nu}=\big(\id-\R_k^j(\imath y)\,T_\nu\big)
\Theta^{-\nu}\R_k^j(\imath y)\,.
$$
Since $\|\R_k^j(\imath y)\,T_\nu\|\le1/2\Jy$ we may define
$X_\nu(y):=(\id-\R_k^j(\imath y)\,T_\nu)^{-1}$, so that
$X_\nu(y)\,\R_k^j(\imath y)\,\Theta^{-\nu}=\Theta^{-\nu}\R_k^j(\imath y)$.
Thus, we obtain \eqref{zita111} with
$Y_\nu(y):=X_\nu(y)^*=\id+Z_\nu(y)$, where
$\|Z_\nu(y)\|\le1/\Jy$. Computing the strongly convergent 
principal value of \eqref{zita111} 
and using \eqref{for-sgnT} we also obtain \eqref{zita555}. 

Next, we verify the asserted resolvent identity first on the 
domain of $\Hf^{(j)}$ by applying $\D_j-iy$ on both sides
and using that $\R_k^j(iy)$ and $(\Hf^{(k,j)}\!+\lambda)^{-\nf{1}{2}}$
commute.
After that it extends by continuity to an identity in $\cB(\HR_j)$,
which together with \eqref{akj-neu} implies \eqref{roswita}.
\end{proof}


\section{Infra-red behavior of ground states}\label{app-a(k)}

\noindent
In the next formula we employ the notation $\wh{\cR}_j(\V{P},\V{k})$
introduced in \eqref{def-wtR}.

\begin{lem}\label{le-a(k)}
Let $\V{P}\in\RR^3$, $j\in\NN\cup\{\infty\}$, and $\ee>0$,
and assume that $\phi_j\equiv \phi_j(\V{P})\in\dom(\HAM_j)$ is a ground state
eigenvector of $\HAM_j\equiv\HAM_j(\V{P})$. 
Then the following representation is valid,
for $\lambda\in\ZZ_2$ and almost every $\V{k}\in\cA_j$,
\begin{align}\label{for-a(k)}
&a_\lambda(\V{k})\,\phi_j
+\ee\,{\V{G}_j(\V{k},\lambda)}\cdot
\wh{\cR}_j(\V{P},\V{k})\,\nabla\HAM_j\,\phi_j
\\\nonumber
&=-\ee\!\int_\RR
\wh{\cR}_j(\V{P},\V{k})\R_{j}(\V{P}-\V{k},iy)\,\valpha\cdot\V{k}\,\R_j(\V{P},iy)\,
\valpha\cdot\V{G}(\V{k},\lambda)\R_j(\V{P},iy)\,\phi_j\,\frac{y\,dy}{i\pi}.
\end{align}
\end{lem}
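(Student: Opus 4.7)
The plan is to derive \eqref{for-a(k)} by the classical pull-through technique applied to the eigenvalue equation $\HAM_j(\V{P})\,\phi_j = E_j(\V{P})\,\phi_j$. The central obstacle is that $\HAM_j = |\D_j| + \Hf^{(j)}$ contains the non-local operator $|\D_j|$, so $[a_\lambda(\V{k}),\HAM_j]$ cannot be computed directly from the CCR. I will bypass this by invoking the integral representation \eqref{for-absvT}, $|\D_j(\V{P})|\,\psi = \lim_{\tau\to\infty}\int_{-\tau}^\tau(\id+iy\,\R_j(\V{P},iy))\,\psi\,dy/\pi$, which reduces matters to commuting $a_\lambda(\V{k})$ past the resolvent $\R_j(\V{P},iy)$.

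First I will verify, on the dense core $\sC_j$ and for $\V{k}\in\cA_j$, the pull-through identity $a_\lambda(\V{k})\,\D_j(\V{P}) = \D_j(\V{P}-\V{k})\,a_\lambda(\V{k}) + c_0\,\ee\,\valpha\cdot\V{G}(\V{k},\lambda)$ with an explicit normalisation constant $c_0$, combining $[a_\lambda(\V{k}),\Pf^{(j)}] = \V{k}\,a_\lambda(\V{k})$ with $[a_\lambda(\V{k}),\vp(\V{G}_j)] \propto \V{G}(\V{k},\lambda)$ (noting $\V{G}_j=\V{G}$ on $\cA_j$). The first resolvent identity then yields
\[
a_\lambda(\V{k})\,\R_j(\V{P},iy) = \R_j(\V{P}-\V{k},iy)\,a_\lambda(\V{k}) - c_0\,\ee\,\R_j(\V{P}-\V{k},iy)\,\valpha\cdot\V{G}(\V{k},\lambda)\,\R_j(\V{P},iy).
\]
Substituting this into \eqref{for-absvT}, combining with the elementary $a_\lambda(\V{k})\,\Hf^{(j)} = (\Hf^{(j)}+|\V{k}|)\,a_\lambda(\V{k})$, and using the eigenvalue equation to collapse $a_\lambda(\V{k})(\HAM_j-E_j)\phi_j=0$, I will arrive at
\[
(\HAM_j(\V{P}-\V{k}) + |\V{k}| - E_j(\V{P}))\,a_\lambda(\V{k})\,\phi_j = c_0\,\ee\!\int_\RR\!iy\,\R_j(\V{P}-\V{k},iy)\,\valpha\cdot\V{G}(\V{k},\lambda)\,\R_j(\V{P},iy)\,\phi_j\,\frac{dy}{\pi}.
\]
By \eqref{rene} the operator on the left is boundedly invertible with inverse $\wh{\cR}_j(\V{P},\V{k})$, which I apply to both sides.

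To bring this into the exact form \eqref{for-a(k)}, I will use the second resolvent identity $\R_j(\V{P}-\V{k},iy) = \R_j(\V{P},iy) + \R_j(\V{P}-\V{k},iy)\,\valpha\cdot\V{k}\,\R_j(\V{P},iy)$ (a consequence of $\D_j(\V{P}-\V{k}) = \D_j(\V{P}) - \valpha\cdot\V{k}$) to split the integrand into two pieces. After the substitution $iy\,dy/\pi = -y\,dy/(i\pi)$, the piece containing $\valpha\cdot\V{k}$ reproduces verbatim the triple-resolvent integral on the right-hand side of \eqref{for-a(k)}, while the remaining piece is, by the integral representation \eqref{primus} of $\partial_{\V{h}}\HAM_j$, equal to $-\V{G}(\V{k},\lambda)\cdot\nabla\HAM_j(\V{P})\,\phi_j$; this produces exactly the $\ee\,\V{G}_j(\V{k},\lambda)\cdot\wh{\cR}_j(\V{P},\V{k})\,\nabla\HAM_j\,\phi_j$ correction on the left-hand side of \eqref{for-a(k)}.

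The main technical hurdle will be the rigorous justification of the above formal steps, since the integrals in \eqref{for-absvT} and \eqref{primus} exist only as strong principal values and $a_\lambda(\V{k})$ is a priori a pointwise object in the Fock-decomposition sense. I will handle this by carrying out all commutator computations on the core $\sC_j$, where the $n$-particle truncation renders the $(\V{k},\lambda)$-manipulations completely elementary; the uniform bounds $\|\R_j(iy)\| \le \Jy^{-\nf{1}{2}}$ together with the commutator estimates of Lemma~\ref{le-dominique} will then upgrade the absolute convergence of the triple-resolvent integral on $\dom(\Hf^{(j)})\supset\Ran(\Pi_j)$, and a routine density argument produces the identity \eqref{for-a(k)} in $L^2(\cA_j\times\ZZ_2;\HR_j)$.
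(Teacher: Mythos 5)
Your formal computation is exactly the one underlying the paper's proof: pull $a_\lambda(\V{k})$ through $\D_j(\V{P})$ to land on $\D_j(\V{P}-\V{k})$ plus a $\valpha\cdot\V{G}(\V{k},\lambda)$ term, transfer this to the resolvent, insert into \eqref{for-absvT}, use $a_\lambda(\V{k})\,\Hf^{(j)}=(\Hf^{(j)}+|\V{k}|)\,a_\lambda(\V{k})$ and the eigenvalue equation, invert $(\HAM_j(\V{P}-\V{k})+|\V{k}|-E_j(\V{P}))$ by $\wh{\cR}_j(\V{P},\V{k})$, split $\R_j(\V{P}-\V{k},iy)=\R_j(\V{P},iy)+\R_j(\V{P}-\V{k},iy)\,\valpha\cdot\V{k}\,\R_j(\V{P},iy)$, and recognize the diagonal piece via \eqref{primus}. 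That is the correct skeleton, and the resulting identity agrees with \eqref{for-a(k)}.

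The gap is the last paragraph. The "routine density argument" you propose does not exist, and it is precisely what the paper's proof spends almost all of its effort on. Two distinct obstructions arise. First, the pointwise pull-through identity $a_\lambda(\V{k})\,\R_j(\V{P},iy)\psi=\R_j(\V{P}-\V{k},iy)\,a_\lambda(\V{k})\psi-c_0\ee\,\R_j(\V{P}-\V{k},iy)\,\valpha\cdot\V{G}(\V{k},\lambda)\,\R_j(\V{P},iy)\psi$ does not smear: integrating both sides against $\ol{g(\V{k})}$ reproduces $a_\lambda(g)$ on the left but cannot be written back in $a^\sharp(g)$-language on the right, because $\R_j(\V{P}-\V{k},iy)$ carries the running variable $\V{k}$. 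Hence proving the identity on $\sC_j$ and "extending by density" is not available — the identity does not pair against the dense set of separable test vectors $g\otimes\eta$ in any operator-algebraic way. Second, one also cannot approximate $\phi_j$ by vectors in $\sC_j$ and pass to the limit, because the cancellation $a_\lambda(\V{k})(\HAM_j-E_j)\phi_j=0$ uses the eigenvalue equation, which fails for any finite-particle approximant; there is no intermediate statement that could be limited.

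The resolution in the paper is not a density argument but a mollification. One works throughout with smeared creators $\ad_\lambda(f)$, $f\in C_0^\infty(\cA_j)$, establishes a weak identity relative to test vectors $\eta'\in\sC_j$ by a virial-type computation (the paper's \eqref{norah4}), then inserts a peak function $f_{\V{p},\ve}(\V{k})=h_\ve(\V{k}-\V{p})$, pairs against $g\in C_0^\infty(\RR^3\setminus\{0\})$ and integrates in $\V{p}$, and finally sends $\ve\searrow 0$. The "error" terms $C_2(\ve),C_3(\ve),C_5(\ve)$ that appear in this delocalized-to-localized passage carry factors $||\V{p}|-|\V{k}||$ and $|\V{k}-\V{p}|$ which are $\bigO(\ve)$ on the support of $h_\ve(\V{k}-\V{p})$, and showing they vanish requires an explicit Young-inequality estimate against $\Hf$-regularity of $\phi_j$. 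None of this is captured by "the uniform bounds on $\R_j(iy)$ upgrade the absolute convergence of the triple-resolvent integral"; that only controls the term that survives the limit, not the ones that must disappear. So you have correctly reconstructed the algebra of the proof but not its substance.
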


\begin{proof}
Let $f\in C_0^\infty(\cA_j)$, $\V{p}\in\RR^3\setminus\{\V{0}\}$, 
and $\lambda\in\ZZ_2$.
We fix $\V{P}\in\RR^3$ and the scale parameter $j\in\NN\cup\{\infty\}$,
and abbreviate $\D:=\D_j(\V{P})$, $\D_{\V{p}}:=\D_j(\V{P}-\V{p})$,
$\R(iy):=(\D-iy)^{-1}$, $\R_{\V{p}}(iy):=(\D_{\V{p}}-iy)^{-1}$,
and $\V{G}_\lambda:=\V{G}_j(\cdot,\lambda)$.
We also write $a_\lambda^\sharp(f):=a^\sharp(\tilde{f})$,
where $\tilde{f}(\V{k},\mu):=f(\V{k})\,\delta_{\lambda,\mu}$,
for $(\V{k},\mu)\in\RR^3\times\ZZ_2$.
Finally, we recall that $\V{m}$ denotes multiplication with $\V{k}$.

On the dense domain $\dom((\Hf^{(j)})^{\nf{3}{2}})$ 
we then have the operator identities
\begin{align}\label{norah1}
[\Hf^{(j)},\ad_\lambda(f),]&=\ad_\lambda(\omega\,f)\,,
\qquad[\Pf^{(j)},\ad_\lambda(f)]=\ad_\lambda(\V{m}\,f)\,,
\end{align}
and, hence,
\begin{align*}
\D\,\ad_\lambda(f)-\ad_\lambda(f)\,\D_\V{p}
&=\valpha\cdot\big\{\ee\SPn{\V{G}_\lambda}{f}
-\ad_\lambda((\V{m}-\V{p})\,f)\big\}\,.
\end{align*}
Thanks to Lemma~\ref{le-dominique} we know that
$\R_{\V{p}}(iy)$ maps $\dom((\Hf^{(j)})^{\nf{3}{2}})$ into itself.
On that domain we thus have
\begin{align*}
\ad_\lambda(f)\R_\V{p}(iy)-\R(iy)\ad_\lambda(f)&
=\R(iy)\valpha\cdot\big\{\ee\SPn{\V{G}_\lambda}{f}
-\ad((\V{m}-\V{p})f)\big\}\R_\V{p}(iy)\,.
\end{align*}
Since $\phi_j\in\dom(\Hf^{(j)})$, the previous relation implies, 
for all $\eta'\in\sC_j$,
\begin{align*}
\Delta(\V{p},f,\eta')&:=\SPn{|\D_\V{p}|\,\eta'}{a_\lambda(f)\,\phi_j}
-\SPn{\ad_\lambda(f)\,\eta'}{|\D|\,\phi_j}
\\
&=
\lim_{\tau\to\infty}\int_{-\tau}^\tau\!
\SPb{\R(iy)\,\valpha\cdot\ad((\V{m}-\V{p})f)\,\R_{\V{p}}(iy)
\,\eta'}{\phi_j}\,\frac{iy\,dy}{\pi}
\\
&\quad
-\ee\SPn{\V{G}_\lambda}{f}\cdot\lim_{\tau\to\infty}\int_{-\tau}^\tau\!
\SPb{\R(iy)\,\valpha\,\R_{\V{p}}(iy)
\,\eta'}{\phi_j}\,\frac{iy\,dy}{\pi}\,,
\end{align*}
where we used the representation \eqref{for-absvT} of the absolute value.
In the integral appearing in the second line
we now write $\R_{\V{p}}(iy)\,y/i=\id-\D_{\V{p}}\,\R_{\V{p}}(iy)$
and apply the formula \eqref{for-sgnT} for the sign function $S:=\sgn(\D)$.
In the third line
we expand the right resolvent as $\R_\V{p}=\R+\R\valpha\cdot\V{p}\R_{\V{p}}$,
apply the formula \eqref{primus} for the derivative of the Hamiltonian.
Proceeding in the way we arrive at
\begin{align*}
\Delta(\V{p},f,\eta')&=
\SPb{S\,\valpha\cdot\ad((\V{m}-\V{p})\,f)\,\eta'}{\phi_j}-J(\V{p},f,\eta')
\\
&\quad
-\ee\SPn{f}{\V{G}_\lambda}\cdot\big\{
\SPn{\eta'}{\nabla\HAM_j\,\phi_j}+\V{I}(\V{p},\eta')\big\}\,.
\end{align*}
Here and below $\nabla\HAM_j$ is evaluated at $\V{P}$ and
\begin{align*}
\V{I}(\V{p},\eta')
&:=\int_\RR\SPb{\R(iy)\,\valpha\,\R(iy)\,(\valpha\cdot\V{p})\,
\R_{\V{p}}(iy)
\,\eta'}{\phi_j}\,\frac{iy\,dy}{\pi}\,,
\\
J(\V{p},f,\eta')&:=
\int_\RR\SPb{\R_{\V{p}}(iy)\,\D_{\V{p}}
\,\eta'}{\valpha\cdot a((\V{m}-\V{p})f)\,\R(-iy)\,\phi_j}\,\frac{dy}{\pi}\,.
\end{align*}
By means of \eqref{zita111} it is easy to see that the latter
integral is absolutely convergent.
In fact, let $\rho>1$ be sufficiently large and set
$\Theta:=\Hf^{(j)}+\rho$ and
$B_{\V{p}}:=\valpha\cdot a((\V{m}-\V{p})\,f)\,\Theta^{-\nf{1}{2}}$.
Then \eqref{zita111} shows that
the composition
$F(y):=\Theta^{\nf{1}{2}}\R(-iy)\Theta^{-\nf{1}{2}}$
is well-defined on $\HR_j$ and $\|F(y)\|\le\const\,(1+y^2)^{-\nf{1}{2}}$.
Moreover, 
$\|B_{\V{p}}\|
\le\const\|(1+1/\omega)^{\nf{1}{2}}(\V{m}-\V{p})\,f\|$ by a standard estimate.
From the representation
\begin{align*}
J(\V{p},f,\eta')&=\int_\RR
\SPb{\{\D_{\V{p}}\,\R_{\V{p}}(iy)\,|\D_{\V{p}}|^{-\nf{1}{2}}\}\,|\D_{\V{p}}|^{\nf{1}{2}}
\,\eta'}{B_{\V{p}}\,F(y)\,\Theta^{\nf{1}{2}}\phi_j}\,\frac{dy}{\pi}\,,
\end{align*}
where the operator in curly brackets satisfies 
$\|\{\cdots\}\|\le\const\,(1+y^2)^{-\nf{1}{4}}$,
we may thus read off that the map
$\sC_j\ni\eta'\mapsto J(\V{p},f,\eta')$
is continuous when $\sC_j$ is equipped with the form
norm of $\HAM_{\V{p}}:=\HAM_j(\V{P}-\V{p})$. Since $\sC_j$ is a form core
for $\HAM_{\V{p}}$ we may hence extend the definition of
$J(\V{p},f,\eta')$ to all $\eta'\in\fdom(\HAM_{\V{p}})$.

We further find by means of \eqref{norah1} and a virial type argument 
\begin{align}\nonumber
\SPb{&(\HAM_{\V{p}}-E_j(\V{P})+|\V{p}|)\,\eta'}{a_\lambda(f)\,\phi_j}
\\
&=\nonumber
\SPb{\HAM_{\V{p}}\eta'}{a_\lambda(f)\,\phi_j}
-\SPb{\ad_\lambda(f)\,\eta'}{\HAM_j(\V{P})\,\phi_j}
+|\V{p}|\,\SPn{\eta'}{a_\lambda(f)\,\phi_j}
\\\label{norah4}
&=\Delta(\V{p},f,\eta')+\SPb{\eta'}{a((|\V{p}|-\omega)f)\,\phi_j}\,.
\end{align}
Since $\phi_j\in\dom(\Hf^{(j)})$ we know that
$a_\lambda(f)\,\phi_j\in\fdom(\Hf^{(j)})$
and, by Lemma~\ref{le-lisa}(iv), the form
domain of $\HAM_{\V{p}}$ is $\fdom(\Hf^{(j)})$.
Taking this and the above remarks on
$J(\V{p},f,\eta')$ into account we conclude
that the first and the last line of \eqref{norah4}
are continuous in $\eta'$ w.r.t. the 
form norm of $\HAM_{\V{p}}$.
Approximating 
$\wh{\RES}_{\V{p}}\,\eta:=\wh{\RES}_j(\V{P},\V{p})\,\eta$,
where $\eta\in\HR_j$ is arbitrary, 
by some sequence in $\sC_j$, which is convergent
w.r.t. the form norm of $\HAM_{\V{p}}$, we thus obtain
\begin{align}\label{arabella}
\SPb{\eta}{a_\lambda(f)\,\phi_j}
&=\Delta(\V{p},f,\wh{\RES}_{\V{p}}\,\eta)
+\SPb{\eta}{\wh{\RES}_{\V{p}}\,a((|\V{p}|-\omega)f)\,\phi_j}\,.
\end{align}
Now, let
$f_{\V{p},\ve}(\V{k}):=h_\ve(\V{k}-\V{p})$, 
$h_\ve(\V{k}):=h(\V{k}/\ve)/\ve^{3}$, for tiny $\ve>0$,
where $h\in C_0^\infty(\{|\V{k}|\le1\},[0,\infty))$
satisfies $\int_{\RR^3}h=1$.
In the next step we insert
the peak function $f_{\V{p},\ve}$ into \eqref{arabella}, 
multiply the resulting expressions with
$g\in C_0^\infty(\RR^3\setminus\{0\})$, and integrate with respect to $\V{p}$.
Proceeding in this way we arrive at
\begin{align}\label{norah5}
\int_{\RR^3}g(\V{p})\,\SPb{\eta}{a_\lambda(f_{\V{p},\ve})\,\phi_j}\,d^3\V{p}
&=\sum_{\ell=1}^5C_{\ell}(\ve)\,,
\end{align}
with (in $C_3(\ve)$ we have $S\,\phi_j\in\dom(a((\V{m}-\V{p})f_{\V{p},\ve}))$
because of \eqref{zita555})
\begin{align*}
C_1(\ve)&:=-\ee\int_{\RR^3}g(\V{p})\,\SPn{f_{\V{p},\ve}}{\V{G}_\lambda}
\SPn{\eta}{\wh{\RES}_{\V{p}}\,\nabla\HAM_j\,\phi_j}\,d^3\V{p}\,,
\\
C_2(\ve)&:=\int_{\RR^3}g(\V{p})\,
\SPb{\eta}{\wh{\RES}_{\V{p}}\,
a((|\V{p}|-\omega)f_{\V{p},\ve})\,\phi_j}\,d^3\V{p}\,,
\\
C_3(\ve)&:=\int_{\RR^3}g(\V{p})\,
\SPb{\eta}{\wh{\RES}_{\V{p}}\,\valpha\cdot a((\V{m}-\V{p})f_{\V{p},\ve})\,S\,\phi_j}
\,d^3\V{p}\,,
\\
C_{4}(\ve)&:=-\ee\int_{\RR^3}g(\V{p})\,\SPn{f_{\V{p},\ve}}{\V{G}_\lambda}
\cdot\V{I}\big(\V{p},\wh{\RES}_{\V{p}}\,\eta\big)\,d^3\V{p}\,,
\\
C_{5}(\ve)&:=\int_{\RR^3}g(\V{p})\,
J\big(\V{p},f_{\V{p},\ve},\wh{\RES}_{\V{p}}\,\eta\big)
\,d^3\V{p}\,.
\end{align*}
It is straightforward to see that
the LHS of \eqref{norah5} converges to $\SPn{\eta}{a_\lambda(\ol{g})\,\phi_j}$,
as $\ve>0$ tends to zero.
Furthermore, 
$|\SPn{\eta}{\wh{\RES}_{\V{p}}\,\nabla\HAM_j\,\phi_j}|
\le\const\,\|\eta\|/|\V{p}|\le\const'\|\eta\|$ on the support of $g$
by \eqref{sergio7}, and 
$|\V{I}(\V{p},\wh{\RES}_{\V{p}}\,\eta)|\le\const|\V{p}|\,
\|\wh{\RES}_{\V{p}}\|\,\|\eta\|\le\const\,\|\eta\|$.
Hence, 
$\V{p}\mapsto g(\V{p})\,\SPn{\eta}{\wh{\RES}_{\V{p}}\,\nabla\HAM_j\,\phi_j}$
and 
$\V{p}\mapsto g(\V{p})\,\V{I}(\V{p},\wh{\RES}_{\V{p}}\,\eta)$
belong to $L^2(\RR^3,\CC^3)$.
Since also $\SPn{f_{\V{p},\ve}}{\V{G}_\lambda}=(h_\ve*\V{G}_\lambda)(\V{p})$
and $h_\ve*\V{G}_\lambda\to\V{G}_\lambda$
in $L^2(\RR^3,\CC^3)$ we conclude that
\begin{align}\label{li8}
\lim_{\ve\searrow0}C_1(\ve)
&=-\ee\int_{\RR^3}g(\V{p})\,{\V{G}_j(\V{p},\lambda)}\cdot
\SPn{\eta}{\wh{\RES}_{\V{p}}\,\nabla\HAM_j\,\phi_j}\,d^3\V{p}\,,
\\\nonumber
\lim_{\ve\searrow0}C_{4}(\ve)
&=\ee\int_\RR\int_{\RR^3} g(\V{k})\,
\SPb{\R(iy)\,(\valpha\cdot\V{G}_j(\V{k},\lambda))\,
\R(iy)\,(\valpha\cdot\V{k})\,\R_{\V{k}}(iy)\times
\\\label{li9}
&\qquad\qquad\times
\wh{\RES}_{\V{k}}\,\eta}{\phi_j}\,d^3\V{k}\,\frac{y\,dy}{i\pi}\,.
\end{align}
Next, we show that $C_{5}(\ve)$ goes to zero:
We have
$\supp(g)\subset\{r\le|\V{p}|\le1/r\}$, for some $r>0$, 
and we shall always assume that
$0<\ve\le r/2$.
Then $\V{p}\in\supp(g)$ and $h_\ve(\V{k}-\V{p})\not=0$
implies $1/|\V{p}|\le1/r$ and $1\le2|\V{k}|/r$. 
By Fubini's theorem 
we thus have
\begin{align*}
C_{5}(\ve)&\le\int_\RR\int_{\RR^3}\int_{\RR^3}|g(\V{p})|\,h_\ve(\V{k}-\V{p})
\,|\V{k}-\V{p}|\,\Big|\SPB{\valpha\,\{\D_{\V{p}}\,\R_{\V{p}}(iy)\,
\wh{\RES}_{\V{p}}\}\,\eta}{\times
\\
&\qquad
\times a_\lambda(\V{k})\,\Theta^{-\nf{1}{2}}\,F(y)\,\Theta^{\nf{1}{2}}\phi_j}
\Big|\,d^3\V{k}\,d^3\V{p}\,\frac{dy}{\pi}\,.
\end{align*}
In the previous expression we have $|\V{k}-\V{p}|\le\ve$,
if $h_\ve(\V{k}-\V{p})\not=0$.
Furthermore,
$$
\|\{\cdots\}\|\le\|\D_{\V{p}}\,\R_{\V{p}}(iy)\,|\D_{\V{p}}|^{-\nf{1}{2}}\|\,
\||\D_{\V{p}}|^{\nf{1}{2}}\wh{\RES}_{\V{p}}\|
\le\const\,(1+y^2)^{-\nf{1}{4}}/r\,.
$$
We estimate the remaining factors of the integrand
by Young's inequality,
$|\SPn{u}{v*w}|\le\const\,\|u\|_2\|v\|_1\|w\|_2$,
applied to the $d^3\V{k}d^3\V{p}$-integration.
In this way we obtain
\begin{align*}
|C_{5}(\ve)|\le&\frac{\const\,\ve}{r}\,\|h_\ve\|_1\,\|g\|_2\,\|\eta\|
\\
&\cdot\int_\RR
\bigg\{(\nf{2}{r})\!\int\limits_{|\V{k}|\ge \nf{r}{2}}\!\!|\V{k}|\,
\big\|a_\lambda(\V{k})\,\Theta^{-\nf{1}{2}}
F(y)\,\Theta^{\nf{1}{2}}\phi_j\big\|^2d^3\V{k}\bigg\}^{\nf{1}{2}}\!
\frac{dy}{(1+y^2)^{\nf{1}{4}}}\,.
\end{align*}
Here, the integral $\int_{|\V{k}|\ge \nf{r}{2}}\ldots d^3\V{k}$
is not greater than
$$
\|(\Hf^{(j)})^{\nf{1}{2}}\Theta^{-\nf{1}{2}}F(y)\,\Theta^{\nf{1}{2}}\phi_j\|^2
\le\|F(y)\|^2\,\|\HAM_j^{\nf{1}{2}}\phi_j\|^2
\le\const\,(1+y^2)^{-1}.
$$
Since also $\|h_\ve\|_1=1$
we conclude that $C_{5}(\ve)\to0$, as $\ve\searrow0$.

Obviously, $C_2$ and $C_3$ can also be treated by means of
Young's inequality and we easily verify that
$\lim_{\ve\searrow0}C_2(\ve)=\lim_{\ve\searrow0}C_3(\ve)=0$,
again using that $||\V{p}|-|\V{k}||\le|\V{k}-\V{p}|\le\ve$,
when $h_\ve(\V{k}-\V{p})\not=0$. For $C_3$ we actually find
$$
C_3(\ve)\le\const\,(\nf{\ve}{r})\,\|g\|_2\,\|h_\ve\|_1\,\|\eta\|
\,(\nf{2}{r})^{\nf{1}{2}}\,\|(\Hf^{(j)})^{\nf{1}{2}}\,S\,\phi_j\|\,,
\quad0<\ve\le r/2\,,
$$
where last norm 
$\|(\Hf^{(j)})^{\nf{1}{2}}\,S\,\phi_j\|$ is well-defined and bounded 
because of \eqref{zita555} and $\phi_j\in\dom(\Hf^{(j)})$.
 
Putting everything together we see that $\SPn{\eta}{a_\lambda(\ol{g})\,\phi_j}$
is equal to the sum of terms on the RHS of
\eqref{li8} and \eqref{li9}.
As this holds true, for every
$g\in C_0^\infty(\RR^3\setminus\{0\})$, we conclude that
\begin{align*}
\SPb{\eta&}{a_\lambda(\V{k})\,\phi_j}
=
-\ee\,{\V{G}_j(\V{k},\lambda)}\cdot
\SPn{\eta}{\RES_{\V{k}}(|\V{k}|)\,\nabla\HAM_j\,\phi_j}
\\
&+\ee\int_\RR
\SPb{\R(iy)\,(\valpha\cdot\V{G}_j(\V{k},\lambda))\,
\R(iy)\,(\valpha\cdot\V{k})\,\R_{\V{k}}(iy)\,
\wh{\RES}_{\V{k}}\,\eta}{\phi_j}\,\frac{y\,dy}{i\pi}\,,
\end{align*}
for all $\V{k}\in\RR^3\setminus N_\eta$, where $N_\eta$ is some
$\eta$-dependent zero set.
Applying this results to all $\eta$ in some
countable dense domain in $\HR$ we obtain \eqref{for-a(k)}.
\end{proof}


\section{Operators on Fock spaces}\label{app-Fock}

\noindent
{\bf Bosonic Fock spaces}

\smallskip

\noindent
Recall the definition of the annuli $\cA_k^j$ and $\cA_j$ in \eqref{def-Akj}.
The bosonic Fock space over $\HP_k^j=L^2(\cA_k^j\times\ZZ_2)$
is a direct sum of
$n$-particle subspaces,
\begin{equation*}
\sF_k^j:=\bigoplus_{n=0}^\infty[\sF_k^j]^{(n)}\ni
\psi=(\psi^{(n)})_{n=0}^\infty\,,
\end{equation*}
which are given as
$[\sF_k^j]^{(0)}:=\CC$ and 
$[\sF_k^j]^{(n)}:=\cS_n L^2((\cA_k^j\times\ZZ_2)^n)$,
$n\in\NN$, where $\cS_n$ is the orthogonal projection onto the permutation
symmetric functions,
\begin{align*}
\cS_n\psi^{(n)}(\V{k}_1,\lambda_1,\ldots,\V{k}_n,\lambda_n)
:=\frac{1}{n!}\sum_{\pi\in\mathfrak{S}_n}
\psi^{(n)}(\V{k}_{\pi(1)},\lambda_{\pi(1)},\ldots,
\V{k}_{\pi(n)},\lambda_{\pi(n)})\,,
\end{align*}
almost everywhere,
for $\psi^{(n)}\in L^2((\cA_k^j\times\ZZ_2)^n)$,
$\mathfrak{S}_n$ denoting the group of permutations of $\{1,\ldots,n\}$.
$\sF_j$ is defined in the same way but with $\cA_k^j$ replaced by $\cA_j$.
Finally, we write $\sF=\sF_\infty$, i.e. in the definition of
$\Fock{}{}$ the set $\cA_k^j$ is replaced by $\RR^3$. 

\smallskip

\noindent
{\bf Second quantized multiplication operators}

\smallskip

\noindent
The second quantization
of some measurable function $\vo:\cA_k^j\times\ZZ_2\to\RR$
is the direct sum
$d\Gamma(\vo):=\bigoplus_{n=0}^\infty d\Gamma^{(n)}(\vo)$,
where $d\Gamma^{(0)}(\vo):=0$, and 
$d\Gamma^{(n)}(\vo)$ is the maximal operator in $[\sF_k^j]^{(n)}$
of multiplication with the permutation symmetric
function $(\V{k}_1,\lambda_1,\ldots,\V{k}_n,\lambda_n)
\mapsto\vo(\V{k}_1,\lambda_1)+\dots+\vo(\V{k}_n,\lambda_n)$.
For instance, $N:=d\Gamma(1)$ is the number operator.

\smallskip

\noindent
{\bf Creation and annihilation operators}

\smallskip

\noindent
For $f\in\HP_k^j$, the bosonic
creation and annihilation operators,
$\ad(f)$ and $a(f)$, are given as follows:
Defining
$\ad(f)^{(n)}:[\sF_k^j]^{(n)}\to[\sF_k^j]^{(n+1)}$ 
by $\ad(f)^{(n)}\,\psi^{(n)}:=\sqrt{n+1}\,\cS_{n+1}(f\otimes\psi^{(n)})$,
we set
$\ad(f)\,\psi:=
(0,\ad(f)^{(0)}\,\psi^{(0)},\ad(f)^{(1)}\,\psi^{(1)},\ldots\;)$,
for all $\psi=(\psi^{(0)},\psi^{(1)},\ldots\;)\in\sF_k^j$
such that the right side again belongs to $\sF_k^j$.
Then $a(f)$ is the adjoint operator,
$a(f):=\ad(f)^*$.
Straightforward computations yield the CCR \eqref{CCR}
on, e.g., the domain of $N^2$.

\smallskip

\noindent
{\bf Mapping properties of Weyl operators}

\begin{lem}\label{le-will}
Let $j\in\NN_0\cup\{\infty\}$, $\nu\ge1/2$, and let $f\in\HP_j$
satisfy $\omega^{-\nf{1}{2}}f\in\HP_j$ and $\omega^\nu\,f\in\HP_j$.
Then both $e^{i\vp(f)}$ and $e^{i\vo(f)}$ map 
$\dom((\Hf^{(j)})^\nu)$ into itself.
\end{lem}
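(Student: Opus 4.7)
The plan is to derive the standard Weyl conjugation identity for $\Hf^{(j)}$, deduce the integer-$\nu$ case from Kato--Rellich, and then handle fractional $\nu$ by a regularization-plus-commutator argument.

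Since $\vo(f)=\vp(if)$ and the hypotheses are invariant under $f\mapsto if$, it suffices to treat $U:=e^{i\vp(f)}$. Interpolating between $\omega^{-1/2}f\in\HP_j$ and $\omega^\nu f\in\HP_j$ with $\nu\ge1/2$ yields $\omega^{1/2}f\in\HP_j$. A direct computation on the core $\sC_j$, using the CCR \eqref{CCR} and the shift rule $e^{-i\vp(f)}a^\sharp(h)e^{i\vp(f)}=a^\sharp(h)+\mathrm{const}$, then gives the Weyl identity
\[
U^*\Hf^{(j)}U \;=\; \Hf^{(j)} + \vo(\omega f) + \tfrac{1}{2}\|\omega^{1/2}f\|^2 \qquad\text{on }\sC_j.
\]
For integer $\nu\ge 1$, the hypothesis $\omega^\nu f\in\HP_j$ gives in particular $\omega f\in\HP_j$, so $\vo(\omega f)$ is densely defined; since $\omega^{-1/2}(\omega f)=\omega^{1/2}f\in\HP_j$, the standard estimate $\|\vo(h)\psi\|\le\|\omega^{-1/2}h\|\,\|(\Hf^{(j)})^{1/2}\psi\|+2^{-1/2}\|h\|\,\|\psi\|$ applied to $h=\omega f$ renders $\vo(\omega f)$ infinitesimally $\Hf^{(j)}$-bounded. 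Kato--Rellich then makes the right-hand side of the Weyl identity self-adjoint on $\dom(\Hf^{(j)})$, forcing coincidence with $U^*\Hf^{(j)}U$ on its natural domain and hence $U\dom(\Hf^{(j)})=\dom(\Hf^{(j)})$. Higher integer $\nu$ is reached by iterating the commutator $[\Hf^{(j)},\vp(f)]=-i\vo(\omega f)$; each step brings down one more factor of $\omega$, exactly matching the hypothesis $\omega^\nu f\in\HP_j$.

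The main obstacle is fractional $\nu\in[1/2,1)$, where $\omega f$ need not lie in $\HP_j$ and the operator identity above breaks down. The plan here is to regularize: take $\chi_n\nearrow 1$ with $\omega\chi_n\in L^\infty$, set $f_n:=\chi_n(\omega)f$ so that $\omega f_n\in\HP_j$, and apply the previous paragraph to conclude that $U_n:=e^{i\vp(f_n)}$ preserves $\dom(\Hf^{(j)})$. The key technical step is to establish the \emph{uniform} bound
\[
\|(\Hf^{(j)}+1)^\nu U_n\psi\| \;\le\; C\bigl(\|\omega^{-1/2}f\|,\|\omega^\nu f\|\bigr)\,\|(\Hf^{(j)}+1)^\nu\psi\|,
\]
with $C$ independent of $n$. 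One would derive this from the integral representation $(\Hf^{(j)}+1)^\nu=\tfrac{\sin\pi\nu}{\pi}\int_0^\infty\lambda^{\nu-1}\tfrac{\Hf^{(j)}+1}{\Hf^{(j)}+1+\lambda}\,d\lambda$ applied to the commutator $[(\Hf^{(j)}+1)^\nu,\vp(f_n)]$: the commutator reduces to resolvent sandwiches of $\vo(\omega f_n)$, and a careful weighted estimate, splitting the extra factor $\omega$ between the annihilation kernel and the resolvents and using that $\omega^s f\in\HP_j$ for all $s\in[-1/2,\nu]$ by interpolation, controls everything in terms of $\|\omega^\nu f\|$, uniformly in $n$. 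With the uniform bound in hand, $U_n\psi\to U\psi$ strongly (by Trotter) together with weak-$*$ compactness in the graph norm of $(\Hf^{(j)}+1)^\nu$ yields $U\psi\in\dom((\Hf^{(j)})^\nu)$. Fractional $\nu>1$ are obtained by combining this scheme with the integer-$\nu$ bootstrap on successive unit intervals.
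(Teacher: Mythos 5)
Your proposal takes a genuinely different route from the paper, which verifies the two commutator conditions of an abstract invariance-of-domain criterion from Georgescu--G\'erard (with the nontrivial condition settled by a cited lemma that says $[a^\sharp(f),(\Hf^{(j)}+\lambda)^{-\nu}](\Hf^{(j)}+\lambda)^\nu$ is bounded on $\sC_j$ under your hypotheses). That proof handles all $\nu\ge\nf{1}{2}$ uniformly in a few lines once the citation is in place. Your preliminary reductions are correct: $\vo(f)=\vp(if)$ with hypotheses invariant under $f\mapsto if$, the interpolation to $\omega^{\nf{1}{2}}f\in\HP_j$, and the Weyl conjugation identity
\[
e^{-i\vp(f)}\,\Hf^{(j)}\,e^{i\vp(f)}=\Hf^{(j)}+\vo(\omega f)+\tfrac12\|\omega^{\nf{1}{2}}f\|^2
\]
are all right, and the Kato--Rellich argument closes the integer case.

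However, the fractional case $\nu\in[\nf{1}{2},1)$ is where the lemma actually lives, and there your argument has a genuine gap: the uniform bound
$\|(\Hf^{(j)}+1)^\nu U_n\psi\|\le C\,\|(\Hf^{(j)}+1)^\nu\psi\|$
is asserted but not derived. The phrase ``a careful weighted estimate, splitting the extra factor $\omega$ between the annihilation kernel and the resolvents'' names the very computation that has to be performed, and it is precisely the content of the commutator lemma that the paper cites. As written, you have not specified how the Duhamel/Gronwall step would avoid circularity (the na\"ive Duhamel formula reproduces $\|(\Hf^{(j)}+1)^\nu U_n^{(s)}\psi\|$ inside the estimate), nor have you justified that the final passage to the limit preserves domain membership: weak-$*$ compactness in the graph norm of $(\Hf^{(j)}+1)^\nu$ requires that you first identify the weak limit of $(\Hf^{(j)}+1)^\nu U_n\psi$ with $(\Hf^{(j)}+1)^\nu U\psi$, which uses the closedness of $(\Hf^{(j)}+1)^\nu$ but should be stated. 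A secondary point: the resolvent-integral commutator estimate, carried out, controls $[(\Hf^{(j)}+1)^\nu,\vp(f_n)](\Hf^{(j)}+1)^{-\nu}$ in terms of $\|\omega^{\nf{1}{2}}f\|$, not $\|\omega^\nu f\|$, so the constant you announce is not the one that the computation actually produces. The plan is in the right ballpark, but without the explicit commutator estimate the proof is incomplete precisely at its crux.
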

 
\begin{proof}
Let $A$ be either $\vp(f)$ or $\vo(f)$ and set
$\Theta:=(\Hf^{(j)}+\lambda)^\nu$, for some $\lambda\ge1$.
According to Lemma~2 of \cite{GeorgescuGerard1999}
and the condition (ABG$'$) recalled on Page~276
of \cite{GeorgescuGerard1999} it suffices to show that
\begin{enumerate}
\item[(i)] 
$|\SPn{\Theta^{-1}\,\psi}{A\,\psi}-\SPn{A\,\psi}{\Theta^{-1}\,\psi}|
\le\const\,\|\psi\|^2$, for all $\psi\in\dom(A)$.
\item[(ii)]
The quadratic form 
$\SPn{\Theta\,\phi}{A\,\psi}-\SPn{A\,\phi}{\Theta\,\psi}$,
$\phi,\psi\in\dom(A)\cap\dom(\Theta)=\dom(\Theta)$,
is bounded w.r.t. the topology on $\dom(\Theta)$
and the unique operator $[\Theta,A]_0\in\cB(\dom(\Theta),\dom(\Theta)^*)$
representing it obeys $\Ran([\Theta,A]_0)\subset\Fock{j}{}$.
\end{enumerate}
Since $\Theta^{-1}$ maps into the domain of $A$ and
$A\,\Theta^{-1}$ is a bounded operator condition (i)
is trivially fulfilled.
(It implies $\Theta\in C^1(A)$.)
Furthermore, under the above assumptions on $f$
Lemma~3.1 in \cite{MatteStockmeyer2009a} shows
that $[a^\sharp(f),\Theta^{-1}]\,\Theta$ defines a bounded
operator on the dense domain $\sC_j$, if $\lambda$ is large enough.
Therefore,
$|\SPn{\Theta\,\phi}{A\,\psi}-\SPn{A\,\phi}{\Theta\,\psi}|
\le\const\,\|\phi\|\,\|\Theta\,\psi\|$,
for all $\phi,\psi\in\sC_j$.
Since $\Theta$ is essentially self-adjoint on
$\sC_j$ and $\|A\,\phi\|\le\const\,\|\Theta\,\phi\|$, this implies (ii). 
\end{proof}

\end{appendix}

\subsection*{Acknowledgment}
We thank the Erwin Schr\"odinger Institute for Mathematical Physics in Vienna
for their kind hospitality. 
O.M. also thanks Li Chen and the Department of Mathematical
Sciences at the Tsinghua University in Beijing for
their genereous hospitality.
Parts of this work have been prepared while
M.K. was working at the FernUniversit\"at Hagen
and while O.M. was working at the Technical University of Clausthal 
and at the Technical University of Munich.
This work has been partially
supported by the DFG (SFB/TR12).



\vspace{0.5cm}

\begin{multicols}{2}

\noindent
{\sc Martin K\"onenberg}\\ 
Fakult\"at f\"ur Physik\\
Universit\"at Wien\\
Boltzmanngasse 5\\
1090 Vienna, Austria.\\
{\tt martin.koenenberg@univie.ac.at}
 
\bigskip

\noindent
{\sc Oliver Matte}\\ 
Mathematisches Institut\\
Ludwig-Maximilians-Universit\"at\\
Theresienstra{\ss}e 39\\
80333 M\"unchen, Germany.\\
{\tt matte@math.lmu.de}

\end{multicols}

\end{document}